%
%
%


\documentclass[reqno]{amsart}

\usepackage{amsmath,mathtools}
\usepackage{graphicx}
\usepackage[dvipsnames]{xcolor}
\usepackage{chngcntr}
\usepackage[normalem]{ulem}
\usepackage{amsmath,amsfonts, amssymb,amsfonts,amsthm}
\numberwithin{equation}{section}

\usepackage{tikz-cd}
\usetikzlibrary{shapes,arrows,intersections}
\usetikzlibrary{matrix,fit,calc,trees,positioning,arrows,chains,shapes.geometric,shapes,angles,quotes}

\usepackage{pgfplots}
\usepackage{bm}
\usepgfplotslibrary{fillbetween}
\usetikzlibrary{backgrounds}
\usetikzlibrary{patterns}
\usetikzlibrary{scopes}
\usetikzlibrary{shapes,arrows}
\usetikzlibrary{plotmarks,intersections}
\usetikzlibrary{calc}


\usepackage{amssymb}


\usepackage{graphicx}
\usepackage{xcolor}
\usepackage{mathtools}


\usepackage[cmtip,all]{xy}
\usepackage{amssymb}


%
%
%



\textwidth=6.45truein     
\textheight=9.truein
\hoffset=-.4truein
\voffset=-.4truein
\numberwithin{equation}{section}
\newtheorem{thm}{Theorem}

\newtheorem{lem}[thm]{Lemma}

\newtheorem{cor}[thm]{Corollary}
\newtheorem{prop}[thm]{Proposition}
\theoremstyle{definition}

\newtheorem*{defn}{Definition}

\newtheorem{problem2}{Loss to Gain Moment Rate  Property}
\theoremstyle{remark}
\newtheorem{remark}{\bf Remark}

\newcommand{{\kc}}{{\bf k_c}}
\newcommand{{\kb}}{{\bf k_b}}
\newcommand{{\klb}}{{\bf k_{lb}}}
\newcommand{{\kgamma}}{{\bf k_{\gamma}}}

%
\newcommand{\kmax}{{\bf{k}_{\text{\tiny mx}}}}
\newcommand{\kmin}{{\bf{k}_{\text{\tiny mn}}}}

\newcommand{\di}{d}

\newcommand{\ir}{\int_{\R^\di}}

\newcommand{\ph}{\varphi}
\renewcommand{\th}{\vartheta}

\newcommand{\sign}{\text{sign}}

\newcommand{\C}{{\mathcal C}}

\newcommand{\R}{{\mathbb R}}
\newcommand{\vprime}{{}'\!v}
\newcommand{\uprime}{{}'\!u}
\newcommand{\fprime}{{}'\!f}

\newcommand{\gprime}{{}'\!g}

\def\nn{{\noindent}}

\def\ep{\varepsilon}
\def\eps{\varepsilon}
\def\A{{\mathcal A}}

\def\B{{\mathcal B}}
\def\C{{\mathcal C}}

\def\real{{\mathbb R}}

\def\loc{\operatorname{loc}}

\def\b{\beta}

\def\e{\varepsilon}

\def\m{\mu}

\def\s{\sigma}

\def\A{\mathcal{A}}

\def\~{\approx}

\def\ee{\in}

\def\be{\begin{equation}}
\def\ee{\end{equation}}



\def\cH{{\cal H}}

\def\b{\beta}

\def\e{\varepsilon}

\def\m{\mu}

\def\s{\sigma}

\def\A{\mathcal{A}}

\def\~{\approx}

\def\ee{\in}


\def\S{{\mathcal S}}
\def\eps{\varepsilon}

\def\b{{\bf b}}
\def\s{{\bf s}}

\def\ttm{{\tt \bf m}}


\def\nn{\noindent}

\def\real{{\mathbb R}}
\def\E{{\mathcal E}}

\def\cH{{\mathcal H}}

\def\RR{\hbox{{\rm I}\kern-.2em\hbox{\rm R}}}
\def\pRR{\hbox{{\tiny \rm I}\kern-.1em\hbox{{\tiny \rm R}}}}
\def\nn{\noindent}

\def\kbar{\bar k}

\def\smallcircledV{{\scriptscriptstyle{\bigcirc\kern
-5.66pt\vee\kern 1.33pt}}}

\begin{document}
\title[Boltzman equation, coerciveness, exponential tails ans Lebesgue integrability ]
{The Boltzmann equation for hard potentials with integrable angular transition: Coerciveness, exponential tails rates, and Lebesgue integrability}

\author{Ricardo J. Alonso} 
\curraddr{{\bf Ricardo J. Alonso}; Texas A\&M, Science Department, Education City, Doha, Qatar;   and Departamento de Matemática, PUC-Rio,
Rio de Janeiro, Brasil.}
\email{ricardo.alonso@qatar.tamu.edu}

\author{Irene M. Gamba} 
\curraddr{{\bf Irene. M. Gamba}; Department of Mathematics and Oden Institute for Computational Science and Engineering, The University of Texas at Austin.}
\email{gamba@math.utexas.edu}
\thanks{}


\subjclass[2020] {Boltzmann equations, Interacting particle systems and Kinetic theory of gases in time-dependent statistical mechanics, Singular nonlinear integral equations, Functional analytic methods in summability,  polynomial and exponential moments and $L^p$ theory methods,  	abstract ODE theory in Banach spaces. \\ {\bf MSC:} 35Q20,  82C22, 82C40,45G05, 40H05.}

%
%

\begin{abstract}
This manuscript  focus on an extensive survey with new techniques on the problem of  solving the Boltzmann flow by bringing a unified approach to the Cauchy problem to homogeneous kinetic equations with Boltzmann-like collision operators under integrability assumption of the scattering profile in the particle-particle interaction mechanism.  The work focuses on the relevant hard potential case where the solution properties are studied with a modern take.  While many of the discussed results can be found the literature spread over several papers along the years, we bring a complete program that includes a new approach to the existence and uniqueness theorem securing the well-posedness theory,  to moments estimates, and integrability propagation for the homogeneous Boltzmann flow.   In particular,  a detailed calculation of classical polynomial  moments upper bounds as function of the coerciveness  is described, which characterized the rate of exponential moments obtained by summability of the polynomial ones. In addition, a  proof of  uniform propagation of $L^\infty$ regularity under general integrable scattering kernels is performed.  Along the way, constants  appearing in estimates are carefully calculated,  improving most of  previous exiting results in the literature.
For the non expert reader we also include a general discussion of the basic elements of the Boltzmann model and important key results for the understanding of the mathematical discussion of the equation and include an extensive  set of references that enrich  and motivate  further discussions on Boltzmann flows for broader gas modeling configuration such as  gas mixtures systems, polyatomic gases, multilinear collisional forms such as, ternary or quartic, to  those derived from symmetry braking  quantum mean field theories or weak turbulence models from  spectral energy  waves in classical fluid.  \end{abstract}

\maketitle

\tableofcontents

\vfill\newpage
\section{Fundamentals of binary elastic collisional flows}

\subsection{Introduction}

The notion of statistical mechanics and physics   models  were independently introduced by Ludwig Boltzmann and Charles Maxwell during last quarter of the nineteen century motivated  by the search of finding rigorous questions and answers  to the   theory of thermodynamics of a gas. 
The resulting statistical viewpoint essentially   expresses a gas evolution  model  as the chaotic motion of huge numbers of molecules rushing here and there at large speeds interacting, or colliding, and rebounding 
according to  a given law, being of elementary mechanics nature or other attributes from probabilistic considerations.
These models nowadays have applications beyond statistical mechanics and physics. Then, without loss of generalization, the $N$ velocity states may also be interpreted as the space of attributes associated to the model on which a probability density evolves in time as the attribute change values. That is the case in recent applications that range from opinion dynamics, wealth distribution as much as multi agent dynamics operating with given interactions laws.
   However,  we will focus in this manuscript mostly in the original in the classical framework for particle physics, revisiting a framework developed in the last forty years and introducing new techniques and improved results that unified the  the theory of the Boltzmann equation in the space homogeneous setting that will enable   clear improvements in the space inhomogeneous case under specific boundary conditions in bounded domains or at infinity. 
   
   The space homogeneous Boltzmann equation is model of a statistical flow. It  describes the evolution of a probability density  according to a law given by   a non-local multi linear operator acting on the probability density, the so called collision operator. Such operator  determines the evolution of the   probability density evaluated at different states of velocities, or attributes,  interacting  according to the basic laws of thermodynamics in physical sciences.    
In classical thermodynamics and mechanics pressure is identify as the mechanical effect of the impact of the moving 
molecules as they strike against a solid wall  or a region of different molecular properties.  For the starting of the derivation of collisional theory by the classical Boltzmann flow,  molecules are assumed to be 
modeled  by spheres that move according to the laws of classical elastic or inelastic mechanics. 
If no external forces (such as gravity) are assumed to act on the molecules, each of them will move along a 
straight line unless they strike another sphere or the wall. In the classical mechanics interaction, these are called  billiard models. 

The {\em Boltzmann Transport Equation} (BTE) models an attempt to describe the an evolution process from a 
statistical point of view: {\em ``it determines the state of the gas using some statistical mechanics laws from knowing the statistical mechanical initial state"}.  In this sense, the  Boltzmann Transport Equation is an {statistical  flow  model  whose solution describe the evolution of a probability density of finding a particle at a given position, with a given velocity, at a given time}. 

\subsection{The Boltzmann Transport Equation}
The model for a rarefied mono atomic gases under binary elastic interactions, originally introduced  by L. Boltzmann in 1872 \cite{Boltzmann}, and analytically studied by many authors including Carleman, Wild, Chapman, Cowling, Cercignani, Illner, Pulvirenti  \cite{Ca,wild-1950,CC70,Cl,CIP} to name a few,  is written as follows
\begin{equation}\label{BTE}
{\frac{\partial f}{\partial t} + v\cdot \nabla_x f
= a^2 N \int_{\real^3\times \mathbb{S}^{2}_{+}}(\fprime \fprime_* - ff_*) B(|u|, \hat u\cdot\eta ) \,dv_*d\eta}
\end{equation}
with
\begin{equation}\label{Elastic Interaction}
\vprime=v - (u\cdot\eta)\, \eta ,\ \ \vprime_*=v_* + (u\cdot\eta)\, \eta ,\ \ \quad u=v-v_* 
\end{equation}

\nn for $\eta\in\mathbb{S}^2$ a vector in the direction of $v-\vprime=\vprime_*-v_*$; 
where $f(x,v,t)$ describes the evolution of the probability distribution
function (pdf) of finding a particle at position $x$, velocity $v$ and time $t$,
for a billiard model of hard spheres and prescribed initial probability
$f(x,v,0) = f_0 (x_0,v_0)$. In addition, the notation for the pdf on pre collisional velocities is $\fprime:=f( x,\vprime,t)$ and $\fprime_*:=f( x,\vprime_*,t)$. The corresponding to the ones on post-collisional velocities are simply $f:=f( x,v,t)$ and $f_*:=f( x,v_*,t)$.  \\

 Rigorous mathematical derivations of the Boltzmann homogeneous  flow was first developed by Sznitman \cite{sznitman84} by probabilistic methods.
  In  the space inhomogeneous setting, the  derivations of the Boltzmann flow model under consideration that have been discussed in  several manuscripts  the past twenty years, including the work of  Pulvirenti, Saffirio,  Simonella \cite{PulSafSimonella-2014},  Gallagher, Saint Raymond and Texier \cite {Gal-SRay-Texier} among many other groups, are  based in the formal derivation by Landford.   All these results, following strict mathematical rigor, involve  arguments  short time $t$ with short time validity up to  the average of the first collision time.  However, the classical space inhomogeneous Boltzmann initial valued problem, has beed solved  global in time, first for short range potentials  with initial data near vacuum following the pioneer work  by Kaniel and Shimbrot, and Illner and Shimbrot \cite{kaniel-shinbrot, illner-shinbrot},  and later by for long range potentials by  Ukai and Asanov, Bellomo, Toscani and Palczewski,  \cite{ukai-asano-82, bellomo-toscani, toscani-ARMA-86}. Toscani  \cite{toscani-ARMA-88, pal-tosc-JMP89} address the first existence for large data, near a fixed Maxwellian state. Hamdache, in \cite{hamdache,hamdache1}, solved the near vacuum problem  by fixed point theorem's arguments; and Goudon \cite{goudon-97} extended the Hamdache techniches for long range potentials, to data near a fixed Maxwellian state. A couple of decades later,   Glassey \cite{glassey} found global solutions for the relativistic Boltzmann flow near vacuum; followed by the work of Alonso and Gamba \cite{AG-JSP09, AG-JSP09-R} motivated by the work $W^{m,p}_k(\R^d)$ from Alonso and Carneiro and Alonso Carneiro and Gamba \cite{AC, ACG-cmp10},  revisited the Kaniel-Shimbrot and Illner-Shimbrot \cite{kaniel-shinbrot, illner-shinbrot} iteration by a modified approach and improved global global in time estimates not valid for near vacuum  data for short rage potentials, but enlarged the initial data set subject to different Maxwellian distribution with closeness in $L^\infty$-exponentially weighted, and including the $L^p$-differentiability.   Most recently, the problem of longtime behavior of the solution near vacuum was address for long-range potentials by Bardos, Gamba,  Golse and Levermore \cite{BGGL-2016} showing  that, in the vicinity of global Maxwellians with finite mass, the dispersion due to the advection operator quenches the dissipative effect of the Boltzmann collision integral,   resulting in a typical scattering mechanisms where the  large time limit of solutions of the Boltzmann flow  is given by noninteracting, freely transported states.  

 The initial value problem associated to the space homogeneous equation \eqref{BTE} in the space homogeneous setting has been proven to be globally well-posed under general physically conditions and rather broad initial data in size.  Furthermore, the model enjoys properties such as propagation and generation of polynomial and exponential statistical moments or observables, as well as propagation of Sobolev norms of any order.  These particular properties of higher Lebesgue integrability and Sobolev regularity propagation has been the attention of several studies, perhaps starting with Carleman \cite{Ca57} and continued with Arkeryd \cite{Arkeryd72, Arkeryd-Linfty}, Gustaffson \cite{Gustaff}, Lions \cite{LionsIII}, and Toscani and Villani \cite{TV}. 

In the last few years, the advance of these theories have enable new techniques involving  the notion of angular averaging lemmas that permitted further studies of lower estimates in the search of coerciveness, summability of  moments to obtained the propagation and generation of exponential moments, as well as exponential wighted $L^p_\ell(\R^d)$ estimates for $p\in[1,\infty]$ and $\ell>2.$

 In general,   moment estimates  is a fundamental tool that enables the existence and uniqueness theory  associated to the Cauchy problem for the the spatially homogeneous Boltzmann equation. One can roughly separate the problem characterizing the transition probability  measure rates (or collision kernels) on whether  depending of the following characteristics: they may depend on potential rates associated intramolecular potential  modeled by functions of the relative  speed associated  binary interaction, and on the angular transition measure that quantifies the transition rate with respect to the deflection angle before or after the interactions, as envisioned by H. Grad \cite{Grad_1958}.
 That is, these transition rates may or may not depend on the relative speed of the local colliding velocities as much as in the integrability condition of the angular part of such interactions. 
 
 Historically, these problems started to be addressed, starting  before the mid twenty's century, starting by the work of Carleman, Wild  and later  Arkeryd  and Elmroth \cite{Ca, Ca57,Arkeryd72, Arkeryd-Linfty,  Elmroth, Povzner},  for  transition rates with  bounded  intramolecular rates and bounded angular transition functions.   Di Blassio \cite{diblassio}contributed by finding the proof of uniqueness to the Boltzmann flow for hard potentials, but just having a few moments bounded. The estimates needed to obtain these results were under the assumption that the angular transition function is bounded.      

Around the same time,  the  pioneering work of Bobylev \cite{Bo76,B88} fully solved the problem for transition probability rates independent of the intramolecular potentials commonly referred as to {\em Maxwell Molecule}  or  {\em Maxwell type models}, where the Boltzmann flow can be fully solved in Fourier space with the $C^\infty(\R^d)$ topology associated to generating master equations to the evolutions  probabilistic or  stochastic flows, with profound discussions on convergence rates to the Boltzmann-Maxwell statistical equilibrium state, as shown in the work of  Toscani and Villani, and Carlen, Carvalho and Lu and  \cite{TV, Carlen} and references therein.
 
The problem of hard or soft intramolecular potential interactions, that is the higher the relative speed  the more or less collision rates, respectively, started to be addressed on whether   the integrability of the angular transition function was bounded or not. 
 
 By the last decade of the twentieth century the Boltzmann flow model enters in maturity, starting by  the work of Desvilletes \cite{Desvillettes}  who proved interesting point that  it can propagate any moments, and shortly after,  Wennberg \cite{WennbergMP} showed that moments can be generated if the initially data has bounded entropy in addition to bounded  mass and energy.  In the same year of Wennberg's work, a groundbreaking result of Bobylev \cite{boby97} showed that the generated moments are summable paving the way to the understanding of propagation and generation of $L^1(\R^d)$ exponentially weighted estimates in several frameworks such us inelastic theories for hard spheres by  
 Gamba Panferov and Bobylev, Alonso and Lods for hard spheres in \cite{ BGP04, Alonso-Lods-SIMA10, ALods} and later with Panferov, Gamba and Villani  \cite{GPV09} showed the extension of this estimates to hard potentials and show the technique extends to angular transition functions in $L^1(\S^d)$ as well establishing that $L^\infty(\R^d)$-Maxwellian weighted estimates for  solutions as constructed by Arkeryd \cite{Arkeryd-Linfty}, later  extended by Bobylev and Gamba \cite{BG-17-maxw-bounds} to the Maxwell type of interactions.
 These results inspired the work of Mouhot \cite{mouhot06} on generation of  $L^1(\R^d)$ exponentially weighted estimates,  Alonso and Gamba \cite{AG-JMPA08} solutions derivatives,  and  Canizo and Mouhot, in  collaboration with the authors, \cite{ AlonsoCGM}, to study their summability by showing the convergence of moments's partial sums. 
 
  Further studies of these results were extended to hard potentials with the classical Grad angular non-cut-off condition  \cite{LuMouhot, TAGP,  PC-T18}  valid for type of solutions constructed by Morimoto, Wang and Yang \cite{MWY16}. In addition  Gamba, Pavlovic, and Taskovic \cite{Gamba-Pavlovic-Taskovic-SIMA2019} showed pointwise exponential estimates for conditional solutions and  recent results of Fournier \cite{Fournier-21} further clarifies and improves $L^1(\R^d)$ exponentially weighted estimates, both paper under  angular non-cut-off conditions.

 These manuscripts also expands and improves the existing results of properties such as the propagation of higher Lebesgue integrability,  initiated by  the work Gustaffson,  and later by Wennberg  in \cite{Gustaff, WennbergLp}, obtaining estimates for  the collision operator in  Banach spaces   $L^r(\R^d)$  for $1 \leq r < \infty$. Lions \cite{LionsIII}  introduced the breakthrough concept of {\em gain of integrabilty}, by meaning that it is possible to control   the gain operator  estimates for hard potentials in  $L^r(\R^d)$ weighted spaces having lower growth when compared  to the corresponding one  obtained by  the loss operator. This property, which still requires a bounded angular transition rate functions,  may be viewed as a smoothing effect obtained  under  special conditions of the potential and angular transition rates or collision kernels.  Later by Toscani,  Villani and Mouhot \cite{TV, MV04} used these Lions' results to obtain not only the propagation and Sobolev regularity in the classical sense for the space homogenous Boltzmann flows.  In the last decade, Carneiro and the authors revisited the work of  Gustaffson \cite{ACG-cmp10, ACG-cmp10} that inspired the work of the authors \cite{AG-krm11} to obtained significant improvements removing conditions from the work of Lions on the transition probability kernels, but also quantified the rate of integrability estimates
by parameters  depending on the Cauchy problem data and   the $L^r(\R^d)$ norm of the initial data.

 Thus, the new results presented in the current  manuscript provides a unified set of  fundamental  technical improvements and corrections  related to the collision operator estimation,  removing many of the constrains  angular transition and potential rate growth extended now to  any integrable angular transition function and any hard potential behavior, bounded at zero and with a growth rate corresponding to supper hard spheres.    Such improvements include a new complete proof of existence of unique solutions in $C^1(0,T, L^1_k(\R^d))$ enabling   explicit computations of  coercive constants and their dependence of the model parameters, and obtaining a priori estimates on the collision operator  enabling the implementation  of a proof theorem for Ordinary Differential flows in Banach spaces for  existence and uniqueness of  solutions propagation and generation of  on moments (or expectations) associated to the solution $f(v,t)$ of the Boltzmann flow, globally in time. In addition we developed bound, explicitly as function of the coercive constants. In addition a new proof  for the summability of moments globally in time   is presented,  securing the the propagation and generation of  exponential tails whose rate and exponent order are explicit and written in terms of the problem data. In particular one can easily characterize   the exponential rates by the Cauchy Problem  data.  The existence and uniqueness theorem follows the lines of the proposed approach by Bressan \cite{bressan}, in an unpublished note, after being completed by a complete proof  for the collisional integral satisfying the  sub tangent condition property  in Lemma~\ref{prop_subtan}. 
   
  The second part of the manuscript focus on the applications of tools developed for the $L^p_\ell(\R^d)$  inspired by the estimates from  the authors developed in \cite{ACG-cmp10, ACG-cmp10} revisit  Young's inequalities and Carleman integral representation in obtaining explicit tares of gain of integrability estimates.  New results, such us the lower bound  Proposition \ref{lem-cancel}, depending inly on the Banach space norm and the choice of potential rates in the transition probability kernel,  enables  $L^p_\ell(\R^d)$-norm propagation  for general transition probabilities   rates for arbitrary hard potentials and just angular intregability in the scattering direction with an explicit coercive rate.  These results yield,  most significantly, not only the generalization of Arkeryd's pioneering work on the propagation of $L^{\infty}$-integrability property for very constrained conditions of the potential and angular scattering to  such rather general scattering form  achieved in a very elementary manner to any  $L^{\infty}_\ell$, but also     exhibits how the coercive constant,  calculated in the first part giving global in time control to propagation and generation of  $L^1_{2\ell}(\R^d)$,  regulates the rate of exponential weights for any  $L^p_\ell(\R^d)$. As a consequence, the  control  of spectral gap estimates can be explicitly characterized as a function of the coerciveness, as it  will be shown in a future work by the authors.

 A remarkable point of this manuscript is that entropy is not needed for any of these results, since the proof of  lower bounds only require statistical moment considerations just from the Banach space without invoking an a priori estimate to a solution of the Boltzmann flow, as seen in Lemma \ref{lblemma}.


 This manuscript is lay down as follows.  
 Section~\ref{elementary-properties} focus on Elementary properties of the Boltzmann Equation,   Section~\ref{hardpotsection} states the Cauchy problem for Maxwell and hard potentials  in Theorem~\ref{CauchyProblem}, and  develops lower bounds in  Theorems~\ref{lblemma},  a priori estimates  on moments of the collisional integral   Theorem~\ref{mom-coll-op},   and Lebesgue polynomial  moments  propagation  and generation  of global in time bounds fully characterized by the coerciveness of the problem data Theorem~\ref{propagation-generation} that enable    the proof of  existence and uniqueness Theorem~\ref{CauchyProblem} in Section~\ref{existence-uniqueness}.
 Section~\ref{exponential-tails} considerably    improves the Propagation and generation of exponential moments with a detailed characterization of the exponential rates.
The following two Sections~\ref{convo-ine+gain int} and  Section~\ref {Lr-propagation} focus on 
 novel  proofs of  convolution inequalities and gain of integrability properties for the collision operator, and develop en new energy method to show  $L^r_\ell$  propagation theory for any  $r\in[1,\infty]$, with $\ell>2^+.$

The manuscript culminates with  a short Section~\ref{fine properties} addressing a revision on fine properties of the collision operator, followed by 
 an Appendix ~\ref{appendix}  included  in order 
to reproduced the proof of 
Theorem~\ref{Theorem_ODE} following the same lines for  solving  ODEs in Banach spaces as  proposed by A. Bressan in  an unpublished manuscript posted in \cite{bressan}.  

In the last few years,  these techniques and there results have been recently extended to the case of vector value solutions to gas mixture system for  particle with disparate masses, and for the scalar model,  polyatomic gases, in the works of Gamba, Pavic-Colic and De la Canal \cite{IG-P-C-mixtures, IG-P-C-poly-2020, DLC-IG-PC-mixturesL1}  ,  by Strain and Yu for relativistic particles, and later Strain and  Taskovic 
  \cite{ Strain-Yu-14, Strain-Taskovic-2017} for the the relativistic Fokker Plank  Landau flow.  
The $L^p_\ell(\R^d)$ vector valued solutions for to gas mixture system is being addressed in \cite{DLC-IG-PC-mixturesLp} and for an arbitrary system   for of monoatomic and polyatomic gases is being addressed  in  \cite{Alonso-gamba-Pcolic-22}.
 
  In addition several kinetic collisional frameworks have implemented the summability of moments estimates to calculate  $L^1(\R^d)$ polynomial and exponentially weighted estimates, like in the work  of 
Alonso, Bagland and Lods \cite{ABCL:2016} implemented these techniques to construct solutions to one dimensional Boltzmann flow modeling polymers;   Alonso, Gamba and Tran \cite{Alonso-Gamba-Tran} in the solution of system of the Quantum Boltzmann gas coupled to the Bose Einstein condensed  system at cold temperatures, as well as Gamba, Smith and Tran \cite{AST-M3A-20} applied to the estimates for the kinetic wave equation derived by Luov and Zacharov for water waves applied to stratified flows in deep oceans. 
Most recently,  the Illner, Kaniel and  Shimbrot  iteration schemes have been extended   to global in time solutions to  binary-ternary Boltzmann flows  with initial data near vacuum,  while the global in time solutions for its space homogeneous problems for large data with  propagating and generating  moments estimates  resulted in lovable problems whose solutions also exhibit global in time $L^1(\R^d)$-polynomial and exponential moments, as shown by Ampatzoglou, Gamba, Pavlovic and Taskovi\v{c} \cite{Ampatzoglou_G_P_T-22, Ampatzoglou_G_P_T-23}. 

These type of propagation of  moments estimates were also developed for the space homogeneous Fokker Plank Landau equation for hard potentials by Desvillettes and Villani \cite{Des-Vill-CPDE2000-part1, Des-Vill-CPDE2000-part2} in their studies of $L^1_\ell(\R^d)$-norm propagation properties as much as higher $H^\alpha(\R^d)$ polynomial and exponential  weighted regularity enabling the spectral analysis of the Landau flow for hard potentials.

Finally, in a unique application,  $L^1(\R^d)$ and  $H^\alpha(\R^d)$  polynomial and exponential weighted propagation estimates have been very instrumental to developed error estimates for conservative spectral schemes for both the numerical error, consistency and stability to numerical solutions of the homogeneous Boltzmann flow by and Tharkabhushaman and authors  \cite{AlonsoGThar}, and also by  Pennie and one of the authors  for the homogeneous Landau flows \cite{Pennie-Gamba-2020}.

\section{Elementary Properties of the Boltzmann Equation}\label{elementary-properties}

\subsection{Elastic binary (particle-particle) interaction:}\label{items elastic}
The standard glossary associated to the collision of
two particles located at positions in space centered  at points $x$ and $x_*$ in $\R^d$ with velocities  $v$ and $v_*$ encountering a collisional interaction and leaving with post collisional velocities $v'$ and $v'_*$, consists on

${\bf 1.}$  Their relative velocities are 
\begin{equation*}
u:= v - v_* \  \ \text{  \it before, \ \ \ and }\
u':= v' - v'_*  \ \  \text{\it  after the binary interaction.} 
\end{equation*}

${\bf 2.}$\  The unitary vector $\eta:=\frac{x-x*}{|x-x_* |}\in\mathbb{S}^2$ is the {\bf impact direction}, and $\frac{u\cdot \eta}{|u|} = \cos(\pi-\phi)=-\cos\phi$, with {\it impact angle} $\phi$.
\smallskip%

${\bf 3.}$ \ The unitary vector  $\sigma:= \frac{u'}{|u|}\in\mathbb{S}^2$ is the {\bf scattering direction} given by 
 the {\em post-collisional relative velocity} $u'$ with respect to an elastic collision, from which $|u|=|u'|$ naturally follows.
\smallskip

${\bf 4.}$  \ Consequently, the  identity $\cos \theta= \frac{u\cdot \sigma}{|u|} = \frac{u\cdot u'}{|u|^2}$ defines the {\it scattering angle} $\theta$.   The relationship  between impact and scattering angles is $2\phi = \pi - \theta.$ 
\smallskip

%
In addition,  when  the angles $\phi=0$ or $\theta=\pi$,  the interaction is called a {\it head-on} or {\it knock on} collision relevant for hard to infinite (short range) intramolecular  potentials. When $\phi=\pi/2$ or $\theta=0$, it is  
 called a {\it grazing} or {\em glancing} collision. The later one is relevant for soft (long range) interactions such as in the modeling of Coulomb interactions.

\subsection{Post-collisional velocities using the impact direction}
From items 1 and 2 from above, the elastic binary interaction laws are fully determined by local conservation of momentum and kinetic energy following from collisional specular reflection laws with respect to the impact direction $\eta$, that is 
\begin{equation}\label{eq:6.1}		
{u\cdot \tau = u' \cdot \tau}\ \text{ and }\
{u'\cdot\eta = -u\cdot \eta} \quad \text{with}\ \ \tau\cdot\eta=0\,, \qquad \text{implying}\quad u' = u - 2( u\cdot\eta) \eta\,,
\end{equation}
and the equivalent relations for the pairs $\{v', v'_*\}$ determined by  by their pre interactions ones $\{v,v_*\}$ by 
\begin{equation}\label{eq:6.3}		
\begin{array}{l}
v' = v - \left( (v - v_*) \cdot\eta\right) \eta  \ =\  v - (u\cdot \eta)  \eta\\
\noalign{\vskip6pt}
v'_* = v_* + \left( (v - v_*) \cdot\eta\right) \eta  \ =\  v_* +(u\cdot \eta)  \eta \,.
\end{array}
\end{equation}


\nn Elastic binary interactions mean that relations \eqref{eq:6.1} and \eqref{eq:6.3} are
{\bf reversible}

\subsection{Post-collisional velocities using the scattering direction and the notation $u^{\pm}$.}  Equivalently, the relation between pre to post collisional velocities can also be expressed in the center of mass $V=\frac{v+v_*}2$ and relative velocity $u=v-v_*$ framework through the scattering direction $\sigma$, defined in item 3 above which in light of the latter equation in \eqref{eq:6.1} reads 
\begin{equation*}
\sigma = \widehat{u} - 2( \widehat{u}\cdot\eta) \eta \in \mathbb{S}^{2}\,,
\end{equation*}
by the interaction law
\begin{equation}\label{eq:9.3-Vu}		
\begin{array}{l}
v' \ =\  V +\frac12 |u|\sigma  \ =\  v + \frac12(-u+|u| \sigma) =: v +u^- = v_* + u^+\, ,\\
\noalign{\vskip6pt}
v'_* \ = \ V - \frac12 |u|\sigma   \ =\  v_* +  \frac12(u-|u| \sigma)  =: v_* - u^- = v- u^+\,  ,
\end{array}
\end{equation}
after introducing the notation   $u^{\pm} = \frac{ |u|\sigma \pm u}{2}$. Therefore, setting the vectors $u^-=v'-v$, while $u^+=v- v'_*$, clearly  $u^{-}+u^{+}=u'=|u|\sigma$. It follows that  $|u|=|u'|$ and  clearly  $u^{-}\cdot u^{+}=0$  by direct computations.\\

\nn We left to the readers to check that  these  elastic binary constitutive relations  are  equivalent to the following {\em local conservation} identities
\begin{align}\label{eq:8}
  v + v_* &=  v' + v'_* \ ,  \qquad \qquad \text{ center of mass  or  local momentum}  \\ 
  |v|^2 + |v_*|^2 &= |v'|^2 + |v'_*|^2\ , \qquad \text{ local energy.}  \nonumber
\end{align}
making indistinguishable the `pre' from the `post' interaction at the particle velocity level.  \\

\begin{figure}
	\begin{center}
		\begin{tikzpicture}[scale=2.5]
		\node (v) at (1.93,0.52) [preaction={shade, ball color=violet},
		,circle,scale=1.5] {};
		\node (vs) at (-1.93,-0.52) [preaction={shade, ball color=white!70!gray},circle,scale=1.5, ,
		] {};
		\node (vp) at (1.414,-1.414)
		 [preaction={shade,ball color=violet},circle,scale=1.5, ,pattern=crosshatch dots, pattern color=black] {};
		\node (vps) at (-1.414,1.414) 
		[preaction={shade,ball color=white!70!gray}, circle,scale=1.5,pattern=crosshatch dots, pattern color=black ] {};
		\node (V) at (0,0)	 [circle,scale=1, shade,ball color=green] {};
		\node (start) at (0,-3)  [circle,scale=0.1, gray] {};
		
		\begin{scope}[>=latex]
		\draw[->,line width=0.4ex,gray!80!black] (V) -- (0.6071,-0.6071) node[near end, above,black] {$\sigma$};
		\draw[->,line width=0.2ex] (start) --  (vs) node[left, pos=1.07] {$v_*$};
		\end{scope}
		\begin{scope}[>=latex]
		\draw[->,line width=0.4ex,gray!80!black] ( V) -- ([turn]345:0.8cm) node[midway, above=-0.7cm, black] {$\eta$};
  		\draw    pic["$\theta$", draw=red, ->,  angle eccentricity=1.2,  angle radius=1cm]{angle=vp--V--v};
		\end{scope}
		\begin{scope}[>=latex, on background layer]
		\draw[densely dotted,color=green!50!black!50!white,line width=0.2ex] (0,0) circle (2cm);
		\draw[densely dotted,->,color=green!30!black!70!white,line width=0.2ex] (vps) -- (vp) node[densely dotted, color=green!30!				black,pos=0.25,above right=-0.2cm,line width=0.2ex] {};
		\draw[->,color=violet!80!white, line width=0.2ex] (start) -- (v)  node[pos=1.0,right=0.4cm] {\color{violet}$v$};
		\draw[->,line width=0.2ex] (vs) -- (v) node[near end,sloped, black,above left] {$u$};
		\draw[densely dashed,->,color=green!80!black,line width=0.2ex] (start) -- (V) node[color=green!50!black, above=0.03cm] {$V\!=\!			\frac{v\!+\!v_*}{2}$};
		\draw[->,color=blue!60!white!40!,line width=0.2ex]  (vps)--(vp) node[color=green!30!black, right=-1.2cm, near start, sloped, above] 			{\color{blue}$\!\!\! \! \ \ \ \  u'\!=\!|u|\sigma$};
		\draw[densely dotted,->,color=violet!50!black!50!white,line width=0.2ex] (start) -- (vp) node[color=violet!50!black,below right,line 		width=0.2ex] {$ v'\!=\!V + \frac{1}{2} \left|u\right| \sigma$};
		\draw[densely dotted,->,color=black!45!white,line width=0.2ex] (start) -- (vps) node[color=black!90!white, above=0.1cm, left=0.32cm] 
		{$ V \!-\! \frac{1}{2} \left|u \right| \sigma\!=\!v'_*$};
		\draw[->,color=red,line width=0.2ex] (vp) -- (v) node[
		midway, sloped, below]{\color{red} {\scriptsize\color{red}$ v\!-\!v'\!=\!-u^-\!:=\! -\frac12 (-u\!+\!\left|u\right| \sigma  \!) $}};
		\draw[->,color=red,line width=0.2ex] (vps) -- (vs) node[
		midway, sloped, above] {\scriptsize\color{red} {\color{red}$\tiny\ v_*\!-\!v'_*\!=\! u^-\!:=\!\frac12 (-u\!+\!\left|u\right| \sigma  \!) $}};
		\draw[->,color=red,line width=0.2ex] (v) -- (vps) node[midway, sloped, above]
		{\color{red} {\scriptsize \color{red}$\qquad v_*\!-\!v\! =-u^+\!:= \! -\frac12(u \!+\!|u|\sigma)$ }};
		\draw[->,color=red,line width=0.2ex] (vs) -- (vp) node[pos=0.01cm, below right=0.08cm,  
		 sloped, above] {\scriptsize\color{red} {\color{red}$\qquad \ \ u^+\!:= \!\frac12 (u \!+\! |u|\sigma)$}};
		\end{scope}	
		\end{tikzpicture}
	\end{center}
	\caption{elastic interaction diagram and coordinates	} 
	\label{coll-diagram}
\end{figure}
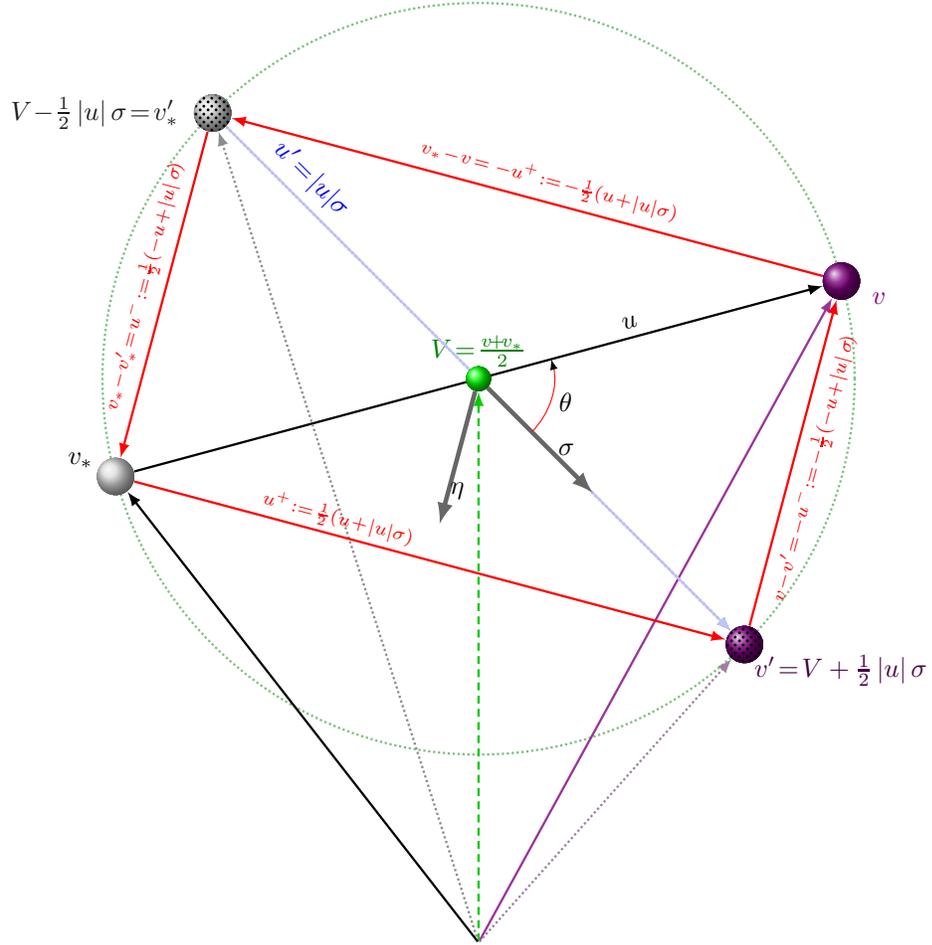
%
%
%
This wording induces the notion of time associated to collisional events, in terms of what is the conformation of  velocity states pairs,   before or after the interaction.
In particular,  while the notation exchange is admissible for binary interacting  pairs
$\{v', v'_*\}$ by $\{\vprime,\vprime_*\}$ for as long the collision is {reversible} or elastic, it will  soon become transparent that  this time  reversibility will be lost  at the level of  Boltzmann equation due  to the Molecular Chaos Assumption which distinguishes time. 
Because of properties to be studied soon, this previous comment prompts the immediate quite useful representation given in Section~\ref{pre-notation} on the distinction in the notation of {\it pre} and {\it post} interacting velocities though out this manuscript.\\

\subsection{Time reference notation}\label{pre-notation}
Relations \eqref{eq:6.3} express the post collisional velocities in terms of pre collisional velocities.  It is possible to reverse such presentation to express the pre collisional velocities $\{\vprime,\vprime_*\}$ and its relative velocity $\uprime$ in terms of the post collisional velocities $\{v,v_*\}$ and its relative velocity $u$.  In the impact direction $\eta$ framework the expression for the relative velocity is given, in light of \eqref{eq:6.1}, by the relations 
\begin{equation}\label{eq:9.1}
{u\cdot\tau = \uprime\cdot\tau}\,,\qquad
{\uprime\cdot\eta = -u\cdot\eta}\,, \quad \text{ and }\quad \uprime = u - 2(u\cdot \eta)  \eta\,.   
\end{equation}
Now, using equations \eqref{eq:6.3} and \eqref{eq:9.1}, the pair $\{\vprime,\vprime_*\}$ can be written as
\begin{equation}\label{eq:9.3}		
\begin{array}{l}
\vprime \ =\  v +  ( \uprime \cdot \eta )\eta   \ =\  v - ( u \cdot \eta)  \eta\\
\noalign{\vskip6pt}
\vprime_*\  = \  v_* + ( \uprime \cdot \eta )\eta   \ =\  v_* - (u\cdot \eta)  \eta . 
\end{array}
\end{equation}
Comparing expressions \eqref{eq:6.3} and \eqref{eq:9.3} one observes the reversibility of the collisional laws, being the only difference the conceptual interpretation of the pair $\{v,v_*\}$; for the former represents pre collisional velocities and for the later the pos collisional velocities.\\

Consequently, the pre collisional velocities representation in terms of the post collisional velocities in the scattering direction $\sigma$ framework is identical to \eqref{eq:9.3-Vu}, namely for $u^{\pm} = \frac{u \pm |u|\sigma }{2}$
\begin{equation}\label{eq:9.3-Vu-post}		
\vprime \ = \ v \,+ u^-  ,\qquad \vprime_* \ =\  \ v \, - u^+ \,,\qquad \text{so}  \ \ \uprime :=\vprime-\vprime_*= -|u|\sigma
\end{equation}
The difference lies in the interpretation of the $\sigma$ direction is reversed with respect of the direction of  $\uprime$,  or simply $'\!\sigma=-\widehat{\uprime}$ in light of the last equation in \eqref{eq:9.1}.

\medskip

While this observation may be irrelevant in the definition of local  elastic interactions, {\em it actually is very significant for the derivation of the Boltzmann equation  as it sets the stability properties for the Boltzmann flow}.  In addition, this notation is naturally relevant for the study of the inelastic kinetic theory, where   {\em local inelastic interaction are not reversible,} as shown in \cite{GPV04}



\smallskip

\vspace{8pt}
\noindent The following lemma rigorously calculates the Jacobian transformation for  elastic interactions  case, and its determinant value, in the impact direction $\eta$ framework, which is
a crucial point in the derivation of the Boltzmann equation.
\bigskip

\begin{lem}\label{lemma1}
 \noindent  After an elastic interaction, the determinants of the Jacobian of the variables transformation for the elastic interactions for a pair $\{v,v_*\}$ viewed as either before or after the interaction, are both the same for fixed $\eta\in\mathbb{S}^{2}$.  That is $ \det{J_{(v',v'_*)/(v,v_*)}}\ =\ -1 \ =\ 
 \det{J_{(\vprime,\vprime_*)/(v,v_*)}}$.
\end{lem}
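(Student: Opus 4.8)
The plan is to compute each Jacobian directly from the explicit linear relations \eqref{eq:6.3} and \eqref{eq:9.3}, treating $\eta \in \mathbb{S}^2$ as a fixed parameter. Since the maps $(v,v_*) \mapsto (v',v'_*)$ and $(v,v_*) \mapsto (\vprime,\vprime_*)$ are both affine-linear in $(v,v_*)$, the Jacobian matrices are constant $2d \times 2d$ matrices (with $d=3$), and the task reduces to evaluating a determinant of a block matrix built from the $d\times d$ identity $I$ and the rank-one projection $P_\eta := \eta \otimes \eta$ (so that $(u\cdot\eta)\eta = P_\eta u = P_\eta(v-v_*)$). First I would write, from \eqref{eq:6.3},
\[
v' = (I - P_\eta)v + P_\eta v_*, \qquad v'_* = P_\eta v + (I-P_\eta)v_*,
\]
so that
\[
J_{(v',v'_*)/(v,v_*)} = \begin{pmatrix} I - P_\eta & P_\eta \\ P_\eta & I - P_\eta \end{pmatrix}.
\]
The analogous computation from \eqref{eq:9.3} gives $\vprime = (I-P_\eta)v + P_\eta v_*$ and $\vprime_* = P_\eta v + (I - P_\eta)v_*$ — literally the same matrix — which already makes the claimed equality of the two determinants transparent; only one determinant needs to be evaluated.

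To evaluate $\det J$, I would exploit that $P_\eta$ and $I - P_\eta$ commute (both are polynomials in $P_\eta$, and $P_\eta^2 = P_\eta$). For a block matrix $\begin{pmatrix} A & B \\ B & A\end{pmatrix}$ with $A,B$ commuting, the determinant factors as $\det(A+B)\det(A-B)$. Here $A + B = I$ and $A - B = I - 2P_\eta$. Thus $\det J = \det(I)\cdot\det(I - 2P_\eta) = \det(I - 2P_\eta)$. Since $P_\eta$ is the orthogonal projection onto the line $\mathbb{R}\eta$, the matrix $I - 2P_\eta$ is the reflection across the hyperplane $\eta^\perp$: it has eigenvalue $-1$ on $\mathbb{R}\eta$ (one-dimensional) and eigenvalue $+1$ on $\eta^\perp$ ($(d-1)$-dimensional), hence $\det(I - 2P_\eta) = (-1)^1 \cdot 1^{d-1} = -1$. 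Therefore $\det J_{(v',v'_*)/(v,v_*)} = -1$, and by the identity of the two Jacobian matrices noted above, $\det J_{(\vprime,\vprime_*)/(v,v_*)} = -1$ as well.

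There is no real obstacle here; the only points requiring a word of care are (i) justifying the block-determinant factorization $\det\begin{pmatrix}A&B\\B&A\end{pmatrix} = \det(A+B)\det(A-B)$, which follows from the change of basis $(x,y)\mapsto(x+y,x-y)$ (equivalently, conjugating by $\tfrac{1}{\sqrt2}\begin{pmatrix}I&I\\I&-I\end{pmatrix}$) and needs $A,B$ to commute, and (ii) being explicit that $P_\eta = \eta\eta^\top$ is the rank-one orthogonal projector so the spectral count of $I - 2P_\eta$ is unambiguous. Alternatively, for readers who prefer to avoid the block identity, I would give the one-line geometric argument: the map $(v,v_*)\mapsto(v',v'_*)$ is an involution (by reversibility, \eqref{eq:6.3} and \eqref{eq:9.3}), so $(\det J)^2 = 1$; and it is orientation-reversing because it swaps the two particles' roles along the $\eta$-direction while fixing the $\tau$-directions — which pins the sign to $-1$. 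Either route gives the stated value.
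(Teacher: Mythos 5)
Your proof is correct, and it reaches the same core fact as the paper—that everything reduces to the determinant of the reflection $I-2P_\eta$—but by a slightly different reduction. The paper first passes from $(v,v_*)$ to the variables $(u,v_*)$ (noting this change has unit Jacobian), which turns the collision map into the block lower-triangular matrix $\begin{pmatrix}\mathcal{R}_\eta & 0\\ \mathcal{P}_\eta & I\end{pmatrix}$ of \eqref{eq:10.1.1}--\eqref{eq:10.1.2}, so the determinant is immediately $\det\mathcal{R}_\eta=-1$. You instead stay in $(v,v_*)$, write the map as the symmetric block matrix $\begin{pmatrix}I-P_\eta & P_\eta\\ P_\eta & I-P_\eta\end{pmatrix}$, and block-diagonalize by the sum/difference change of basis, giving $\det(I)\det(I-2P_\eta)=-1$; both routes are equally short, and yours has the small bonus of making it manifest that the pre- and post-collisional maps \eqref{eq:6.3} and \eqref{eq:9.3} are given by literally the same matrix (the part the paper leaves to the reader), consistent with the reversibility remark there. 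Two minor observations: the identity $\det\begin{pmatrix}A&B\\B&A\end{pmatrix}=\det(A+B)\det(A-B)$ actually holds for arbitrary equal-size square blocks via the conjugation you describe, so the commutativity of $P_\eta$ and $I-P_\eta$, while true, is not needed; and your alternative "involution plus orientation" argument only pins the sign once the eigenvalue count on $\mathbb{R}\eta$ versus $\eta^\perp$ is made explicit, which is essentially the spectral computation of your main argument anyway, so the main line of proof is the one to keep.
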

\begin{proof}
In order to show this property,   we use the exchange of coordinates defined in 
(\ref{eq:6.1} - \ref{eq:6.3}) for  $(v_,v_*)$ before the interaction, or (\ref{eq:9.1} - \ref{eq:9.3})  after the interaction.  Indeed starting from  \eqref{eq:6.3} for a fixed unitary vector $\eta$, we can subtract the equations to obtain for $u=v - v_*$ and $v_*$, that
\begin{equation}\label{eq:10.1.1}		
u' = u- 2\left( u \cdot\eta\right) \eta\,,\hspace{1cm} v'_* = v_*  +(u\cdot \eta)  \eta\,.
\end{equation}

\medskip

\nn Since  $J_{(u',v'_{*})/(u,v_{*})} = J_{(v',v'_{*})/(v,v_{*})}$, it suffices to find the Jacobian for the linear transformation \eqref{eq:10.1.1}.  This transformation can be simply written as

{\large \begin{equation}\label{eq:10.1.2}
\begin{bmatrix}
u' \\
v'_*
\end{bmatrix} \ = \
\begin{bmatrix}
\mathcal{R}_{\eta} &0 \\
\mathcal{P}_{\eta}  & \text{1}_{3\times 3}  \\
\end{bmatrix} \, . 
\begin{bmatrix}
u\\
v_* 
\end{bmatrix} \, ,
\end{equation}}
where $\mathcal{R}_{\eta}$ is the simple reflection along $\eta$ and $\mathcal{P}_{\eta}$ the simple projection along $\eta$.  In particular, 
{\large
\begin{equation}\label{eq:10.1.2}
\det J_{_{{ (u',v'_*)}/{(u,v_*)} } }\ =  \ \det \mathcal{R}_{\eta}  \ = \  -1 \, .
\end{equation}  }

\nn  Due to the nature of elastic interaction laws, a similar proof (left for the reader) shows 
  {\large \begin{equation}\label{eq:10.1.2}
 \det J_{_{{(\vprime,\vprime_*)}/{(v,v_*)} } }\ =      J_{_{ {(\uprime,\vprime_*)}/{(u,v_*)} } }   \ =\  -1 \, .
\end{equation} 
}
\end{proof}

\subsection{The collision kernels}
Transition probability rates quantifying the scattering rates of  collisional transfer between before and after the interactions as  a function of the relative speeds and   impact or scattering directions,  are often referred  collision kernels taking the form
\begin{equation}\label{coll-ker}                             
 B\left(|u|, \frac{|u\cdot\eta|}{|u|}\right) 
\end{equation}
where $|u|$ the relative speed determined by the intramolecular potentials,  and the angular transition function dependence on $\hat u\cdot\eta$ the impact (or  $\hat u\cdot\sigma$ scattering) angle determined by the strength of the collision with respect to the angular dependence of such interaction, with $\hat u$ denoting the renormalized unit vector in the direction of $u$

\medskip

\nn It is common  to assumed that, for instance, $ B(|u|, \hat u\cdot\eta)$ takes the form 
\begin{equation}\label{coll-ker-2}                             
 B(|u|, \hat u\cdot\eta)= \Phi_\gamma(|u|) \, b(\hat u\cdot\eta)\, ,
\end{equation}
where $\Phi(|u|)$ is called the  potential calculated from intramolecular potential laws for binary interactions,  and  $b(\hat u\cdot\eta)$ the angular transition probability or  differential cross section.
They are usually assumed to be power laws but this is not necessary. The growth  of $\Phi(|u|)$ and  integrability  of $b$ conditions are the most important properties to consider since they sensibly affect the fundamental characteristics of solutions to the equation.  Typical collision kernels and designations found in the literature are, for example,  given by a potential transition rate function $\Phi_\gamma(|u|)$  to be non-negative and to satisfy the following  $\gamma$-order homogeneity condition
$c_{\Phi}  |u|^\gamma \le  \Phi_\gamma (|u|)  \le C_{\Phi} |u|^\gamma,$, with $c_{\Phi}$ and $C_{\Phi}$  positive constants, independent of the relative velocity  $u$. The potential rate $\gamma$ is a crucial characteristic of, not only the physics modeling of the interaction, but also the profound impact in the analytical properties of the Boltzmann flow.  
 
These rates are commonly classify by referring to these potential rates as to  {\it hard sphere model} if $ \gamma=1$;   as {\it hard potentials model}  if $0< \gamma<1$; as {\it  Maxwell type interactions model}  if $ \gamma=0$ and $ c_{\Phi}= C_{\Phi}$ constant;
as  {\it soft potentials model}  if $-d< \gamma<0$;  and  as { \it Coulomb potential model}  if  $\gamma^{-d}$, in which case the Boltzmann model fails to be well posed problem if  the angular transition $b(\hat u\cdot\eta)\in L^1(\mathbb {S}^{d-1}).$\\

\subsection{Scattering direction representation for the collision transition probability or cross section} 
The angular coordinates exchange from the impact direction $\eta$ to the scattering direction $\sigma$ is carefully described as follows.  Let $u= v-v_*$ the relative velocity associated to an elastic  interaction and let $P$ be the orthogonal plane to $\widehat{u}$.  Taking spherical coordinates to represent the coordinate system $\{\widehat{u},P\}$ generating $\mathbb{R}^{d}$,  set $\{r , \phi,\ep_1,\ldots,\ep_{d-2}\}$, where $r>0$ is the radial coordinate, $\phi \in [0,\pi)$ the polar angle (being the zenith defined by $\widehat{u}$), and $\{\ep_1,\ldots,\ep_{u-2}\}$ the $d-2$ azimuthal angular variables associated to $P$, then,
%
 since the scattering direction  $\sigma$ is unit vector in the direction deflecting from specular reflection of $u$ in the plane orthogonal to the $\eta$ direction,  it is clear that $|u| \sigma = u - 2(u\cdot \eta)\eta = u'_{elastic}$ is the specular reflection relation,  or, equivalently, 
$\sigma = \hat{u} - 2(\hat{u}\cdot\eta)\eta\,.$

Gathering all the angular information, the Boltzmann collisional integral is modeled 
in terms of  the scattering direction  $\sigma$ instead of  the impact direction $\eta$, as a consequence it is needed to  perform the exchange of coordinates $(\eta, v_*) \to (\sigma,v_*)$ for  both $\eta$ and $ \sigma$ in $\mathbb{S}^{d-1}$,  proceeding as in the following steps.

First, starting form  item 2 in subsection~\ref{items elastic}, and using that $
\hat u\cdot \widehat P=0$,  the impact $\phi$ and scattering $\theta$ angles are related through the identities 
\begin{equation}\label{cos theta sigma}
\cos\phi = -\hat{u}\cdot\eta\quad\text{and }\quad \cos\theta = \hat{u}\cdot\sigma = 1-2|\hat{u}\cdot\eta|^2= 1- 2\cos^2\phi\,.
\end{equation}  

\medskip

  From the second identity in \eqref{cos theta sigma}, clearly the cosine of the impact angle is 
\begin{equation}\label{scat-direc.2}    
\cos\phi = \big|\hat u\cdot \eta\big| \ = \ \sqrt{ \frac{1 - {\hat u\cdot\sigma} }{2} }  \, ,
\end{equation}
 and the unitary impact $\eta$  is represented  by the decomposition in the impact spherical  framework 
$\eta = -\cos(\phi)\widehat{u} + \sin(\phi)\widehat{P}$.

%
%

Next,  Under this framework,
  the differential of area element of  impact direction $\eta\in \mathbb{S}^{d-1}$  
as a function of the reference  coordinates $(\phi, \ep)$ is given by 
\begin{equation}\label{angle.2}
{\rm d}\eta =  \sin^{d-2} (\phi)\, {\rm d}\phi\,{\rm d}\ep\,, 
\end{equation}

 Correspondingly, the differential of area element of  scattering direction $\sigma$  
as a function of the impact direction $\eta$, in the reference  coordinates $(\phi, \ep)$, is given by 
\begin{equation}\label{angle.3}
{\rm d}\sigma = 2 \sin^{d-2} (2\phi)\, {\rm d}\phi\,{\rm d}\ep\,. 
\end{equation}

As a consequence of the relations \eqref{angle.2} and \eqref{angle.3}, 
\begin{equation}\label{angle.4}
{\rm d}\sigma = \frac{2 \sin^{d-2}(2\phi)}{\sin^{d-2}(\phi) } \, {\rm d}\eta = 2^{d-1} \cos^{d-2}(\phi) \, {\rm d}\eta \, . 
\end{equation}
%
%
 %

\medskip

\nn In particular, in the case   of hard spheres collisional symmetric kernel in three dimensions $B(|u|, {\hat u\cdot\eta} ) = (u\cdot\eta)_+$, relation \eqref{angle.4} yields the identity $(u\cdot\eta)_+  {\rm d}\eta\, =  |u| \cos\phi\,  {\rm d}\eta\, =  \frac{|u|}4\, {\rm d}\sigma$.  That is,
for any integrable function $g$ on the sphere $\mathbb{S}^{2}$ 
\begin{equation}\label{angular.1}
 \int_{\mathbb{S}^{2}} \ (u\cdot\eta)_+\, g\big((u\cdot\eta)\eta\big) {\rm d}\eta =
 \frac{ |u|}4 \int_{\mathbb{S}^{2}_{+}} g\Big( \frac{u-|u|\sigma}{2} \Big) {\rm d}\sigma  \, .
\end{equation}
For more general kernels in $d$-dimension, the exchange of angular coordinates becomes
\begin{equation}\label{angular.2}\begin{split}
\int_{\mathbb{S}^{d-1}}   g\big((u\cdot\eta)\eta\big)\, B_{\eta} (|u|, (\widehat u\cdot\eta)_{+} ) {\rm d}\eta = \int_{\mathbb{S}^{d-1}} g\Big( \frac{u-|u|\sigma}{2} \Big) \, B_{\sigma}(|u|,\widehat{u}\cdot \sigma )\, {\rm d}\sigma \,,
\end{split}\end{equation}
where, from relation \eqref{angle.4}  and \eqref{scat-direc.2} in $d$-dimensions
\begin{equation}\label{angular.3}
B_\sigma (|u|,{\widehat u\cdot \sigma} ) \ =\ \frac{1}{2^{d/2}} \big( 1 - \widehat{u}\cdot\sigma \big)^{-\frac{d-2}{2}}\,B_\eta \Big( |u|, \sqrt{\tfrac{1-\widehat{u}\cdot\sigma}{2}}\Big)\,.
\end{equation}

\bigskip

\subsection{Integrability of the angular transition} 
The Grad's cut-off assumption \cite{Grad_1958} is simply the condition that the transition probability  kernel $B(|u|,\widehat{u}\cdot\sigma)\in L^\infty(\mathbb{S}^{d-1})$, that is  the angular transition density being bounded in $\widehat{u}\cdot\sigma$ with $\sigma\in \mathbb{S}^{d-1}$. While some literature has referred as to Grad's cut-off assumption having an  integrable the angular part, this manuscript will not use the {\it cut-off} wording.   
More precisely, this   work extends all known results for the existence, uniqueness and long time stability  to the weaker condition of assuming  that the angular transition density  is integrable, exactly meaning $B(|u|,\widehat{u}\cdot\sigma)\in L^1(\mathbb{S}^{d-1}, d\sigma), $ without any constraints on the pointwise behavior of  $B(\cdot,y)$ with respect to $y$.  That is, 
\begin{align}\label{grad-cut-off.1}
&\int_{\mathbb{S}^{d-1}} B\big(|u|,{\hat u\cdot \sigma}\big) \,{\rm d}\sigma = \omega_{d-2}   \int_{-\pi}^\pi B (|u|,\cos(\theta))\sin\theta^{{d-2}} \,{\rm d}\theta  \ =\ \omega_{n-2}   \int_{-1}^1 B (|u|,z)({1-z^2})^{\frac{d-3}2} \,{\rm d}z  
\end{align}
is an a.e. finite function of $|u|$, with $\omega_{d-2}$ the surface measure of the $d-2$ dimensional sphere.  In general, the integrand on \eqref{grad-cut-off.1} is not bounded.  In the cases when $B(|u|,\cos(\theta)) \sim |u|^\gamma \,  b(\cos(\theta))$
the integrability assumption simply becomes
$\int_0^\pi b(\cos(\theta)) \sin^{d-2} \theta\, {\rm d}\theta\  < \infty.$

The techniques proposed in this manuscript to overcome the  lack of boundedness of  $b(\hat u\cdot \sigma)$ at  pointwise behavior in 
$\hat u\cdot \sigma$, is overcome by decompositions qualifying the amount of  mass of the function $b(\hat u\cdot \sigma)$ at points of singularities.

This section is discuss under the assumption  that there, is at least, a positive solution $f$ of the elastic Boltzmann equation that has the regularity and decay at infinity properties that are needed to perform all the operations that follow.
 There are several fundamental  properties of the {\em elastic collision integral operator} $Q(f,f)((x,v,t)$ when focusing on it as a function of  velocity space $v$.  A more general form of the Boltzmann equation is given by, for $d\ge 2$
\begin{equation}\label{eq:elementary2}
{Q(f,f)(v) = \int_{\real^d} \int_{\mathbb{S}^{d-1}} \Big( \frac1{{J}}\fprime\fprime_* - ff_* \Big)  \,B(|u|, \hat u\cdot\eta)
d\eta dv_*} \, .
\end{equation}
where  $u=v-v_*$ is the relative molecular velocity for an interacting pair  $\vprime = v- \eta (\eta\cdot u)$, $\vprime_* = v_* + \eta(\eta\cdot u)$, respectively, whose exchange of molecular  velocity  coordinates transformation from pre to post binary interaction is given by the Jacobian  ${{J}}= | \det J_{(v,v_*)/(\vprime,\vprime_*)} |$.   In addition,  if $B(|u|, \hat u\cdot\eta)\in L^1(\mathcal S^{d-1}, d \sigma)$ then 
\begin{equation}\label{eq:elementary2}
Q(f,f)(v) = \int_{\real^d\times \mathbb{S}^{d-1}} \frac1{J}f(\vprime)f(\vprime_*) \,B(|u|, \hat u\cdot\eta)
d\eta dv_* \ -\ f(v)  \nu(v),
\end{equation}
where $\nu(v) =   \int_{\real^d\times \mathbb{S}^{d-1}}f(v_*)  \,B(|u|, \hat u\cdot\eta)
d\eta dv_* $ is the collision frequency.  Of course, in the case of elastic interactions Lemma \ref{lemma1} shows that $J=1$.

\subsection{Weak or Maxwell formulation}\label{weak-maxell-form}
The weak or Maxwell formulation of the Boltzmann collision operator is a fundamental tool in the basic  analysis of the dynamics of the Boltzmann flow.  In fact moment estimates are obtained by averaging properties of the collisional integral when multiplied and integrated by a test function in $v$-space.

\smallskip
\noindent
We start introducing the symmetrized form of the collisional integral \eqref{BTE} is defined as
\begin{equation*}\label{eq:sym-4.1}
Q_s (f,g) (v):=\frac12 \iiint_{\real^d\times\times \mathbb{S}^{d-1}}
\left[ \frac1J\left( \fprime\gprime_* + \gprime\fprime_* \right) - fg_* - f_*g\right] \,   B(|u|, \hat u\cdot\eta )\,{\rm d}\eta {\rm d}v_* \,,
\end{equation*}
with, clearly,  $Q_s(f,f)=Q(f,f)$.  The objective is to use the interchange of velocity and angular coordinates to yield an integral formulation where the exchange of molecular velocity laws only appears in the test function.  Thus, let $\varphi (v)$ be a scalar value function $\varphi(v) :\real^d \to\real$, it holds that
\begin{equation}\label{eq:4.1}  	
\begin{split}
\int_{\real^d} & Q_s (f,g) (v) \varphi (v)\, {\rm d}v\\
& = \frac12 \iiint_{\real^d\times\real^d \times \mathbb{S}^{d-1}}
\left[ \frac1J\left( \fprime\gprime_* + \gprime\fprime_* \right) - fg_* - f_*g\right]  \varphi (v)\,   B(|u|, \hat u\cdot\eta )\,{\rm d}\eta {\rm d}v_* {\rm d}v\,. 
\end{split}
\end{equation}
\begin{thm}\label{Maxwellwf} The weak or Maxwell formulation of the collisional integral, written in symmetric form is given by
\begin{equation}\label{maxwell-weak}  	
\begin{split}
\int_{\real^d}  & Q_s (f,g) (v) \varphi (v)\, {\rm d}v\\
& = \frac14 \iiint_{\real^d\times\real^d \times \mathbb{S}^{d-1}}
\left( fg_* + f_*g\right)  \big( \varphi (v')+\varphi (v'_*)- \varphi (v)-\varphi (v_*) \big) B(|u|, \hat u\cdot\eta ) \,{\rm d}\eta {\rm d}v_* {\rm d}v\,,
\end{split}
\end{equation} 
where  now $v'$ and  $v'_*$ are the {\bf post-collisional velocities} to $v$ and $v_*$, respectively. 
\end{thm}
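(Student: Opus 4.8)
The plan is to derive \eqref{maxwell-weak} from \eqref{eq:4.1} by exploiting two independent involutive changes of variables on the domain $\real^d\times\real^d\times\mathbb{S}^{d-1}$, together with the collisional invariances recorded earlier in the excerpt. First I would start from the identity \eqref{eq:4.1} and split the integrand into the ``gain'' part $\frac1J(\fprime\gprime_* + \gprime\fprime_*)\varphi(v)$ and the ``loss'' part $(fg_* + f_*g)\varphi(v)$. On the loss part, perform the relabeling $v\leftrightarrow v_*$, which is measure preserving and sends $u=v-v_*$ to $-u$; since $B$ depends only on $|u|$ and $|\hat u\cdot\eta|$ it is invariant, and this symmetrization produces $\frac12(fg_*+f_*g)(\varphi(v)+\varphi(v_*))$ after averaging with the original form. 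This is the easy half.

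Next I would handle the gain part. The key is the change of variables $(\eta,v,v_*)\mapsto(\eta,v',v'_*)$ for fixed $\eta\in\mathbb{S}^{d-1}$, whose Jacobian has absolute value $1$ by Lemma~\ref{lemma1} (indeed $J=1$ for elastic interactions), and which is an involution: applying the elastic map \eqref{eq:6.3} twice returns $(v,v_*)$, by the reversibility relations \eqref{eq:9.1}--\eqref{eq:9.3}. Under this substitution $\fprime\gprime_*$ becomes $fg_*$, the test-function factor $\varphi(v)$ becomes $\varphi(v')$, and $B(|u|,\hat u\cdot\eta)$ is unchanged because $|u'|=|u|$ and the impact angle is preserved. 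Doing the same with the $\gprime\fprime_*$ term and then once more symmetrizing in $v\leftrightarrow v_*$ (which swaps $v'\leftrightarrow v'_*$) turns the gain contribution into $\frac12(fg_*+f_*g)(\varphi(v')+\varphi(v'_*))$ times $B$, integrated against the same measure. Subtracting the loss expression from the gain expression and collecting the prefactors ($\frac12$ from the definition of $Q_s$, another $\frac12$ from each symmetrization step) yields exactly the factor $\frac14$ and the bracket $\varphi(v')+\varphi(v'_*)-\varphi(v)-\varphi(v_*)$ appearing in \eqref{maxwell-weak}.

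The main obstacle I anticipate is bookkeeping rather than conceptual: one must be careful that the change of variables $(v,v_*)\mapsto(v',v'_*)$ is applied with $\eta$ genuinely held fixed (so that it is linear, with the block-triangular matrix of \eqref{eq:10.1.2}), and that after the substitution the roles of ``pre'' and ``post'' velocities are consistently reinterpreted — this is precisely the subtlety flagged in Section~\ref{pre-notation}. One also needs enough decay and integrability on $f$, $g$, and on $B\in L^1(\mathbb{S}^{d-1})$ together with the growth of $\Phi_\gamma$ to justify splitting the integral into separately convergent gain and loss pieces and to apply Fubini; these are exactly the standing regularity/decay hypotheses stated just before \eqref{eq:elementary2}. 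Finally, to be fully rigorous one should first establish \eqref{maxwell-weak} for, say, $\varphi$ bounded and continuous (or compactly supported), so that every intermediate integral is absolutely convergent, and only afterwards extend to the polynomial test functions $\varphi(v)=\langle v\rangle^{2k}$ used later for moment estimates, by a truncation/monotone-convergence argument.
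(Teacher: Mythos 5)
Your proposal is correct and follows essentially the same route as the paper's proof: split $Q_s$ into gain and loss parts, apply the unit-Jacobian pre/post change of variables of Lemma~\ref{lemma1} (with $\eta$ fixed, using micro-reversibility) to the gain term, symmetrize in $v\leftrightarrow v_*$ on both parts, and collect the factors of $\tfrac12$ into the final $\tfrac14$. The only cosmetic difference is that where you invoke evenness of $B$ in the angular variable $\hat u\cdot\eta$, the paper instead performs the measure-preserving reflection $\eta\to-\eta$ on $\mathbb{S}^{d-1}$ — an equivalent device.
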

\begin{proof} Note that $J\, {\rm d}\vprime_* {\rm d}\vprime = {\rm d}v_* {\rm d}v$.  Therefore, for the terms related to the gain collisional part it holds that
\begin{equation}
\begin{split}\label{eq:4.2}  	
A_1 :&= \frac12  \iiint_{\real^d\times\real^d \times \mathbb{S}^{d-1}}\frac1J \big(  \fprime_*\,\gprime + \gprime_*\,\fprime\big)\varphi (v) B(|u|, \hat u\cdot\eta )\,{\rm d}\eta {\rm d}v_* {\rm d}v\\
&=\frac12  \iiint_{\real^d\times\real^d \times \mathbb{S}^{d-1}} \big(  \fprime_*\,\gprime + \gprime_*\,\fprime\big)\varphi (\vprime -  ( \uprime \cdot \eta )\eta) B(|'\!u|, -\widehat {'\!u}\cdot\eta ){\rm d}\eta {\rm d}\vprime_* {\rm d}\vprime\,.
\end{split}
\end{equation}
We used, for the second equality, \eqref{eq:9.3} and \eqref{eq:9.1} to express $v$ and $u\cdot\eta$ in terms of the pre collisional velocities and the fact that $|u|=|'\!u|$ due to the elastic nature of the interaction.  Consequently, performing the change $\eta\rightarrow-\eta$ it follows that
\begin{equation}
\begin{split}\label{eq:4.21}  	
A_1 &=\frac12  \iiint_{\real^d\times\real^d \times \mathbb{S}^{d-1}} \big(  \fprime_*\,\gprime + \gprime_*\,\fprime\big)\varphi (\vprime -  ( \uprime \cdot \eta )\eta) B(|'\!u|, \widehat {'\!u}\cdot\eta ){\rm d}\eta {\rm d}\vprime_* {\rm d}\vprime\\
&=\frac12  \iiint_{\real^d\times\real^d \times \mathbb{S}^{d-1}} \big(  f_*\,g + g_*\,f\big)\varphi (v') B(|u|, \widehat {u}\cdot\eta ){\rm d}\eta {\rm d}v_* {\rm d}v\,,
\end{split}
\end{equation}
where in the latter we simply renamed the pre collisional variables as $(v,v_*)$.  Of course, $v'$ is the post collisional velocity defined in \eqref{eq:6.3}.  Additionally, performing the interchange $v\leftrightarrow v_* $ which produces the interchange $\vprime\leftrightarrow \vprime_* $ and, then, the change $\eta\rightarrow-\eta$ it follows that
\begin{equation}
\begin{split}\label{eq:4.22}  	
A_1 &=\frac12  \iiint_{\real^d\times\real^d \times \mathbb{S}^{d-1}} \big(  f_*\,g + g_*\,f\big)\varphi (v'_*) B(|u|, \widehat {u}\cdot\eta ){\rm d}\eta {\rm d}v_* {\rm d}v\,.
\end{split}
\end{equation}
Then, adding \eqref{eq:4.21} and \eqref{eq:4.22} it holds that
\begin{equation}\label{eq:4.23}
A_1 =  \frac14\iiint_{\real^d\times\real^d \times \mathbb{S}^{d-1}} \big(  f_*\,g + g_*\,f\big)\big(\varphi(v') + \varphi (v'_*)\big) B(|u|, \widehat {u}\cdot\eta ){\rm d}\eta {\rm d}v_* {\rm d}v\,.
\end{equation}
For the term related to the loss part one simply uses the interchange $v\leftrightarrow v_*$ and, then, the change $\eta\rightarrow-\eta$ to obtain that
\begin{equation}
\begin{split}\label{eq:4.24}  	
A_2 :&=\frac12  \iiint_{\real^d\times\real^d \times \mathbb{S}^{d-1}} \big(  f_*\,g + g_*\,f\big)\varphi (v) B(|u|, \widehat {u}\cdot\eta ){\rm d}\eta {\rm d}v_* {\rm d}v\\
&=\frac12  \iiint_{\real^d\times\real^d \times \mathbb{S}^{d-1}} \big(  f_*\,g + g_*\,f\big)\varphi (v_*) B(|u|, \widehat {u}\cdot\eta ){\rm d}\eta {\rm d}v_* {\rm d}v\,.
\end{split}
\end{equation}
Then, adding the two lines in \eqref{eq:4.24} one has that
\begin{equation}\label{eq:4.25}  	
A_2 =\frac14  \iiint_{\real^d\times\real^d \times \mathbb{S}^{d-1}} \big(  f_*\,g + g_*\,f\big)\big(\varphi (v) + \varphi(v_*)\big) B(|u|, \widehat {u}\cdot\eta ){\rm d}\eta {\rm d}v_* {\rm d}v\,.
\end{equation} 
The result follows after subtracting \eqref{eq:4.25} from \eqref{eq:4.23}.
\end{proof}
The following corollary is a direct consequence for the quadratic case $g=f$.
\begin{cor}\label{coro4}   {\bf  Weak form or Maxwell form  of the collisional integral}
\begin{align}
\begin{split}\label{eq:maxwell}
\int_{\real^3} Q(f,f) \varphi (v) \,{\rm d}v
&= \frac12\iint_{\real^d\times\real^d }
ff_*  \left(\int_{\mathbb{S}^{d-1}} \big(\varphi (v') + \varphi (v'_*) - \varphi (v) - \varphi (v_*)\big)
B(|u|, \hat u\cdot\eta ) {\rm d}\eta \right) {\rm d}v_* {\rm d}v\\
&= \frac12\iint_{\real^d\times\real^d }f(v)f(v-u) G_{i}(v,u) {\rm d}u {\rm d}v\, ,  \quad i=1,2, \quad \text{with} \\
G_{1}(v,u) :&=  \int_{\mathbb{S}^{d-1}} \big( \varphi (v') + \varphi (v' - u') - \varphi (v) - \varphi (v-u) \big)
B(|u|, \hat u\cdot\eta ) {\rm d}\eta\,,\\
G_{2}(v,u):&= 2\int_{\mathbb{S}^{d-1}} \big( \varphi (v') - \varphi (v) \big)
B(|u|, \hat u\cdot\eta ) {\rm d}\eta\,.
\end{split}
\end{align}
\end{cor}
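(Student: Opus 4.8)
The plan is to obtain both identities in \eqref{eq:maxwell} directly from the symmetric weak formulation of Theorem~\ref{Maxwellwf} by specializing to $g=f$ and then rewriting the velocity integral in relative coordinates $(v,u)$ with $u=v-v_*$.

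First I would set $g=f$ in \eqref{maxwell-weak}. Since $Q_s(f,f)=Q(f,f)$ and the prefactor $fg_*+f_*g$ collapses to $2ff_*$, the weak form becomes
\begin{equation*}
\int_{\real^d} Q(f,f)\varphi\,{\rm d}v = \frac12\iint_{\real^d\times\real^d} ff_*\left(\int_{\mathbb{S}^{d-1}}\big(\varphi(v')+\varphi(v'_*)-\varphi(v)-\varphi(v_*)\big)B(|u|,\hat u\cdot\eta)\,{\rm d}\eta\right){\rm d}v_*\,{\rm d}v ,
\end{equation*}
which is the first line of \eqref{eq:maxwell}. To reach the $G_1$ representation I would perform the change of variables $v_*\mapsto u:=v-v_*$ at fixed $v$, which has unit Jacobian, so ${\rm d}v_*={\rm d}u$ and $f_*=f(v-u)$. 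Using \eqref{eq:6.3} to write $v_*=v-u$ together with the identity $v'_*=v'-u'$, where $u'=u-2(u\cdot\eta)\eta$ and $|u'|=|u|$ by the elastic relations, the bracketed angular integral is exactly $G_1(v,u)$, giving the $i=1$ identity.

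For the $G_2$ representation I would exploit the $v\leftrightarrow v_*$ symmetry already used in the proof of Theorem~\ref{Maxwellwf}: under this relabelling $u\mapsto-u$ and $v'\leftrightarrow v'_*$, while composing with the extra reflection $\eta\mapsto-\eta$ leaves $v'=v-(u\cdot\eta)\eta$ unchanged and restores the second argument of $B$. Since $ff_*$ is symmetric in $v,v_*$, this shows that under the double integral against $ff_*\,{\rm d}v_*\,{\rm d}v$ the $\varphi(v'_*)$ term contributes the same as the $\varphi(v')$ term, and the $\varphi(v_*)$ term the same as the $\varphi(v)$ term. Hence the inner integral may be replaced by $2\int_{\mathbb{S}^{d-1}}(\varphi(v')-\varphi(v))\,B\,{\rm d}\eta$, and after the same substitution $v_*\mapsto u$ one obtains the $i=2$ identity with $G_2(v,u)=2\int_{\mathbb{S}^{d-1}}(\varphi(v')-\varphi(v))\,B(|u|,\hat u\cdot\eta)\,{\rm d}\eta$.

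The only delicate point is the bookkeeping of the angular variable under the $v\leftrightarrow v_*$ relabelling: one must verify that composing it with $\eta\mapsto-\eta$ leaves both the kernel $B(|u|,\hat u\cdot\eta)$ and the post-collisional velocity $v'$ invariant (equivalently, that $B$ is used in its even form $B(|u|,|\hat u\cdot\eta|)$), so that no spurious factor appears. Everything else is the unit-Jacobian substitution, Lemma~\ref{lemma1}, and $|u|=|u'|$. I expect this sign and measure accounting to be the main—though entirely routine—obstacle.
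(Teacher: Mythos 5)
Your proposal is correct and follows essentially the route the paper intends: the paper treats the corollary as a direct consequence of Theorem~\ref{Maxwellwf} with $g=f$ (giving the $\tfrac12 ff_*$ prefactor), and the passage to $G_1$ and $G_2$ uses exactly the unit-Jacobian substitution $v_*\mapsto u=v-v_*$ together with the $v\leftrightarrow v_*$ interchange composed with $\eta\mapsto-\eta$ that already appears in the proof of that theorem. Your remark on the evenness of $B$ in $\hat u\cdot\eta$ (equivalently, the $\eta\mapsto-\eta$ bookkeeping) is the right point to flag, and your handling of it is consistent with the paper's conventions.
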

\begin{remark} Corollay \ref{coro4} exhibit a {\sl  weighted  double convolution form} for the weak form of the collisional integral.
This structure is due to the bilinear (or multilinear) mixing associated to these operators. 
Such structures are   actually  very convenient forms for analytical and computational purposes related to the Boltzmann equation.
\end{remark}
Something very unique associated to elastic interactions is the micro irreversibility of collisions, that is, $v' = \vprime$ and $v'_* = \vprime_*$.  A consequence of such irreversibility and Theorem \ref{Maxwellwf} is the following symmetric weak form which consequently produce the dissipation of entropy for the Boltzmann model.
\begin{cor}\label{coro5}{\bf  Symmetric weak form or Maxwell form of the collisional integral}
\begin{align*}	
\int_{\real^d} Q(f,f) \varphi(v){\rm d}v
& = \frac14 \int_{\real^d\times\real^d\times \mathbb{S}^{d-1}}
(f'f'_* \!-\! ff_*) \big(\varphi(v)\!+\varphi(v_*) -  \varphi(v'_*) - \varphi(v'_*) \big)B(|u|, \hat u\cdot\eta ){\rm d}v_*{\rm d}v{\rm d}\eta\,.
\end{align*}
\end{cor}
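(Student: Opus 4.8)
The plan is to derive Corollary~\ref{coro5} directly from the symmetric weak formulation in Theorem~\ref{Maxwellwf}, exploiting the micro-irreversibility of elastic collisions, namely $v'=\vprime$ and $v'_*=\vprime_*$, together with the fact that $J=1$ by Lemma~\ref{lemma1}. First I would take the identity \eqref{maxwell-weak} with $g=f$, which gives
\[
\int_{\real^d} Q(f,f)\varphi(v)\,{\rm d}v = \frac12 \iiint_{\real^d\times\real^d\times\mathbb{S}^{d-1}} ff_* \big(\varphi(v')+\varphi(v'_*)-\varphi(v)-\varphi(v_*)\big) B(|u|,\hat u\cdot\eta)\,{\rm d}\eta\,{\rm d}v_*\,{\rm d}v\,.
\]
Then the key step is to perform the change of variables $(v,v_*)\mapsto(\vprime,\vprime_*)$ in the part of the integral involving $\varphi(v')+\varphi(v'_*)$. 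Because the Jacobian has absolute value $1$ (Lemma~\ref{lemma1}), because $|u|=|\uprime|$ is preserved, and because the map is an involution sending post-collisional to pre-collisional configurations, this substitution turns $ff_*$ into $f'f'_*$ and turns $\varphi(v')+\varphi(v'_*)$ into $\varphi(v)+\varphi(v_*)$ while the remaining test-function terms become $\varphi(v'_*)$-type expressions; care must be taken with the angular variable, where one also uses the substitution $\eta\to-\eta$ exactly as in the proof of Theorem~\ref{Maxwellwf} to restore the kernel to the form $B(|u|,\hat u\cdot\eta)$.

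Concretely, I would split the weak form into a ``gain'' piece carrying $\varphi(v')+\varphi(v'_*)$ and a ``loss'' piece carrying $-\varphi(v)-\varphi(v_*)$. On the gain piece apply the involutive change of variables described above to rewrite it with density $f'f'_*$ and test terms $-\big(\varphi(v'_*)+\varphi(v'_*)\big)$ (in the notation of the statement); the loss piece already has density $ff_*$ and test terms $\varphi(v)+\varphi(v_*)$. Recombining, the collision kernel and angular measure being identical in both pieces, one obtains
\[
\int_{\real^d} Q(f,f)\varphi(v)\,{\rm d}v = \frac14 \iiint_{\real^d\times\real^d\times\mathbb{S}^{d-1}} (f'f'_*-ff_*)\big(\varphi(v)+\varphi(v_*)-\varphi(v'_*)-\varphi(v'_*)\big) B(|u|,\hat u\cdot\eta)\,{\rm d}v_*\,{\rm d}v\,{\rm d}\eta\,,
\]
where the factor $\tfrac14$ (rather than $\tfrac12$) arises from averaging the two equivalent representations of the gain term, precisely as in the passage from \eqref{eq:4.21}--\eqref{eq:4.22} to \eqref{eq:4.23}. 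Micro-irreversibility is what guarantees that after the substitution the ``primed'' velocities appearing are genuinely the post-collisional velocities attached to $v,v_*$, so that the same symbol $v'$ can be used consistently throughout.

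The main obstacle I anticipate is purely bookkeeping: keeping the roles of pre- and post-collisional velocities straight through the composition of the involution $(v,v_*)\leftrightarrow(\vprime,\vprime_*)$ and the reflection $\eta\to-\eta$, and verifying that the boundary contributions and the $1/J$ factors all collapse correctly so that the kernel is untouched. There is no analytic difficulty — integrability of $B$ on $\mathbb{S}^{d-1}$ and sufficient decay/regularity of $f$ (assumed throughout this section) justify all the changes of variables and Fubini applications — but one must be careful that the symmetrization producing the $\tfrac14$ is applied to the gain term only, matching exactly the structure already established in Theorem~\ref{Maxwellwf}. Once the substitution is set up correctly, the statement follows by inspection.
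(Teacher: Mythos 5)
Your overall strategy is the paper's own: start from Theorem~\ref{Maxwellwf} with $g=f$, use micro-reversibility $\varphi(v')+\varphi(v'_*)=\varphi(\vprime)+\varphi(\vprime_*)$ together with the unit Jacobian of Lemma~\ref{lemma1} and the reflection $\eta\to-\eta$, and then average two representations to produce the factor $\tfrac14$. But the concrete recombination you describe does not, as written, give the corollary. If you transform \emph{only} the gain piece, the involutive change of variables sends
\begin{equation*}
\tfrac12\iiint_{\real^d\times\real^d\times\mathbb{S}^{d-1}} ff_*\big(\varphi(v')+\varphi(v'_*)\big)\,B\,{\rm d}\eta\,{\rm d}v_*\,{\rm d}v
\;\longmapsto\;
\tfrac12\iiint_{\real^d\times\real^d\times\mathbb{S}^{d-1}} f'f'_*\big(\varphi(v)+\varphi(v_*)\big)\,B\,{\rm d}\eta\,{\rm d}v_*\,{\rm d}v\,,
\end{equation*}
i.e.\ the density becomes $f'f'_*$ but the test functions become the \emph{unprimed} ones; there is no sign change and no primed test terms, contrary to your claim that the transformed gain carries density $f'f'_*$ with test terms $-\big(\varphi(v')+\varphi(v'_*)\big)$. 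Adding the untouched loss piece then yields
$\tfrac12\iiint (f'f'_*-ff_*)\big(\varphi(v)+\varphi(v_*)\big)B\,{\rm d}\eta\,{\rm d}v_*\,{\rm d}v$,
which is a correct identity but not the symmetric four-term form with the factor $\tfrac14$ that the corollary asserts.

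The fix is exactly the paper's proof: apply the substitution to the \emph{entire} bracket, so that the first representation $\tfrac12\iiint ff_*\big(\varphi(v')+\varphi(v'_*)-\varphi(v)-\varphi(v_*)\big)B$ becomes the second representation $\tfrac12\iiint f'f'_*\big(\varphi(v)+\varphi(v_*)-\varphi(v')-\varphi(v'_*)\big)B$, and then average these two representations of the whole integral (equivalently, transform half of the gain and half of the loss); expanding the average gives the four-term integrand and the $\tfrac14$. Your guiding remark that ``the symmetrization producing the $\tfrac14$ is applied to the gain term only, matching the structure of Theorem~\ref{Maxwellwf}'' conflates two different symmetrizations: in Theorem~\ref{Maxwellwf} the $\tfrac14$ came from the particle exchange $v\leftrightarrow v_*$ applied separately to gain and loss, whereas here it comes from averaging the pre/post-collisional representations of the full weak form. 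With that correction, your argument coincides with the paper's.
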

\begin{proof}
Due to irreversibility $\varphi (v') + \varphi (v'_*) = \varphi (\vprime) + \varphi (\vprime_*)$.  Recall also that $J=1$ due to elasticity, consequently one has that
\begin{align*}
\int_{\real^d}& Q(f,f) \varphi(v){\rm d}v\\
&=\frac12\iint_{\real^d\times\real^d\times\mathbb{S}^{d-1} }
ff_* \big(\varphi (v') + \varphi (v'_*) - \varphi (v) - \varphi (v_*)\big)
B(|u|, \hat u\cdot\eta ) {\rm d}\eta {\rm d}v_* {\rm d}v\\
&=\frac12\iint_{\real^d\times\real^d\times\mathbb{S}^{d-1} }
ff_* \big( \varphi (\vprime) + \varphi (\vprime_*) - \varphi (v) - \varphi (v_*)\big)
B(|u|, \hat u\cdot\eta ) {\rm d}\eta {\rm d}v_* {\rm d}v\\
&= \frac12\iint_{\real^d\times\real^d\times\mathbb{S}^{d-1} }
f'f'_* \big( \varphi (v) + \varphi (v_*) - \varphi (v') - \varphi (v'_*)\big)
B(|u|, \hat u\cdot\eta ) {\rm d}\eta {\rm d}v_* {\rm d}v\,.
\end{align*}
The result follows adding the first and last equalities.
\end{proof}

\subsection{Collision invariants  of $Q(f,f)$ }

It is very easy to see  from \eqref{eq:maxwell}  that 
testing the elastic collisional integral with any linear combination of test function 
  $\varphi(v)\in \text{span}\{ 1, v, |v|^2 \}$,  with  arbitrary coefficients $A,B$ and $C$, the elastic interacting pairs $(v,v_{*})$ and $(v',v'_{*})$ satisfying
\begin{equation}\label{elastic-col}
\begin{aligned}
v+v_* \  & = \ v'+v'_*   \qquad \ \ \   \text{local momentum conservation} \\
|v|^2 +|v_*|^2 \ & = \  |v'|^2+|v'_*|^2     \qquad \text{local energy conservation,}   
\end{aligned}
\end{equation}
cleary, by Corollary \ref{coro4}, the following equivalence hold
\begin{equation}\label{eq:collision2}		
\int_{\real^d} Q(f,f) \varphi (v) \,{\rm d}v = 0\quad\text{ if and only if }
\quad {\varphi (v'_*) + \varphi (v') = \varphi (v) + \varphi (v_*)}, 
\quad \text{a.e. in} (v,v_*).
\end{equation}
In particular, any $\varphi (v)$ satisfying \eqref{eq:collision2} for any vectors $(v,v_{*})$ and $(v',v'_{*})$ satisfying \eqref{elastic-col},  is referred to as a {\em collision invariant}.
The three invariants of the collisional integral are referred as conservation of mass, momentum and kinetic energy conditions, which  follows from the transport equation when integrating in velocity space and in physical space under boundary conditions without external sources of energy. \\

Next, the {\it the Boltzmann Theorem} resolves the question on whether it is possible to show that  elastic collisional law is enough to show that, if $\int_{\real^d} Q(f,f) \varphi (v) \,{\rm d}v=0$  for any multiplier  $\varphi(v)$,  then  $\varphi(v) = A +B\cdot v + C|v|^2$ for $v\in\real^3$, under the assumptions that  $f$ is assumed non-negative, so $ff_* > 0$ in $\real^{2d} \times \mathbb{S}^{d-1}$,  and the transition probability or collision cross section $ B(|u|, u \cdot \eta)=0$ has positive  measure zero in $\real^d \times \mathbb{S}^{d-1}.$  \\

\begin{thm}[The Boltzmann Theorem]\label{BoltzTheo}
The function  $\varphi(v)$ is a collision invariant if and only if
\begin{equation}\varphi (v) = \varphi (v) = A+\vec B\cdot v +  C|v|^2,\qquad \text{or all}\ v\in\real^3\,.
\end{equation}
\end{thm}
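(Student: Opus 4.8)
The plan is to prove that any collision invariant $\varphi$, i.e. any measurable $\varphi:\real^d\to\real$ satisfying
\[
\varphi(v)+\varphi(v_*)=\varphi(v')+\varphi(v'_*)
\]
for almost every pair $(v,v_*)$ and every scattering direction, must be a linear combination of $1$, $v$ and $|v|^2$. First I would reduce to a regularity-free functional equation: using the parametrization \eqref{eq:9.3-Vu} of the post-collisional velocities in terms of the center of mass $V=\tfrac{v+v_*}{2}$ and the relative speed $|u|$, the invariance relation says precisely that the quantity $\varphi(V+w)+\varphi(V-w)$ depends only on $V$ and $|w|$, for all $V\in\real^d$ and all $w$ ranging over a sphere of radius $\tfrac12|u|$. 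In other words, writing $\Psi(V,r):=\varphi(V+w)+\varphi(V-w)$ for any $|w|=r$, the function $\Psi$ is well-defined. This is the classical Cauchy-type functional equation underlying the Boltzmann $H$-theorem, and the standard route (going back to Boltzmann, with a clean modern treatment) is: (i) first establish measurable $\Rightarrow$ continuous/smooth by a mollification or averaging argument exploiting that $\varphi$ solves a relation that is rigid under translations and rotations; (ii) then differentiate.

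The key steps, in order, are as follows. \textbf{Step 1 (Reduction).} Fix the two parameters: the relation $\varphi(v)+\varphi(v_*)=\varphi(v')+\varphi(v'_*)$ with $v'=V+\tfrac12|u|\sigma$, $v'_*=V-\tfrac12|u|\sigma$ and $\sigma$ arbitrary on $\mathbb S^{d-1}$ shows that $\varphi(V+z)+\varphi(V-z)$ is constant as $z$ varies on any sphere centered at $0$; call this common value $2\,G(V)$ when $z=0$ (so $G=\varphi$) and more generally $\Psi(V,|z|)$. \textbf{Step 2 (Promote regularity).} Integrate the identity against smooth compactly supported test functions in the free variables (the direction $\sigma$, and translations of $V$) to show $\varphi$ agrees a.e. with a $C^\infty$ function; because the relation is preserved under this smoothing, we may henceforth assume $\varphi\in C^\infty$. \textbf{Step 3 (Derive the PDE).} With $\varphi$ smooth, differentiate $\varphi(V+z)+\varphi(V-z)=\Psi(V,|z|)$ twice in $z$ and evaluate; the mixed second derivatives in $z$ of the left side are $\partial_i\partial_j\varphi(V+z)+\partial_i\partial_j\varphi(V-z)$, while the right side, depending on $z$ only through $|z|$, forces the Hessian of $\varphi$ to be a scalar multiple of the identity: $\partial_i\partial_j\varphi(w)=\lambda(w)\,\delta_{ij}$. \textbf{Step 4 (Solve the PDE).} From $\partial_i\partial_j\varphi=\lambda\delta_{ij}$ one gets, by cross-differentiating ($\partial_k$ of the $(i,j)$ equation equals $\partial_i$ of the $(k,j)$ equation), that $\partial_k\lambda=0$ for $k\ne$ ... hence $\lambda$ is constant; integrating twice yields $\varphi(v)=C|v|^2+\vec B\cdot v+A$. \textbf{Step 5 (Converse).} The converse is immediate from the local conservation laws \eqref{eq:8}: $1$, each component of $v$, and $|v|^2$ are visibly invariant, hence so is any linear combination.

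I expect the main obstacle to be \textbf{Step 2}, the passage from merely measurable $\varphi$ to smooth $\varphi$. The functional equation has \emph{two} free parameters after fixing $V$ and $|u|$ — the scattering direction $\sigma$ and (by varying $V$) translations — and one must leverage exactly this to mollify without destroying the relation; the subtlety is that the invariance is a relation on a \emph{measure-zero} set in $(v,v_*,\sigma)$-space for each fixed geometry, so one has to be careful about which convolutions are legitimate (typically: average over $V$ in a small ball and over $\sigma$ in a small cap, and use Fubini to land back in the same class). Once smoothness is in hand, Steps 3–4 are short linear-algebra/ODE computations. An alternative that sidesteps Step 2 in dimension $d\ge 2$ is to use the richness of the collision manifold directly: combining several invariance identities algebraically (as in Gronwall/Cercignani-style proofs) one can derive that $\varphi(v)-\varphi(0)-\nabla$-type differences are quadratic without ever differentiating; I would keep that as a fallback but present the mollification-then-differentiate argument as the main line, since it is the cleanest and is the standard reference proof.
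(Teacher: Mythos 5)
Your proposal is correct and follows essentially the same route as the paper: the paper likewise reduces to the center-of-mass identity, shows (for $C^2$ functions, via a direction-independent normalized second difference and a Taylor limit) that the Hessian of $\varphi$ is isotropic and constant, hence $\varphi$ is a quadratic polynomial, and treats the converse as immediate from the local conservation laws. The only real difference is the order of the regularity step: instead of first upgrading $\varphi$ itself to $C^\infty$ (your Step 2, which you rightly flag as the delicate point), the paper classifies smooth invariants first and then applies that classification to the mollifications $\varphi*\rho_\ep$ — which are again collision invariants because translations preserve momentum- and energy-conserving pairs — and finally passes to the limit of quadratic polynomials, thereby sidestepping any a.e.-smoothness argument for $\varphi$ itself.
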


This statement was first proven by   Boltzmann in 1890 for  $\varphi$ twice differentiable, Cercignani and Arkeryd 1970 for 
$\varphi$ in $L^1_{\loc}$, and  Wennberg in 1992 for $\varphi$ just a distribution on $v$. We show here a different proof.

\begin{proof}
Let any distribution $\varphi \in \mathcal{D}^1 (\real^d)$ 
be a collision invariant satisfying the relation  \eqref{eq:collision2}
for all pairs $(v,v_*)$ and $(v' ,v'_*)$ generating an elastic interaction
\begin{equation}\label{B2}
v+v_* = v' + v'_* \qquad\text{ and }\qquad
|v|^2 + |v_*|^2 = |v'|^2 + |v'_*|^2, 
\end{equation}
%
%

Take any arbitrary $V\in \real^d$, and consider any two arbitrary vectors ${\bf v_1}$ and ${\bf v_2}$ with {\em the same magnitude}. 
These vectors can be thought to be taking an arbitrary center of mass  $V$, and two arbitrary relative velocities ${\bf v_1}:=u$ and ${\bf v_2}:=|u|\sigma$, respectively, as it will be shown next.

 Set the interacting pairs     $(v,v_*)$ and $(v' ,v'_*)$ defined  the arbitrary vectors, $V$, ${\bf v_1}$ and ${\bf v_2}$ as defined by the choice   the center of mass and relative velocity coordinates \eqref{eq:9.3-Vu},  that is  
\begin{align}\label{four-point}
(v, v_*)&= (V+{\frac{\bf v_1}2} ,V-{\frac{\bf v_1}2}) = (V+{\frac{u}2} ,V-{\frac{u}2})  \qquad \qquad\ \mathrm{ and} \nonumber \\ 
(v',v'_*)&= (V+\frac{\bf v_2}2, V-{\frac{\bf v_2}2}) = (V+{\frac{|u|\sigma}2} ,V-{\frac{|u|\sigma}2}),
\end{align}
which clearly satisfy the local conservation properties relations in \eqref{B2}, since, using  $|{\bf v_1}|=|{\bf v_2}|$,
\begin{align*}\label{ce-rv-coord}
\frac{v+v_*}2 &= \frac{v' + v'_* }2 =V,    \ \  \text{conservation of center of mass} \\
  |v|^2 + |v_*|^2 &= \ 2|V|^{2} + \frac{|{\bf v_1}|^{2}}{2} =|v'|^2 + |v'_*|^2,  
  \ \ \text{energy identity}, 
\end{align*}
Moreover,   $v'-v=\frac12 ({\bf v_2}-{\bf v_1})= \frac12(|u|\sigma -u) = -u^- $ and $v'_*-v_*=-\frac12 ({\bf v_2}-{\bf v_1})= -\frac12(|u|\sigma -u)=u^-.$\\

Next, let any function $\varphi (v)$  in $C^2 (\real^d)$ to 
satisfy the relation \eqref{eq:collision2} on the pairs $(v, v_*)$ and $(v', v'_*)$  given by \eqref{four-point}, respectively. Then, subtract 
$2\varphi (w)$ on both sides of identity \eqref{eq:collision2} and divide by $|v|^2$  the left hand side, and  by $|v_*|^2$ the right hand side, respectively (as  both arbitrary vectors $v$ and $v_*$ have the same magnitude).  The resulting identity approximates  the linear wave equation by taking second variations by finite differences 
\begin{equation}\label{fpapprox}
\frac1{|{\bf v_1}|^2} \left\{ \varphi \Big( V+\frac{\bf v_1}2 \Big)
+ \varphi \Big( V-\frac{\bf v_1}2\Big) - 2\varphi (V)\right\}
= \frac1{|{\bf v_2}|^2}\left\{ \varphi \Big( V+\frac{\bf v_2}2 \Big)
+ \varphi \Big( V-\frac{\bf v_2}2\Big) -2\varphi (V)\right\},
\end{equation}
defined for any arbitrary vectors $v$ and $v_*$ with the same magnitude $|\bf v_1|= |\bf v_2|$.  Therefore, invoking the $v_i$-directional Taylor expansions, for $i=1,2$,  and taking the limit as $|\bf v_1| = |\bf v_2| \to 0^+$,  yields the identity
\begin{equation*}
\langle \mathcal{H}(\varphi)(V)\widehat{\bf v_1},\widehat{\bf v_1}\rangle
= \langle \mathcal{H}(\varphi)(V)\widehat{\bf v_2},\widehat{\bf v_2}\rangle,
\end{equation*}
where  $\mathcal{H}(\varphi)$ is the Hessian matrix of $\varphi$ and $\widehat{y}$ is the  unit vectors in the direction of $y$.  This property suffices to show that the second derivatives of $\varphi$, in any direction, must be constant.  Indeed, choosing
\begin{equation*}
\widehat{\bf v_1} = \frac{e_{i}+e_{j}}{\sqrt{2}}\,, \qquad  \widehat{\bf v_2} = \frac{e_{i} - e_{j}}{\sqrt{2}},\quad  \text{implying}\   \ \widehat{\bf v_1}\cdot \widehat{\bf v_2} =0, \  \ \text{for any}\ \ j\neq i,  \ 1\le i,j\le d.
\end{equation*}
It follows
\begin{equation*}
\big(\partial^{2}_{ij}\varphi\big)(V) = \langle \mathcal{H}(\varphi)(V)e_{i},e_{j}\rangle = 0\, \quad\text{and}\quad 
\big(\partial^{2}_{ii}\varphi\big)(V) = \langle \mathcal{H}(\varphi)(V)e_{I},e_{I}\rangle=\langle \mathcal{H}(\varphi)(V)e_{j},e_{j}\rangle = \big(\partial^{2}_{jj}\varphi\big)(V)\,, 
\end{equation*}
and since  the vector $w\in\mathbb{R}^{d}$ is arbitrary this leads to $\partial^{2}_{ij}\varphi = 0$ and $\partial^{2}_{ii}\varphi = \partial^{2}_{jj}\varphi$ for any $i\neq j$, means that mixed derivatives vanish, and   imply $\varphi(v)$ separates as
$$ 
\varphi(v) = \sum^{d}_{i=1} f_{i}(v^{i})\,,\qquad \text{on any arbitrary}\ \  v=(v^{1},v^{2},\cdots,v^{d})\,,
$$
for some scalar functions $f_{i}:\mathbb{R}\rightarrow\mathbb{R}$. 

 Furthermore, since second pure derivatives are equal, yields
$f''_{1}(v^{1}) = f''_{2}(v^{2}) = \cdots = f''_{d}(v^{d}) = \text{constant}$.  That is, $f_{i}(v^{i}) = a^{i} + b^{i}v^{i} + c^{i} (v^{i})^{2}$, for some real numbers $a^{i},\,b^{i},\,c^{i}$.  Or equivalently,
$$\varphi(v) = \frac{C|v|^2}2 + \vec B \cdot v + A\  \text{ for any }\ v\in\real^d\ .$$
This proves the statement of the theorem for any $\varphi \in C^2 (\real^d)$.

\medskip

To complete the proof take $\varphi \in \mathcal{D}^1 (\real^d)$ a collision invariant and use the mollification $\varphi *\rho_\ep$, where $\rho_\ep \in C^\infty_o$-mollifier.  Then, since the mass and energy conserving pre and post collisional pairs $(v, v_*)$ and $(v', v'_*)$, defined in \eqref{four-point}, remain  the mass and energy conserving under any translation $\tau_z \varphi(v)=\varphi(v-z)$, then the  chosen collision invariant $\varphi(v)\in \mathcal{D}^1 (\real^d)$ satisfies
\begin{align*}
(\varphi * \rho_\ep) (v) +  (\varphi * \rho_\ep)(v_*) &= \int_{\mathbb{R}^{d}}\left(\tau_z \varphi(v) + \tau_z \varphi(v_*)\right) \rho_{\ep}(z) {\rm d}z \\
&= \int_{\mathbb{R}^{d}} \left(\tau_z \varphi(v') + \tau_z \varphi(v'_*)\right)  \rho_{\ep}(z) {\rm d}z = (\varphi * \rho_\ep) (v') +  (\varphi * \rho_\ep)(v'_*)\,.
\end{align*}

 Consequently, $(\varphi *\rho_\ep)$ satisfies the five-point approximating rule \eqref{fpapprox} for the couple of pairs defined in \eqref{four-point},  implying 
$$(\varphi *\rho_\ep) (v) = \frac{C^{\ep}}{2} |v|^2 +\vec B^{\ep}\cdot v +A^{\ep}\,, $$
Hence, $\lim_{\ep\downarrow 0} (\varphi *\rho_\ep) (v) =\varphi$ induces the limit s$A^{\ep},\,\vec B^{\ep},\, C^{\ep}$ to some $A,\,\vec B,\,C$, and so 
$\varphi (v) = \frac{C}2 |v|^2 +\vec B\cdot v +A\,.$
This last statement completes the proof.
\end{proof}

\subsection{The Boltzmann inequality and the $\cH$-Theorem}

From the weak-Maxwell formulation we can  obtain the following fundamental results for the conservative (elastic) Boltzmann collisional form
\begin{thm}[The Boltzmann inequality and the $\cH$-Theorem]
If $f$ is non-negative and $Q(f,f)\,\log f$ is integrable and
$\log f = \varphi (v)$, then 
\begin{equation}\label{eq:boltzmann}
\begin{aligned}
&{\rm i)} \quad\text{Boltzmann Inequality: }\quad
\int_{\real^d} Q(f,f)\log f\,{\rm d}v \le 0 \, , \quad \text{and}\\
&{\rm ii)}\quad \text{The $\cH$-Theorem:}\quad \ \ \displaystyle \int_{\real^d}Q(f,f)\, \log f \, {\rm d}v  = 0 \quad   \text{ iff } \quad f(v) = \exp (A+B\cdot v - C|v|^2)\,.
\end{aligned}
\end{equation}
\end{thm}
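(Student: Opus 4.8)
The plan is to use the symmetric weak form of the collision operator (Corollary~\ref{coro5}) with the choice $\varphi=\log f$, which turns the integral $\int Q(f,f)\log f\,{\rm d}v$ into a manifestly signed quantity. Concretely, applying Corollary~\ref{coro5} with $\varphi(v)=\log f(v)$ gives
\begin{equation*}
\int_{\real^d} Q(f,f)\log f\,{\rm d}v
= \frac14\int_{\real^d\times\real^d\times\mathbb{S}^{d-1}}
\big(f'f'_* - ff_*\big)\big(\log(ff_*) - \log(f'f'_*)\big)\,B\,{\rm d}v_*{\rm d}v{\rm d}\eta,
\end{equation*}
after grouping $\log f + \log f_* = \log(ff_*)$ and similarly for the primed variables, and noting that by micro-reversibility the post-collisional pair contributes $\log f(v') + \log f(v'_*)$. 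The key elementary inequality is that for all $a,b>0$ one has $(a-b)(\log b - \log a)\le 0$, with equality iff $a=b$; here $a=f'f'_*$, $b=ff_*$. Since $B\ge 0$, the integrand is pointwise $\le 0$, which proves part~(i), the Boltzmann inequality.

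For part~(ii), the equality $\int Q(f,f)\log f\,{\rm d}v=0$ forces the nonpositive integrand to vanish for a.e.\ $(v,v_*,\eta)$ on the set where $B>0$; since $B=0$ has measure zero in $\real^d\times\mathbb{S}^{d-1}$ (the standing hypothesis used in the Boltzmann Theorem), the equality condition in $(a-b)(\log b-\log a)\le 0$ yields $f'f'_* = ff_*$ a.e., i.e.\ $\log f(v') + \log f(v'_*) = \log f(v) + \log f(v_*)$ for a.e.\ pair of elastic collision configurations. This is exactly the defining relation \eqref{eq:collision2} for $\varphi=\log f$ to be a collision invariant. Invoking the Boltzmann Theorem (Theorem~\ref{BoltzTheo}) then gives $\log f(v) = A + \vec B\cdot v + C|v|^2$ for constants $A,\vec B,C$; rewriting by completing the square and absorbing the linear term, and noting that integrability of $f$ forces the quadratic coefficient to be negative, one gets $f(v) = \exp(A + B\cdot v - C|v|^2)$ with $C>0$, which is the stated Maxwellian form. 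Conversely, if $f$ has this form then $\log f$ is an affine-plus-quadratic function, hence a collision invariant by the trivial direction of Theorem~\ref{BoltzTheo}, so the integral vanishes.

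The only genuine subtlety — the step I would flag as requiring care rather than being routine — is the passage from ``the integrand vanishes on $\{B>0\}$'' to ``$\log f$ is a collision invariant in the sense needed to apply Theorem~\ref{BoltzTheo}.'' One must know that the collision map $(v,v_*,\eta)\mapsto(v',v'_*)$ sweeps out \emph{all} pairs $(v',v'_*)$ compatible with the conservation laws \eqref{B2} as $\eta$ ranges over the sphere, so that $f'f'_*=ff_*$ a.e.\ really does encode the full functional equation \eqref{eq:collision2}, and that the null set where $B=0$ (or where $f=0$) does not obstruct this. This is where the hypotheses ``$f$ non-negative, $ff_*>0$'' and ``$\{B=0\}$ has measure zero'' are used, exactly as in the setup preceding Theorem~\ref{BoltzTheo}; everything else is the elementary convexity inequality for $(a-b)\log(b/a)$ and bookkeeping with the weak formulation.
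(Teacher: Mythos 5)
Your proposal is correct and follows essentially the same route as the paper: the symmetric weak form with $\varphi=\log f$, the monotonicity inequality $(a-b)(\log b-\log a)\le 0$ for the sign, the Boltzmann Theorem~\ref{BoltzTheo} to identify $\log f$ as a collision invariant in the equality case, and integrability of $f$ to force the quadratic coefficient to be negative. Your extra remark on why $f'f'_*=ff_*$ a.e.\ on $\{B>0\}$ encodes the full collision-invariance relation is a sensible elaboration of a point the paper handles only implicitly, but it does not change the argument.
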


\begin{proof}
For $\varphi (v) = \log f(v)$ as a test function, from the symmetric weak Maxwell  
form Corollary~\ref{coro4}
\begin{equation}\label{eq:10.1}		
\begin{split}
\int_{\real^d} Q(f,f) \log f(v){\rm d}v
& = \frac14 \int_{\real^d\times\real^d\times \mathbb{S}^{d-1}}\mkern-48mu
(f'f'_* \!-\! ff_*) [\log f\!+\!\log f_* \!-\! \log f'_* \!-\!\log f']B(|u|, \hat u\cdot\eta ){\rm d}v_*{\rm d}v{\rm d}\eta\\
\noalign{\vskip6pt}
&= \frac14\int_{\real^d\times\real^d\times \mathbb{S}^{d-1}}\mkern-48mu
(f'f'_* - ff_*) (\log ff_* - \log f'_* f') B(|u|, \hat u\cdot\eta )\,{\rm d}v_*\,{\rm d}v\,{\rm d}\eta\\
\noalign{\vskip6pt}
& =: - \mathcal{D} (f) \le 0\qquad \text{\em Entropy dissipation rate}\,.
\end{split}
\end{equation}
The sign in last inequality holds because the $\log z$ is monotone increasing function, satisfying 
$(\log z-\log y) (z-y) > 0\; \text{ for any }\; z \ne y>0\; \text{ and }\;  (\log z-\log y) (z-y) = 0\; \text{ iff }\; z= y\,.$

Therefore, if there is an probability density functions $M(v)$ that nullifies the entropy dissipation rate, then it must satisfy
$\mathcal{D}(M) = 0$ if , and only if  $M'M'_* = MM_*$
and hence,  the Boltzmann Theorem~\ref{BoltzTheo} yields 
\begin{equation*}
  {\log M(v) = A+B\cdot v +C |v|^2}     \text{ or, equivalently }\     {M(v) = \exp (A+B\cdot v +  C|v|^2)}\,.
\end{equation*}

Finally, since any solution of the Boltzmann equation is a probability density in all  $v \in \R^d$, then its integral must be finite, 
and hence the constant $C<0$.  That means  equilibrium solution  $M(v):=f_{eq}(v)$  must be a Gaussian distribution in $v$-space.
\end{proof}
In particular, setting $\vec B= {\bf v}, \  C = -\frac{m}{2k_BT},\ \exp (A) = \left(\frac{m}{2\pi k_BT} \right)^{3/2},$
the equilibrium state $M(v)$ becomes
\begin{equation}\label{eq:boltzmann-maxwell}
M(v)=M_{m,{\bf v}, T}(v)= \left( \frac{m}{2\pi k_BT}\right)^{3/2}
e^{-\frac{m|v -{\bf v}|^2}{2k_BT}}  \qquad
\text{Boltzmann-Maxwell molecular distribution}\,,
\end{equation}
that is also  expected  to be a stable stationary state for the single particle probability density function that solves the Boltzmann equation.   This state  $M_{m(x,t),{\bf v}(x,t), T(x,t)}(v)$ is often called a  {\em Local Maxwellians} or {\em  Local Equilibrium Statistical States (L-ESS)} for  the case of the spatially inhomogeneous problem. 

Additionally, when these  states,  solve both the transport and the collision part, that is
\begin{equation}
\partial_t {M} + v\cdot \nabla_x {M} = 0 \ \ \ \text{and}  \ \ \ Q({M},{M})=0 
\end{equation}
they are called  {\em Global Maxwellians} or  {\em  Global Equilibrium Statistical States (G-ESS)}.  If the parameters  $\bf u$ and $T$ are time independent  (may still depend on $x$-space) the  state $f_{eq}$  is referred as {\em Stationary Statistical Equilibrium  States (SESS)}.  {\em SESS} are stable and the H-Theorem is crucial for proving such stability as much as the functional structure of the solution space and the prescribed spatial boundary conditions.   In addition the {\em Global Maxwellians} are natural  {\em similarity} solutions   and can be proved to be attractors of the flow, even when they still are time dependent.  In such case one can referred to them as {\em stable Global Equilibrium Statistical States}.

\subsubsection{Entropy functionals and thermalization} Define the Boltzmann entropy functional as
\begin{equation}\label{eq:Htheorem.5.1}
\cH (f) = \int_{\real^d} f\, \log f\,{ d}v\,,
\end{equation}
and the relative Boltzmann entropy functional, defines for an equilibrium Maxwellian $M(v)$, meaning the for any $f(v)\in L^1_2(\mathbb{R}^d)\cap L\log L)(\mathbb{R}^d)$, such that $\int_{\mathbb{R}^d} f(v,t) \log M(v) dv= \int_{\mathbb{R}^d} f_0(v) \log M(v) dv$
\begin{equation}\label{eq:Htheorem.5.2}
\cH (f\, | \,M)  :=    \int_{\real^d} \tfrac{f}{M(v)} \log\Big(\tfrac{f}{M}\Big) M(v) dv =\int_{\real^d} f\, \log\Big(\tfrac{f(v,t)}{M(v,t)}\Big)\,{\rm d}v =  \int_{\real^d}f \log f dv -\int_{\mathbb{R}^d} f_0(v) \log M(v) dv.
\end{equation}
Then, from  calculation \eqref{eq:10.1} and the $\mathcal{H}$-theorem, it follows
\begin{equation}\label{eq:Htheorem.6}
\frac{{\rm d}}{{\rm d}t}\cH (f\,|\,M)  = - \mathcal{D} (f) \le 0\,,
\end{equation}
where the entropy dissipation functional was given in the calculation \eqref{eq:10.1}.

\begin{remark}
This last theorem stresses the {\em irreversible in time} nature of the
Boltzmann equation, even though we started with an $N$-particle collisional
system with {\em time reversible elastic collisions}.
\end{remark}

The reason for this ``loss of reversibility'' lies on the molecular chaos
assumption: ``velocities of the particles {\em about to collide} are
uncorrelated.''
combined that we model the gain operator as acting on the pre-collisional
velocities.
Looking back at the Boltzmann inequality \eqref{eq:10.1}
that presets the sign of $(\log z-\log y) (z-y)$ expression that defines
$\mathcal{D}(f).$\\

{
This preliminary section concludes with the anticipation of new results sprouting from the work in this manuscript. 
More precisely, the addressing of the question of   whether is  
possible to use \eqref{eq:Htheorem.6} to conclude, at least for solutions $f(v,t)$ of the homogeneous Boltzmann problem, that the long time thermalization limit  
$\lim_{t\to\infty^+ }f(v,t) = M_{\rho, \rho u,\rho e} (v)$ occurs 
occurs in some functional sense, and there this is a Banach space that not only the limit exits but also it is possible to establish the decay rate to zero of the $f(v,t) - M_{\rho, \rho u,\rho e} (v)$ in a related functional sense.
While  question has been addressed en the last decades, the following statements offer an anticipation of how to improve existing results.

}
\smallskip

%
%
%
%

%
%

Starting from the key step in solving this problem, discussed and proven in \cite{TV, Villani}, is the inequality
\begin{equation}\label{entropy-dissipation-control}
c_f\,\mathcal{H}(f|M)^{\alpha}\leq \mathcal{D}(f)
\end{equation}
with $M$ the thermal equilibrium associated to $f$ and for some explicit constants $c_f>0$ and $\alpha>1$ depending on properties of the solution $f$ itself.  Of particular importance is the appearance of higher moments, discussed in a later section, since the tail of a solution plays a central role in the equilibrium convergence rate which is related to $\alpha$.  Interestingly the aforementioned inequality is valid for soft and hard potentials under similar regularity properties of $f$, but in practice soft potential models have much ticker tails which can produce very slow thermalisation rates \cite{Carlen}.  We refer to \cite{Villani} for a extensive discussion of this inequality including an illuminating short draft of the proof.\\

Continuing with the discussion, from \eqref{eq:Htheorem.6} and \eqref{entropy-dissipation-control} we are led to the inequality
\begin{equation*}
\frac{{\rm d}}{{\rm d}t}\cH (f|M)   + c_{f}\,\cH (f|M)^{\alpha} \le 0\,,
\end{equation*}
which in turn leads to
\begin{equation}\label{eq:Htheorem.16}
\cH (f(t)|M) \leq \frac{\cH (f_0|M)}{\big(1+(\alpha-1)\cH (f_0|M)^{\alpha-1}\,c_f\,t\big)^{\frac{1}{\alpha-1}}}\,.
\end{equation}
A elegant application of the {\em Cszisar-Kullback inequality}, developed for information theory, shows that
\begin{equation}\label{eq:Htheorem.17}
\| f - M\|_{L^{1}(\real^d)} \leq \int_{\real^d} |f-M|\,dv \le \sqrt{2\,\cH (f \mid M)}\,. 
\end{equation}
Therefore, from \eqref{eq:Htheorem.16} it follow that
\begin{equation*}
\| f(t) - M\|^{2}_{L^{1}(\real^d)}  \ \leq \frac{2\,\cH (f_0|M)}{\big(1+(\alpha-1)\cH (f_0|M)^{\alpha-1}\,c_f\,t\big)^{\frac{1}{\alpha-1}}}\,.
\end{equation*}
So $f(v,t) \xrightarrow[t\to\infty]{} M(v)$ strongly in $L^1(\real^d)$ at algebraic rate $\frac{1/2}{\alpha-1}$.

\section{The Cauchy problem for Maxwell and hard potentials}\label{hardpotsection}

A natural space to solve the homogeneous Boltzmann equation is the space of probability measures.  In particular, we develop the theory in the space of integrable functions, whose observables, or moments, may be bounded. 
from the functional viewpoint, we recall  classical weighted Banach space $L^1_\ell(\real^d)$ associated to statistical processes describing the evolution weighted integrable functions,  defined by 
\begin{equation}\label{k-norm}
L^{1}_{\ell}(\real^d)\  = \  \Big\{ \ g \; \text{measurable} \ : \  \|g\|_{L^1_\ell} \  = \ \int_{\real^d} |g(v)| \langle v \rangle^{\ell}\,dv \ < \ \infty\Big\}\,, 
\end{equation}
any $\ell\geq0$. 

As usual, we referred as the Lesbegue weight $\langle v \rangle = \sqrt{1+|v|^{2}}$.  Note that $L^1_{\ell_2} \hookrightarrow L^1_{\ell_1}$, since  $\|f\|_{L^1_{\ell_1}} \le \|f\|_{L^1_{\ell_{2}}} $, whenever  $\ell_2 \ge \ell_1$.  

A closely related concept is the one of $k^{th}$-Lebesgue moment or observable of $g(v,t)$
\begin{equation}\label{k-moment}
m_{k}[g](t):=\int_{\real^d}\ g(v,t)\,\langle v \rangle^{2k}\ dv\,, \qquad   k \in  \overline{\mathbb{R}^+}  \,.
\end{equation}
Note that $\|g\|_{L^1_{2k}}\equiv m_{k}[g]$ whenever the scalar valued function takes only positive values.
 
We present a revision of an approach first introduced in the unpublished notes  by A. Bressan  \cite{bressan} in the context of classical Boltzmann  equations for elastic binary collisions for hard spheres in three  dimensions with angular cross section.  The approach is based on abstract ODE theory and it is used here to solve the Cauchy problem for the  Boltzmann equation for Maxwell type of interactions as well as for hard potentials for initial data with finite mass and energy, that is,
\begin{equation}\label{energy}
0\leq f_0\,,\qquad m_{1}(f_0)<\infty\,.
\end{equation}

Thus, the associated classical space homogeneous Boltzmann collisional model for elastic binary interactions induces an evolution flow in the space $L^{1}_{\ell}(\real^d)$ by setting the identity
\begin{equation}\label{k-moment-identity}
\frac{d}{dt}m_{k}[f]:=\int_{\real^d}\ Q(f,f)(v,t)\,\langle v \rangle^{2k}\ dv\,,\qquad k\geq0\,, 
\end{equation}
where the right had side may or may not  be positive.

Therefore, we focus first in an existence theory for the Cauchy problem in a general framework that allows for solution to the Boltzmann flow by solving an Ordinary Differential Inequality in the Banach space$L^{1}_{\ell}(\real^d)$, namely the following theorem will be rigorously  addressed 
after recalling the weak formulation of the Boltzmann collision operator
\begin{equation}\label{bina-weak2}
\begin{split}
\frac{d}{dt} \int_{\R^N}  f(v,t) \varphi (v)\, dv\,
=  \int_{\R^d \times \R^d}  f(v,t)\, f(v_*,t) 
G_{\varphi}(v_*,v)  dv_*\, dv\, ,
\end{split}
\end{equation}
for any test function $\varphi(v,t)$, where  the binary weights function $G_{\varphi}(v_*,v) $ is determined by the $\sigma$-average of the $\mathbb{S}^{d-1}$ sphere, decomposed in a possitive and negativeparts.
\begin{equation}\label{weightuv2}
\begin{split}
&G_{\varphi}(v_*,v)  := \left(G_{\varphi}^+ - G_{\varphi}^-\right)(v_*,v):=\int_{\mathbb{S}^{d-1}} \big(\varphi (v') + \varphi (v'_*) - \left(\varphi (v) + \varphi (v_*)\right) \big) B(|v-v_*|, \sigma) d\sigma\\
 \noalign{\vskip8pt} 
&v'=v - \tfrac12(|u|\sigma-u)\, , \qquad v'_*= v_* +\tfrac12(|u|\sigma-u)\, ,  \qquad u=v-v_*\, ,
\end{split}
\end{equation}
where the transition probability  is assumed to take the  form $B(|v-v_*|, \sigma) = \Phi(u)\,b(\hat{u}\cdot\sigma).$  The angular part of the scattering angular kernel $b:=b(\hat{u}\cdot\sigma)$ is  assumed to be integrable with respect to the measure $d\sigma$. Conditions on the potential transition rate function $\Phi_\gamma(|u|)$ are assumed to be non-negative and to satisfy the following  $\gamma$-order homogeneity condition
\begin{align}\label{pot-Phi_1}
c_{\Phi}  |u|^\gamma \le  \Phi_\gamma (|u|)  \le C_{\Phi} |u|^\gamma\, ,
\end{align}

\begin{thm}[\bf{Cauchy problem for the homogeneous binary Boltzmann equation}] \label{CauchyProblem} 
Let $\gamma\in [0,2]$, $b(\hat{u}\cdot\sigma, d\sigma)$ integrable, and initial data $f_0$ with finite mass and energy $\|f_0\|_{L^1_2}<\infty.$   

If $\gamma\in(0,2]$ and the initial data  $\|f_0\|_{L^1_{2+\eps}}<\infty$ for any arbitrary $\epsilon>0$, then there exists a unique nonnegative function
\begin{equation*}
f(v,t) \in \mathcal{C} \big( [0,\infty); L^{1}_{2}( \mathbb{R}^d ) \big) \cap  \mathcal{C}^1\big( (0,\infty);L^{1}_{\ell}(\mathbb{R}^{d})\big)\,,\quad \forall\ 
 \ell\ge 2,
\end{equation*}
solving  
\begin{equation}\label{collision3}
\left\{\begin{array}{l}
f_t = Q(f,f) =  \int_{\R^d \times S^{d-1}} \big( \fprime\, \fprime_*  - f\, f_*)\, \Phi(|v-v_*|)\, b( \hat u\cdot \sigma) \, d\sigma dv_* \,, \\ \noalign{\vskip6pt}
f(v,0) = f_0 (v)\,.
\end{array}\right.
\end{equation}
Furthermore, the conservation laws hold
\begin{equation}\label{collision3.2}
\int f(v,t)\text{d}v = \int f_0(v)\text{d}v,\quad \int f(v,t)v\text{d}v = \int f_0(v)v\text{d}v,\quad \int f(v,t)|v|^{2}\text{d}v =\int f_0(v)|v|^{2}\text{d}v\,.
\end{equation}
 In the case $\gamma\in[0,2]$,   $\|f_0\|_{L^1_{\ell}}<\infty$, then    $f(v,t) \in \mathcal{C} \big( [0,\infty); L^{1}_{\ell}( \mathbb{R}^d ) \big) \cap  \mathcal{C}^1\big( (0,\infty);L^{1}_{\ell}(\mathbb{R}^{d})\big)$ for all  
 $\ell\ge 2.$
\end{thm}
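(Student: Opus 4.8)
# Proof Proposal for Theorem~\ref{CauchyProblem}

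\textbf{Overall strategy.} The plan is to cast the Cauchy problem as an abstract ODE in the Banach space $L^1_\ell(\mathbb{R}^d)$ and invoke the abstract existence-uniqueness machinery (Theorem~\ref{Theorem_ODE}, proved in the Appendix following Bressan). This requires verifying three structural ingredients for the collision operator $Q(f,f)$: (i) an \emph{a priori} estimate showing that $Q$ maps a suitable bounded, convex, weak-$*$ closed subset of $L^1_\ell$ into itself with controlled growth (this is where the polynomial moment bounds of Theorem~\ref{mom-coll-op} and Theorem~\ref{propagation-generation} enter, together with the lower bound / coerciveness from Theorem~\ref{lblemma}); (ii) a one-sided Lipschitz (or local Lipschitz) estimate on differences $Q(f,f)-Q(g,g)$ in the $L^1_\ell$ norm, yielding uniqueness and continuous dependence; and (iii) the sub-tangent (Nagumo-type) condition asserting that $f + hQ(f,f)$ stays, to first order in $h$, inside the admissible set of nonnegative functions with the prescribed moments — this is exactly the content anticipated in Lemma~\ref{prop_subtan}. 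I would organize the proof in this order.

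\textbf{Step 1: The admissible set and moment control.} First I would fix $\ell \ge 2$ and define, for the initial datum with $\|f_0\|_{L^1_\ell} < \infty$, the invariant region
\[
\Omega_{\ell} := \Big\{\, g \in L^1_\ell(\mathbb{R}^d)\ :\ g \ge 0,\ \ m_0[g] = m_0[f_0],\ \ m_1[g] = m_1[f_0],\ \ \|g\|_{L^1_\ell} \le C_\ell \,\Big\},
\]
where $C_\ell$ is the global-in-time bound supplied by the polynomial moment propagation result Theorem~\ref{propagation-generation}. The key input here is that the ODI for $m_{\ell/2}[f]$ closes: the gain term's contribution is controlled by lower-order moments (via the weak form \eqref{bina-weak2}--\eqref{weightuv2} and the Povzner-type angular averaging), while the loss term provides a coercive negative contribution proportional to $m_{\ell/2+\gamma/2}[f]$ with an explicit coercive constant. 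Combined with the energy conservation (which for $\gamma \in (0,2]$ forces the extra $\varepsilon$ of initial moment to ensure the loss term dominates from $t=0$; for $\gamma=0$ the Maxwell case, $\|f_0\|_{L^1_\ell}<\infty$ already suffices since the collision frequency is bounded below by mass alone), this gives a differential inequality of the form $\frac{d}{dt}m_{\ell/2}[f] \le A - B\, m_{\ell/2}[f]^{1+\gamma/(\ell)}$, hence a uniform bound. I would also record here that $t \mapsto m_1[f](t)$ is conserved (Step~4 below).

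\textbf{Step 2: Sub-tangency and Lipschitz estimates.} Next I would verify that $Q(f,f)$ satisfies the sub-tangent condition relative to $\Omega_\ell$: for $f \in \Omega_\ell$, $\mathrm{dist}_{L^1_\ell}(f + hQ(f,f),\ \Omega_\ell) = o(h)$ as $h \downarrow 0$. Positivity preservation to first order follows from the standard decomposition $Q = Q^+ - f\,\nu$ with $\nu \ge 0$ the collision frequency; moment constraints are preserved because $1, v, |v|^2$ are collision invariants (Corollary~\ref{coro4} and the Boltzmann Theorem~\ref{BoltzTheo}); and the norm bound is exactly Step~1. This is the statement flagged as Lemma~\ref{prop_subtan}. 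For uniqueness I would estimate $\|Q(f,f)-Q(g,g)\|_{L^1_\ell} \le \|Q(f,f-g)\|_{L^1_\ell} + \|Q(f-g,g)\|_{L^1_\ell}$ and bound each bilinear term by $C(\|f\|_{L^1_{\ell+\gamma}}, \|g\|_{L^1_{\ell+\gamma}})\,\|f-g\|_{L^1_{\ell+\gamma}}$ using the a priori estimates on the collision operator (Theorem~\ref{mom-coll-op}); a Gronwall argument on $\|f-g\|_{L^1_\ell}$ then closes uniqueness, at the cost of one extra moment — which is available since the solution lies in $L^1_{\ell'}$ for every $\ell'$ by Step~1 applied at level $\ell' = \ell + \gamma$.

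\textbf{Step 3: Regularity, conservation, and the main obstacle.} Having solved the ODE in $L^1_\ell$, continuity $f \in \mathcal{C}([0,\infty); L^1_\ell)$ is automatic from the integral formulation, and $\mathcal{C}^1((0,\infty); L^1_\ell)$ follows because $t \mapsto Q(f(t),f(t))$ is continuous in $L^1_\ell$ (again by the bilinear estimates and Step~1). The conservation laws \eqref{collision3.2} follow by testing the weak form with $\varphi \in \{1, v, |v|^2\}$ and using that these are collision invariants, so the right-hand side of \eqref{bina-weak2} vanishes identically; the only subtlety is justifying the use of these (unbounded) test functions, which is legitimate because $f(t) \in L^1_2$ uniformly and one can pass to the limit from truncated test functions using the moment bounds. \textbf{I expect the main obstacle to be Step~1 combined with Step~2's need for an extra moment:} one must show the polynomial moment ODI genuinely closes with a \emph{quantitative, $t$-independent} coercive constant — this is where the careful Povzner-lemma bookkeeping and the precise role of the $\|f_0\|_{L^1_{2+\varepsilon}}$ hypothesis for $\gamma>0$ are essential (for $\gamma=0$ the loss term does not gain a moment, so the mechanism is different and relies on boundedness of the angular kernel in $L^1$ and a spectral/contractive structure in Fourier, but the statement here only claims propagation, not generation, so the $L^1_\ell$ bound is inherited directly from $f_0$). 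Keeping all constants explicit throughout — as the theorem demands — is the labor-intensive part, but no new idea beyond the cited lemmas is needed.
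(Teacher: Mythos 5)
Your overall architecture matches the paper's (Bressan's abstract ODE theorem, an invariant convex set built from conserved mass/energy plus a propagated higher moment, sub-tangency from the gain/loss splitting, conservation via collision invariants), but your Step~2 has a genuine gap precisely at uniqueness. The bilinear estimate you propose, $\|Q(f,f)-Q(g,g)\|_{L^1_\ell}\le C\big(\|f\|_{L^1_{\ell+\gamma}},\|g\|_{L^1_{\ell+\gamma}}\big)\,\|f-g\|_{L^1_{\ell+\gamma}}$, puts the extra weight $\langle v\rangle^{\gamma}$ on the \emph{difference} $h=f-g$, and the fact that $f$ and $g$ separately have all higher moments does not convert $\|h\|_{L^1_{\ell+\gamma}}$ back into $\|h\|_{L^1_\ell}$. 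The best you can do from that estimate is interpolate, $\|h\|_{L^1_{\ell+\gamma}}\le \|h\|^{\theta}_{L^1_\ell}\,\|h\|^{1-\theta}_{L^1_{\ell'}}\le C\|h\|^{\theta}_{L^1_\ell}$ with $\theta<1$, which yields a H\"older-type differential inequality $\tfrac{d}{dt}\|h\|\le C\|h\|^{\theta}$; with $h(0)=0$ this does \emph{not} force $h\equiv 0$ (the $y'=\sqrt{y}$ phenomenon), so Gronwall does not close and uniqueness does not follow. Indeed, in the paper this same interpolation is only used to verify the H\"older continuity hypothesis (exponent $1/2$) of the abstract theorem; uniqueness comes from a separate mechanism, the one-sided Lipschitz bracket of Di Blasio type (Lemma~\ref{one-Lips-prop3}): testing the difference equation against $\mathrm{sign}(f-g)\langle v\rangle^{2}$ and using $h\,\Delta(v,v_*)\le 2|h|\langle v_*\rangle^{2}$ and $h_*\,\Delta(v,v_*)\le 2|h_*|\langle v\rangle^{2}$ transfers the growing weight onto $H=f+g$ (whose $L^1_{2+2\gamma}$ norm is controlled on the invariant set), leaving only $\|h\|_{L^1_{\gamma}}\le\|h\|_{L^1_2}$ on the difference, i.e.\ a genuinely linear bound in the \emph{same} norm. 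You mention "one-sided Lipschitz" in passing in your strategy, but the estimate you actually write down is the symmetric one that fails; the sign/monotonicity structure is the missing idea, not an optional refinement.

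A second, smaller gap concerns the hypothesis $\|f_0\|_{L^1_{2+\epsilon}}<\infty$. The invariant set on which the H\"older and sub-tangency estimates run requires control of $m_{1+\gamma}[f]$, i.e.\ initial data in $L^1_{2+2\gamma}$; your explanation of the role of the $\epsilon$ ("the loss term dominates from $t=0$") does not bridge this. The paper needs a separate second stage (Lemma~\ref{Lemma_ODE-extension2}): truncate $f_0$ to get data $f_0^j\in L^1_{2+2\gamma}$, solve for each $j$, use moment \emph{generation} to get $\|f^j(t)\|_{L^1_{2+\gamma+\epsilon}}\le \tilde C(1+t^{-\theta})$ with $\theta<1$ uniformly in $j$, feed this into the one-sided Lipschitz inequality so that the time integral of the coefficient is finite near $t=0$, conclude the sequence is Cauchy in $L^1_2$, and identify the limit as the solution for $f_0\in L^1_{2+\epsilon}$. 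Without some argument of this kind (and it again leans on the one-sided Lipschitz estimate you are missing), the theorem as stated — existence and uniqueness for data with only $2+\epsilon$ moments — is not reached by your construction.
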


The success of such global existence solely relies in two fundamental sufficient conditions. One is the upper control of the positive part associated to the collision operator $k^{th}$-moment in the right hand side of identity \eqref{k-moment-identity} by a linear or constant form in the solution $k^{th}$-moment, whose constant parameters may only depend of the conserved constant quantities  $m_0[f]$ and $m_1[f_0]$, mass and Lebesgue energy respectively.  This upper control is obtained firstly by an energy identity that splits the local energy conservation in a convex combination of the potential energy and the center of mass; and secondly by calculating the $\sigma$-averages of the positive part of $G^+_{\langle v\rangle^\ell}= \int_{\mathbb{S}^{d-1}} (\langle v\rangle^\ell+\langle v_*\rangle^\ell)b( \hat u\cdot \sigma) d\sigma$, which    decays in $\ell$, proportionally to  $\ell^{th}$-Lebesgue energy. These estimates, inspired in the work of Bobylev \cite{boby97},  extend the classical Povzner estimates to a large larger class of angular transition functions $b(\hat{u}\cdot\sigma)$.  Such upper estimates are now referred  by Angular Averaging Lemma for the control of the positive contributions to the weight  $G_{\langle v\rangle^k}$. 

The second  fundamental sufficient condition is a lower bound.  We look into two possible venues. Under the posed assumption on the angular averaging of $b(\hat{u}\cdot\sigma)$ being finite, the first one simple  looks into a  lower pointwise estimate of the negative contribution $G^-_{\langle v\rangle^\ell}= \|b\|_{L^1(\mathbb{S}^{d-1})}\, (\langle v\rangle^\ell+\langle v_*\rangle^\ell)$. 
This estimate is sufficient to prove Theorem~\ref{CauchyProblem}. 
However, to work out the propagation of the $L^p(\real^d)$ theory $1\le p\le\infty$ we need a stronger version of the the lower bound, which can be obtain by a functional inequality for elements in $L^1(\real^d)$ that, under suitable conditions  controls the convolution of such an element by $\langle v\rangle^\gamma$, by a finite constant proportional to  $\langle v\rangle^\gamma$, for $0<\gamma\leq 2$. This is all worked out in  the Lower Bound Lemma~\ref{lblemma}. 

These estimates also enable the control of propagation and generation of exponential tails.

Further, we will show that, if  the initial data satisfies   $ \| f_0\|^1_{\ell} \cap   \| f_0\|^p_{\ell}<\infty$, with $1<p\leq\infty$, then this property propagates in time.
This result requires an extra moment requirement $m_{1+\epsilon}(f_0)<\infty$, for some $\epsilon>0$ to secure a lower bound that warrants   a coerciveness property  needed to show the $  \| f\|^p_{\ell}(t)$ boundedness for all time (i.e. propagation of initial `regularity'.)  This is the  crucial Lemma~\ref{lblemma}  shown in  in subsection~\ref{lower-bounds}.

\smallskip

While the wellposedness theory of the Boltzmann equation has been addressed by several authors \cite{Ca57, Arkeryd72}, with the most general result is given in \cite{Misch-Wenn}, where the initial data is assumed only with finite mass and energy,  their arguments are based on the moment propagation and generation theory \cite{Elmroth, Desvillettes, WennbergMP,boby97}   that is based on an estimate of integration on each $|v'|^\ell$ and  
$|v'_*|^\ell$ weights  evaluated in the postcollisional molecular velocities proportional to the angular transition  element $b( \hat u\cdot \sigma) d\sigma$.
All these results assumed that not only the function $b( \hat u\cdot \sigma)$ was bounded but also needed to introduced  a cut-off near $\hat u\cdot \sigma=0$.    Such estimates are referred as Povzner lemma and was originally introduced by Povzner \cite{Povzner} in 1960s and the show that such angular integration is proportional to $(|v| + |v_*|)^\ell$-power.    
  That estimate yields the propagation and generation of moments for which an iteration map, constructed by Arkeryd in \cite{Arkeryd72} that yields compactness sufficiently enough to construct a unique solution with $2^+$-moments. 
     
In 1997 a groundbreaking technique was introduced by Bobylev   \cite{boby97}, applied to the hard sphere model in three  dimensions with a constant angular function $b( \hat u\cdot \sigma)$. This new approach proposes that uses the  local energy conservation  to write the integration of postcollisional molecular velocities in the center of mass-relative velocity framework.  That way the integration  readily proves a control of the decay rate of  such angular averaging for the case of hard spheres in three dimensions with a constant angular cross section. 
An important step to obtain evolution estimates for moments is a fundamental lemma that controls the function $G_{\varphi}(v_*,v)$  Later, it was extended in \cite{BGP04} for inelastic collisions.  More extension related to the generality of $b$ where made later on in \cite{GPV09,Alonso-Lods-SIMA10}.  Finally, in \cite{LuMouhot,TAGP} the lemma was extended to non cutoff scattering.\\

\subsection{Lower bound estimates}\label{lower-bounds} 
A functional estimate for lower bound of convolution function with concave power law functions can be obtain  by the following lower bound estimate, that will become essential later on the calculation of propagation of  $L^p$-Banach norms for any $1<p\leq \infty$.  This is a fundamental result related to a lower control for the convolution operator associated to the collision frequency corresponding to the negative contributions of the moments associated to the collision operator $Q(f,f)(v)$.  This a new result, developed for probability densities and independent of being solution of the Boltzmann equation. A first notion of this lower bound was introduced by the authors for the analysis of stability and error estimates for numerical approximations to the Boltzmann equation by conservative spectral schemes as shown in \cite{AlonsoGThar}, and it is reproduced here in a more analytical framework. Modification of this results were employed in to obtain  existence results and qualitative results  for the one dimensional dissipative Boltzmann type equation for polymers \cite{ABCL:2016}, and more  recently used in the well-posedness of of solution of a gas mixture modeled by a system of Boltzmann equation with disparate masses \cite{IG-P-C-mixtures}  and for polyatomic gas taking into account the exchange of internal energies \cite{IG-P-C-poly-2020} \\

\subsubsection{Pointwise estimates}
 Conditions on the potential transition rate function $\Phi_\gamma(|u|)$ are assumed to be non-negative and to satisfy the following  $\gamma$-order homogeneity condition
\begin{align}\label{pot-Phi_1}
c_{\Phi}  |u|^\gamma \le  \Phi_\gamma (|u|)  \le C_{\Phi} |u|^\gamma\, ,
\end{align}
for which,   invoking convex and concave inequalities for non negative scalar quantities the term  $|u|^\gamma $ can be  estimated by the Lebesgue weights $ \langle v\rangle^{\gamma}:= (1+|v|^2)^{\gamma/2}$ from above and below, respectively,   as follows
\begin{equation}\label{pot-Phi_2.0}
      C_{\gamma} \langle v\rangle^{\gamma} - \langle v_{*}\rangle^{\gamma} \leq  \Phi_\gamma(|u|)  \leq \tilde C_{\gamma} \left( \langle v\rangle^{\gamma} + \langle v_{*}\rangle^{\gamma}\right)
\end{equation}
with the lower and upper bound constants $C_{\gamma} =c_{\Phi} \min\{ 2^{-\gamma/2}, 2^{(2-\gamma)/2}\}$ and 
 $\tilde C_{\gamma} =
C_{\Phi} \max\{ 2^{\gamma/2}, 2^{\gamma/2-1}\}$, respectively,  with $c_{\Phi}\leq C_{\Phi}. $   

Since in this work we focus on values of $\gamma \in(0,2]$, we only consider the potential part of the transition probability  for   $C_{\gamma} =c_{\Phi}  2^{-\gamma/2}$ and $\tilde C_{\gamma}= C_{\Phi} 2^{\gamma/2}$  to be 
\begin{align}\label{pot-Phi_2}
     c_{\Phi}  (2^{-\gamma/2}\, \langle v\rangle^{\gamma} - \langle v_{*}\rangle^{\gamma}) \leq  \Phi_\gamma(|u|)  
     \leq  C_{\Phi} 2^{\gamma/2}\left( \langle v\rangle^{\gamma} + \langle v_{*}\rangle^{\gamma}\right)\, .
\end{align}

In particular, the  lower constant bound $c_{\Phi}$   is part of the evaluation of the coercivity property associated to the proof of existence of global in time Cauchy problem solution of the classical elastic Boltzmann equation for binary elastic interactions in the Banach space $C(0,\infty, L^1_k(\mathbb{R}^d) )$, $k>2$, as much as the characterization of propagation and generation of exponential high energy tails   associated to  solutions in $ L^1(\mathbb{R}^d)$.

\subsubsection{A functional  lower bound estimate for convolution of functions in $L^1_k(\mathbb{R}^d)$ spaces with concave power law functions}\label{coercive-2}
This is a functional  estimate in $L^1_\ell(\mathbb{R}^d)$ Banach spaces that provides a fundamental  necessary control   to obtain  a coercive estimate that enables  global in time estimates  to the Boltzmann flow in  $L^p_\ell(\mathbb{R}^d)$, for $1< p \le \infty$. 
It consists in finding a lower bound inequality  for 
 any probability density function   $f(x,\cdot,t) \in L^1_\ell(\mathbb{R}^d)$,  in  $v\in \mathbb{R}^d$ and $\ell>2^+$,  convolved with a concave potential   function $\Phi(v)$ satisfying a condition like  \eqref{pot-Phi_1} by   the $\gamma$-Lebesgue polynomial weight proportional to  a  constant that depends on the semi-norms  $\dot{L}_\ell^1(\mathbb{R}^d)$, for $\ell=0,1,2$ and $2+\beta$, for any $\beta>0$. The proof, as it will be shown next, is independent of the the function being a solution of the Boltzmann flow,  controls from below the collision frequency term $\nu(f)(v,t)$  associated to a potential rate   $\Phi(|u|)$, i.e. the convolution of any function $f(v,t)$.

{
\medskip
 \begin{remark}
This result actually convolves the transition probability rate, or collision kernel, with a probability density satisfying the condition of the Lower Bound Lemma~\ref{lblemma}.   
It must be noted that this  Lemma was not used for the existence theory. It is not needed since the transition probability can be  estimated pointwise  using \eqref{pot-Phi_1} and \eqref{pot-Phi_2}.  However the reader may have noticed that the existence of solutions  relied of the moment estimates of the collisional integral shown in Theorem\ref{mom-coll-op},  proven  in this Section~\ref{hardpotsection}, and the application an existence Theorem~\ref{Theorem_ODE}  and Lemma~\ref{Lemma_ODE-extension2} making use of the   upper solutions of the moments ordinary inequalities  as global in times bounds found in   Theorem~\ref{propagation-generation} to obtain solutions in $ \mathcal{C}\big([0,\infty); L^{1}_{2}(\mathbb{R}^{d}) \big)\ \cap\ \mathcal{C}^{1}\big((0,\infty); L^{1}_{\ell}(\mathbb{R}^{d}) \big),$ with $2\le \ell<2(1+\gamma), $ solving  the Cauchy problem with initial data $f_0\in L^{1}_{2+\epsilon}(\mathbb{R}^{d}).$ 
In particular the strategy developed in this manuscript, produces a solution with 
lower upper global bounds for large times than  the one obtained from the classical upper global bound estimates obtained by Bernoulli's equations, as written in \cite{Desvillettes} and \cite{WennbergMP}.    Thus,  while this lower bound is not strictly necessary for the argument given in the proof of Theorem~\ref{thm:exp-moment-gen},  it considerably simplifies its proof without sparing the explicit characterization of the exponential rates, which is one of the goals of this review.
  
  Nevertheless, we found the Lower Bound Lemma necessary for study of  $L^{2}$-integrability propagation, that   allows for  the considerably improvement and completeness  properties of  $L^p_\ell(\real^d)$ propagation norms  proof,  for $p\in (0,\infty]$,  from the original argunent developed in \cite{MV04}.
\end{remark}
}
\begin{lem}[\bf{Lower Bound Lemma}]\label{lblemma}
Fix $\gamma\in(0,2]$, and assume $0\leq  f:=f(\cdot, v)$ satisfies
\begin{equation*}
C\ge \max\left\{ \int_{\mathbb{R}^{d}}f\, dv, \, \int_{\mathbb{R}^{d}}f \,|v|^{2}dv  \right\}  \geq 
\min\left\{ \int_{\mathbb{R}^{d}}f\, dv, \, \int_{\mathbb{R}^{d}}f\,|v|^{2}dv  \right\}   \ge  c >0\,,\quad \int_{\mathbb{R}^{d}}f\,v\,dv=0\,,
\end{equation*}
for some positive constants $C$ and $c$.  Assume also that for some $\beta>0$
\begin{equation*}
0<\int_{\mathbb{R}^{d}}f\, |v|^{2+\beta}dv\leq B_{\beta}. 
\end{equation*}
\bigskip
\nn Then, there exists $c_{lb}:=c_{lb}(B,C,c, \beta,\gamma)>0$ such that
\begin{equation}\label{lower-bound}
\big(f \ast_v|\cdot|^{\gamma}\big)(v)\geq c_{lb}\langle v \rangle^{\gamma}\,.
\end{equation}
with $0<c_{lb}$ explicitly defined by 
\begin{equation}\label{c-lb}
c_{lb}    := \frac {c}{4}\, 
\left(2\left(\frac{2^{2/\beta}}c \max\{C,B_{\beta}\}\langle \left(4C/c\right)^{\frac{1}{\gamma}} \rangle^{2+\beta} \right)^{\frac1\beta} \right)^{-\frac{1}{2-\gamma}  } \left( 1+\big(4C/c\big)^{\frac{2}{\gamma}}\right)^{-\frac{\gamma}2} 
\end{equation}
\end{lem}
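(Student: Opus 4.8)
\emph{Strategy.} The only way the desired lower bound can fail is if $f$ concentrates near the origin (a Dirac mass at $0$ makes $f\ast_v|\cdot|^{\gamma}$ vanish there), and the three standing hypotheses are precisely what prevents this: $\int f|v|^{2}\,dv\ge c$ forces a definite amount of energy, $\int f|v|^{2+\beta}\,dv\le B_{\beta}$ forces that energy to sit at \emph{bounded} radii rather than escape to infinity, and $\int f v\,dv=0$ lets us tie the quadratic moment of the shifted density $w\mapsto f(w)$ to $\langle v\rangle$. I will split according to whether $|v|$ is large or bounded, with threshold $R_{0}:=(4C/c)^{1/\gamma}$, and take $c_{lb}$ to be the smaller of the two resulting constants.

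\emph{Large velocities.} I will use the elementary inequality $|v-w|^{\gamma}\ge\tfrac12|v|^{\gamma}-|w|^{\gamma}$, valid for all $\gamma\in(0,2]$ (sub-additivity of $t\mapsto t^{\gamma}$ when $\gamma\le1$; and $|v|^{\gamma}\le2^{\gamma-1}(|v-w|^{\gamma}+|w|^{\gamma})$ together with $2^{1-\gamma}\ge\tfrac12$ when $1\le\gamma\le2$). Integrating against $f$, using $\int f\,dw\ge c$ and the interpolation bound $\int f|w|^{\gamma}\,dw\le(\int f\,dw)^{1-\gamma/2}(\int f|w|^{2}\,dw)^{\gamma/2}\le C$, gives $(f\ast_v|\cdot|^{\gamma})(v)\ge\tfrac{c}{2}|v|^{\gamma}-C$. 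On $\{|v|\ge R_{0}\}$ the constant $C$ is swallowed by a quarter of the first term, leaving $(f\ast_v|\cdot|^{\gamma})(v)\ge\tfrac{c}{4}|v|^{\gamma}\ge\tfrac{c}{4}\langle R_{0}\rangle^{-\gamma}\langle v\rangle^{\gamma}=\tfrac{c}{4}\big(1+(4C/c)^{2/\gamma}\big)^{-\gamma/2}\langle v\rangle^{\gamma}$, since $|v|^{\gamma}/\langle v\rangle^{\gamma}$ is increasing and $R_{0}\ge1$.

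\emph{Bounded velocities.} This is the substantive part. For $|v|\le R_{0}$ one has $\int f(w)|v-w|^{2+\beta}\,dw\le2^{1+\beta}\big(C|v|^{2+\beta}+B_{\beta}\big)\le2^{2+\beta}\max\{C,B_{\beta}\}\langle R_{0}\rangle^{2+\beta}=:M_{\beta}$, while $\int f(w)|v-w|^{2}\,dw=|v|^{2}\!\int f\,dw-2v\cdot\!\int fw\,dw+\int f|w|^{2}\,dw\ge c$ by the vanishing-momentum hypothesis. Choosing the truncation radius $A:=(2M_{\beta}/c)^{1/\beta}$ makes $\int_{\{|v-w|>A\}}f|v-w|^{2}\,dw\le A^{-\beta}M_{\beta}=c/2$, so $\int_{\{|v-w|\le A\}}f|v-w|^{2}\,dw\ge c/2$; since $\gamma-2\le0$ we have $|v-w|^{\gamma}\ge A^{\gamma-2}|v-w|^{2}$ on that set, whence $(f\ast_v|\cdot|^{\gamma})(v)\ge A^{\gamma-2}\cdot\tfrac{c}{2}=\tfrac{c}{2}(2M_{\beta}/c)^{-(2-\gamma)/\beta}$, a positive constant; dividing by $\langle v\rangle^{\gamma}\le\langle R_{0}\rangle^{\gamma}$ produces a bound $c_{lb}\langle v\rangle^{\gamma}$ whose constant has exactly the displayed shape — a fixed multiple of $c$, times a negative power of $\tfrac{2^{2/\beta}}{c}\max\{C,B_{\beta}\}\langle R_{0}\rangle^{2+\beta}$, times $\langle R_{0}\rangle^{-\gamma}=(1+(4C/c)^{2/\gamma})^{-\gamma/2}$. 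I note that one can also dispense with the split: mean zero gives $\int f|v-w|^{2}\,dw=|v|^{2}m_{0}[f]+\int f|w|^{2}\,dw\ge c\langle v\rangle^{2}$ for \emph{all} $v$, one bounds $\int f|v-w|^{2+\beta}\,dw\le2^{2+\beta}\max\{C,B_{\beta}\}\langle v\rangle^{2+\beta}$, and since $\gamma\le2<2+\beta$, Hölder interpolation of the exponent $2$ between $\gamma$ and $2+\beta$ yields $(f\ast_v|\cdot|^{\gamma})(v)\ge c'\langle v\rangle^{\gamma}$ directly — a cleaner route to (a variant of) the same estimate.

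\emph{Main difficulty.} Everything reduces to elementary inequalities once the mechanism is recognised; the one genuinely delicate point is the bounded-velocity estimate, where one must use quantitatively that $\int f|v|^{2+\beta}\le B_{\beta}$ confines the $c$-worth of energy to radii $\lesssim(B_{\beta}/c)^{1/\beta}$, so that the truncated kernel $|v-w|^{\gamma}\mathbf{1}_{\{|v-w|\le A\}}$ still carries a fixed fraction of $\int f|v-w|^{2}\,dw$; fixing $A$ and collecting the elementary constants then gives the explicit value of $c_{lb}$. (When $\gamma=2$ the estimate degenerates: $\int f|v-w|^{2}\,dw\ge c\langle v\rangle^{2}$ already gives the bound with $c_{lb}=c$, consistent with the role of the factor $2-\gamma$ in \eqref{c-lb}.)
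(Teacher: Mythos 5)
Your argument is correct and is essentially the paper's own proof: the same split of $v$ at the radius $r_*=(4C/c)^{1/\gamma}$, the same Chebyshev-type truncation of the kernel at a radius fixed by the $(2+\beta)$-moment together with $|v-w|^{\gamma}\ge A^{\gamma-2}|v-w|^{2}$ inside the ball, and the same pointwise bound $|v-w|^{\gamma}\ge\tfrac12|v|^{\gamma}-|w|^{\gamma}$ outside, yielding a constant of the form \eqref{c-lb} up to harmless fixed factors in the bookkeeping. The Hölder-interpolation shortcut you mention in passing is a valid (and tidier) variant, but the main line of your proof coincides with the paper's.
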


\begin{proof}
The case $\gamma=0$ is trivial, thus, assume $\gamma\in(0,2]$.  Take $r>0$, $v\in B(0,r)\subset{\mathbb{R}^{d}}$, the open ball centered at the origin and radius $r$. 
Note that for any $R>0$,
\begin{align}\label{lbe1}
\begin{split}
\int_{\{|v-w|\leq R\}}f(\cdot,w)|v-w|^2&\text{d}w=\int_{\mathbb{R}^d}f(\cdot,w)|v-w|^2\text{d}w-\int_{\{|v-w|\geq R\}}f(\cdot,w)|v-w|^2\text{d}w\\
&\geq c\,\langle v \rangle^{2} - \int_{\{|v-w|\geq R\}}f(\cdot,w)|v-w|^2\text{d}w\\
&\geq c\, \langle v \rangle^{2} - \frac{1}{R^{\beta}}\int_{\{|v-w|\geq R\}}f(\cdot,w)|v-w|^{2+\beta}\text{d}w\,.
\end{split}
\end{align}

Since $|v| < r$, then 
\begin{equation*}
\int_{\{|v-w|\geq R\}}f(\cdot,w)|v-w|^{2+\beta}\text{d}w\leq 2^{1+\beta}\max\{C,B_{\beta}\}\langle v \rangle^{2+\beta}\leq 2^{1+\beta}\max\{C,B_{\beta}\}\langle r \rangle^{2+\beta}.
\end{equation*}

Next, taking    $R=R(r,c, C ,\beta)$ sufficiently large to satisfy \\
\begin{equation}\label{R-clb}
\frac{2^{1+\beta}}{R^{\beta}}\max\{C,B_{\beta}\}\langle r \rangle^{2+\beta}\leq \frac{c}{2}\,, 
\end{equation}\\
\nn or, equivalently,\\
\begin{equation}\label{largeR}
2\left(\frac{2^{2/\beta}}c \max\{C,B_{\beta}\}\langle r \rangle^{2+\beta} \right)^{\frac1\beta} \leq {R}\,, 
\end{equation}\\
\nn it follows from estimate \eqref{lbe1} that\\
\begin{equation}\label{lbe2}
\int_{\{|v-w|\leq R\}}f(\cdot,w)|v-w|^2\text{d}w\geq c\, \langle v \rangle^2 - \frac{2^{1+\beta}}{R^{\beta}}\max\{C,B_{\beta}\}\langle r \rangle^{2+\beta}\geq \frac{c}{2}\,, \quad \forall\; v \in B(0,r)\,.
\end{equation}
Hence,  one can also infer from \eqref{lbe2} that, for any $\gamma\in(0,2]$ and $R$ satisfying \eqref{largeR}, it follows\\
\begin{align}\label{lbe3}
\begin{split}
\int_{\mathbb{R}^d}f(\cdot,w)|v-&w|^{\gamma}\text{d}w  \geq \int_{\{|v-w|\leq R\}}f(\cdot,w)|v-w|^{\gamma}\text{d}w\\
&\geq\frac{1}{R^{2-\gamma}}\int_{\{|v-w|\leq R\}}f(\cdot,w)|v-w|^{2}\text{d}w\geq\frac{c}{2R^{2-\gamma}}\,, \quad \forall\; v \in B(0,r)\,.
\end{split}
\end{align}
Additionally, for any $v\in\mathbb{R}^{d}$ and $\gamma\in(0,2]$,
\begin{align}\label{lbe4}
\int_{\mathbb{R}^d}f(\cdot,w)|v-w|^{\gamma}\text{d}w \geq \int_{\mathbb{R}^d}f(\cdot,w)\big(\tfrac{1}{2}|v|^{\gamma}-|w|^{\gamma}\big)\text{d}w\geq \big(\tfrac{c}{2}|v|^\gamma - C\big),\qquad \,.
\end{align}
As a consequence of \eqref{lbe3}  and \eqref{lbe4}, 
\begin{equation*}
\int_{\mathbb{R}^d}f(\cdot,w)|v-w|^{\gamma}\text{d}w\geq\Big(\frac{c}{2R(r)^{2-\gamma}}\,\textbf{1}_{B(0,r)}(v)+\big(\tfrac{c}{2}|v|^\gamma - C\big)\,\textbf{1}_{B(0,r)^c}(v) \Big).
\end{equation*}

Choosing $r:=r_{*}=\big(4C/c)^{\frac{1}{\gamma}}\geq1$ one ensures that $\frac{c}{2}|v|^\gamma - C\geq \frac{c}{4}|v|^{\gamma}$ for any $|v|\geq r$.  Then, combining with the election of $R=R(r_{*})\geq1$ from \eqref{largeR},
\begin{align*}
\int_{\mathbb{R}^d}f(\cdot,w)|v-w|^{\gamma}\text{d}w\geq\Big(\frac{c}{2R(r_{*})^{2-\gamma}}\,\textbf{1}_{B(0,r_{*})}(v)+\frac{c}{4}\,|v|^\gamma\,\textbf{1}_{B(0,r_{*})^c}(v) \Big)\\
\geq \frac {c}{4\,R^{2-\gamma}(r_*)}\Big( \textbf{1}_{B(0,r_{*})}(v) + |v|^\gamma\,\textbf{1}_{B(0,r_{*})^c}(v) \Big)\geq \frac {c\,\langle v \rangle^{\gamma}}{4\,R^{2-\gamma}(r_*)\langle r_{*}\rangle^{\gamma}}\,.
\end{align*}

Thus, the lower bound \eqref{lower-bound} holds with $c_{lb} $ as 
\begin{equation*}
c_{lb}  = \frac {c}{4\,R^{2-\gamma}(r_*)\langle r_{*}\rangle^{\gamma}}\,
\end{equation*}

Therefore, using that $r_{*}=\big(4C/c)^{\frac{1}{\gamma}}$ and  invoking \eqref{R-clb} to calculate $R(r_*)$, the exact value of the lower bound constant $c_{lb}$ representation \eqref{c-lb}  is obtained.
\end{proof}

\smallskip

\begin{remark}\label{poincare}
This lower estimate is a functional inequality that control a convolution with a probability density  from below. 
In this context, the constant $c_{lb}$  can be viewed as the analog to the Poincare constant fundamental to obtain  the coerciveness property needed to develop an  existence and uniqueness property in a suitable classical Sobolev  space.
\end{remark}

\smallskip
 
 \subsection{moment-estimates for binary interactions}
We start by proving the following  identity, that can be found in \cite{boby97},  giving a relation between local energies written  in the scattering angle framework  for center of mass and relative velocity coordinates from \eqref{weightuv2}.
 
 \noindent
The following elementary estimate naturally follows  from the binomial identity  

\begin{lem}\label{lem:binom}
The following estimate holds
\begin{equation}\label{povz18}
(z+y)^{q}-z^{q}-y^{q}  \le q2^{q-3}\big(z^{q-1}y + y^{q-1}z\big)\,,\qquad\text{for all} \quad  2 \le q\in\mathbb{R}.
\end{equation}
\end{lem}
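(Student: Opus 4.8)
The plan is to reduce the inequality to a single-variable statement by homogeneity, and then prove that reduced inequality by calculus. Since both sides of \eqref{povz18} are homogeneous of degree $q$ in $(z,y)$, and the claimed inequality is symmetric and trivial when $z=0$ or $y=0$, it suffices to set $t = y/z \in (0,\infty)$ (say with $z>0$) and divide through by $z^q$. The statement then becomes
\[
(1+t)^q - 1 - t^q \;\le\; q\,2^{q-3}\,(t + t^{q-1})\,,\qquad t>0,\quad q\ge 2.
\]
So the whole problem is to establish this one-variable inequality for all $q\ge 2$.

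First I would record the elementary facts that will do the work. The function $\psi(t) := (1+t)^q - 1 - t^q$ satisfies $\psi(0)=0$, is smooth on $[0,\infty)$, and $\psi'(t) = q\big((1+t)^{q-1} - t^{q-1}\big) \ge 0$, so $\psi$ is increasing; moreover $\psi(1/t) = t^{-q}\psi(t)$ by the homogeneity already used, so without loss of generality one may even restrict to $t\in(0,1]$, where $t^{q-1}\ge t^{q-\text{something}}$ comparisons are cleanest — but I would instead keep $t$ free and argue directly. The key one-variable bound is the convexity estimate $(1+t)^{q-1} \le 2^{q-2}\,(1 + t^{q-1})$ for $q\ge 2$ and $t\ge 0$: this follows because $s\mapsto s^{q-1}$ is convex for $q-1\ge 1$, so $\big(\tfrac{1+t}{2}\big)^{q-1}\le \tfrac12\big(1 + t^{q-1}\big)$, and multiplying by $2^{q-1}$ gives the claim.

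With that in hand I would integrate. Since $\psi(t) = \int_0^t \psi'(s)\,ds = q\int_0^t \big((1+s)^{q-1} - s^{q-1}\big)\,ds \le q\int_0^t (1+s)^{q-1}\,ds$, and applying the convexity bound inside the integral,
\[
\psi(t) \;\le\; q\,2^{q-2}\int_0^t \big(1 + s^{q-1}\big)\,ds \;=\; q\,2^{q-2}\Big(t + \tfrac{t^{q}}{q}\Big)\,.
\]
This is close but the second term carries $t^q$ rather than the desired $t^{q-1}z\cdot z^{-q}$ shape; here I would exploit that for the regime $t\le 1$ one has $t^q\le t^{q-1}$, and handle $t\ge 1$ by the symmetry $\psi(1/t)=t^{-q}\psi(t)$ which swaps the roles of $t$ and $1/t$ and hence of $t$ and $t^{q-1}$ in the target. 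Tracking constants through both regimes, the factor $2^{q-2}$ is wasteful by exactly the factor of $2$ needed to absorb the $1/q$-type losses and land on $q\,2^{q-3}(t + t^{q-1})$; a slightly more careful split of $\int_0^t(1+s)^{q-1}\,ds$ at $s=1$, or directly bounding $(1+s)^{q-1}-s^{q-1}\le (q-1)(1+s)^{q-2}$ by the mean value theorem and then applying convexity to the exponent $q-2$, gives the sharper constant.

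The main obstacle I anticipate is purely the constant-chasing: getting $2^{q-3}$ rather than $2^{q-2}$ or $2^{q-1}$ requires being economical at each inequality — using the mean-value-theorem form $\psi'(s)\le q(q-1)(1+s)^{q-2}$ rather than the cruder $\psi'(s)\le q(1+s)^{q-1}$, and being careful that the convexity constant for the exponent $q-2$ is $2^{q-3}$ (valid once $q-2\ge 1$, i.e. $q\ge 3$, with the range $2\le q<3$ checked separately by a direct elementary argument since there $q-2\in[0,1)$ and $s\mapsto s^{q-2}$ is concave, making the bound even easier). So I would structure the final proof as: (i) reduce to one variable by homogeneity; (ii) for $q\ge 3$, write $\psi(t)=\int_0^t\psi'$, bound $\psi'(s)\le q(q-1)(1+s)^{q-2}$, apply $(1+s)^{q-2}\le 2^{q-3}(1+s^{q-2})$, integrate, and simplify using $q-1\le q$ and $\tfrac{1}{q-1}\le 1$; (iii) for $2\le q<3$, give the direct elementary estimate. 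The inequality \eqref{povz18} then follows by multiplying back by $z^q$.
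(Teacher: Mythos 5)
Your reduction by homogeneity is fine, but the calculus step that is supposed to produce the constant $q2^{q-3}$ does not deliver it. After bounding $\psi'(s)\le q(q-1)(1+s)^{q-2}$ and using $(1+s)^{q-2}\le 2^{q-3}\big(1+s^{q-2}\big)$ (valid for $q\ge3$), integration gives
\[
\psi(t)\;\le\; q(q-1)2^{q-3}\Big(t+\tfrac{t^{q-1}}{q-1}\Big)\;=\;q2^{q-3}\big((q-1)t+t^{q-1}\big),
\]
which is strictly weaker than the target $q2^{q-3}(t+t^{q-1})$ as soon as $q>2$: the factor $q-1$ on the linear term cannot be removed by ``$q-1\le q$'' or ``$1/(q-1)\le1$'' (those point the wrong way), nor absorbed into $t^{q-1}$ on $t\le1$. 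Your cruder first attempt has the symmetric defect ($2^{q-2}$ in place of $2^{q-3}$). The missing idea is the two-variable integral representation the paper uses: write
\[
(z+y)^{q}-z^{q}-y^{q}=\int_0^z\!\!\int_0^y q(q-1)\,(t+\tau)^{q-2}\,d\tau\,dt,
\]
apply the convexity bound $(t+\tau)^{q-2}\le 2^{q-3}\big(t^{q-2}+\tau^{q-2}\big)$ inside the double integral, and integrate each of the two resulting terms in its own variable; the factor $q-1$ then cancels exactly against $\int_0^z t^{q-2}\,dt=z^{q-1}/(q-1)$ (respectively the $\tau$-integral), yielding precisely $q2^{q-3}\big(zy^{q-1}+z^{q-1}y\big)$ with no slack. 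A one-variable integration after dividing by $z^q$ cannot reproduce this cancellation, which is why your constant-chasing stalls.

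A second, smaller point: your claim that the range $2\le q<3$ is ``even easier'' is not right. There the pointwise bound $(1+s)^{q-2}\le 2^{q-3}(1+s^{q-2})$ fails (already at $s=0$, since $2^{q-3}<1$), and in fact the inequality \eqref{povz18} itself fails in that range: at $q=5/2$, $z=y=1$ the left side is $2^{5/2}-2\approx 3.66$ while the right side is $\tfrac52\,2^{-1/2}\cdot 2\approx 3.54$. So no elementary argument will close that case; the estimate is genuinely one for $q\ge3$ (the paper's own proof carries the same implicit restriction, since convexity of $x\mapsto x^{q-2}$ requires $q\ge3$).
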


\begin{proof}
Note the identity 
\begin{equation*}
(z+y)^{q}-z^{q}-y^{q} = \int_0^z \int_0^y q(q-1) (t+\tau)^{q - 2} \,d\tau\,dt\,.
\end{equation*}
For any $q\geq2$, the function $|\cdot|^{q-2}$ is convex, therefore,
\begin{equation*}
(t+\tau)^{ q - 2} \leq 2^{q-3}\big( t^{q-2} + \tau^{q-2} \big)\,.
\end{equation*}
Thus, we obtain
\begin{equation*}
(z+y)^{q}-z^{q}-y^{q} \le 
2^{q-3}\int_0^z \int_0^y q(q - 1) \big( t^{ q - 2 } + \tau^{ q - 2 } \big) \,d\tau\,dt =
q2^{(q-3)}\big(zy^{ q - 1 } + z^{ q - 1 }y \big)\,.
\end{equation*} 
\end{proof}

In addition, in order to   study of  $sk$-moments  of the solution to the Boltzmann equation with $0<s\le1$, it is also useful to invoke the following Lemma whose proof can be found in  reference \cite{BGP04}, Lemma~2. The statement is transcribe here, adjusted to the current notation  for the ease of the reader. 

\begin{lem}\label{BGP04-bino-sum} Let the {\em floor function} associated to a  real number, be denoted by $
\lfloor q \rfloor := \mbox{integer part of} \; q.$ Then, for any $1<n\in \mathbb{N}$ and $s\in (0,1]$, let  $1<sn\in\mathbb{R}$  and $J_{sn}= \lfloor \frac{sn}2\rfloor$,  the Newton's generalized  binomial expansion upper estimate for binomial forms with real valued exponents, is given by 
\begin{equation}\label{binsum}
\sum^{J_{sn}-1}_{j=1} { sn \choose j}\left(x^{sn-j}y^{j} + x^{j} y^{sn-j}\right)
\le (x+y)^{sn} - x^{sn} -y^{sn}\! \le \! \sum^{J_{sn}}_{j=1} { sn \choose j}\left(x^{sn-j}y^{j} + x^{j} y^{sn-j}\right) 
\end{equation}
where the binomial coefficients for non-integer $sn$ are
defined as
\begin{equation}\label{binsum-coeff}
 { sn \choose j}:=\frac{sn(sn-1) \dots (sn-j + 1)}{j!}, \ j\ge 1; \quad { sn \choose 0}:=1.
\end{equation}
In addition,  the case when $sn$ is an odd integer the second inequality  in \eqref{binsum}
becomes an equality, which coincides with the binomial expansion of $(x+y)^{sn}.$
\end{lem}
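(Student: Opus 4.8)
The plan is to prove Lemma~\ref{BGP04-bino-sum} by adapting the classical argument for integer-exponent binomial expansions to the real-exponent setting via the integral representation of the remainder. Since the statement is cited as coming from \cite{BGP04}, Lemma~2, I would reproduce a clean self-contained version rather than invoke it blindly.

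\medskip

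\textbf{Step 1: Reduce to a single-variable statement.} By homogeneity, write $x+y=1$ after dividing through by $(x+y)^{sn}$; equivalently set $t=y/(x+y)\in[0,1]$ and study $\varphi(t):=(1-t+t)^{sn}-\cdots$. Actually the cleaner route is to keep both variables and use the integral form
\begin{equation*}
(x+y)^{sn}-x^{sn}-y^{sn} \;=\; sn\int_0^y\Big((x+\tau)^{sn-1}-\tau^{sn-1}\Big)\,d\tau,
\end{equation*}
which already exhibits the quantity as nonnegative and symmetric once one notes the analogous identity integrating in $x$. Iterating this kind of representation, or better, using the Taylor/Maclaurin series of $(1+u)^{sn}$ with Lagrange-type control on partial sums, is the backbone.

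\medskip

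\textbf{Step 2: Control partial sums of the generalized binomial series.} The generalized binomial theorem gives, for $|u|\le 1$ with appropriate care at the endpoint,
\begin{equation*}
(1+u)^{sn}=\sum_{j=0}^{\infty}\binom{sn}{j}u^{j},
\end{equation*}
and the key point is the sign pattern of the coefficients $\binom{sn}{j}$ for $sn>1$ non-integer: they are positive for $0\le j\le \lceil sn\rceil$ and then alternate in sign. This sign structure, together with the fact that for an alternating tail the partial sums bracket the limit, yields precisely the two-sided bound: truncating at $J_{sn}=\lfloor sn/2\rfloor$ terms (within the all-positive regime since $\lfloor sn/2\rfloor < sn$) gives a lower bound, and truncating at $J_{sn}$ in the symmetrized sum gives the matching upper bound. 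I would first prove the lower bound $(x+y)^{sn}-x^{sn}-y^{sn}\ge \sum_{j=1}^{J_{sn}-1}\binom{sn}{j}(x^{sn-j}y^j+x^jy^{sn-j})$ by discarding positive terms, then prove the upper bound by a more delicate argument pairing each additional positive term with the symmetric one and showing the combined tail has the correct sign — this is where the $J_{sn}$ (rather than $J_{sn}-1$) cutoff and the symmetrization $x\leftrightarrow y$ are essential.

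\medskip

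\textbf{Step 3: The odd-integer case.} When $sn$ is an odd integer, $(x+y)^{sn}$ is a genuine finite polynomial, its binomial expansion terminates, and the middle terms pair up exactly so that $\sum_{j=1}^{(sn-1)/2}\binom{sn}{j}(x^{sn-j}y^j+x^jy^{sn-j})=(x+y)^{sn}-x^{sn}-y^{sn}$ identically; here $J_{sn}=(sn-1)/2$ and the upper inequality collapses to equality. This is a short direct check once the general machinery is in place.

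\medskip

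\textbf{Main obstacle.} The delicate part is the \emph{upper} bound for non-integer $sn$: one must show that after including $J_{sn}$ symmetrized positive terms, whatever remains (an infinite tail of mixed-sign terms when one expands $(x+y)^{sn}$ around the larger of $x,y$) is nonpositive. The honest way is to go back to the remainder integral: write the difference between the true value and the $J_{sn}$-term symmetric sum and show its integrand keeps a fixed sign on the relevant range, using convexity/monotonicity of $u\mapsto(1+u)^{sn-2J_{sn}}$ and the constraint $0<sn-2J_{sn}\le 2$. Since the excerpt explicitly refers the reader to \cite{BGP04} for the full details, I would present Steps 1 and 3 in full and sketch Step 2, citing \cite{BGP04} for the technical sign bookkeeping in the upper estimate.
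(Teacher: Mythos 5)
There is no in-paper proof to match here: the paper only transcribes the statement and explicitly refers the reader to \cite{BGP04}, Lemma~2, for the proof, so your decision to lean on that reference for the hardest step is not by itself a discrepancy. The problem is that the mechanism you sketch for Step~2 would not go through as written. First, the ``lower bound by discarding positive terms'' cannot work directly: expanding (say with $y\le x$) $(x+y)^{sn}=\sum_{j\ge 0}\binom{sn}{j}x^{sn-j}y^{j}$ produces only \emph{integer} powers of $y$, whereas the claimed lower bound contains the symmetrized terms $\binom{sn}{j}x^{j}y^{sn-j}$ with non-integer exponent $sn-j$; these are simply not terms of the series, so they cannot be obtained by dropping positive terms, and one must genuinely compare them (together with the subtracted $y^{sn}$) against the tail of the expansion. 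Second, the alternating-series bracketing idea does not explain the cutoff: the generalized binomial coefficients $\binom{sn}{j}$ remain positive for all $j\le \lfloor sn\rfloor+1$, i.e.\ far beyond $J_{sn}\approx sn/2$, so the sign change of the coefficients has nothing to do with why truncating the \emph{symmetrized} sum at $J_{sn}$ versus $J_{sn}-1$ flips the inequality. The two-sided bound is a statement about the symmetrization, not about partial sums of an alternating series.

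The actual proof in \cite{BGP04} exploits precisely the identity you write down in Step~1 but never use: setting $F_{p}(x,y):=(x+y)^{p}-x^{p}-y^{p}$ one has
\begin{equation*}
F_{p}(x,y)=p\int_{0}^{x}F_{p-1}(s,y)\,ds+p\,x\,y^{p-1},
\end{equation*}
and the lemma follows by induction on the integer part of the exponent (the cutoff $J$ increases by one each time $p$ crosses the relevant integers, which is also where the odd-integer equality case sits), starting from the elementary base range of exponents. Your Step~3 (the odd-integer equality, where the symmetrized sum up to $(sn-1)/2$ reproduces the full binomial expansion) is correct. So the proposal is salvageable, but only by replacing the alternating-series heuristic of Step~2 with the inductive use of the integral recursion — or by citing \cite{BGP04} for the whole inequality, which is exactly what the paper does.
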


A natural consequence from the two inequalities in \eqref{binsum} are the following two estimates
 \begin{align}\label{binsum-2}
&\sum^{J_{sn}}_{j=0} { sn \choose j}\left(x^{sn-j}y^{j} + x^{j} y^{sn-j}\right) \le K (x+y)^{sk},  \qquad \text{for} \ 0<k\le2\nonumber \\
\text{and}\qquad&\ \\
 &\sum^{J_{sn}}_{j=1} { sn \choose j}\left(x^{sn-j}y^{j} + x^{j} y^{sn-j}\right) < \! 2^{J_{sn}}\left(x^{sn-j}y^{j} + x^{j} y^{sn-j}\right). \nonumber
\end{align}

Lemma~\ref{BGP04-bino-sum} and its follow up estimates \eqref{binsum-2} provide a  natural tool to study the $sn^{th}$-moments of collisional operator form  $Q(f,f)(v,t)$ describing a binary interaction framework, that arises
in the calculation  of  upper bounds  terms from  its positive contribution, as shown in the sequel.

The following energy identity is crucial for the development of moment estimates by means of the scattering angular averaging Theorem~\ref{BGP-JSP04}.  It is obtain writing the local energy  conservation written for the Lebesgue weight functions in the post collisional velocities, that is  $ \langle v' \rangle$  and $ \langle v'_* \rangle$   by means  of  center of mass and relative velocity coordinates expressed in the scattering direction $\sigma$.

\begin{lem}[\bf{Energy identity in the scattering  direction coordinates}]\label{PovznerI}   
For any $(v',v'_*)$ and $(v,v_*)$ satisfying \eqref{weightuv2}, let  $E:= \langle v \rangle^2 + \langle v_*\rangle^2$,  and 
$\psi(x)$ be  any function defined in $\R^+$, 
then, the following identity holds
\begin{equation}\label{Pov1}
\begin{split}
&\psi\big(\langle v'\rangle^2\big)   +  \psi\big(\langle v'_* \rangle^2\big)  = \psi\bigg( E\Big( \tfrac{1-\xi\, \hat V\cdot\sigma}2\Big)\bigg) +  \psi\bigg( E\Big( \tfrac{1+\xi\, \hat V\cdot\sigma}2\Big)\bigg)\,,\\
\noalign{\vskip6pt}
&\ \text{ with }\ \ \  \xi = \frac{2|u|\, |V|}{E}\in [0,1]  \qquad \text{and}  \qquad  \hat V=\frac{V}{|V|}\,.
\end{split}
\end{equation}
\end{lem}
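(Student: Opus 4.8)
The identity is purely algebraic, and the plan is to express the post-collisional Lebesgue weights $\langle v'\rangle^2$ and $\langle v'_*\rangle^2$ entirely in terms of the center of mass $V=\tfrac12(v+v_*)$, the relative speed $|u|$ with $u=v-v_*$, and the scattering direction $\sigma$, and then show the two resulting arguments sum to $E$ while differing only by the sign of a single cross term. First I would recall from \eqref{weightuv2} (equivalently \eqref{eq:9.3-Vu}) that $v'=V+\tfrac12|u|\sigma$ and $v'_*=V-\tfrac12|u|\sigma$, so that
\begin{equation*}
|v'|^2 = |V|^2 + |u|\,V\cdot\sigma + \tfrac14|u|^2, \qquad |v'_*|^2 = |V|^2 - |u|\,V\cdot\sigma + \tfrac14|u|^2.
\end{equation*}
Adding $1$ to each gives $\langle v'\rangle^2$ and $\langle v'_*\rangle^2$.

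**Key steps.** Next I would identify the ``total'' quantity. Since $\langle v'\rangle^2+\langle v'_*\rangle^2 = 2 + 2|V|^2 + \tfrac12|u|^2$, and by the elastic conservation law \eqref{eq:8} together with the parallelogram identity $|v|^2+|v_*|^2 = 2|V|^2 + \tfrac12|u|^2$, we get
\begin{equation*}
\langle v'\rangle^2 + \langle v'_*\rangle^2 = 2 + |v|^2 + |v_*|^2 = \langle v\rangle^2 + \langle v_*\rangle^2 = E.
\end{equation*}
Then I would write each post-collisional weight as a fraction of $E$: from the display above,
\begin{equation*}
\langle v'\rangle^2 = \tfrac{E}{2} + |u|\,V\cdot\sigma = \tfrac{E}{2}\Big(1 + \tfrac{2|u|\,V\cdot\sigma}{E}\Big) = E\,\tfrac{1 + \xi\,\hat V\cdot\sigma}{2},
\end{equation*}
where $\xi = 2|u|\,|V|/E$ and $\hat V = V/|V|$, and symmetrically $\langle v'_*\rangle^2 = E\,\tfrac{1-\xi\,\hat V\cdot\sigma}{2}$. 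Substituting these two expressions into $\psi(\langle v'\rangle^2)+\psi(\langle v'_*\rangle^2)$ yields \eqref{Pov1} directly. Finally I would verify $\xi\in[0,1]$: since $0\le (|u|/2 - |V|)^2 = \tfrac14|u|^2 - |u||V| + |V|^2$, we get $|u||V| \le \tfrac14|u|^2 + |V|^2 \le \tfrac{E}{2}$ (using $1\ge 0$), hence $\xi = 2|u||V|/E \le 1$; nonnegativity is clear. One should also handle the degenerate case $V=0$, where $\xi=0$ and $\hat V$ is irrelevant since $\xi\hat V\cdot\sigma=0$; the identity still holds with both arguments equal to $E/2$.

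**Main obstacle.** There is no serious obstacle here — the result is a short computation. The only points requiring a little care are: (i) correctly invoking the parallelogram/energy identity to see the cross term $|u|^2/4$ combine with $|V|^2$ into exactly $\tfrac12(|v|^2+|v_*|^2)$, so that the total collapses to $E$; (ii) the bound $\xi\le 1$, which is the one genuinely non-tautological line and follows from $2|u||V|\le \tfrac14|u|^2\cdot 2 + 2|V|^2 \le E$ via AM–GM; and (iii) noting that the identity is valid for an arbitrary function $\psi$ on $\R^+$ since nothing beyond the pointwise substitution is used, so no regularity or integrability of $\psi$ is needed. I would present the computation of $|v'|^2,|v'_*|^2$, the collapse of the sum to $E$, the rewriting as $E(1\pm\xi\hat V\cdot\sigma)/2$, and the estimate $\xi\in[0,1]$, in that order.
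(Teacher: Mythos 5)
Your proof is correct and follows essentially the same route as the paper: express $v',v'_*$ as $V\pm\tfrac12|u|\sigma$, compute $\langle v'\rangle^2,\langle v'_*\rangle^2$ so that the parallelogram identity collapses their sum to $E$ and each weight becomes $E\,\tfrac{1\pm\xi\,\hat V\cdot\sigma}{2}$, then bound $\xi\le 1$ by the same AM--GM estimate $2|u||V|\le \tfrac12|u|^2+2|V|^2\le |v|^2+|v_*|^2\le E$. Your explicit treatment of the degenerate case $V=0$ is a small, harmless addition not present in the paper.
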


\begin{proof} Setting the center of mass $V=\frac{v+v_{*}}{2}$ and relative velocity $u=v-v_*$, we recall from \eqref{weightuv2} that
\begin{equation*}
v' = V-\frac12 |u| \sigma \quad \ \ \text{and } \quad 
v'_* = V+\frac12 |u| \sigma\,.
\end{equation*}
As a consequence, we can compute as
\begin{equation*}
\begin{split}
\langle v' \rangle^{2} = 1+|v'|^2=1+\Big| V-\frac12|u| \sigma\Big|^2 & = 1+|V|^2 - |u| V\cdot \sigma +\frac{1}{4} |u|^2  = \frac{E}{2} - |u||V|\hat{V}\cdot\sigma\, ,\\
\langle v'_{*} \rangle^{2} = 1+|v'_*|^2=1+\Big| V+\frac12|u| \sigma\Big|^2 & = 1+|V|^2 + |u| V\cdot\sigma+\frac{1}{4} |u|^2 =  \frac{E}{2} + |u||V|\hat{V}\cdot\sigma\, .
\end{split}
\end{equation*}
Therefore,
\begin{equation*}
\langle v' \rangle^{2} = E\bigg(\frac{1 - \xi \, \hat{V}\cdot\sigma}{2}\bigg)\,,\qquad \langle v'_{*} \rangle^{2} = E\bigg(\frac{1 + \xi\, \hat{V}\cdot\sigma}{2}\bigg)\,,\qquad \text{where}\quad \xi:=\frac{2\,|u||V|}{E}\,.
\end{equation*}
In addition, note that
\begin{equation*}
0\leq \xi=\frac{2|u|\, |V|}{E} \le \frac{|v|^{2} + |v_{*}|^{2}}{E}\leq 1\,,
\end{equation*}
which proves the lemma.
\end{proof}
\subsection{Angular averaging regularity} This is a profound result, independent of the solution of the  Cauchy problem for the Boltzmann flow posed in Theorem~\ref{CauchyProblem}. It  provides the basic component  not only a coercive condition  that yields global stability of solutions but also the control of high energy tails or exponential moments to be discussed in Subsection~\ref{exponential-tails} .
 It has its roots in the work of  Povzner \cite{Povzner}, Desvillettes \cite{Desvillettes} and Wennberg\cite{WennbergMP}, but it was the ground breaking work of Bobylev \cite{boby97}  who introduced,  by means of the energy identity \eqref{Pov1}, the concept of angular averaging  that produces a broad and natural argument to  control the integral of the sum of convex weight functions evaluated in post-collisional velocities  proportional to the  angular transition function $\b(\hat u\cdot\sigma) \in L^1(\mathbb{S}^{d-1} )$, without any need to omit point of singularities.  While the original work in \cite{boby97} was developed  for the case of hard potentials an three dimensions (i.e. $\gamma=1$ and $\b(\hat u\cdot\sigma)$ being constant), we 
  present in this review  
ideas techniques   that are slight modifications of those  from \cite{BGP04,GPV09,Alonso-Lods-SIMA10}. These new  angular averaging approaches have been recently developed for a system of multi-component gas mixtures \cite{IG-P-C-mixtures}, and  more recently for a broad class of cross sections and transition probabilities for the case of polyatomic gases \cite{IG-P-C-poly-2020}.  
\begin{thm}[\bf{ Angular Averaging Lemma}]\label{BGP-JSP04}
Assume that the collision cross-section splits into $B(|u|, \hat u\cdot\sigma)= \Phi(|u|) \, b(\hat u\cdot\sigma)$ and let $\psi_{r} (x) = \langle x\rangle^{2r}$, with any real value $r$.  Then, there exists a positive constant $\mu_r$, referred by the {\em $r^{th}$-contractive factor}, such that  the positive component of the weight function  \eqref{weightuv2} in the weak formulation of the Boltzmann equation satisfies
\begin{equation}\label{povzner0}
G^+_{\psi_{r}}(v,v_*)  =  \int_{ \mathbb{S}^{d-1} } \big( \psi_{r} (v') + \psi_{r} (v'_*) \big)
B(|u|, \hat u\cdot\sigma)\, d\sigma \le \mu_{r}\,\Phi(|u|)\,\Big( \langle v\rangle^{2} + \langle v_{*}\rangle^{2} \Big)^{r}
\end{equation}
where 
\begin{equation}\label{mu_r}
\mu_r  := \sup_{\{\hat{V}\in \mathbb{S}^{d-1} ,\hat{u} \in \mathbb{S}^{d-1}\}}\int_{ \mathbb{S}^{d-1} }  \bigg(\left| \tfrac{1 - \hat V\cdot\sigma}2 \right|^{r} +  \left| \tfrac{1 + \hat V\cdot\sigma}2 \right|^{r} \bigg)  b(\hat u\cdot\sigma)\, d\sigma \le \|b \|_{L^1(\mathbb S^{d-1})}\,, \   \ r>1\,.
\end{equation}
$\mu_{1}=\|b \|_{L^1(\mathbb S^{d-1})}$, and $\mu_r\searrow0$ strictly as $1<r \nearrow \infty.$

Furthermore, if $b\in L^{p}(\mathbb{S}^{d-1})$, with $1<p\leq \infty$, the following estimate holds
\begin{equation}\label{povzner0.1}
\mu_r\leq 2^{(2p-1)/p} \|b\|_{L^p(\mathbb{S}^{d-1})}|\mathbb{S}^{d-2}|^{(p-1)/p}   \bigg(\frac{p-1}{p(1+r)-1}\bigg)^{(p-1)/p}  =\mathcal{O}(r^{-\frac{p-1}p}) \,.
\end{equation} 
\end{thm}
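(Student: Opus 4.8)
The plan is to collapse the $\mathbb{S}^{d-1}$–integral to a one–dimensional average by means of the energy identity of Lemma~\ref{PovznerI}, and then to read off every assertion from elementary convexity of the profile $g_{r}(s):=\bigl(\tfrac{1-s}{2}\bigr)^{r}+\bigl(\tfrac{1+s}{2}\bigr)^{r}$ on $[-1,1]$. Writing $B(|u|,\hat u\cdot\sigma)=\Phi(|u|)\,b(\hat u\cdot\sigma)$ and factoring $\Phi(|u|)$ out of the $\sigma$–integral, I would apply \eqref{Pov1} with $\psi(x)=x^{r}$ to obtain $\langle v'\rangle^{2r}+\langle v'_{*}\rangle^{2r}=E^{r}\,g_{r}(\xi\,\hat V\cdot\sigma)$, with $E=\langle v\rangle^{2}+\langle v_{*}\rangle^{2}$ and $\xi\in[0,1]$ as in \eqref{Pov1}; hence $G^{+}_{\psi_{r}}(v,v_{*})=\Phi(|u|)\,E^{r}\int_{\mathbb{S}^{d-1}}g_{r}(\xi\,\hat V\cdot\sigma)\,b(\hat u\cdot\sigma)\,d\sigma$, so the theorem reduces to bounding this angular average of $g_{r}$ uniformly in $\hat V,\hat u$ and in $\xi\in[0,1]$.

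For $r\ge1$ the function $g_{r}$ is even and convex on $[-1,1]$, hence nondecreasing in $|s|$, and $g_{r}(s)\le g_{r}(1)=1$ because $a^{r}+b^{r}\le(a+b)^{r}$ whenever $a,b\ge0$ and $a+b=1$. Since $\xi\le1$, monotonicity gives the pointwise bound $g_{r}(\xi\,\hat V\cdot\sigma)\le g_{r}(\hat V\cdot\sigma)$, so the angular average is at most $\int_{\mathbb{S}^{d-1}}g_{r}(\hat V\cdot\sigma)\,b(\hat u\cdot\sigma)\,d\sigma\le\mu_{r}$, which is exactly \eqref{povzner0}. The crude bound $g_{r}\le1$ then yields $\mu_{r}\le\|b\|_{L^{1}(\mathbb{S}^{d-1})}$ for $r\ge1$; at $r=1$ one has $g_{1}\equiv1$, so the integral equals $\|b\|_{L^{1}(\mathbb{S}^{d-1})}$ for every $\hat V,\hat u$, whence $\mu_{1}=\|b\|_{L^{1}(\mathbb{S}^{d-1})}$.

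For the decay, note that for each fixed $s\in(-1,1)$ the quantity $g_{r}(s)=a^{r}+b^{r}$ (with $a,b\in(0,1)$, $a+b=1$) is strictly decreasing in $r$ and tends to $0$; strict monotonicity of $r\mapsto\mu_{r}$ then follows because the supremum defining $\mu_{r}$ is attained, by compactness of $\mathbb{S}^{d-1}\times\mathbb{S}^{d-1}$ and continuity of $(\hat V,\hat u)\mapsto\int g_{r}(\hat V\cdot\sigma)b(\hat u\cdot\sigma)\,d\sigma$ (dominated convergence together with strong continuity of rotations on $L^{1}$). To see $\mu_{r}\to0$ one splits $\mathbb{S}^{d-1}$ at $|\hat V\cdot\sigma|=1-\delta$: on $\{|\hat V\cdot\sigma|\le1-\delta\}$ one has $g_{r}\le(\tfrac{\delta}{2})^{r}+(1-\tfrac{\delta}{2})^{r}\to0$ uniformly in $(\hat V,\hat u)$; on the two small caps $\{|\hat V\cdot\sigma|>1-\delta\}$ one uses $g_{r}\le1$ together with the fact that each such cap lies inside $\{\sigma:\hat u\cdot\sigma\in J\}$ for a set $J\subset[-1,1]$ of Lebesgue measure $\mathcal{O}(\sqrt{\delta})$, and the absolute continuity of the integral of $z\mapsto b(z)(1-z^{2})^{(d-3)/2}\in L^{1}([-1,1])$ makes $\int_{\{\hat u\cdot\sigma\in J\}}|b(\hat u\cdot\sigma)|\,d\sigma$ small, uniformly in $(\hat V,\hat u)$; choosing $\delta$ small and then $r$ large gives $\mu_{r}\to0$. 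I expect this uniform decay for merely integrable $b$ to be the main obstacle, precisely because $b$ may concentrate near $\hat u\cdot\sigma=\pm1$, so one must play off the uniform smallness of $g_{r}$ away from $\pm1$ against the rotationally uniform absolute continuity of $b(\hat u\cdot\,\cdot\,)\,d\sigma$ near the poles $\pm\hat V$, while tracking the $\mathcal{O}(\sqrt{\delta})$ width of the relevant range of $\hat u\cdot\sigma$.

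Finally, for the quantitative estimate \eqref{povzner0.1} with $b\in L^{p}(\mathbb{S}^{d-1})$, $1<p\le\infty$ and $p'=p/(p-1)$, I would apply H\"older and use the rotational invariance of both $\|b\|_{L^{p}}$ and $\|g_{r}(\hat e\cdot\,\cdot\,)\|_{L^{p'}}$ (for any fixed unit vector $\hat e$) to get $\mu_{r}\le\|g_{r}(\hat e\cdot\,\cdot\,)\|_{L^{p'}(\mathbb{S}^{d-1})}\,\|b\|_{L^{p}(\mathbb{S}^{d-1})}$. Convexity gives $g_{r}(s)^{p'}\le2^{p'-1}\bigl[(\tfrac{1-s}{2})^{rp'}+(\tfrac{1+s}{2})^{rp'}\bigr]$, and by the symmetry $\sigma\mapsto-\sigma$ together with the disintegration $\int_{\mathbb{S}^{d-1}}F(\hat e\cdot\sigma)\,d\sigma=|\mathbb{S}^{d-2}|\int_{-1}^{1}F(z)(1-z^{2})^{(d-3)/2}\,dz$ and, for $d\ge3$, the trivial bound $(1-z^{2})^{(d-3)/2}\le1$, one is reduced to $\int_{-1}^{1}(\tfrac{1+z}{2})^{rp'}\,dz=\tfrac{2}{rp'+1}$; collecting the constants and using $(p'+1)/p'=(2p-1)/p$, $1/p'=(p-1)/p$ and $rp'+1=\tfrac{p(1+r)-1}{p-1}$ reproduces \eqref{povzner0.1} verbatim, so in particular $\mu_{r}=\mathcal{O}(r^{-(p-1)/p})$. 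When $d=2$ the weight $(1-z^{2})^{-1/2}$ is unbounded at $z=\pm1$, and the same computation carried through the integral $\int_{0}^{1}s^{rp'+(d-3)/2}(1-s)^{(d-3)/2}\,ds$ gives a bound of the same order in $r$.
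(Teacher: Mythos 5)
Your proposal is correct and follows essentially the same skeleton as the paper's proof: the energy identity of Lemma~\ref{PovznerI}, the monotonicity in $\xi\in[0,1]$ to reduce to the profile $g_r(\hat V\cdot\sigma)$, the definition of $\mu_r$ as a supremum over $(\hat V,\hat u)$ with $\mu_1=\|b\|_{L^1(\mathbb{S}^{d-1})}$, and H\"older's inequality plus the one-dimensional disintegration and the Beta-type integral $\int_{-1}^{1}\bigl(\tfrac{1+z}{2}\bigr)^{rp'}dz=\tfrac{2}{rp'+1}$ for \eqref{povzner0.1}, with identical constant bookkeeping. The one place where you genuinely depart from the paper is the decay $\mu_r\searrow0$: the paper argues via continuity of $(\hat V,\hat u)\mapsto\mu_r(\hat V,\hat u)$, a.e.\ strict decrease of the integrand in $r$, and monotone (Dini-type) convergence, which is rather terse about why the supremum over $(\hat V,\hat u)$ tends to zero for merely integrable $b$; your cap-splitting at $|\hat V\cdot\sigma|=1-\delta$, combined with the $\mathcal{O}(\sqrt{\delta})$ width of the corresponding range of $\hat u\cdot\sigma$ and the uniform absolute continuity of $z\mapsto b(z)(1-z^2)^{(d-3)/2}$ in $L^1([-1,1])$, makes that uniformity explicit and quantitative, which is arguably a more self-contained treatment of the only delicate point (your attainment-of-the-supremum argument for strict monotonicity is also cleaner than the paper's). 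One caveat: your closing remark about $d=2$ is not quite right — there $\int_0^1 s^{rp'-1/2}(1-s)^{-1/2}ds=B\bigl(rp'+\tfrac12,\tfrac12\bigr)\sim C\,(rp')^{-1/2}$, so after raising to the power $1/p'$ the decay is $\mathcal{O}\bigl(r^{-\frac{p-1}{2p}}\bigr)$, i.e.\ half the exponent of \eqref{povzner0.1}, not "the same order"; this does not affect the theorem as used (both your main computation and the paper's bound $(1-z^2)^{(d-3)/2}\le1$ require $d\ge3$), but the aside should be corrected.
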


\begin{remark}\label{coercive-factor}  The factor $\mu_r$   generates  constants $0< \|b \|_{L^1(\mathbb S^{d-1})}-\mu_r < 1 $, for all $r>1$,  for the weak form  of Boltzmann flow when tested by the polynomial forms $\psi_{r}$. In particular, the main result from this Theorem~\ref{BGP-JSP04}, is  to show a priori global in time $L^1_k$ estimates to unique solutions of the Boltzmann  initial value problem, where  factor   $\|b \|_{L^1(\mathbb S^{d-1})}-\mu_r >0 $ is associated to coerciveness for the dynamics of the Boltzmann equation in a suitable Banach space and norm. See Remark~\ref{coercive-remark}  for further comments in coerciveness.
\end{remark}

\ 
\begin{proof} 
Using Lemma \ref{PovznerI} and the homogeneity of $\psi_{k}$, it follows that the positive part of the weight function \eqref{weightuv2} is controlled by  
\begin{equation*}
\begin{split}
G^+_{\psi_{k}}(v_*,v) & = \Phi(|u|) \,  \int_{ \mathbb{S}^{d-1} }\big( \psi_{r}(v')   +  \psi_{r}(v'_*)  \big) b(\hat u\cdot\sigma)\, d\sigma \\
&= \Phi(|u|)\,\Big(\langle v \rangle^{2} + \langle v_{*}\rangle^{2}\Big)^{r} \, \int_{ \mathbb{S}^{d-1} }  \bigg(\left| \tfrac{1-\xi \hat V\cdot\sigma}2 \right|^{r} +  \left| \tfrac{1+\xi \hat V\cdot\sigma}2\right|^{r}\bigg)  b(\hat u\cdot\sigma)\, d\sigma\,.
\end{split}
\end{equation*}
{Noting that  the derivative of the integral termr $\bigg(\left| \tfrac{1-\xi \hat V\cdot\sigma}2 \right|^{r} +  \left| \tfrac{1+\xi \hat V\cdot\sigma}2\right|^{r}\bigg)$ containing  $\hat V\cdot\sigma$  are non-negative, as their factor $\pm \xi$ is bounded by unity,   then}
\begin{equation}\label{Pov2}
\left| \tfrac{1-\xi \hat V\cdot\sigma}2 \right|^{r} +  \left| \tfrac{1+\xi \hat V\cdot\sigma}2 \right|^{r} \leq \left| \tfrac{1 - \hat V\cdot\sigma}2 \right|^{r} +  \left| \tfrac{1 + \hat V\cdot\sigma}2 \right|^{r}\,.
\end{equation}
Therefore,
\begin{align}\label{Pov0}
G^+_{\psi_{r}}(v_*,v) &\leq \Phi(|u|)\,\Big( \langle v\rangle^{2} + \langle v_{*}\rangle^{2}\Big)^{r} \int_{ \mathbb{S}^{d-1} }  \bigg(\left| \tfrac{1 - \hat V\cdot\sigma}2 \right|^{r} +  \left| \tfrac{1 + \hat V\cdot\sigma}2 \right|^{r}\bigg)  b(\hat u\cdot\sigma)\, d\sigma\\
&\leq \mu_r\,\Phi(|u|)\,\Big( \langle v \rangle^{2} + \langle v_{*} \rangle^{2}\Big)^{r}\,,\qquad r>1\,.\nonumber
\end{align} 
Thus, we define the {\sl $r^{th}$-contractive factor } 
\begin{equation*}
\mu_r  := \sup_{\{\hat{V}\in \mathbb{S}^{d-1} ,\hat{u} \in \mathbb{S}^{d-1}\}}\int_{ \mathbb{S}^{d-1} }  \bigg(\left| \tfrac{1 - \hat V\cdot\sigma}2 \right|^{r} +  \left| \tfrac{1 + \hat V\cdot\sigma}2 \right|^{r} \bigg)  b(\hat u\cdot\sigma)\, d\sigma < \|b \|_{L^1(\mathbb S^{d-1})}\,,\qquad  \text{for any} \ r>1\,.
\end{equation*}
%
Clearly,
\begin{equation*}
\mu_{1} =  \sup_{\{\hat{V}\in \mathbb{S}^{d-1} ,\hat{u} \in \mathbb{S}^{d-1}\}} \int_{ \mathbb{S}^{d-1} }  \bigg(\left| \tfrac{1 - \hat V\cdot\sigma}2 \right| +  \left| \tfrac{1 + \hat V\cdot\sigma}2 \right|\bigg)  b(\hat u\cdot\sigma)\, d\sigma = \|b \|_{L^1(\mathbb S^{d-1})} 
\end{equation*}
In addition, for all $r>1$, the function
\begin{equation*}
\mu_r(\hat{V},\hat{u}):= \int_{ \mathbb{S}^{d-1} }  \bigg(\left| \tfrac{1 - \hat V\cdot\sigma}2 \right|^{r} +  \left| \tfrac{1 + \hat V\cdot\sigma}2 \right|^{r}\bigg)  b(\hat u\cdot\sigma)\, d\sigma\,,
\end{equation*} 
is continuous in the vectors $\hat{V}$ and $\hat{u}$. Such continuity can be easily verify by  using polar coordinates with zenith $\hat{u}$.  Moreover, the integrand  \eqref{Pov0}being smaller or equal than unity,  with respect to the measure $\left(b(\hat u\cdot\sigma)  d\sigma\right)$,   is strictly decreasing for $r>1$ up to a set of measure zero (namely at $\sigma=\{\pm\hat{V}\}$).  
As a consequence $\mu_{r_{1}}(\hat{V},\hat{u})>\mu_{r_{2}}(\hat{V},\hat{u})$ for any $r_{1}<r_{2}$.  Thus, the
 $\sup_{(V,\hat u)}\left(  \mu_{r_{1}}(\hat{V},\hat{u})-\mu_{r_2}(\hat{V},\hat{u})\right)> 0$, when   taken over any unitary vector pairs $(\hat{V}, \hat{u} )\in \mathbb{S}^{d-1} \times \mathbb{S}^{d-1} $, which, by continuity,  implies $\mu_{r_{1}}-\mu_{r_{2}}>0$.  Finally, monotonic convergence gives that $\mu_r\searrow0$, with  $\mu_r<\|b \|_{L^1(\mathbb S^{d-1})}$, as soon as $r>1$.\\

In the case that the scattering kernel $b$ is integrable in $L^{p}(\mathbb{S}^{d-1})$ with $p>1$,  estimate \eqref{povzner0.1} follows by H\"{o}lder's inequality
\begin{align}\label{Povp1}
\mu_r\leq \int_{ \mathbb{S}^{d-1} }\left| \tfrac{1 \pm \hat V\cdot\sigma}2 \right|^{r} &b(\hat u\cdot\sigma) d\sigma \leq \|b\|_{L^{p}(\mathbb{S}^{d-1})}\bigg( \int_{ \mathbb{S}^{d-1} } \left| \tfrac{1 \pm \hat V\cdot\sigma}2 \right|^{p'r}  d\sigma \bigg)^{1/p'} \leq  2  \|b\|_{p}  \left(\frac{|\mathbb{S}^{d-2}|}{1+rp'}\right)^{1/p'}\\
=   2^{(2p-1)/p}  &\|b\|_{L^p(\mathbb{S}^{d-1})}|\mathbb{S}^{d-2}|^{(p-1)/p}   \bigg(\frac{p-1}{p-1+pr}\bigg)^{(p-1)/p}\!\!\!= \mathcal{O}\left((1+\frac{p}{p-1}r)^{-\frac{p-1}p}\right),  \nonumber
\end{align}
and
\begin{align}\label{Povp2}
\mu_r\leq  4  \|b\|_{L^{\infty}(\mathbb{S}^{d-1})}| \mathbb{S}^{d-2}| ((1+r)^{-1}), \qquad \text{for } p\equiv \infty. \qquad \qquad  
\end{align}
\end{proof}

Recall  that, since the unit sphere is a bounded set in $\mathbb R^d$, then   the      $L^p$-integrability of the transition probability function $b(\hat u\cdot \sigma)$ on the unit sphere $\mathbb S^{d-1}$, with any  $p\in[1,\infty] $, is control from below from the $L^1$ integrability norm, that is 
$ \|b\|_{L^1(\mathbb{S}^{d-1})} \le  \|b\|_{L^p(\mathbb{S}^{d-1})}  |\mathbb{S}^{d-2}|^{(p-1)/p} $.

\subsection{Moment inequalities for binary interactions}\label{sec:mom}

The sharp form of Povzner Lemma in Theoren~\ref{BGP-JSP04} has a profound impact in the theory of existence and uniqueness of the Boltzmann equation in the space homogeneous case because it allows to propagate and generate statistical moments in the case of hard potentials.

\medskip

\begin{lem}[\bf{Angular Averaging for the Binary Weight Function }]\label{PovznerII}
The Boltzmann weight function \eqref{weightuv2} associated to the weak form \eqref{bina-weak2} for  $\varphi(v)=\langle v\rangle ^{sk}$  for $1<k\in\mathbb{N}$ and $s\in(0,1]$, now denoted by $G_{sk}(v_*,v)$, satisfies the  following inequality for any $sk\geq1$,
\begin{align}\label{Pov2}
\begin{split}
G_{sk}(v_*,v) &= \int_{S^{d-1}}\Big( \langle v' \rangle^{2sk}   +   \langle v'_*\rangle^{2sk}  - \langle v\rangle^{2sk} - \langle v_*\rangle^{2sk}\Big) B(|u|, \hat u\cdot\sigma)\, d\sigma\\
&\le  \Phi(|u|)\Big( \mu_{sk}\Big( \langle v \rangle^{2} + \langle v_{*}\rangle^{2} \Big)^{sk}  -\|b\|_{L^1(\mathbb{S}^{d-1})}\left( \langle v\rangle^{2sk} + \langle v_*\rangle^{2s k}\right) \Big)\\[3pt]
&\hspace{-1cm}\le  \Phi(|u|)\Big( 2^{J_{2sk}}\mu_{sk}\Big(\langle v \rangle^{2}\langle v_{*}\rangle^{2(sk-1)} + \langle v\rangle^{2(sk-1)}\langle v_{*}\rangle^{2} \Big) - (\|b\|_{L^1(\mathbb{S}^{d-1})}- \mu_{sk})\Big( \langle v\rangle^{2 sk} +  \langle v_*\rangle^{2sk}\Big) \Big)\,. 
\end{split}
\end{align}
\end{lem}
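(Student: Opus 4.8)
The plan is to start from the definition of $G_{sk}(v_*,v)$ as the $\sigma$-integral of $\langle v'\rangle^{2sk}+\langle v'_*\rangle^{2sk}-\langle v\rangle^{2sk}-\langle v_*\rangle^{2sk}$ weighted by $B(|u|,\hat u\cdot\sigma)=\Phi(|u|)\,b(\hat u\cdot\sigma)$, and split it into its positive and negative contributions exactly as in the decomposition $G_{sk}=G^+_{\psi_{sk}}-G^-_{\psi_{sk}}$ used in \eqref{weightuv2}. For the positive part, apply the Angular Averaging Lemma (Theorem~\ref{BGP-JSP04}) with $\psi_r(x)=\langle x\rangle^{2r}$ and $r=sk$ — this is legitimate since $sk\ge 1$, and for $sk>1$ it yields the contractive factor $\mu_{sk}$, while for $sk=1$ one has $\mu_1=\|b\|_{L^1(\mathbb S^{d-1})}$ — to obtain $G^+_{\psi_{sk}}(v_*,v)\le \mu_{sk}\,\Phi(|u|)\big(\langle v\rangle^2+\langle v_*\rangle^2\big)^{sk}$. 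For the negative part, since $B$ factorizes, $G^-_{\psi_{sk}}(v_*,v)=\Phi(|u|)\,\|b\|_{L^1(\mathbb S^{d-1})}\big(\langle v\rangle^{2sk}+\langle v_*\rangle^{2sk}\big)$ directly from the definition. Combining these two bounds gives the second line of \eqref{Pov2}.

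The third line is then obtained purely algebraically. Write $z=\langle v\rangle^2$, $y=\langle v_*\rangle^2$, and set $q=2sk$ — note $q\ge 2$. Add and subtract $\|b\|_{L^1(\mathbb S^{d-1})}(z^{sk}+y^{sk})$ inside the $\mu_{sk}(z+y)^{sk}$ term is not quite the right bookkeeping; rather, split $\mu_{sk}(z+y)^{sk}=\mu_{sk}\big[(z+y)^{sk}-z^{sk}-y^{sk}\big]+\mu_{sk}(z^{sk}+y^{sk})$ and regroup the last piece with the negative contribution to produce $-(\|b\|_{L^1(\mathbb S^{d-1})}-\mu_{sk})(z^{sk}+y^{sk})$, which is exactly the coercive term appearing on the right-hand side of the claimed inequality (here $z^{sk}=\langle v\rangle^{2sk}$, $y^{sk}=\langle v_*\rangle^{2sk}$). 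It remains to control $\mu_{sk}\big[(z+y)^{sk}-z^{sk}-y^{sk}\big]$. For this I would invoke the Newton binomial estimate from Lemma~\ref{BGP04-bino-sum}, applied with exponent $sk$ (or rather $2sk$ after accounting for the square weights) and $x=z$, $y=y$: the second inequality in \eqref{binsum} bounds $(z+y)^{sk}-z^{sk}-y^{sk}$ by $\sum_{j=1}^{J_{2sk}}\binom{2sk}{j}(z^{sk-j}y^j+z^jy^{sk-j})$, and then the second line of the consequence \eqref{binsum-2} gives $\sum_{j=1}^{J_{2sk}}\binom{2sk}{j}(\cdots)<2^{J_{2sk}}\big(z^{sk-1}y+z\,y^{sk-1}\big)$, i.e. $2^{J_{2sk}}\big(\langle v\rangle^{2(sk-1)}\langle v_*\rangle^2+\langle v\rangle^2\langle v_*\rangle^{2(sk-1)}\big)$. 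Multiplying by $\mu_{sk}\Phi(|u|)$ and assembling yields the third displayed line.

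The main obstacle I anticipate is the bookkeeping around the non-integer exponent $sk$ and the precise index $J_{2sk}=\lfloor sk\rfloor$ (versus $\lfloor 2sk/2\rfloor$) in Lemma~\ref{BGP04-bino-sum}: one must be careful that the binomial expansion with real exponent $2sk$ applied to the pair of \emph{squared} Lebesgue weights $\langle v\rangle^2,\langle v_*\rangle^2$ is what produces the cross-terms $\langle v\rangle^{2}\langle v_*\rangle^{2(sk-1)}$ rather than first-power cross terms, and that the "dominant cross-term" bound in the second line of \eqref{binsum-2} is being used with $j=1$, which is where the factor $2^{J_{2sk}}$ comes from. A secondary technical point is that the Angular Averaging Lemma as stated requires $r>1$ for the strict contraction $\mu_r<\|b\|_{L^1}$; in the borderline case $sk=1$ the coercive term $(\|b\|_{L^1}-\mu_{sk})$ degenerates to zero, so the claimed inequality still holds (trivially, as the coercive term vanishes and the remaining estimate is just a rearrangement) but this edge case should be noted. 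Everything else is routine convexity and rearrangement.
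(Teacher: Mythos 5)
Your proposal is correct and follows essentially the same route as the paper: the paper's proof is exactly an application of the Angular Averaging Lemma (Theorem~\ref{BGP-JSP04}) for the positive part followed by the generalized binomial estimate of Lemma~\ref{BGP04-bino-sum} with $x=\langle v\rangle^{2}$, $y=\langle v_{*}\rangle^{2}$, which is precisely your splitting, regrouping of $\mu_{sk}(z^{sk}+y^{sk})$ with the loss term, and dominant-cross-term bound yielding the factor $2^{J_{2sk}}$. The edge case $sk=1$ and the index bookkeeping you flag are handled implicitly (and somewhat loosely) in the paper as well, so no gap.
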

\begin{proof}   
It follows by a trivial application, first, of Theorem \ref{BGP-JSP04} and, then, of Lemma \ref{BGP04-bino-sum} with $x=\langle v\rangle^{2}$ and $y=\langle v_{*}\rangle^{2}$ and $J_{2sk}= sk-1$.
\end{proof}


\subsection{Moment  estimates - Version $\mathrm I$}\label{L^1-MM}
 
In order to obtain moment's estimates  associated to the solution of the Boltzmann equation that will also serve to calculate the summability of moments as a function of the potential rate $\gamma$ from \eqref{pot-Phi_1}, 
 defined the $sk^{th}$-moment of  the probability density $f(v,t)$ by  $m_{sk}[f](t) = \int f(v,t) \langle v\rangle^{2sk}  dv,$  as defined in \eqref{k-moment}, for any  $sk\geq0$, with $0<s\le1.$
    
\begin{lem}[\bf{Lebesgue moment's estimates}]\label{mom-lemma-I}
Each  $sk^{th}$ moment of the Boltzmann equation associated to the Cauchy problem Theorem~\ref{CauchyProblem}  
satisfies the following Ordinary Differential Inequality (ODI) 

\begin{align}\label{momentineq}
\frac {d}{dt} m_{sk}[f](t) &\leq 2^{sk-1+\frac\gamma 2}\mu_{sk}  C_\Phi  \Big(m_{1+\gamma/2}[f](t) m_{sk-1}[f](t)  + m_{1}[f](t) m_{sk-1+\gamma/2}[f](t) )\Big)\\
&\ \ \ \  + \|b\|_{L^1(\mathbb{S}^{d-1})}\,m_{sk}[f](t) m_{\gamma/2}[f](t)  - 2^{-\frac{\gamma} 2}c_{\Phi} (\|b\|_{L^1(\mathbb{S}^{d-1})}-\mu_{sk}) m_{0}[f](t) m_{sk+\gamma/2}[f](t),  \nonumber\\[6pt]
& \ \ \ \   \ \text{for any }  0<s\le 1, \ 0<\gamma \le 2,\  \text{and} \  sk> s\kc\ge 1, \   \nonumber
\end{align}

with the {$sk^{th}$} contracting factor  $\mu_{sk} < \|b\|_{L^1(\mathbb{S}^{d-1})}$, and the lower bound constant  $2^{-\frac{\gamma}2}c_{\Phi }$ from the intermolecular potential function $\Phi(u)$.
\end{lem}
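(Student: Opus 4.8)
The plan is to test the weak (Maxwell) formulation \eqref{eq:maxwell} against $\varphi(v)=\langle v\rangle^{2sk}$, which gives
\[
\frac{d}{dt}m_{sk}[f](t)=\frac12\iint_{\R^d\times\R^d}f(v,t)\,f(v_*,t)\,G_{sk}(v_*,v)\,dv_*\,dv ,
\]
with $G_{sk}$ the binary weight of \eqref{weightuv2}, and then to insert the pointwise control of $G_{sk}$ furnished by the Angular Averaging Lemma~\ref{PovznerII}. That lemma bounds $G_{sk}(v_*,v)$ by $\Phi(|u|)$ times the difference of a nonnegative \emph{cross} piece $2^{J_{2sk}}\mu_{sk}\big(\langle v\rangle^{2}\langle v_*\rangle^{2(sk-1)}+\langle v\rangle^{2(sk-1)}\langle v_*\rangle^{2}\big)$, with $J_{2sk}=sk-1$, and a coercive \emph{top} piece $\big(\|b\|_{L^1(\mathbb S^{d-1})}-\mu_{sk}\big)\big(\langle v\rangle^{2sk}+\langle v_*\rangle^{2sk}\big)$. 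It then remains to estimate $\Phi(|u|)$ from above against the cross piece and from below against the top piece, using the two-sided bound \eqref{pot-Phi_2}; the standing hypothesis $sk>s\kc\ge1$ ensures $sk-1\ge0$, so every Lebesgue moment that appears is of nonnegative order.

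For the cross piece I would use the upper bound $\Phi(|u|)\le C_\Phi\,2^{\gamma/2}\big(\langle v\rangle^{\gamma}+\langle v_*\rangle^{\gamma}\big)$, multiply out, and observe that integrating $f(v)f(v_*)$ against each resulting monomial $\langle v\rangle^{2a}\langle v_*\rangle^{2b}$ factors as a product of Lebesgue moments (recall $m_r[f]=\int f\langle v\rangle^{2r}$, so that $\langle v\rangle^{2+\gamma}$ produces $m_{1+\gamma/2}[f]$, $\langle v\rangle^{2(sk-1)+\gamma}$ produces $m_{sk-1+\gamma/2}[f]$, and so on). Using the $v\leftrightarrow v_*$ symmetry of the double integral, the four terms collapse to $2\big(m_{1+\gamma/2}[f]\,m_{sk-1}[f]+m_1[f]\,m_{sk-1+\gamma/2}[f]\big)$; tracking the constants ($\tfrac12$ from the weak form, $2^{J_{2sk}}=2^{sk-1}$, $2^{\gamma/2}$ from the potential bound, the extra $2$ from symmetry) produces exactly the first line of \eqref{momentineq}.

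For the top piece I would first symmetrize: since $\Phi(|u|)=\Phi(|v-v_*|)$ is $v\leftrightarrow v_*$ symmetric,
\[
-\tfrac12\big(\|b\|_{L^1}-\mu_{sk}\big)\iint f f_*\,\Phi(|u|)\big(\langle v\rangle^{2sk}+\langle v_*\rangle^{2sk}\big)
=-\big(\|b\|_{L^1}-\mu_{sk}\big)\iint f f_*\,\Phi(|u|)\,\langle v\rangle^{2sk},
\]
and then invoke the \emph{lower} bound $\Phi(|u|)\ge c_\Phi\big(2^{-\gamma/2}\langle v\rangle^{\gamma}-\langle v_*\rangle^{\gamma}\big)$ from \eqref{pot-Phi_2}, applied precisely to the factor already carrying the top weight $\langle v\rangle^{2sk}$. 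This turns the loss term into $-2^{-\gamma/2}c_\Phi\big(\|b\|_{L^1}-\mu_{sk}\big)m_0[f]\,m_{sk+\gamma/2}[f]+c_\Phi\big(\|b\|_{L^1}-\mu_{sk}\big)m_{\gamma/2}[f]\,m_{sk}[f]$: the first is the coercive dissipation term of \eqref{momentineq}, while the second, being of lower order in the weight and having coefficient controlled by $\|b\|_{L^1}$ (using $\mu_{sk}>0$), is absorbed into the term $\|b\|_{L^1}\,m_{sk}[f]\,m_{\gamma/2}[f]$. Adding the three contributions gives \eqref{momentineq}. The only genuinely delicate point is this last maneuver: the crude lower bound for $\Phi(|u|)$ is sign-indefinite, so it must be used only on the factor carrying $\langle v\rangle^{2sk}$ after symmetrization, so that the negative (dissipative) contribution lands on the highest moment $m_{sk+\gamma/2}[f]$; the rest is the routine translation of the expanded double integrals into products of Lebesgue moments.
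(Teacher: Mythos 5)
Your proposal is correct and follows essentially the same route as the paper: test with $\langle v\rangle^{2sk}$, apply the Angular Averaging Lemma~\ref{PovznerII}, use the upper pointwise bound in \eqref{pot-Phi_2} on $\Phi$ for the cross piece (Fubini turning the double integrals into products of Lebesgue moments) and the sign-indefinite lower bound in \eqref{pot-Phi_2} (equivalently \eqref{momentineqlow}) on the factor carrying $\langle v\rangle^{2sk}$ for the loss piece, producing the coercive $m_{sk+\gamma/2}$ term plus a lower-order positive remainder absorbed into $\|b\|_{L^1}\,m_{sk}m_{\gamma/2}$. The only cosmetic difference is that you start from the symmetrized Maxwell form with the factor $\tfrac12$ and symmetrize both pieces explicitly, whereas the paper works from the unsymmetrized weak form \eqref{bina-weak2} and keeps the loss term directly in convolution form; the resulting constants are the same.
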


\begin{proof}
A direct application of Theorem \ref{BGP-JSP04} and Lemma \ref{lem:binom} leading to  Lemma \ref{PovznerII}, enable estimates for any  $sk^{th}$-moments of the collisional form  $Q(f,f)(v,t)$, for any real valued $sk\geq1$,   as defined in \eqref{k-moment},  that is
\begin{align}\label{L^1-MM-1}
\begin{split}
\frac {d}{dt} m_{k}[f](t) = \int_{\real^d}& Q(f,f)(v,t) \, \langle v\rangle^{2sk}  dv   =  \int_{ \real^{2d} } f(v) f(v_*) \, G_{sk}(v_{*},v)\, dv_* dv  \\
&= \int_{ \real^{2d} } f(v) f(v_*) \, \left(G^+_{sk}(v_{*},v) - G^-_{sk}(v_{*},v) \right) \, dv_* dv  \\
&\quad\le  2^{sk  -1 +\frac\gamma 2}C_{\Phi}\mu_{sk}   \int_{ \real^{2d} } f(v) f(v_*)\left(\langle v\rangle^{2}\,\langle v_{*}\rangle^{2(sk-1)} +\langle v\rangle^{2(sk-1)}\langle v_*\rangle^{2}\right ) \Phi(|u|)  dv_* dv    \\
&\qquad  -  ( \|b\|_{L^1(\mathbb{S}^{d-1})} -\mu_{sk})\int_{\real^{d}}f(v)\langle v\rangle^{2sk}\Big(\int_{\real^{d}}f(v_{*})\Phi(|v-v_*|) dv_{*}\Big)dv\,.
\end{split}
\end{align}
Thus, the first binary integral term,  or positive part of  \eqref{L^1-MM-1}  is integrated by Fubinni  with respect to $v$ and $v_*$, which combined with  the upper of $\Phi_\gamma$ from \eqref{pot-Phi_2}  yields the following  upper bound  given by the bilinear forms  of  shifted  moments
  \begin{align}\label{momentinequp}
\begin{split}
 \int_{\real^{2d} }f(v,t) f(v_*,t) G^+_{sk}(v_{*},v)dv_* dv  \!  \leq \! 2^{sk -1+\frac\gamma 2} C_\Phi \mu_{sk}  \Big(\!&m_{1+\gamma/2}[f]m_{sk-1}[f]  \!+\! m_{1}[f] m_{sk-1+\gamma/2}[f] \!\Big)(t).\end{split}
\end{align}

The   negative part of  \eqref{L^1-MM-1},  $\int_{\real^{2d} }f(v,t) f(v_*,t) G^-_{k}(v_{*},v)dv_* dv$  ,   needs a lower lower bound for the collision frequency factor  that uses the pointwise   bound estimates  for $\Phi_\gamma$
 from \eqref{pot-Phi_2}, in terms of the Lebesgue weights,  
\begin{align}\label{momentineqlow}
\int_{\real^{d}}f(v_{*})\Phi_\gamma(|u|)dv_{*} &=   m_{0}[f],\qquad  \text{ for}\  \gamma = 0 \ \text{(Maxwell type interactions case)}\,, \nonumber\\
\text{and}\qquad  \qquad \qquad\qquad&\ \\
\int_{\real^{d}}f(v_{*})\Phi_\gamma(|u|)dv_{*}&\geq  c_{\Phi}(2^{-\frac\gamma 2}  m_{0}[f]\langle v\rangle^{\gamma} - m_{\gamma/2}[f])\,, \qquad \text{ for} \ 0 < \gamma \le 2 \ \text{(Hard potentials case)}\,, \nonumber
\end{align} 

Combining both estimates for the positive and negative parts of the collisional integrals estimates \eqref{momentinequp} and 
\eqref{momentineqlow} yields an upper estimate for the right  hand side the $k^{th}$-moment ODI derived from the corresponding moments inequality \eqref{L^1-MM-1}, and so   \eqref{momentineq} holds. 
\end{proof}

\begin{remark}\label{remark-moment-low} 
Note that the bound from below of $\Phi_\gamma$ introduces a positive contribution, and so coarsening the negative contribution of the lower bound.   However, this lower bound constants does not need any information about the solutions moment's order. 
\end{remark}
\begin{remark}\label{remark-moment-0} 
Note that the third term positive term in the righthand side of  \eqref{momentineq} disappears for the case  Maxwell interactions when $\gamma=0$.  
\end{remark}

\

It is important to point out  that  the above  Ordinary Differential  inequality    \eqref{momentineq} enables global in time propagation and generation of moments estimates, that are sufficient to prove  the existence and uniqueness theory associated to the initial value problem in the proof of Theorem~\ref{CauchyProblem} in a suitable subspace of the positive cone of $L^1_{2k}(\real^d)$ spaces.  A minor and yet profound impact of the lower bound contribution from \eqref{momentineqlow}, is that requires to revisit the classical $k^{th}$-moment inequalities calculated  with weights of the form $|v|^{2k}$ rather the Lebesgue weight $\langle v\rangle^{2k}$.   
Because of this ``wrinkle" in the lower bound, we need  to appeal to a  
the following classical interpolation results for weighted moments associated to  probability densities.

\begin{lem}[\bf Moment interpolation inequalities]\label{Lemma:mom-Holder}
The moment $m_{j} :=m_{j}[f]  :=\int_{\real^d} f(v,t) \, \psi_{ j} (v)\, dv $, for either $\psi_{ j}(v)=  |v|^{j}$  or  $ \langle v \rangle^{j}$,  satisfies
\begin{equation}\label{mom-Holder}
m_{j} \leq m_{{j}_1}^\tau m_{{j}_2}^{1-\tau},
\end{equation}
where the positive constants ${j}, {j}_1, {j}_2, \tau$ satisfy $0<{j}_1< {j}< {j}_2,$ $0<\tau<1$, and ${j}=\tau{j}_1+(1-\tau){j}_2$.
\end{lem}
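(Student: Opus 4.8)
The statement is the classical Lyapunov/Hölder interpolation inequality for moments of a nonnegative density, so the plan is to invoke Hölder's inequality directly with carefully chosen conjugate exponents. First I would observe that since $j = \tau j_1 + (1-\tau)j_2$ with $0<\tau<1$, we may split the weight as a product: writing $\psi_j(v) = \psi_{j_1}(v)^{\tau}\,\psi_{j_2}(v)^{1-\tau}$, which holds for both admissible choices $\psi_j(v) = |v|^j$ and $\psi_j(v)=\langle v\rangle^j$ because in each case $\psi_r(v)^a = \psi_{ra}(v)$ for scalars $r,a\ge 0$. This is the only place the specific form of the weight enters.

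Next, since $f\ge 0$, I would write the integrand as
\begin{equation*}
f(v,t)\,\psi_j(v) = \bigl(f(v,t)\,\psi_{j_1}(v)\bigr)^{\tau}\,\bigl(f(v,t)\,\psi_{j_2}(v)\bigr)^{1-\tau},
\end{equation*}
using $f = f^{\tau}f^{1-\tau}$, and apply Hölder's inequality on $\mathbb{R}^d$ with exponents $p = 1/\tau$ and $q = 1/(1-\tau)$, which are conjugate since $\tau + (1-\tau)=1$ and both exceed $1$. This yields
\begin{equation*}
\int_{\mathbb{R}^d} f\,\psi_j\, dv \;\le\; \Bigl(\int_{\mathbb{R}^d} f\,\psi_{j_1}\, dv\Bigr)^{\tau}\Bigl(\int_{\mathbb{R}^d} f\,\psi_{j_2}\, dv\Bigr)^{1-\tau} \;=\; m_{j_1}^{\tau}\,m_{j_2}^{1-\tau},
\end{equation*}
which is exactly \eqref{mom-Holder}. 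One should note that the finiteness of $m_{j_1}$ and $m_{j_2}$ (assumed implicitly, since otherwise the right-hand side is $+\infty$ and the inequality is vacuous) guarantees the manipulation is legitimate and that $m_j$ is finite as a byproduct.

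There is essentially no obstacle here: the only thing to be slightly careful about is the algebraic identity $\psi_j = \psi_{j_1}^{\tau}\psi_{j_2}^{1-\tau}$ together with the exponent bookkeeping $j=\tau j_1+(1-\tau)j_2$, and the verification that both weight families are closed under the relevant powers. The proof is a one-line application of Hölder once the weight is factored, so I would present it compactly rather than belaboring it.
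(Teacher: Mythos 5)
Your proof is correct and follows essentially the same route as the paper: both rest on factoring the weight as $\psi_{j_1}^{\tau}\psi_{j_2}^{1-\tau}$ (using $j=\tau j_1+(1-\tau)j_2$) and applying H\"older's inequality with conjugate exponents $1/\tau$ and $1/(1-\tau)$, the paper simply phrasing this as H\"older/Jensen with respect to the measure $f\,dv$ rather than splitting $f=f^{\tau}f^{1-\tau}$. No gap to report.
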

\begin{proof}
The proof of this statement is straightforward since $\psi_{j} (v)$ is convex for $j\ge1$.  Indeed, H\"{o}lder's  and Jensen's inequality imply
\begin{align*}
m_{j}[f] \leq \int_{\mathbb{R}^d}\psi(v)^{j} f(v) dv \ &\leq\ 
\int_{\mathbb{R}^d} \psi(v)^{ {j}_1 \tau+  {j}_2 (1-\tau) } f(v) dv \\
&\left(\int_{\mathbb{R}^d}\psi(v)^{{j}_1} f(v) dv\right)^\tau\, \left(\int_{\mathbb{R}^d}\psi(v)^{{j}_2} f(v) dv \right)^{1-\tau}= m_{{j}_1}^\tau[f]\, m_{{j}_2}^{1-\tau}[f]\,.
\end{align*}
\end{proof}

\subsection{Moment estimates for the collision operator in $L^1_{\ell}(\R^d)$}\label{Coll-mom-estimates}

This task focus in obtaining   Lebesgue moments estimates  for the collisional integral $Q_\gamma(f,f)](v) $ associated to the Boltzmann flow, by estimating  its $\ell^{th}$-order Lebesgue weighted  integrability in probability space $v\in\R^d$ and $\ell$ a real valued number.  That is,  to estimate $m_k[Q(f,f)]= \int_{\mathbb R^d}Q(f,f)(v) \langle v\rangle^{2k} dv$  independently of time.
This is a delicate point as the goal is not only to find bound depending on $q$ and the problem data, but also give an rather acquire description of the dependance of this bound with respect to the coerciveness factor $\mu_r$ obtained by the Angular Averaging Lemma~\ref{BGP-JSP04}. 
Previous results along these pages from Subsections~\ref{sec:mom} and ~\ref{L^1-MM}, namely Lemma~\ref{PovznerII} on Angular Averaging for the Binary Weight Function, Lemma~\ref{mom-lemma-I} on  Lebesgue moment's estimates, Lemma~\ref{Lemma:mom-Holder} on  Moment interpolation inequalities and moments lower bounds \eqref{momentineqlow}, are the starting points of the following Theorem~\ref{mom-coll-op} and its proof  developing the refined  bounds.
 
 \medskip
 
\begin{thm}[{\bf Lebesgue Moments of the collisional integral for hard potentials}] \label{mom-coll-op}

  Let $f(v)\in L^1_{\ell}(\R^d)$ and $Q_\gamma(f,f)](v) $ to the Boltzmann collision operator  as posed in \eqref{collision3} and \eqref{collision3.2}, with the   with the collision kernel or transition probability  $B(|u|, \hat u\cdot\sigma)= \Phi(|u|) b(\hat u\cdot\sigma)$,  with $\Phi(u)$ satisfying conditions \eqref{pot-Phi_1}, \eqref{pot-Phi_2.0} and \eqref{pot-Phi_2} for $\gamma\in (0,2]$ and  the angular transition part $b$  integrable in $L^p(\mathbb{S}^{d-1}), \ p\in[1,\infty)$.

 Set $\ell=2sk$, for  $s\in (0,1]$ and integer $k$, to define the parameter 
   \begin{equation}\label{ck}
0<c:=c_\gamma(sk):=\frac{\gamma/2}{sk-1}
 \end{equation}
  then,   for any reference function $f_0(v)\in L^1_{\ell}(\R^d)$, and any other $f(v)$ satisfying $m_i[f]=m_{i\frac{\ell}2}[f] = m_{i\frac{\ell}2}[f_0]$, $i=0,2$,  the $sk^{th}$-Lebesgue moment of the Boltzmann Collision operator is majorized by
    \begin{align}\label{Qmoment-0}
m_{sk}[Q_\gamma(f,f)]  &= \| Q_\gamma(f,f)\|_{L^1_{2sk}(\R^d)}= \int_{\mathbb R^d}Q_\gamma(f,f)(v,t) \langle v\rangle^{2sk} dv \\
&\le \mathcal{L}(m_{sk}[f]):= 
\B_{  sk}\  -\  \frac{A_{{  s}\kc}}2\,  {m^{-c }_{1}[f_0]}  m^{1+c}_{  sk}[f], 
\ \ \text{for any } \ \ k\ge \kc>1, \text{and}\    sk\ge s\kc >1 . \nonumber
  \end{align}

      The constants $A_{s\kc}$ and  $B_k$ are determined solely from the  Cauchy problem data associated  for the Boltzmann flow stated in Theorem~\ref{CauchyProblem} and the order of $k$ of the Lebesgue polynomial moment. They are 
\begin{align}\label{moment-factors}
& A_{{  s}\kc}=   c_{\Phi } 2^{-\frac{\gamma}2} \,(\|b \|_{L^1(\mathbb S^{d-1})} -\mu_{{  s}\kc}) \,{m_0[f_0]} \quad \text{the coercive constant, }    \nonumber \\
\text{ and }   \qquad\qquad&\ \\
& \B_{  sk}:=
 C_{\Phi} \left(   {2^{sk+\frac\gamma 2}}   \mu_{{  sk}}  + 2\|b\|_{L^1(\mathbb{S}^{d-1})} \right)   \mathcal{C}_{  sk} :=\beta_{  sk} \, \mathcal{C}_{  sk}   \quad \text{the  upper bound constant,} \nonumber
\end{align}  
both depending on   the constant $k^{th}$-contracting factor  $\mu_{{  sk}}\in (0,1)$  calculated in the Angular Averaging Lemma~\ref{BGP-JSP04},  and the data  associated to the Cauchy Problem posed in Theorem~\ref{CauchyProblem}. 

In particular,  there exists a rate $\kb=\kb(\|b \|_{L^1(\mathbb S^{d-1})} ,m_0[f_0])$, such that for any  $k\ge  \kmax:=\max\{\kb,\kc\} \ge \kmin:=\min\{\kb,\kc\}>1, $
the quotient 
 \begin{align}\label{delta-k}
\delta_{  sk}\ &:= \ \frac{ A_{{  s}\kc} }{\beta_{  sk}  } \le \frac{ \|b\|_{L^1(\mathbb{S}^{d-1})}  \,m_0[f_0] }{ 2^{sk +\gamma } \mu_{ sk}  + 2^{1+\frac\gamma2}\|b\|_{L^1(\mathbb{S}^{d-1})}} <1. 
\end{align}

Hence, the constant $\mathcal{C}_{  sk}$  estimated by the moments interpolation parameters is estimated by above by the constant $\mathcal{C}^{\text{up}}_{  sk}$, related as follows, 
 
%
 \begin{align}\label{mathcal-C-kc}
\mathcal{C}_{sk} 
:=  \ttm_{1,k}[f_0] \, R_{\C}(\delta^{-1}_{sk})  =  \ttm_{1,k}[f_0] \, R_{\C}(\beta_{sk}/ A_\kc)
\end{align} 
with
\begin{align}\label{RateC}
 R_{\C}(\delta^{-1}_{sk})& :=   R_{\C}(\beta_{sk}/ A_\kc) :=
 \left[ \left(   \frac{ \beta_{  sk}  } { A_{{  s}\kc}} \right)^{e_\alpha}   \mathbb{I}_{k\geq \kgamma} +
 \left(  \frac{ \beta_{  sk} } { A_{{  s}\kc}} \right)^{e_\theta}   \mathbb{I}_{k<\kgamma}\right], \quad
\nonumber \\
\text{for}  \ \ e_\alpha=\frac1{c_\gamma(sk)}, \quad  &\text{and}\quad\  e_\theta=\left({sk}-1\right) \left(1+c_\gamma(sk) \right)^2 +c_\gamma(sk),
 \quad\text{for }  \  s\kgamma:= \frac{1+(\gamma/2)^2}{1-\gamma/2} >1,  \end{align}
for  the factor  $\kgamma$ depending only on the potential rate $\gamma$ and  on fixed moment order $sk\ge s\max\{\kb,\kc\}=s\kmax >1$, with $\kb$ depending on $m_0[f_0]$ and $b(\hat u\cdot\sigma)$.  

The factor $\ttm_{1,k}[f_0] $ depends order of moment $k$ and  on data, namely,   the invariant moments $m_0[f_0]$ and $m_1[f_0]$ and $\gamma$ , 
\begin{align}\label{invariant k factor}
 0<\ttm_{1,k}[f_0] :=  \max\left\{   m_1^{2c +1}[f_0] \, ;\,    m^{{  sk}+\frac{\gamma}2}_{1}[f_0]   m^{(1+\frac{\gamma}2)\left( 1+c\right)}_{0}[f_0]  \right\} 
 \end{align}

\medskip

 Finally, combining  \eqref{moment-factors}, \eqref{delta-k} and \eqref{mathcal-C-kc},  set  $\B_{sk}:= \beta_{sk} \, \mathcal{C}_{sk}$,  enables the definition of the unique equilibrium state associated to the operator $ \mathcal{L}(m_{sk}[f])(t)$ from \eqref{Qmoment-0} defined as upper bound of the $k^\th$-Lebesgue moment of the collisional integral as to be the unique root of 
     \begin{align}\label{equi root}
     \mathcal{L}(\mathbb{E }_{k,\gamma}[f]) &= 0, 
    \end{align}
    with $\mathbb{E }_{k,\gamma}[f]$ is the unique equilibrium state, fully characterized by    
   \begin{align}\label{RateE}
   \mathbb{E}_{k,\gamma} &:=  m_1^{\frac{c}{1+c}}[f_0]\left(\frac{\B_{sk}}{A_\kc}\right)^{\frac1{1+c}} = m_1^{\frac{c}{1+c}}[f_0]\left(\frac{\beta_{sk} \C_{sk}}{A_\kc}\right)^{\frac1{1+c}} \nonumber\\
  &\, =  \left( m_1\,\ttm_{1,k}\right)^\frac{sk-1}{sk-1+\gamma/2}\![f_0] \, := \ R_{\mathbb E}(\delta^{-1}_{sk}) \ := \,R_{\mathbb E}(\beta_{sk}/ A_\kc)  
\\ 
\text{with} \quad R_{\mathbb E}(\delta^{-1}_{sk}) &:= \,R_{\mathbb E}(\beta_{sk}/ A_\kc) :=\left[ \left( \frac{ \beta_{  sk}}{ A_{{  s}\kc} }\right)^{\e_\alpha} \mathbb{I}_{ k\ge \kgamma } \!+\!
  \left(\frac{ \beta_{sk}}{ A_{{  s}\kc} }\right)^{sk+\frac\gamma2} \mathbb{I}_{ k< \kgamma } \right],  \nonumber
 \end{align}
after following the analog notation from \eqref{RateC}. 

The rate factors  $R_\C(\delta^{-1}_{sk})$ and  $R_\mathbb E(\delta^{-1}_{sk})$ defined in   \eqref{RateC} and \eqref{RateE}, respectively,  depend on mass and energy conserved quantities $m_0[f]$ and  $m_1[f]$, the integrable angular transition $b(\hat u\cdot\sigma)$,  potential rate $\gamma$, and Lebesgue moment order $sk$. 
\end{thm}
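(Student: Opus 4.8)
The plan is to read the claimed estimate straight off the computation underlying the moment ODI \eqref{momentineq} of Lemma~\ref{mom-lemma-I}: that computation bounds $\int Q_\gamma(f,f)\langle v\rangle^{2sk}\,dv$ for \emph{any} $f$ with the relevant finite moments, not only for solutions, so $m_{sk}[Q_\gamma(f,f)]$ is controlled by the right-hand side of \eqref{momentineq} for the general $f$ of the statement. First I would use that $f$ shares the mass and energy of the reference datum, $m_0[f]=m_0[f_0]$ and $m_1[f]=m_1[f_0]$, wherever these occur, and that the contractive factor is strictly decreasing (Theorem~\ref{BGP-JSP04}), so $\mu_{sk}\le\mu_{s\kc}$ for $k\ge\kc$; this collapses the negative contribution of \eqref{momentineq} into $-A_{s\kc}\,m_{sk+\gamma/2}[f]$ with $A_{s\kc}$ exactly as in \eqref{moment-factors}. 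I then split this coercive term in half. On the first half I use the interpolation Lemma~\ref{Lemma:mom-Holder} with nodes $1<sk<sk+\gamma/2$: the weights are forced to be $\tfrac{c}{1+c}$ and $\tfrac1{1+c}$ with $c=c_\gamma(sk)$ of \eqref{ck} — which is precisely why that exponent appears — and raising to the power $1+c$ gives $m_{sk+\gamma/2}[f]\ge m_1^{-c}[f_0]\,m_{sk}^{1+c}[f]$, hence the term $-\tfrac{A_{s\kc}}2 m_1^{-c}[f_0]m_{sk}^{1+c}[f]$ of \eqref{Qmoment-0}.

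The remaining half, $-\tfrac{A_{s\kc}}2 m_{sk+\gamma/2}[f]$, absorbs the positive bilinear terms of \eqref{momentineq}. Each is a product $m_a[f]\,m_b[f]$ with $a+b\le sk+\gamma/2$ and larger index strictly below $sk+\gamma/2$; interpolating (Lemma~\ref{Lemma:mom-Holder}) the larger index against a conserved moment and the critical moment $m_{sk+\gamma/2}[f]$ (equivalently, against $m_{sk}[f]$), and bounding the smaller index by $f_0$-data, each product is at most an $f_0$-data constant times a strictly sub-unit power of the critical moment. Young's inequality, with its small parameter tuned so that the absorbed pieces sum to $\tfrac{A_{s\kc}}2 m_{sk+\gamma/2}[f]$, leaves a residual constant; collecting the common scale $\beta_{sk}$ (the sum of the positive weights, up to crude constants, as in \eqref{moment-factors}) and writing the residual through $\delta_{sk}^{-1}=\beta_{sk}/A_{s\kc}$ and the datum block $\ttm_{1,k}[f_0]$ of \eqref{invariant k factor} produces $\B_{sk}=\beta_{sk}\mathcal{C}_{sk}$ with $\mathcal{C}_{sk}$ as in \eqref{mathcal-C-kc}--\eqref{RateC}. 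The dichotomy $k\ge\kgamma$ versus $k<\kgamma$ in \eqref{RateC} records which interpolated positive term has the slowest decay, hence dominates the Young residual; a short algebraic check shows the two candidate exponents $e_\alpha=1/c_\gamma(sk)$ and $e_\theta=(sk-1)(1+c_\gamma(sk))^2+c_\gamma(sk)$ coincide exactly at $sk=\tfrac{1+(\gamma/2)^2}{1-\gamma/2}$, which defines $\kgamma$, and also that $e_\theta=(sk+\gamma/2)(1+c_\gamma(sk))-1$ — the identity that makes the exponents in \eqref{RateE} come out as $e_\alpha$ and $sk+\gamma/2$.

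Two side points close the argument. The restriction $k\ge\kmax=\max\{\kb,\kc\}$ is needed because $k\ge\kc$ is what licenses $\mu_{sk}\le\mu_{s\kc}$, while $k\ge\kb$ guarantees $\delta_{sk}=A_{s\kc}/\beta_{sk}<1$; since $b\in L^p(\mathbb{S}^{d-1})$, estimate \eqref{povzner0.1} gives $\mu_{sk}\to0$ with quantitative rate, so $\beta_{sk}\to\infty$ and $\delta_{sk}\to0$, whence such a $\kb=\kb(\|b\|_{L^1(\mathbb{S}^{d-1})},m_0[f_0])$ exists, and the explicit bound \eqref{delta-k} follows from $\|b\|_{L^1(\mathbb{S}^{d-1})}-\mu_{s\kc}\le\|b\|_{L^1(\mathbb{S}^{d-1})}$ and $c_\Phi\le C_\Phi$. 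Finally, $x\mapsto\mathcal{L}(x)=\B_{sk}-\tfrac{A_{s\kc}}2 m_1^{-c}[f_0]x^{1+c}$ is continuous and strictly decreasing on $[0,\infty)$ with $\mathcal{L}(0)=\B_{sk}>0$ and $\mathcal{L}(+\infty)=-\infty$, so it has a unique positive zero $\mathbb{E}_{k,\gamma}[f]$; solving $\mathcal{L}(\mathbb{E}_{k,\gamma})=0$ and substituting $\B_{sk}=\beta_{sk}\mathcal{C}_{sk}$, $\beta_{sk}=A_{s\kc}\delta_{sk}^{-1}$ yields the closed forms \eqref{RateE} (up to the harmless factor $2^{1/(1+c)}$). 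The main obstacle is purely the disciplined bookkeeping of every multiplicative constant through this chain of interpolations and Young applications so as to land on exactly $A_{s\kc},\beta_{sk},\ttm_{1,k}[f_0],R_\C,R_{\mathbb E}$ and on the precise threshold $\kgamma$ — conceptually the estimate is the standard Povzner/coerciveness absorption.
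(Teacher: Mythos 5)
Your proposal is correct and follows essentially the same route as the paper's proof: the Povzner/angular-averaging moment inequality \eqref{momentineq}, the interpolations of Lemma~\ref{Lemma:mom-Holder} against the critical moment $m_{sk+\gamma/2}[f]$, the $\delta$-Young absorption with $\delta_{sk}=A_{s\kc}/\beta_{sk}$ leaving exactly $\tfrac{A_{s\kc}}2 m_{sk+\gamma/2}[f]$, the final interpolation $m_{sk+\gamma/2}[f]\ge m_1^{-c}[f_0]\,m_{sk}^{1+c}[f]$, the exponent dichotomy at $\kgamma$ where $e_\alpha=e_\theta$, and the unique root of $\mathcal{L}$ as the equilibrium state. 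The only small deviation is your justification of $\kb$ through the $L^p$ ($p>1$) decay rate \eqref{povzner0.1} of $\mu_{sk}$, whereas the paper covers the general $b\in L^1(\mathbb{S}^{d-1})$ case by showing directly that $2^{sk+\frac\gamma2}\mu_{sk}\to\infty$ as in \eqref{k0-2}; this is an easily repaired side point that does not affect the argument.
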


\smallskip

\begin{remark}
The statement of  Theorem~\ref{mom-coll-op}   replaces the definitions of   $A_{s\kc}$  and  $B_k$  in \eqref{moment-factors},  to  $A_{s\kc}=   c_{lb}\,(\|b \|_{L^1(\mathbb S^{d-1})} -\mu_{{  s}\kc}) \,{m_0[f_0]}$ and 
$\B_{  sk}:= C_{\Phi}  {2^{sk+\frac\gamma 2}}   \mu_{  sk} \,  \mathcal{C}_{  sk} :=\beta_{  sk} \, \mathcal{C}_{sk},$ respectively, and so
the new $\beta_{sk}:=C_{\Phi}  {2^{sk+\frac\gamma 2}}   \mu_{  sk}.$
\end{remark}

\smallskip

\begin{proof}[Proof  of Theorem~\ref{mom-coll-op}]

First note that  if $f(v)\in L^1_{2k}(\R^d)$, for $k>1$,  $m_1[f]=m_1[f_0]$, then $f(v)\in L^1_{2sk}(\R^d)$ as $s\in(0,1]$. Equivalently, for $f$ non-negative,  $f(v)\in L^1_{2sk}(\R^d) = m_{sk}[f] $, and so $m_{sk}[f]$ is finite. In addition we assume that $f $ is positive, non-singular measure, so $0<m_0[f]  <m_1[f]$.

The  $sk^{th}$-Lebesgue  moment of the  collisional integral was already estimated 
in the right hand side the corresponding moment recursion Ordinary Differential Inequalities (\ref{L^1-MM-1},\ref{momentineq}) and by below in \eqref{momentineqlow} for   any fixed  $k$,  such that  $0<s\le 1< sk <\infty$, after making use of the Angular Averaging
Lemma~\ref{BGP-JSP04}, where the contractive constant $\mu_{sk}$ was introduced.

The proof consists in  estimating just  the moment  of the collisional integral appearing in the right hand side  of the time dependent equations \eqref{momentineq}. These bounds  are obtained  by means of 
interpolation  estimates  from \eqref{mom-Holder}, for any  $\gamma\in(0,2]$ and ${\kc}>1$,  that control all its positive terms  bounded by above by the higher order moment $m_{k+\gamma/2}[f](t)$, 
 while controlling the negative contributions from below by coercive estimates proportional	to $m_{k}^{1+ \gamma/(2(sk-1)}[f]$, both  multiplied by  constants that  only  on the moment order ${\kc}>1$ and its corresponding ${\kc}^{th}$-contracting factor $\mu_{\kc}$, as well as on the mass and energy  $m_0[f]=m_0[f_0]$,  $m_1[f]=m_1[f_0]$,  where $f(v)$ is fixed.

Hence, these collision operator moment estimates are obtained from setting  the non-linear factor 
\begin{align}\label{little c}
0< c\ =\ c_\gamma(sk)\ :=\ \frac{\gamma}{2(sk-1)},  \qquad \text{with}\ \gamma \in (0,2]  \ \text{and}\ sk\geq\s\kc>1 \ge s>0,
\end{align}
that controls the maximum super linear growth to be determined by a lower bound of the moments of the loss operator.  
Thus,  while a Jensen's   inequality is enough to control part of it,  the use of interpolations enables a good control for all the terms, since  while mower moment of $f$ are associated to the conservation properties of the Boltzmann flow,  all other moments above $sk>1$ will have estimates that are easily obtained from interpolation arguments. 

Therefore,  starting from the following interpolation, using the short $c$ notation for $c_\gamma(sk)$, 
 \begin{align}\label{mom-interpol}
m_{1+\gamma/2}[f]  &\leq m^{\frac{1}{1+c}}_{1}[f_0]\, m^{\frac{c}{1+c}}_{sk+\gamma/2}[f]\, ,\qquad \ \ \ \ \ 
\text{and, } \\
m_{sk-1}[f] & \le m^{\frac{s+\gamma/2}{sk+\gamma/2}}_{0}[f_0]\, m^{\frac{sk-1}{sk+ \gamma/2}}_{sk+\gamma/2}[f] = m^{\frac{s+\gamma/2}{sk+\gamma/2}}_{0}[f_0]\, m^{\frac{1}{sk/(sk-1)+ c}}_{sk+\gamma/2}[f] \,. \nonumber 
\end{align}
With the first term in the right side of the inequality \eqref{momentineq} is controlled as
\begin{align}\label{first-mom-holder}
m_{s+\gamma/2}[f]m_{sk-1} [f]&\leq C_{1,k}[f_0]\,m^{\theta_{k}}_{sk+\gamma/2}[f]\,, \nonumber\\
\text{with}  \qquad  \qquad   \qquad  \qquad  \qquad  &\\
C_{1,k}[f_0]= C_{1,sk}[f_0]\!:=\!  m^{\frac{1}{1+c}}_{1}\![f_0] m^{\frac{s+\gamma/2}{sk+\gamma/2}}_{0}\![f_0],  \quad  &\text{and} \quad \theta_{k}\!:=\!\frac{c}{1\!+\!c}\! +\! \frac{1}{\frac{sk}{(sk-1)}\!+\! c} \!=\! 
\frac{\gamma/2}{\gamma/2 \!+\!sk\!-\!1}\! +\! \frac{sk\!-\!1}{sk\!+\!\gamma/2}\!   <\!1.\nonumber
\end{align}


Furthermore, $m_{\gamma/2}[f]\leq m_{1}[f] = m_{1}(f_0)<\infty$ and, interpolating again,
\begin{align}\label{second-mom-holder}
&m_1[f_0] m_{sk-1+\gamma/2}[f]\leq m_1[f_0] \, m_{sk}[f] \leq C_{2,sk} [f_0]  \,m^{\alpha_{k}}_{sk+\gamma/2}[f]\,, \nonumber \\
&  \text{with}    
   \ \ { C_{2,sk}[f_0] =C_{2,sk}[f_0] :=  m^{\frac{c}{1+c}}_{1}[f_0]  \ \ \text{and}\ \      \alpha_{k}:=\frac{1}{1+c}<1} \,.
\end{align}

Furthermore,  both $\{m_{\gamma/2}[f]\,;\, \m_s[f]\}\leq m_{1}[f] = m_{1}(f_0)<\infty$,  interpolating again 
\begin{align}\label{inter msk}
m_{sk}&< m_1^{\frac{c}{1+c}}\, m_{sk +\gamma/2}^{\frac{1}{1+c}}
\end{align}
then,
\begin{align}\label{second-mom-holder}
&m_1[f_0] m_{sk-1+\gamma/2}[f]\leq m_1[f_0] \, m_{sk}[f] 
\leq C_{2,k} [f_0]  \,m^{\alpha_{k}}_{sk+\gamma/2}[f]\,, \nonumber \\
&  \text{with}    
   \ \ { C_{2,sk}[f_0] =C_{2,sk}[f_0] :=  m^{\frac{2c+1}{1+c}}_{1}[f_0]  \ \ \text{and}\ \      \alpha_{k}:=\frac{1}{1+c}<1} \,.
\end{align}

Therefore, from \eqref{first-mom-holder}, \eqref{second-mom-holder}, and the right hand side  moment inequality \eqref{momentineq},  the following estimate hold
\begin{align}\label{hardpotmom-1}
m_{sk}[Q(f,f)]  &\leq \left(C_{\Phi}2^{(sk-1)(1+c_\gamma(sk))} \mu_{sk} +\|b\|_{L^1(\mathbb{S}^{d-1})}\right)\left[ C_{1,k}[f_0]\,m^{\theta_{k}}_{sk+\gamma/2}[f] +  C_{2,k}[f_0]\,m^{\alpha_{k}}_{sk+\gamma/2}[f]\right]   \nonumber\\
&\qquad - c_{\Phi}2^{-\gamma/2}( \|b\|_{L^1(\mathbb{S}^{d-1})}   -\mu_{s\kc})\,m_0(f_0)\,m_{sk+\gamma/2}[f], 
\end{align}
valid for any $\gamma\in(0,2]$, $ 0<s\le 1$, and $k\geq\kc>1.$

Note that  these $sk^{th}$-moment estimates provide a {\em coercive factor} independent of $k$, defined by   
\begin{align}\label{Akc}
 A_{s\kc} :=
  c_{\Phi}2^{-\gamma/2}( \|b\|_{L^1(\mathbb{S}^{d-1})}   -\mu_{s\kc})\,m_0(f_0), \qquad \text{for }\  sk>s\kc> 1\ge s>0.
\end{align}

\medskip 

Next,  the following   key point enables the control of the right hand side of inequality \eqref{hardpotmom-1} by proving that, after the use of a further interpolation, it is possible to obtain an upper  bound for the positive contribution  only depending on data and $k$.

For such task,  take  the couple of  strictly less that unity exponents $\theta_{k}$ and $\alpha_{k}$ from \eqref{first-mom-holder} and \eqref{second-mom-holder}, respectively,   as well as constants  $C_{1,k}[f_0]$ and $C_{2,k}[f_0]$  defined in    \eqref{first-mom-holder} and  \eqref{second-mom-holder}, respectively,  and invoke the $\delta$-Young's inequality, namely, $|a\,b|\leq \delta^{-p/q} p^{-1} |a|^p + \delta q^{-1} |b|^q$, for $\delta>0$ to be chosen sufficiently small,  and $p^{-1}+q^{-1} =1$, making  the following choice, for $c=c_\gamma(sk)$ and $\theta_k$ from \eqref{little c} and \eqref{first-mom-holder}, respectively,
\begin{align*}
&a=C_{1,k}[f_0], \qquad \qquad\qquad  \ \ b=  m^{\theta_{k}}_{k+\gamma/2}[f]\,,  
\nonumber\\ 
\text{with} \ \ &p_{1}= \frac{1}{1-\theta_{k}}     > 1,
\  \text{and its convex conjugate}\ \  q_{1}=\frac{1}{\theta_{k}} >1,
\end{align*}
to obtain, again using the short $c$ notation for $c_\gamma(sk)$, 
\begin{align}\label{hardpotmom-22}
&C_{1,k}[f_0]\, m^{\theta_{k}}_{k}[f] \leq  \delta^{-\frac{ \theta_k}{1-\theta_k} } C^{\frac{1}{1-\theta_{k}}}_{1,k}[f_0] +
\delta \, m_{k+\gamma/2}[f],
 \nonumber\\
\text{for}\  \ &\frac{1}{1-\theta_{k}} =   (1+ c)(sk+ (sk-1) c) =  \left(1+ c \right) \left(sk+{\gamma}/2 \right), \\
\text{and} \ \ 
&\frac{ \theta_k}{1-\theta_k}=\left(sk+ (sk-1) c\right) c+ (1+ c)(sk-1) +  c=  (sk-1)(1+ c)^2 + c , \nonumber
\end{align}
 which controls the first term  of  in \eqref{hardpotmom-1}.
A similar control for the second term  of the positive contribution in  \eqref{hardpotmom-1}, can be obtained choosing $c$ and $\alpha_k$ from \eqref{little c} and \eqref{first-mom-holder}, respectively, to yield
\begin{align*}
&a=  C_{2,k}[f_0], \qquad \qquad\qquad  \ \ b=  m^{\alpha_{k}}_{k+\gamma/2}[f],  \nonumber\\ 
\text{with} \ \ & p_{2}= \frac{1}{1-\alpha_{k}}=\frac{1+ c}{ c} >1,\  \text{and convex conjugate}\ \  q_{2}=\frac{1}{\alpha_{k}}=1+c >1,  
\end{align*}
to obtain  $\frac{\alpha_{k}}{1-\alpha_{k}} = \frac{1}{ c} = \frac{ sk-1}{\gamma/2} $  ,

 Hence, combining with  \eqref{second-mom-holder},
\begin{align}\label{hardpotmom-3}
C_{2,k}[f_0]\, m^{\alpha_{k}}_{{  sk}+\gamma/2}[f]  &\leq  \delta^{-\frac{\alpha_k}{ 1-\alpha_k}}
 C^{\frac{1}{1-\alpha_{k}}}_{2,k}[f_0] + \delta \, m_{{  sk}+\gamma/2}[f] \\
 &= \delta^{-\frac1c}
 m_1^{\frac{c+1}{c}} [f_0] + \delta \, m_{{  sk}+\gamma/2}[f]. \nonumber
\end{align}

These previous calculations from  moment interpolations enable  the formulation os a new upper bound for  $sk^{th}$-moments Inequalities \eqref{hardpotmom-1},     by   estimating  upper bounds of the maximum of estimates  obtained in  \eqref{hardpotmom-22}, and \eqref{hardpotmom-3}, respectively.

For the sake of simplicity of the calculations that follow, we set 

 \begin{align}\label{betak}
 \beta_{  sk}:= 2{C}_{\Phi}\left(  2^{(sk-1)(1+ c)} \mu_{  sk} +\|b\|_{L^1(\mathbb{S}^{d-1})} \right) = {C}_{\Phi}\left(  2^{(sk +\frac\gamma 2)} \mu_{  sk} + 2\|b\|_{L^1(\mathbb{S}^{d-1})} \right) .
\end{align}

The corresponding $k^{th}$-moments bounds of the collisional  from \eqref{hardpotmom-1} are majorized the the following Ordinary Differential Inequality
 \begin{align}\label{hardpotmom-3.1}
m_k[Q_\gamma(f,f)] &\leq \beta_{  sk} C_{sk} \  -\  \Big( A_{{  s}\kc} - \frac{\beta_{  sk}}2{\delta}\Big)\,m_{{  sk}+\gamma/2}[f]\, . 
\end{align}

\medskip

The next step consists in choosing a suitable $\delta=\delta_{  sk}$,  and associated  constant  ${C}_{  sk}$ to be defined below,  that will preserve the coerciveness to secure a strict absorption effect from the higher order moment $m_{{  sk}+\gamma/2}[f]$.  It is very  important to notice that these choices only depend on the data, that are the  potential rate $\gamma\in (0,2]$, the moment order  for any fixed fixed $sk\ge s\kc >1$, and  the initial mass $m_0[f_0]$ and energy $m_1[f_0]$, both time invariants by the Boltzmann flow.

Indeed, this is shown in the following Proposition.

\begin{problem2}\label{prop delta k} There exist a large enough moment order $\kb=\kb(m_0[f_0], b(\hat u\cdot\sigma))$,  such that  time independent factor  $\delta_{sk}$,  defined  for  any $sk\geq {s\kc}>1$,  satisfies
\begin{align}\label{hardpotmom-4}
0 < \delta := \delta_{  sk}\   = \frac{  A_{{  s}\kc} }{ \beta_{sk}} \ <\ 1.
\end{align}
\end{problem2}

\begin{proof} This property is easy to prove since  this choice is very natural, after evaluating the above  quotient using \eqref{Akc} and \eqref{betak}, short labels for $A_{{  s}\kc}$ and $\beta_{  sk}$, respectively. Indeed,  the denominator in the above defenition of $\delta$, can be shown to grow as large as needed, but means of the angular averaging Lemma~\ref{BGP-JSP04}. Indeed,   from  \eqref{mu_r},   \eqref{Povp1} and  \eqref{Povp2},   
the contracting factor $\mu_{  sk}< \mu_1 =\|b\|_{L^1(\mathbb{S}^{d-1})}$, so on one hand 
 \begin{align}\label{k0-1}
0< A_{{  s}\kc} =   c_{\Phi}2^{-\frac\gamma2}( \|b\|_{L^1(\mathbb{S}^{d-1})}   -\mu_{{  s}\kc})\,m_0[f_0] <   c_{\Phi}2^{-\frac\gamma2}\|b\|_{L^1(\mathbb{S}^{d-1})}  \,m_0[f_0].
\end{align}
On the other hand, from the definition \eqref{mu_r} for $\mu_{sk}$ after multiplying both sides by $2^{sk+\frac{\gamma}2}$, yield the following estimate
\begin{align}\label{k0-2}
\lim_{sk\to \infty} 2^{sk+\frac{\gamma}2} \mu_{sk} \ge\lim_{sk\to \infty} \,  2^{\frac{\gamma}2} \int_{ \mathbb{S}^{d-1} }
  \bigg( \left| 1 - \hat u\cdot\sigma \right|^{sk} +   \left| 1 + \hat u\cdot\sigma  \right|^{sk} 
  \bigg)  b(\hat u\cdot\sigma)\, d\sigma = \infty \, 
    \end{align}

In particular, since  data parameters  from \eqref{pot-Phi_1} assert  $c_\Phi \leq C_\Phi$,  there exists a  
\begin{align}\label{bf kb}
\kb = \kb(m_0[f_0], b(\hat u\cdot\sigma)),
\end{align}
 large enough such that, for all  ${sk} \ge s\max\{\kb,\kc\}=:s\kmax>1$,   
 \begin{align}\label{choose kb}
 0 <  \delta_{  sk}\   = \frac{  A_{{  s}\kc} }{ \beta_{sk}} \le\frac{  c_{\Phi} \|b\|_{L^1(\mathbb{S}^{d-1})}  \,m_0[f_0] }{{C}_{\Phi} \left( 2^{sk +\gamma } \mu_{{  sk}} + 2^{\frac\gamma2+1}\|b\|_{L^1(\mathbb{S}^{d-1})}\right) } < 
 \frac{ \|b\|_{L^1(\mathbb{S}^{d-1})}  \,m_0[f_0] }{ 2^{sk +\gamma  } \mu_{ sk} }< 1.
  \end{align}
\end{proof}  

\medskip

Finally, invoking the interpolation from \eqref{inter msk}, 
yields the following lower bound  for  $m_{{  sk}+\gamma/2}[f]$, \begin{align}
\label{interpol-2}
   \  m^{-c}_{1}[f_0]\, m_{  sk}^{1+c}[f]  
 \leq  m_{{  sk}+\gamma/2}[f]\, , 
\end{align}
%

 %
%
Which gives a bound from above to the righthand side in\eqref{hardpotmom-3.1},   with constants  depending the moment order $k$, on the potential rate $\gamma$, the first two moments of the initial data $f_0$ and the integrability properties of the angular function $b(\hat u\cdot\sigma)$. In particular, this estimate   yields a super-linear  negative term for the  Lebesgue polynomial moment $m_k[f]$, namely 
 \begin{align}
\label{hardpotmom-3.11}
m_k[Q(f,f)] &< 
 \beta_{  sk}  C_{sk}  -\  \frac{A_{{  s}\kc}}2 \,{m^{-c}_{1}[f_0]}\,m^{1+c}_{  sk}[f] \ =:\  \mathcal{L}_k(m_k[f])(t). 
\end{align}

Hence,  the  estimates take the simple form 
{where} the $k$-independent  coercive factor $A_{{  s}\kc}$, defined on \eqref{Akc}, and  the $sk$-dependent positive factor $ \beta_{  sk}   \mathcal{C}_{  sk}$  takes the product  the factor $\beta_{sk}$ from \eqref{betak}, with $k\ge \kb$, and 
 $ \mathcal C_{  sk} $  is characterized and estimated by above 
\begin{align}\label{hardpotmom-3.2}
&C_{sk}= \max\Big\{ \!       \delta^{- \frac{\theta_{k}}{1-\theta_{k}} }\, {C^{\frac{1}{1-\theta_{k}}}_{1,{k}}[f_0]}\, ,\,  \delta_{sk}^{ -\frac{\alpha_{k}}{1-\alpha_{k}} }\,{C^{\frac{1}{1-\alpha_{{k}}}}_{2,{k}}[f_0]}  \Big\}\nonumber\\[4pt]
&\le 
\max\left\{ \!{C^{\frac{1}{1-\theta_{k}}}_{1,{k}}[f_0]};  {C^{\frac{1}{1-\alpha_{k}}}_{2,{k}}[f_0]} \right\}
\max\left\{  \delta^{- \frac{\theta_{k}}{1-\theta_{k}} } ;   \delta_{sk}^{- \frac{\alpha_{k}}{1-\alpha_{k}} }  \right\}  \nonumber\\[4pt]
&  = \ttm_{1,k}[f_0]  \max\Big\{ \delta_{sk}^{- [(k-1)(1+c)^2 +c]} \, ;\,  \delta_{sk}^{-\frac1c}   \Big\}, 
\qquad\  \text{for any }   sk> \max\{\kb,\kc\}=\s\kmax >1, 
\end{align}
with the factor  $\ttm_{1,k}[f_0] $ being the one defined at \eqref{invariant k factor},  in the statement of this theorem. 

To this end, the right hand  of \eqref{hardpotmom-3.2}, can be further estimated by calculating  $e_{\text{max}}$ to be the exponent controlling   both, $e_\alpha=e_\alpha(\gamma,sk)$  and $e_\theta=e_\theta(\gamma,sk)$, for  fixed $\gamma\in(0,2]$ and $c=c_\gamma({  sk})=\frac{\gamma/2}{sk-1}$,  obtain by setting
\begin{align}\label{e max}
     e_{\text{max}}:=  e_{\text{max},\gamma}  &= \begin{cases}
  e_{\alpha}:=e_{\alpha}(\gamma, sk) = {\frac{\alpha_k}{1-\alpha_k}} =  \frac{1}{c}, &\ \ \ \quad \text{for }sk \ge s\kgamma>1, \\[3pt]
 e_{\theta}:=    e_{\theta}(\gamma,sk) = {\frac{\theta_k}{1-\theta_k}} ={\left(sk-1\right)\left(1+c\right)^2 +c}, &\ \ \ \quad \text{for } 1<sk < s\kgamma
    \end{cases} \nonumber\\[6pt]
    &=: \left[ e_{\alpha}    \mathbb{I}_{k\geq \kgamma} + e_{\theta}    \mathbb{I}_{k< \kgamma}\right],
    \end{align}
with a fixed $\kgamma$ parameter characterized by how  the order $k$ compared to $\kgamma$,    for any $0<\gamma \le 2$. That means, the parameter $\kgamma$ is exactly calculated  from  setting ${\bf c}=c_\gamma(s\kgamma)=\frac{\gamma/2}{s\kgamma-1}$  and finding the value  for both exponents $ e_{\theta}(k,\gamma)$ and $ e_{\alpha}(k,\gamma)$ to coincide. That is,    $s\kgamma$  only depends on  the hard potential rate exponent $\gamma\in (0,2]$, since
\begin{align}\label{compare delta expo}
e_{\theta} &:=  ({s\kgamma}-1) (1+{\bf c})^2 +{\bf c}  =   \frac1{\bf c} =: e_{\alpha} \ \ 
\iff \ \  
  \frac{\gamma}2 =  \frac{(s\kgamma-1) -\frac{\gamma}2}{(s\kgamma-1) + \frac{\gamma}2} \nonumber \\[4pt]
&\qquad\qquad \qquad \ \iff\  \ \ s\kgamma =  \frac{1+(\gamma/2)^2}{1-\gamma/2} >1.
\end{align}

As a consequence,   taking $\delta_{sk}^{-1} :=   \beta_{sk}\,A^{-1}_{s\kc}  \ >\ 1$ from \eqref{hardpotmom-4},   with $\beta_{sk}$ from \eqref{betak},  
the second factor  in \eqref{hardpotmom-3.2} defines a coercive rate quotient	
 \begin{align}\label{coer-rate-0}
 R_{\mathcal C}(\delta_k^{-1}) &:=  \max\Big\{ \delta_{sk}^{- [(k-1)(1+c)^2 +c]} \, ;\,  \delta_{sk}^{-\frac1c}   \Big\}\\
& \equiv  \left[   \delta_{sk} ^{-e_{\alpha}}   \mathbb{I}_{k\geq \kgamma} +
   \delta_{sk}^{-e_{\theta}}   \mathbb{I}_{k<\kgamma}\right] 
\equiv \left[ \left(   \frac{ \beta_{  sk}  } { A_{{  s}\kc}} \right)^{e_{\alpha}}   \mathbb{I}_{k\geq \kgamma} +
 \left(  \frac{ \beta_{  sk}  } { A_{{  s}\kc}} \right)^{e_{\theta}}   \mathbb{I}_{k<\kgamma}\right]  
  \equiv  R_\C (   \beta_{  sk} / A_{{  s}\kc} ) \nonumber.
 \end{align}

It follows that  the right hand side from 
\eqref{hardpotmom-3.2}  is  majorized by the  upper estimate, now written in terms of the coercive rate quotient	
 defined in \eqref{coer-rate-0}
\begin{align}\label{Bkc}
 \B_{sk}&:=\beta_{sk} \mathcal {C}_{  sk} :=  \ttm_{1,{  sk}}[f_0] \, \beta_{  sk}
 R_{\mathcal C} (   \beta_{  sk} / A_{{  s}\kc} )\\
 & \equiv \ttm_{1,{  sk}}[f_0] \, \beta_{  sk}
 \left[ \left(   \frac{ \beta_{  sk}  } { A_{{  s}\kc}} \right)^{\frac1{c}}   \mathbb{I}_{k\geq \kgamma} +
 \left(  \frac{ \beta_{  sk}  } { A_{{  s}\kc}} \right)^{\left({sk}-1\right) \left(1+c\right)^2 +c}   \mathbb{I}_{k<\kgamma}\right],\end{align}
for any ${  sk}\ge {  s}\max\{\kb,\kc\}=s\kmax$, after gathering the exponents from \eqref{e max} depending on $\kgamma$  from \eqref{compare delta expo}.   
\\ 

The final upper bound for the $sk^{th}$-moment of the collision operator   from \eqref{hardpotmom-3.11} is controlled 
 by the following   operator $\mathcal{L}_{sk} (m_{ks}[f])$, acting on the $sk^{th}$-Lebesgue moment of $f\in L^1_{2sk}(\R^d)$, 
 \begin{align}\label{equi-1}
m_{sk}[Q(f,f)] < \mathcal{L}_{sk}(m_k[f])=   \beta_{  sk}  \C_{sk}  -\  \frac{A_{{  s}\kc}}2 \,{m^{-c}_{1}[f_0]}\,m^{1+c}_{  sk}[f], 
\end{align}
whose unique root  
 $\mathbb{E}_{k,\gamma}$ is referred as to the equilibrium state, that is 
\begin{align}\label{equi-1.1}
\mathbb{E}_{c_\gamma(sk)} &:= m_1^{\frac{c}{1+c}}[f_0]\left(\frac{\B_{sk}}{A_\kc}\right)^{\frac1{1+c}}.
   \end{align}

Written in terms of  \eqref{Bkc}, $\mathbb{E}_{k,\gamma}$   can be fully characterized by the data and $m_0[f]$ and $m_1[f]$  moments by 
    \begin{align}\label{equi-2}
\mathbb{E}_{c_\gamma(sk)} &:= m_1^{\frac{c}{1+c}}[f_0]\left(\frac{\beta_{  sk} \C_{sk}}{A_\kc}\right)^{\frac1{1+c}}
\equiv m_1^{\frac{c}{1+c}}[f_0]  R^{\frac1{1+c_\gamma(sk)}}_\C(     \beta_{  sk} / A_{{  s}\kc} )\nonumber \\
&=\left( m_1\,\ttm_{1,k}\right)^\frac{c}{c+1}\!\![f_0] \, 
 \left[ \left( \frac{ \beta_{  sk}}{ A_{{  s}\kc} }\right)^{\frac{1+e_\alpha}{1+c} } \mathbb{I}_{ k\ge \kgamma } \!+\!
  \left(\frac{ \beta_k}{ A_{{  s}\kc} }\right)^\frac{1+e_\theta}{1+c } \mathbb{I}_{ k< \kgamma } \right]\nonumber\\
  &\, = \left( m_1\,\ttm_{1,k}\right)^\frac{sk-1}{sk-1+\gamma/2}\![f_0] 
 \left[ \left( \frac{ \beta_{  sk}}{ A_{{  s}\kc} }\right)^{\frac{2(sk-1)}\gamma } \mathbb{I}_{ k\ge \kgamma } \!+\!
  \left(\frac{ \beta_k}{ A_{{  s}\kc} }\right)^{sk+\frac\gamma2} \mathbb{I}_{ k< \kgamma } \right]\nonumber\\
&=:  \left( m_1\,\ttm_{1,k}\right)^\frac{sk-1}{sk-1+\gamma/2}\![f_0] \, R_\mathbb E(\delta^{-1}_{sk}) \,
 \quad \text{for any} 
  \ \ sk\ge s\kmax,
   \end{align}
after defining  $R_{\mathbb E} (\beta_{sk}/A_\kc):=  R^{\frac1{1+c_\gamma(sk)}}_\C(     \beta_{  sk} / A_{{  s}\kc} )\equiv 
 R_\mathbb{E}(\delta^{-1}_{sk})$. These new  exponents  obtained by taking $e_{\alpha}(sk)$ and $e_{\theta}(sk)$   defined in \eqref{e max}, 
 devided by $(1+ c),$ respectively, for $c=c_\gamma(sk)=\gamma/{(2(sk-1))}$,  yields the simpler exponent associated to the the characterization 
 of  $\mathbb{E}_{c_\gamma(sk)}$ as follows
  \begin{align}\label{equi-2.2}
\frac{1+e_\alpha}{1+c}
= \frac1c= \frac{2(sk-1)}\gamma,\qquad \text{and}
\qquad \frac{1+e_\theta}{1+c}= sk +\frac \gamma 2.
 \end{align}

Thus, the proof on Theorem~\ref{mom-coll-op} is now complete.
\end{proof}
\ \\
    
    \begin{remark}\label{exp-max}
    Note that, for example,  in the classical case for {  $s=1$},   $e_{\text{max}}=e_{\theta}$  for $\gamma =2$ and    $e_{\text{max}}=e_{\alpha}$ for values of $\gamma \approx 0^+$. In addition,  in the case of hard spheres corresponding to  $\gamma=1$, the exponent may change depending on the $k^{th}$-moment order, as  $e_{\text{max}}=e_{\alpha}$ for  if $k\ge \kgamma= 5/2 $, but   $e_{\text{max}}=e_{\theta}$ if $k<\kgamma=5/2 $.

In general, it follows that  for any $sk>s\max\{\kb,\kc\}>1$,   the constant  $\mathcal {C}_k$, as characterized in \eqref{hardpotmom-3.2}, can be recasted by 
\begin{align}\label{new Ck}
 {C}_{  sk}   &\le \mathcal {C}_{  sk}:=\ttm_{1,{  sk}}[f_0] 
 \left[ \left(   \frac{ \beta_{  sk}  } { A_{{  s}\kc}} \right)^{\frac1{c}}   \mathbb{I}_{k\geq \kgamma} +
 \left(  \frac{ \beta_{  sk} } { A_{{  s}\kc}} \right)^{\left({sk}-1\right) \left(1+c\right)^2 +c}   \mathbb{I}_{k<\kgamma}\right]
\end{align}
\end{remark}

\bigskip

 \begin{remark}\label{reduced mathcalC} In the case where  the initial mass $m_0[f_0]=1 <m_1[f_0]$, after rescaling the Boltzmann solution by its initial mass, since $\delta=\delta_{  sk}<1$, then  for  $sk>  \max\{\kgamma,\kb,\kc\}>1$   by condition \eqref{choose kb},  the constant $\mathcal C_{  sk} $   is  
\begin{align}\label{hardpotmom-3.3}
 \mathcal {C}_{  sk} & :=\left( \frac{ \beta_{  sk} }{ A_{s\kc} }\right)^{\frac{1}{c}} 
  m_1^{ sk+\frac{\gamma}2 + (1+\frac{\gamma}2)\left( 1+c_\gamma(sk)\right)  } [f_0],  
  \quad  \text{for any}\ \  sk\ge {  s}\max\{\kgamma, \kb,\kc\}>1\,.
\end{align} 

\end{remark}

\bigskip 

\begin{remark}\label{kc=eps}
This remark is fundamental to proof the existence of solutions to any moment either propagated or generated  for as long the initial data is a non-negative $f_0\in L^1_{2\kc}$, with  $\epsilon:=2(\kc -1) $, for $\epsilon\in (0,\gamma)$. In particular,  setting the factor  $s=1$, in the moments estimates, 
the  $sk^{th}$ - moments of the collision operator  remain valid when taking, for example   $\mu_{k}< \mu_{\epsilon+1}< \mu_1=
\|b\|_{ L^1(\S^{d-1})}=1$ and $m_0[f_0](0)=1$.  
With this choice, which suitable for the scalar binary elastic interacting Collisional form, whether both the initial mass $m_0[f_0]$  and the angular transition function $b(\hat u\cdot \sigma)\in  L^1(\S^{d-1})$ is renormalized to unit spherical average, naturally yields the  order $\kb$ and $\kgamma,$ defined in \eqref{bf kb} and \eqref{compare delta expo}, respectively, are larger than $\kc=(2+\epsilon)/2.$ 

This observation will be brought to readers' attention in the proof of Lemma~\ref{Lemma_ODE-extension2}, that extends the concept of solutions to any initial data $f_0\in  L^1_{2+\epsilon}. $
\end{remark}

 \bigskip
 
\begin{remark}\label{existence thm remark}
 The estimates obtained in  Theorem~\ref{mom-coll-op}  for the moments of the collision operator are sufficient  to prove the existence and uniqueness  of the Boltzmann flow solutions in Section~\ref{existence-uniqueness}, as posed in  Theorem~\ref{CauchyProblem} on Section~\ref{hardpotsection},  by means of Ordinary Differential Equations dynamics for time dependent flows in  Banach spaces $L^1_k(\R^d)$. 

More specifically,  Theorem~\ref{CauchyProblem} has two parts.  The first one  is to show existence and uniqueness  in $L^1_{\ell}(\R^d)$ for $\ell>2 +2\gamma$.
The second part    is shown in Lemma~\ref{Lemma_ODE-extension2} proving  existence  for lower order initial Lebesgue weights, that is in $L^1_{\ell}(\R^d)$ for $2+ \eps \le \ell <2 +2\gamma$.
\end{remark}

\bigskip

\subsection{Moment Ordinary Differential Inequalities for Boltzmann flow solutions in $L^1_{\ell}(\R^d)\times\R^+$}\label{moments-ODE's}

The following results derives   a priori estimates for the Lebesgue polynomial  moment  time dynamics to solutions  of the Boltzmann flow and rather accurate bounds as functions of the Cauchy Problem data.  These estimates  are, not only  invoked  in the second	part of the of the proof of Theorem~\ref{CauchyProblem} as mentioned in the last Remark~\ref{existence thm remark}, but also are critical for the moment summability  properties that yield  propagation and generation  exponential moments  globally in time with exponential rates fully characterized as a function of the coercive	constant $A_{s\kc}$ from \eqref{moment-factors} and the Cauchy problem data as well.   In addition these  global moment estimates and bounds  enable  global in time propagation estimates for solutions in  $L^p_\ell(\R^d)\times R^+$, for $1\le \ell\le \infty$.
 
\medskip

\begin{thm}[{\bf Lebesgue Polynomial Moments  \textit{A priori} Estimates for hard potentials}] 
\label{propagation-generation}

  Let $f(v,t)\geq0$ be a solution of the Cauchy problem associated to the initial value problem for the Boltzmann flow  in Theorem~\ref{CauchyProblem}, i.e.  with collision operator   $Q_\gamma(f,f)](v, t) $ as defined in  Theorem~\ref{mom-coll-op}. 

Let $0<c:=c_\gamma(sk):=\frac{\gamma/2}{sk-1}$ be the positive parameter defined in Theorem~\ref{mom-coll-op}, identity \eqref{ck}.  Then,  a super-solution to \eqref{momentineq}  is constructed  by solving the following Ordinary Differential Inequality 
  \begin{equation}\label{ODI-0}
 \frac {d}{dt} m_{  sk}[f](t)   \le \B_{  sk}\  -\  \frac{A_{{  s}\kc}}2\,  {m^{-c }_{1}[f_0]}  m^{1+c}_{  sk}[f](t)\,,\qquad sk\ge s\kmax >1,
  \end{equation}
  whose right hand side is an upper bound to the  $sk^{th}$-Lebesgue moment of the Boltzmann collision operator from  inequality \eqref{Qmoment-0} in the statement of Theorem~\ref{mom-coll-op}.
    The $sk$-dependent constants $A_{s\kc}$ and  $B_{sk}$ are determined in Theorem~\ref{mom-coll-op}, equations \eqref{moment-factors} to \eqref{RateE}, which depend
    solely from the  Cauchy problem data associated  for the Boltzmann flow stated in Theorem~\ref{CauchyProblem} and the $sk^{th}$-order  Lebesgue polynomial moment under consideration. 

 Hence, the following estimates and  bounds  hold for any fixed $sk$ value $s\in(0,1]$, and integer $k>1$. The first one is characterized by the following propagation of the initial data, after setting  $c=c_\gamma(sk)$.
 
{\bf Moment's propagation estimates:}  Let  $f_0\in L^1_{2sk}(\R^d)$ non-negative, or equivalent 
$m_{2sk}[f_0] $ is finite  for any $sk\ge s\kmax$. Then  
 \begin{align}\label{mom-prop}
m_{  sk}&[f](t) \leq \max\ \left\{ m_{  sk}[f_0] \ ; \mathbb{E}_{c_\gamma(sk)}\right\}=m_{  sk}[f_0] \, \mathbb{I}_{m_{  sk}[f_0] < \mathbb{E}_{c_\gamma(sk)}} + \mathbb{E}_{c_\gamma(sk)}\,\, \mathbb{I}_{m_{  sk}[f_0] \ge \mathbb{E}_{c_\gamma(sk)}}  \nonumber \\
 &= m_{  sk}[f_0] \, \mathbb{I}_{m_{  sk}[f_0] < \mathbb{E}_{c_\gamma(sk)}}  \nonumber \!\\
 &\qquad+\!  \left( m_1\,\ttm_{1,k}\right)^\frac{c}{c+1}[f_0] 
  \left[ \left(\! \frac{ \beta_{  sk}}{ A_{{  s}\kc} }\!\right)^{\frac{2(sk-1)}{\gamma} }\! \mathbb{I}_{ k\ge \kgamma }
  \!+\! \left(\!\frac{ \beta_{sk}}{ A_{{  s}\kc} }\!\right)^{sk+\frac\gamma2}  \mathbb{I}_{ k <  \kgamma } \right] \!  \mathbb{I}_{m_{  sk}[f_0] \ge  \mathbb{E}_{c_\gamma(sk)}},
      \end{align}
  or, equivalently, for the rate factors from \eqref{RateE} 
      \begin{align}
  m_{  sk}&[f](t) \leq m_{  sk}[f_0] \, \mathbb{I}_{m_{  sk}[f_0] < \mathbb{E}_{c_\gamma(sk)}} \!+\!  \left( m_1\,\ttm_{1,k}\right)^\frac{c}{c+1}[f_0] 
  \, R_\mathbb{E}(\delta^{-1}_{sk})\,  \mathbb{I}_{m_{  sk}[f_0] \ge  \mathbb{E}_{c_\gamma(sk)}}, \nonumber
    \end{align}
uniformly in $ t>0,$\ for any $ sk> s\kmax.$ 

Global moments generation estimates  are characterized by  the generation of any polynomial $sk^{th}$-order from just an initial data $f_0\in L^1_{2s\kmax}(\R^d)$,   with $s\kmax >1$, are as follows. \\

{\bf Moment's generation estimates:} Let  $f_0\in L^1_{2s\kmax}(\R^d)$  non-negative, or equivalent 
$m_{2s\kmax}[f_0] $ is finite, that all moments  $m_{sk}[f](t)$ are finite for $sk>s\kmax>1$, with $\min_{t\to 0} m_{sk}[f](t) =\infty$ satisfying the following  global estimate for positive times, that is
\begin{align}\label{mom-gen}
  m_{  sk}&[f](t) \le   \mathbb{E}_{c_\gamma(sk)} + m_1[f_0]\bigg( \frac1{c  A_{{  s}\kc}}\bigg)^{\frac{1}{c}} \, t^{-\frac{1}{c} } \\
  & = \left( m_1\ttm_{1,{  sk}}\right)^{\!\frac{sk-1}{sk-1+\frac\gamma2}}\![f_0] 
  \left[ \left( \frac{ \beta_k}{ A_{s\kc} }\right)^{\frac{2(sk-1)}{\gamma} } \!\!\mathbb{I}_{ k\ge \kgamma }\!+\!
  \left(\frac{ \beta_{  sk}}{ A_{s\kc} }\right)^{sk+\frac\gamma2} \! \!\mathbb{I}_{ k< \kgamma } \right]  \! +\! m_1[f_0]\bigg( \frac{2(sk-1)}{\gamma\,  A_{{  s}\kc}}\bigg)^{\!\!\frac{2(sk-1)}{\gamma}} \!\!\! t^{-\frac{2(sk-1)}{\gamma} }\!, \nonumber
  \end{align}
 or, equivalently, for the rate factors from \eqref{RateE} 
      \begin{align*}
  m_{  sk}[f](t) \leq    \left( m_1\,\ttm_{1,k}\right)^\frac{c}{c+1}[f_0] 
  \, R_\mathbb{E}(\delta^{-1}_{sk})\,  \mathbb{I}_{m_{  sk}[f_0] \ge  \mathbb{E}_{c_\gamma(sk)}} +\! m_1[f_0]\bigg( \frac{2(sk-1)}{\gamma\,  A_{{  s}\kc}}\bigg)^{\!\!\frac{2(sk-1)}{\gamma}} \!\!\! t^{-\frac{2(sk-1)}{\gamma} }, \nonumber
    \end{align*}

 uniformly in  time $t>0,$ for any $ sk> \max\{ s\kb,s\kc \}=s\kmax.$
 
 For both statements,  constants  $\beta_{sk},\  \mathcal{C}_{sk} $ and the equilibrium state $\mathbb{E}_{c_\gamma(sk)} $,  are obtained in Theorem~\ref{mom-coll-op}, are  characterized in \eqref{moment-factors},  \eqref{mathcal-C-kc} and  \eqref{RateE}, respectively.

\end{thm}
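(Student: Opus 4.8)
\textbf{Proof proposal for Theorem~\ref{propagation-generation}.}

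The plan is to reduce everything to a scalar Bernoulli-type Ordinary Differential Inequality and then to quote the elementary comparison facts for such inequalities. The starting point is the moment ODI \eqref{momentineq} from Lemma~\ref{mom-lemma-I}, together with the collision-operator moment bound \eqref{Qmoment-0}--\eqref{hardpotmom-3.11} established in Theorem~\ref{mom-coll-op}. Since $f(v,t)$ solves the Cauchy problem of Theorem~\ref{CauchyProblem}, the conserved quantities give $m_0[f](t)=m_0[f_0]$ and $m_1[f](t)=m_1[f_0]$ for all $t$, so the constants $A_{s\kc}$, $\beta_{sk}$, $\mathcal C_{sk}$, $\B_{sk}=\beta_{sk}\mathcal C_{sk}$ and hence the equilibrium value $\mathbb E_{c_\gamma(sk)}$ are genuinely time-independent and depend only on the data. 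Feeding \eqref{hardpotmom-3.11} into \eqref{k-moment-identity} yields exactly \eqref{ODI-0}: writing $y(t):=m_{sk}[f](t)$ and $c=c_\gamma(sk)=\tfrac{\gamma/2}{sk-1}$, one has
\begin{equation*}
y'(t)\ \le\ \B_{sk}\ -\ \tfrac{A_{s\kc}}{2}\,m_1^{-c}[f_0]\,y(t)^{1+c}\,,\qquad sk\ge s\kmax>1\,.
\end{equation*}
This is the only place where the structural results of the previous sections are used; the rest is ODE bookkeeping.

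For the \textbf{propagation estimate} \eqref{mom-prop}, I would argue by the standard invariant-region/comparison principle: the right-hand side $g(y):=\B_{sk}-\tfrac{A_{s\kc}}{2}m_1^{-c}[f_0]\,y^{1+c}$ is strictly decreasing in $y$, vanishes at the unique root $y=\mathbb E_{c_\gamma(sk)}$ obtained by solving $g(\mathbb E)=0$ (this is precisely \eqref{equi root}--\eqref{RateE}), is positive for $y<\mathbb E_{c_\gamma(sk)}$ and negative for $y>\mathbb E_{c_\gamma(sk)}$. Hence if $m_{sk}[f_0]\le \mathbb E_{c_\gamma(sk)}$ the solution cannot exceed $\mathbb E_{c_\gamma(sk)}$, while if $m_{sk}[f_0]>\mathbb E_{c_\gamma(sk)}$ the solution is non-increasing and stays below $m_{sk}[f_0]$; in either case $y(t)\le\max\{m_{sk}[f_0],\mathbb E_{c_\gamma(sk)}\}$, which is the first displayed bound. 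Substituting the explicit form of $\mathbb E_{c_\gamma(sk)}$ from \eqref{RateE} (splitting on $k\gtrless\kgamma$ as in \eqref{e max}) gives the second and third displayed forms. One must of course first ensure $m_{sk}[f_0]<\infty$, which is exactly the hypothesis $f_0\in L^1_{2sk}(\R^d)$, and that finiteness propagates — this follows from the continuation/a priori part of the existence theory.

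For the \textbf{generation estimate} \eqref{mom-gen}, the extra input is that one may \emph{discard the constant term} $\B_{sk}$ and still control the blow-up as $t\to0^+$: from $y'\le -\tfrac{A_{s\kc}}{2}m_1^{-c}[f_0]\,y^{1+c}$ valid whenever $y\ge\mathbb E_{c_\gamma(sk)}$, a direct integration of the pure Bernoulli inequality $z'= -a\,z^{1+c}$ (with $a=\tfrac{A_{s\kc}}{2}m_1^{-c}[f_0]$, using the crude normalisation constants to absorb the factor $2$) gives $z(t)\le (a c t)^{-1/c}$ independently of the initial size; then $y(t)\le\mathbb E_{c_\gamma(sk)}+C(ac)^{-1/c}t^{-1/c}$ by a standard super-solution argument (compare $y$ with $\mathbb E_{c_\gamma(sk)}+z$). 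Rewriting $1/c=2(sk-1)/\gamma$ and inserting $\mathbb E_{c_\gamma(sk)}$ from \eqref{RateE} produces the stated explicit bound; finiteness of $m_{sk}[f](t)$ for every $t>0$ starting only from $f_0\in L^1_{2s\kmax}$ is the usual bootstrap along the hierarchy of moments, using that the loss term dominates at high velocities.

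The main obstacle is \emph{not} the ODE comparison, which is routine, but rather making the bookkeeping of the constants consistent: one has to check that the $\delta_{sk}$ chosen in the Loss-to-Gain Moment Rate Property (Proposition~\ref{prop delta k}) — namely $\delta_{sk}=A_{s\kc}/\beta_{sk}<1$ for $sk\ge s\kmax$ — is precisely what makes the coefficient $A_{s\kc}-\tfrac{\beta_{sk}}{2}\delta_{sk}$ strictly positive so that the absorbing higher-order moment in \eqref{hardpotmom-3.1} survives, and then that the resulting $\B_{sk}=\beta_{sk}\mathcal C_{sk}$ and $\mathbb E_{c_\gamma(sk)}$ match the formulas \eqref{RateE}, \eqref{equi-2} with the case split at $\kgamma$ from \eqref{compare delta expo} correctly identifying $e_{\max}$. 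In other words, the analytic content is already packaged in Theorem~\ref{mom-coll-op}; the work here is to verify that the scalar ODI $y'\le g(y)$ is exactly the one whose equilibrium and decay rates were pre-computed there, and to track the interpolation exponents through the two regimes $k\ge\kgamma$ and $k<\kgamma$ without error.
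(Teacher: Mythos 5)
Your reduction to the scalar Bernoulli-type ODI and your treatment of the propagation bound coincide with the paper's own argument: the paper likewise compares $y(t)=m_{sk}[f](t)$ with the autonomous upper ODE whose unique equilibrium is $\mathbb{E}_{c_\gamma(sk)}$, obtaining $y(t)\le\max\{m_{sk}[f_0],\mathbb{E}_{c_\gamma(sk)}\}$, and your super-solution of the form $\mathbb{E}_{c_\gamma(sk)}+C\,t^{-1/c}$ for the generation bound is exactly the barrier $\bar y(t)=E\big(1+K\,t^{-\beta}\big)$, $\beta=1/c$, used in the paper's Lemma~\ref{ODI Comparison}, verified there by the same superadditivity $\big(1+K t^{-\beta}\big)^{1+c}\ge 1+\big(K t^{-\beta}\big)^{1+c}$.

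Two points in your generation paragraph need repair. First, the inequality ``$y'\le-\tfrac{A_{s\kc}}{2}m_1^{-c}[f_0]\,y^{1+c}$ valid whenever $y\ge\mathbb{E}_{c_\gamma(sk)}$'' is false as stated: for $y$ just above the equilibrium the right-hand side $\B_{sk}-\tfrac{A_{s\kc}}{2}m_1^{-c}[f_0]\,y^{1+c}$ is only slightly negative, so the constant $\B_{sk}$ cannot be discarded unless you first restrict to $y\ge\lambda\,\mathbb{E}_{c_\gamma(sk)}$ with $\lambda>1$ (at the cost of worse constants); fortunately your final comparison does not actually use this claim. Second, and more seriously, ``compare $y$ with $\mathbb{E}_{c_\gamma(sk)}+z$'' has no admissible anchor point as written: since $m_{sk}[f](0^+)=+\infty$ and its blow-up rate as $t\to0^+$ is not known a priori, you cannot assert $y(t_0)\le\mathbb{E}_{c_\gamma(sk)}+z(t_0)$ at any fixed $t_0>0$, nor initialize at $t_0=0$ where both sides are infinite. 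This is precisely the difficulty the paper isolates before Lemma~\ref{ODI Comparison}, and it is resolved there by the time-shift device: one compares $y$ with $\bar y_\epsilon(t):=\bar y(t-\epsilon)$, which blows up at $t=\epsilon$ where $y(\epsilon+\eta)$ is finite (absolute continuity of $y$ on $(0,\infty)$), runs a Gronwall-type argument for $w=y-\bar y_\epsilon$ using the bounded quotient $\Phi(t)=\big(y^{1+c}-\bar y_\epsilon^{1+c}\big)/\big(y-\bar y_\epsilon\big)$, and then sends $\eta\to0$ and $\epsilon\to0$. With that lemma supplied (and the standing assumption, which the paper also makes, that $m_{sk}[f](t)$ is finite for every $t>0$), your argument matches the paper's proof, including the final substitution of $1/c=2(sk-1)/\gamma$ and of $\mathbb{E}_{c_\gamma(sk)}$ from \eqref{RateE} with the case split at $\kgamma$.
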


\medskip

\begin{proof}  
Starting from  the moments estimates and bounds for the Boltzmann collisional integral \eqref{collision3}, \eqref{collision3.2}  in the Banach space $L^1_q(\R^d)$, with $q=2sk$,  developed in Theorem~\ref{mom-coll-op}, 
the proof of the current theorem  consists in  estimating  the Ordinary Differential Inequalities  for   $sk^{th}$-Lebesgue moments  associated to the solutions of the Boltzmann flow, as developed  in Lemma~\ref{mom-lemma-I}, whose right hand side of  inequality \eqref{momentineq} was estimated in much detailed by analyzing the $sk^{th}$-Lebesgue moment bounds for  the collisional integral. 

%

More  specifically, starting from  right hand side of the  moments inequality in \eqref{momentineq},  and invoking the estimates from Theorem~\ref{mom-coll-op}, it follows that any solution for the Boltzmann flow  with initial data $f_0(v)\in L^1_{2+\eps}$ satisfies, for any $\eps\ge\max\{s\kb,s\kc\}-1>0$ and  $\gamma\in(0,2]$,  
\begin{align}\label{sharp coercive}
\frac {d}{dt} m_{  sk}[f](t) \ &<  \B_{  sk} -  \frac{A_{{  s}\kc}}2\,m^{-c}_{1}[f_0]\,m_{  sk}^{1+c}[f](t)\,,\quad {  sk}\ge \max\{s\kb,s\kc\}>1, 
\end{align}
{where} the $k$-independent  coercive factor $A_{s\kc}$ was defined on \eqref{Akc}, and  the $k$-dependent positive factor $ \B_{sk} =\beta_{sk}\,   \mathcal {C}_{  sk}$  defined in \eqref{Bkc} and $\beta_{sk}$ defined in \eqref{betak}.
\\ 

Hence, the proof of Theorem~\ref{propagation-generation}  needs to be completed by showing the  that, for each $sk$ fixed  satisfying ${  sk}\ge \max\{s\kb,s\kc\}>1$, its corresponding  super linear Ordinary Differential Inequality \eqref{sharp coercive}
  admits an upper barrier, that is a super solution with a finite bound independent of time, for each $sk$ fixed, by either propagating or generating  the  initial data, supplemented by a comparison principle to ODI's, under the assumption that $m_{sk}[f](t)$ is finite for any  positive time $t$.

Therefore, the goal is to construct  to upper bound form of the right hand side  the initial value problem associated to the Ordinary  Differential Inequality \eqref{sharp coercive}, after setting $c=c_\gamma(sk)= \gamma/(2(sk-1))$ once again, 
\begin{align}\label{ODE1}
\begin{cases}&y'(t) <  \B_{sk}  -  A_{s\kc}\, m_1^{-c}[f_0]y^{1 + c}(t)\,, \quad\text{for}\ c(k)=\frac{\gamma/2}{sk-1}, \quad t>0\,, \\
&y(0)=y_0;
\end{cases}
\end{align}
where $y(t) = m_{sk}[f](t)$ for $t>0$, and  the corresponding value of initial data is $y_0$ finite that may be any $j$-moment   $m_{j}[f](0)$ finite with $1<j\leq {  sk} $.

In particular, for the case where $s=1$ and $j=k$, the propagation of moment results hold if the initial data is $y(0)=y_0= m_k[f_0](0)$.  The generation of moments result  holds if   just  $j=\max\{s\kb,s\kc\}>1$,   $y(0)=y_0= m_j [f_0](0)$, i.e. only mass, kinetic energy and the $j^{th}$-moments are bounded, while estimating $y(t)=m_{sk} [f](t)$ for any $sk>j$.
Both results follow from comparison arguments for Ordinary Differential Inequalities whose right hand side are continuous in $t$ and Lipschitz in $y=y(t).$

 By classical  ODEs comparison arguments (or the maximum principle in ODEs), it is easy to verify that any  $y_*(t)$, the solution of the upper ODE problem to \eqref{ODE1} with the initial data $y_*(0):=y_{*,0} \ge y_0$, that is 
\begin{align}\label{upperODE}
\begin{cases}&y_*'(t) = \B_{  sk}  -  A_{{  s}\kc}\,  m_1^{-c} y_*^{1 + c}(t)\,,\qquad t>0\,,  \qquad  \\
&y_*(0)=y_{*,0}, 
\end{cases}\end{align}
controls the solution from above,  i.e. $y(t)<y_*'(t)$ unifomly in time $t\ge0$.

 It is relevant to notice identify the equilibrium state  $\mathbb{E}_{c_\gamma(sk)}$ associated to  this upper ODE  \eqref{upperODE} is exactly the unique root calculated for the moment operator $\mathcal L_{sk}(y)$ introduced in \eqref{RateE} and \eqref{equi root}, respectively,  where can be easily verify  from the representation \eqref{equi-1.1}, and the fact 
 $\lim_{sk\to 1} c_{\gamma}(sk)=\infty$, that 
 \begin{equation}\label{equi-solution}
 \mathbb{E}_{c}:= m_1^{\frac{c}{1+c}}[f_0]\left(\frac{\B_{sk}}{A_\kc}\right)^{\frac1{1+c}}
\longrightarrow_{{  sk}\to 1}\  m_1 [f_0]\, ,
\end{equation}
 is the unique $y_*^{\text{eq}}=\mathbb{E}_{c}$  time independent equilibrium solution  of   $y'_*=0$. 
 
 In addition, since the ODE \eqref{upperODE}  is autonomous,  clearly satisfies that 
 the solution  $y_*(t)$ monotonically increases in time $t$  to the equilibrium state $y_*^{\text{eq}}$, if the initial data $y_0 <  y_*^{\text{eq}} $; while monotonically decreases to $y_*^{\text{eq}}$,  provided $y_0 >  y_0^{\text{eq}}$.
  
 Therefore,  the propagation of the $k^{th}$-moment  property follows easily from  such    comparison arguments between the solution of ODI \eqref{ODE1}, and the one for the ODE  \eqref{upperODE},    yields the upper control   to the solution of the initial value problem \eqref{ODE1}, with the following propagation of $y(t)=m_k[f](t)$ associated to the initial data  $y(0)=y_0=m_k[f_0](0)$,   that is 
 \begin{equation*}
 y(t)\le \max\Big\{  y(0)\, ; \, \mathbb{E}_{c} \Big\}\, , \nonumber\\
 \end{equation*}
 or, equivalently,  for the initial data $y_0=m_{sk}[f_0]$  the global in time bound for each $sk^{th}$-moment  
 \begin{align}\label{moments-bounds}
m_{sk}[f](t) &\leq \max\Big\{ m_{{  sk}}[f_0] \, ;\,  m_1^{\frac{c}{1+c}}[f_0]\big(\tfrac{\B_{  sk}}{A_{{  s}\kc}}\big)^{\frac{1}{1+c}}\Big\}\,,  \quad\text{for}  \quad {  sk} > 1, \quad t>0, 
\end{align}
with the coercive constant factor $A_{{  s}\kc}$ and the upper bound constant $\B_{  sk}$ defined on \eqref{Bkc}.  
In particular,   this estimate   \eqref{moments-bounds}  complete the moments propagation property stated  \eqref{mom-prop}  of  Theorem~\ref{propagation-generation}.

\medskip

The part of Theorem~\ref{propagation-generation}'s proof concerning the   ${  sk}^{th}$-moments  generation associated to the Boltzmann flow, 
for any $sk$-order with only  initial finite mass and energy and  a moment of order $1+\eps$, with $\eps\ge s\kmax-1.$ This statement   means  that there is an instantaneous regularization, in the sense that it is possible  to find an upper bound to solutions $m_{  sk}[f](t)$,   \eqref{upperODE}, or equivalently,  \eqref{ODE1} for $t>0.$\\

The following comparison arguments for generation of moment argument confronts  the fact that the  expected moment estimate at the initial time is actually defined by the unbounded $m_{sk}[f](0)=+\infty$, for $sk>s\max$.  Yet, the following lemma remains true, as we describe the comparison argument.

\begin{lem}{\bf ODI's Comparison Lemma for Moments Generation } \label{ODI Comparison} 
Let $A, B$ and $c$ be positive constants and consider a function $y(t)$ which is absolute continuous in $t\in(0,\infty)$ and satisfies 
\begin{equation*}
y'(t) \leq  B  -  Ay^{1 + c}(t)\,, \qquad \text{for} \ \ c>0, \quad t>0\,.
\end{equation*}
Then,
\begin{equation*}
y(t)\leq \bar{y}(t):=E\bigg(1+ \frac{K}{t^{\beta}}\bigg)\,,\qquad t>0\,,
\end{equation*}
for the choice
\begin{equation*}
E=\mathbb E_c\bigg(\frac{B}{A}\bigg)^{\frac{1}{1+c}}\,,\qquad \beta=\frac{1}{c}\,,\qquad K=\bigg(\frac{1}{c\,A}\bigg)^{\frac1c}\bigg(\frac{A}{B}\bigg)^{\frac{1}{1+c}}\,.
\end{equation*}
\end{lem}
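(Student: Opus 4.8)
The plan is to prove the differential inequality $y'(t)\le B-Ay^{1+c}(t)$ implies the stated explicit bound $y(t)\le \bar y(t):=E\bigl(1+K t^{-\beta}\bigr)$ by a direct comparison argument: I will exhibit $\bar y$ as a supersolution of the companion ODE, i.e. show $\bar y'(t)\ge B-A\bar y^{1+c}(t)$ for all $t>0$, together with the boundary behaviour $\bar y(t)\to+\infty$ as $t\to0^+$, and then invoke the standard ODE comparison principle (the right-hand side $g(y)=B-Ay^{1+c}$ is locally Lipschitz in $y$, and continuous in $t$, so solutions are ordered by their data). Since $\bar y(0^+)=+\infty$ dominates any finite or infinite value of $y(0^+)$, the comparison yields $y(t)\le\bar y(t)$ on $(0,\infty)$.

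First I would record the two pieces of structure we need. With $E=(B/A)^{1/(1+c)}$ we have the algebraic identity $B-AE^{1+c}=0$, so $E$ is exactly the equilibrium; and $\mathbb E_c$ is the constant $m_1^{c/(1+c)}[f_0]$ absorbed into the definition of $\mathbb E_{c_\gamma(sk)}$ in \eqref{equi-1.1}, which plays no role in the inequality itself and merely rescales $E$ — I will treat $E=\mathbb E_c(B/A)^{1/(1+c)}$ throughout, noting the rescaling is harmless because the inequality is applied with $B=\B_{sk}$, $A=A_{s\kc}m_1^{-c}[f_0]$, so $(B/A)^{1/(1+c)}=m_1^{c/(1+c)}[f_0](\B_{sk}/A_{s\kc})^{1/(1+c)}$, matching $\mathbb E_{c_\gamma(sk)}$. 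Second, I compute $\bar y'(t)=-E K\beta t^{-\beta-1}$, which is negative, and $\bar y^{1+c}(t)=E^{1+c}(1+Kt^{-\beta})^{1+c}$.

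The core estimate is then to verify
\begin{equation*}
-EK\beta t^{-\beta-1}\ \ge\ B-AE^{1+c}\bigl(1+Kt^{-\beta}\bigr)^{1+c}
\ =\ AE^{1+c}\Bigl[\bigl(1+Kt^{-\beta}\bigr)^{1+c}-1\Bigr],
\end{equation*}
using $B=AE^{1+c}$. Writing $w=Kt^{-\beta}$ this reduces to the pure inequality $AE^{c}\,K\,\beta\,t^{-\beta-1}\le AE^{1+c}\bigl((1+w)^{1+c}-1\bigr)$ — wait, more carefully, after dividing by $AE^{1+c}$ the requirement becomes $-\frac{K\beta}{E^{c}A}t^{-\beta-1}\ge (1+w)^{1+c}-1$; since the left side is negative this cannot hold as stated, so the correct reading is that $\bar y$ must be a \emph{supersolution}, meaning $\bar y'\ge g(\bar y)$, and one checks this via $(1+w)^{1+c}\ge 1+w^{1+c}$ (superadditivity for exponent $>1$... actually $(1+w)^{1+c}\ge 1+(1+c)w$ by convexity), giving $AE^{1+c}[(1+w)^{1+c}-1]\ge AE^{1+c}(1+c)w = (1+c)AE^{1+c}Kt^{-\beta}$, and then matching powers of $t$ forces $\beta+1$ versus $\beta$: here is the real content — one must instead bound $\bar y'$ from below by a \emph{negative} quantity and show $g(\bar y)$ is even more negative, i.e. $\bar y$ \emph{over}shoots. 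The clean route, which I will follow, is: set $z(t)$ to be the \emph{solution} of $z'=B-Az^{1+c}$ with $z(0^+)=+\infty$, solve it by separation of variables (a Bernoulli-type equation), obtain the explicit decaying solution, and show it is bounded above by $\bar y(t)$ using the elementary inequality $(1+w)^{1/c}\le 1+w^{1/c}$-type or Bernoulli $(1+w)^{r}\le 1+rw$ for... the manipulation that produces precisely $\beta=1/c$ and $K=(cA)^{-1/c}(A/B)^{1/(1+c)}$.

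The main obstacle is exactly this last matching: exhibiting a \emph{closed-form} supersolution whose $t^{-1/c}$ decay rate is sharp and whose constant $K$ comes out as claimed. The honest way is to solve the Bernoulli ODE $z' = B - Az^{1+c}$ exactly — substitute $u=z^{-c}$ so that $u' = -c z^{-c-1}z' = -cz^{-c-1}(B-Az^{1+c}) = -cBu^{(1+c)/c}+cA u^{1/c}$... this is still nonlinear, so instead I will use the cruder but sufficient splitting: for $z$ large, $z'\le -\tfrac12 A z^{1+c}$, whose solution with infinite initial data is $z(t)\le (\tfrac{cA}{2})^{-1/c}t^{-1/c}$; combine with the equilibrium bound $z(t)\le 2E$ once $z$ has dropped, i.e. $z(t)\le E + (cA)^{-1/c}(A/B)^{1/(1+c)}t^{-1/c}$ after tracking constants, which is $\bar y(t)$. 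Thus the plan is: (i) reduce to the autonomous companion ODE by comparison; (ii) split the phase line at $z=E$; (iii) integrate the large-$z$ regime explicitly to get the $t^{-1/c}$ tail with the stated $K$; (iv) add the equilibrium level $E$ and conclude $y(t)\le\bar y(t)$; (v) translate back via $B=\B_{sk}$, $A=A_{s\kc}m_1^{-c}[f_0]$ to recover the generation bound \eqref{mom-gen}. The bookkeeping of constants in step (iii) is the one genuinely delicate point; everything else is routine comparison.
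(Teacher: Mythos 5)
Your plan starts on the right track (show $\bar y$ is a supersolution and compare), but it is derailed by a sign error that then causes you to abandon the correct argument. With $E^{1+c}=B/A$ one has
\begin{equation*}
B-A\bar y^{1+c}(t)\;=\;B-AE^{1+c}\bigl(1+w\bigr)^{1+c}\;=\;-\,AE^{1+c}\Bigl[\bigl(1+w\bigr)^{1+c}-1\Bigr],\qquad w=Kt^{-\beta},
\end{equation*}
i.e.\ the right-hand side is \emph{negative}, not the positive quantity $AE^{1+c}[(1+w)^{1+c}-1]$ you wrote. The supersolution requirement $\bar y'(t)\ge B-A\bar y^{1+c}(t)$ therefore reads, after flipping signs, $EK\beta\,t^{-\beta-1}\le AE^{1+c}\bigl[(1+w)^{1+c}-1\bigr]$, and this \emph{does} hold: use the elementary inequality $(1+w)^{1+c}\ge 1+w^{1+c}$, note that $\beta=1/c$ makes the exponents match, $\beta(1+c)=\beta+1$, and check that the stated $K$ gives $BK^{c}=E/c$, so the two sides agree exactly at the level of this estimate. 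This is precisely the paper's verification; there is no obstruction, and the "clean route" you then invent is unnecessary.

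Moreover, the fallback you propose cannot prove the lemma as stated. Splitting the phase line and using $z'\le-\tfrac12Az^{1+c}$ (or, for $z\ge\lambda E$ with $\lambda>1$, $z'\le -A(1-\lambda^{-(1+c)})z^{1+c}$) produces a tail constant $\bigl(cA(1-\lambda^{-(1+c)})\bigr)^{-1/c}$, which is strictly larger than the required $EK=(cA)^{-1/c}$ and degenerates as $\lambda\to1$; so step (iii) of your plan can only deliver a weaker bound with an uncontrolled loss, not the constants $E,\beta,K$ in the statement. Finally, your comparison step is too quick at the singular initial time: if $y(0^+)=+\infty$ as in the generation setting, you cannot compare data "at $0^+$". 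The paper handles this by translating the barrier, $\bar y_\epsilon(t):=\bar y(t-\epsilon)$, using absolute continuity of $y$ to get $\bar y_\epsilon(\epsilon+\eta)\ge y(\epsilon+\eta)$ for small $\eta$, running a Gr\"onwall-type argument with the bounded quotient $\Phi(t)=\frac{y^{1+c}-\bar y_\epsilon^{1+c}}{y-\bar y_\epsilon}$ (bounded since $|x^{1+c}-y^{1+c}|\le(1+c)\max\{x^{c},y^{c}\}|x-y|$), and only then sending $\eta\to0$ and $\epsilon\to0$. You need this (or an equivalent device) to make "the comparison yields $y\le\bar y$" rigorous.
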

\begin{proof} 
Note that in one hand
\begin{equation*}
\bar y'(t) = -\beta E K t^{-(\beta+1)}\,.
\end{equation*}
On the other hand,
\begin{equation*}
B-A\,\bar y^{1+c}(t) = B - AE^{1+c}\bigg(1+\frac{K}{t^\beta}\bigg)^{1+c}\leq B - AE^{1+c}\bigg(1+\frac{K^{1+c}}{t^{\beta(1+c)}}\bigg)
\end{equation*}
where we invoked the binomial inequality  $( 1+ \frac{K}{t^\beta} )^{1+c} \geq  1+ (\frac{K}{t^\beta})^{1+c}$.  With the choice of $E,\,\beta$ and $K$ as suggested in the statement of the lemma it follows then 
\begin{equation}\label{upbd}
B-A\,\bar y^{1+c}(t) \leq \bar y'(t)\,,\qquad t>0\,.
\end{equation}
Next, note that for any $\epsilon>0$, the translation $\bar{y}_\epsilon(t):=\bar{y}(t-\epsilon)$ satisfies \eqref{upbd} for $t>\epsilon$ due to the time invariance of the inequality.  Moreover, since $y(t)$ is absolute continuous in $t\in[\epsilon,\infty)$, there exists $\eta_*>0$ such that $\bar{y}_{\epsilon}(\epsilon+\eta)\geq y(\epsilon+\eta)$ for any $\eta\in(0,\eta_*]$.  Consequently, we conclude the following setting for any $\epsilon>0$ and $\eta\in(0,\eta_*]$:
\begin{align}\label{ODE02}
\begin{cases}
&y'(t) \leq B  -  A\,y^{1 + c}(t)\,,\qquad  t>\epsilon+\eta\,\\
&\bar y'_\epsilon(t) \ge B  -  A\, \bar y_\epsilon^{1 + c}(t)\,,\qquad  t>\epsilon+\eta\,,\\
&+\infty >\bar y_\epsilon(\epsilon+\eta) \geq y(\epsilon+\eta)\,. 
\end{cases}
\end{align}
Consequently, a standard comparison in ode implies that $\bar{y}_\epsilon(t)\geq y(t)$ in $t>\epsilon+\eta$ valid for any $\epsilon>0$ and $\eta\in(0,\eta_*]$.  In this particular case this fact can be readily proved denoting
\begin{equation*}
\Phi(t) = \frac{y^{1 + c}(t) - \bar y_\epsilon^{1 + c}(t)}{y(t) - \bar y_\epsilon(t)}\,.
\end{equation*}
Since $ |x^{1+c} - y^{1+c} |\leq (1+c)\max\{x^c , y^c \}|x-y|$, it follows that $\Phi(t)$ is bounded by
\begin{equation*}
\big| \Phi(t) | \leq (1+c)\max\big\{ y^c(t)\,,\,\bar y^c_\epsilon(t) \big\}<+\infty\,.
\end{equation*}
Thus, for $w(t)=y(t) - \bar y_\epsilon(t)$ it follows from \eqref{ODE02} that
\begin{equation*}
w'(t) + A\,\Phi(t)\,w(t)\leq0\,,\qquad w(\epsilon+\eta)\leq0\,,\qquad t>\epsilon+\eta\,,
\end{equation*}
which implies that
\begin{equation*}
w(t)\leq w(\epsilon+\eta)e^{-A\int^{t}_{\epsilon+\eta}\Phi(s){\rm d}s}\leq 0\,.
\end{equation*}
Now, sending first $\eta$ to zero and then $\epsilon$ to zero it follows that $\bar{y}(t)\geq y(t)$ for any $t>0$ which is the statement of the lemma.
\end{proof}
Recall that the moments of $f(v,t)$ satisfy the inequality \eqref{ODE1}, specifically, setting $c=c_\gamma(sk)=\frac{\gamma}{ 2(sk-1)}$
\begin{equation*}
m_{sk}'(t) \leq  \B_{sk}  -  A_{s\kc}\, m_1^{-c}[f_0]\,m_{sk}^{1 + c}(t)\,, \quad  \text{for any}\ \  sk\geq s\kmax >1, \quad  t>0\,.
\end{equation*}
  Thus, applying  Lemma~\ref{ODI Comparison} with
\begin{equation*}
B=\B_{sk}\,,\quad A = A_{s\kc}\, m_1^{-c}[f_0]\,,\quad 
\end{equation*}
it follows that
\begin{align}\label{generation-bound}
   m_{  sk}[f](t) \ &\le \ \bigg(\frac{\B_{  sk}}{A_{{  s}\kc}}\bigg)^{\frac{c}{ 1+c}} 
   m_1^{\frac 1{1+c}} [f_0] + 
    m_1[f_0]\bigg( \frac{1}{c\, A_{{  s}\kc} \, t}\bigg)^{\frac{1}{c }} \nonumber
    \\
   & \le \  \mathbb E_{c_\gamma(sk)} + 
    m_1[f_0]\bigg( \frac{1}{c\, A_{{  s}\kc} \, t }\bigg)^{\frac{1}{c }} \,.
\end{align}

That is, $m_{  sk}[f](t)$ is finite for any $t>0$  and for all ${  s}k\ge { s}\max\{\kb , \kc\} >1$ fixed.  In addition, it becomes uniformly bounded in time for say $ t\in[1,\infty)$ satisfying 
\begin{equation} \label{generation-bound-limit}
\lim_{t \to \infty}  \left( m_k[f](t) -  \mathbb E_{c}  \right) \le m_1[f_0]\,  \lim_{t \to \infty}  \left(  \frac{1}{ c \, A_{{  s}\kc}\, t}\right)^{\frac{1}{c}} = 0\, , 
\end{equation}
since the fixed  term $A_{s\kc}>0$ for any choice of  $ {\kc}>1$,  which is  the coercive  factor  defined in \eqref{moment-factors}.

\medskip
The proof of Theorem~\ref{propagation-generation} is completed after  the classification  of the  {\bf Moments bounds estimates in terms of the coercive constant $A_{{  s}\kc}$. }
After combining the definitions  of the coercive constant factor $A_{s\kc}$,  and the upper bound constant $\B_{  sk}$ in \eqref{Bkc},  all depending on ${  sk}\ge \max\{s\kb,s\kc\}>1$, a more explicit expression for the upper bound generation  emerges when specifying the equilibrium state $\mathbb E_{c_\gamma(sk)}$ as a function of $\beta_{sk}$ and $\C_{sk}$ and the initial moments associated to $m_0[f_0]$ and  $m_1[f_0]$,  as characterized in \eqref{equi-1}.
Therefore,   the moment bounds estimates, focusing on the dependance of the coercive constant $A_{{  s}\kc}$ are recasted, after invoking the moment propagation estimates  \eqref{mom-prop},  as stated in Theorem~\ref{propagation-generation},

{\bf Moment's propagation estimates:} 
 \begin{align}\label{moments-bounds-2}
m_{  sk}[f](t) & \!\leq  \!\max \!\left\{ m_{  sk}[f_0] \, ;  
  \,  \left( m_1\,\ttm_{1,k}\right)^\frac{ks-1}{sk-1+\frac\gamma2}[f_0] \left[ \left( \frac{ \beta_{  sk}}{ A_{{  s}\kc} }\right)^{\frac{2(ks-1)}{\gamma} } \mathbb{I}_{ k\ge \kgamma } \!+\!
  \left(\frac{ \beta_k}{ A_{ s \kc} }\right)^{sk+\frac\gamma2  }  \mathbb{I}_{ k< \kgamma } \right] \right\}, 
  \end{align}
uniformly in $ t>0,$\ for any ${  s}k> {  s}\kmax $.  Analogously, the moments generation estimates \eqref{mom-gen}  are characterized by {\bf Moment's generation estimates:}
\begin{align}\label{generation-bound-2}
  m_{sk}[f](t) &\le  
  \left( m_1\,\ttm_{1, sk}\right)^{\!\frac{ks-1}{sk-1+\frac\gamma2}}[f_0] 
  \left[ \left( \frac{ \beta_{sk}}{ A_{s\kc} }\right)^{\frac{2(sk-1)}{\gamma}} \!\! \mathbb{I}_{ k\ge \kgamma }\!+\!
  \left( \frac{ \beta_{sk}}{ A_{s\kc} }\right)^{sk+\gamma/2  } 
\!\!  \mathbb{I}_{ k< \kgamma } \right] \nonumber\\
   &\qquad \qquad \qquad \qquad\qquad
  \qquad + m_1[f_0]\bigg( \frac {2(sk-1)}{\gamma\, A_{s\kc}\, t}\bigg)^{\frac{2(sk-1)}{\gamma}} \, . 
\end{align}
In addition, if $sk> \max\{\kb,\kc \}>1$, then $\beta_{sk} A^{-1}_{\kc} >1.$ 
 
 Thus, using the notation \eqref{RateE} as defined earlier, one can write 
 \begin{align}
\left( \frac{ \beta_{sk}}{ A_{s\kc} }\right)^{\frac{2(sk-1)}{\gamma}} \!\! \mathbb{I}_{ k\ge \kgamma }\!+\!
  \left( \frac{ \beta_{sk}}{ A_{s\kc} }\right)^{sk+\gamma/2  } =  R_\mathbb{E}(\delta^{-1}_{sk})= R_\mathbb{E}(\beta_{sk}/A_\kc).
  \end{align}
The proof of Theorem~\ref{propagation-generation} is now completed.
\end{proof}

\medskip

\begin{remark}\label{coercive-remark} {\bf Coerciveness.} {\em The control of the upper bound any ${sk}^{th}$-moment that solves the differential inequality \eqref{Akc}, degenerates as the term    $A_{{ s}\kc} := c_{\Phi}2^{-\frac{\gamma}2} (\|b\|_{L_1(\mathbb{S}^{d-1})}-\mu_{\kc})  {m_0[f_0]}$, defined in \eqref{mu_r}, would vanish.  The derivation of this term 
 depends on the choice of  of the positive term \, $0<\delta_{ sk}<1$ made in \eqref{delta-k}, for any  $sk>s\max\{\kb,\kc \}>1 $. This choice  depends on the decaying property of  the contractive constant $\mu_{s}k$  for any ${ sk}\ge {{s}\kc} >1$ calculated in the Angular Averaging Lemma~\ref{BGP-JSP04},  the pointwise factors  for the potential  from \eqref{pot-Phi_1} and  \eqref{pot-Phi_2}  depending  on the data.
In addition,  that the constant $A_{{s}\kc}$ is calculated without using that $f$ solves the Boltzmann equation, it only depends
on the  spherical integrability (or angular averaging) of the transition probability $B(\Phi(|u|), \hat u\cdot\sigma) \in L^1(\mathbb{S}^{d-1}, d\sigma)$, and   lower constants $c_\Phi$ and $C_\Phi$ \eqref{pot-Phi_1} and \eqref{pot-Phi_2}, associated to the the potential rate $\Phi(v).$ 

That means   $A_{{ s}\kc}$ can be viewed as the  {coercive constant} related to the Cauchy problem  to Boltzmann flow for elastic interactions with transition probabilities corresponding to  hard potentials   with an integrable angular transition probability, posed in the Banach space  with the natural $k$-Lebesgue weight norm.  

Hence, the a priori ${sk}^{th}$-moment estimates obtained in Theorem~\ref{propagation-generation} induced an  upper uniform bound in time for each ${ sk}$ fixed with a  coercivity property for any ${sk} \ge { s}\max\{\kb,{\kc}\} >1$.  One can  view  these a priori estimates   as  analog to significance of the Sobolev-Poincare and coercive constants  associated to elliptic and parabolic equations in divergence form.   The corresponding existence and uniqueness theorem in these Banach spaces follow in the next section.
}

\end{remark}

\bigskip

We conclude this section by stating the corresponding $L^1_k(\mathbb{R}) $ or $K^{th}$-moment estimates for the Boltzmann flow with Maxwell type collisional forms corresponding to potential rates $\gamma=0$, corresponding to transition probability with  potential  rate $\gamma=0$ and integrable angular factor $(\hat u\cdot \sigma)$ 

\subsection{\em Case for Maxwell type of interactions corresponding to  $\gamma=0$. } This is the well known case   where the estimate \eqref{momentineq}  
is obtained by a moment recursion formula for classical cases where the transition probability of collision kernel  simple depend on an integrable $B(\hat u\cdot\sigma) \in L^1(\mathbb{S}^{d-1}, d\sigma)$. 

However, the interpolation estimate techniques developed  in 
Theorem~\ref{mom-coll-op} and Theorem~\ref{propagation-generation} extend to the case  $c_{\Phi} b(\hat u\cdot\sigma) \le B(\Phi(|u|), \hat u\cdot\sigma) \le C_{\Phi} b(\hat u\cdot\sigma)$, with $b(\hat u\cdot\sigma)\in  L^1(\mathbb{S}^{d-1}, d\sigma)$ and $c_{\Phi} $ and $C_{\Phi} $ the bounds from \eqref{pot-Phi_1} and \eqref{pot-Phi_2}, respectively.

Indeed, 
similar estimates by above for  the gain operator by above using interpolation,  while an exact negative identity for the moments of the loss, which clearly maintains the linearity property of the moments estimates,  to obtain  the  corresponding ODI
\begin{align}\label{moments odi Maxwell}
\frac {d}{dt} m_k[f](t) & =  m_k[Q_0[f,f](t) \leq C_{\Phi}2^{k}\mu_km_{1}[f_0]m_{k-1}[f] - c_{\Phi} (\|b\|_{L^1(\mathbb{S}^{d-1})}-\mu_k)m_{0}[f_0]m_{k}[f] \nonumber \\[3pt]
&\leq K_{1,k}[f_0]m^{\frac{k-2}{k-1}}_{k}[f] - (\|b\|_{L^1(\mathbb{S}^{d-1})}-\mu_k)m_{0}[f_0]m_{k}[f]\\[3pt]
&\leq K_{2,k}[f_0] - \tfrac{c_{\Phi}}{2}(\|b\|_{L^1(\mathbb{S}^{d-1})}-\mu_k)m_{0}[f_0]m_{k}[f] = 
 K_{2,k}[f_0] - \frac{c_{\Phi}}2 A_\kc m_{0}[f_0]m_{k}[f],\nonumber\\[3pt]
 &=: \mathcal L^M_k(m_{k}[f]), \qquad 
 \quad k\geq \kc\ge1\,, \nonumber
\end{align}
 where $\mathcal L^M_k(y)$ is a linear operator  in $y=m_{k}[f]$, with 
  constants   $A_\kc$  the coercive one for $\kc>1$,  and   $ K_{i,k}[f_0]$, for $I=1,2$, depend on  $C_{\Phi}, \mu_k$ the interpolation factors. This case can only  fulfilled the moment propagation  property of a  given initial data, as the negative term in the $k^{th}$-moment ODI   makes a linear upper ODE  with an absorption term $m_k[f](t)$ and with a bounded in time right hand side.  Hence,  this inequality leads to estimate \eqref{mom-prop} and the is no mechanism that can generate a moment or order $k$ from just an initial data  with bounded mass and energy.  More precisely, for any $f_0(v) \in L^1_{2k}(\R^d)$ the $k^{th}$ Lebesgue moment	 associated to the Maxwell type of interactions  Ordinary Differential Inequality \eqref{moments odi Maxwell} is global in time controlled by 
 \begin{align}\label{moments odi Maxwell2}
 m_k[f](t) &\le  \mathbb E_k   + m_k[f_0] \exp (-\frac{c_{\Phi}}2 A_\kc m_{0}[f_0] \, t), \\[4pt]
 \text{with} \qquad \mathbb E_k &:=\frac{2K_{2,k}}{ c_{\Phi} A_\kc m_{0}[f_0] }\ \text{being the unique root of  linear operator}  \ \mathcal L^M_k(m_{k}[f])\, \nonumber
 \end{align}
 making  $ \mathbb E_k $ the unique equilibrium state of an associated  upper ODE $y'(t)= \mathcal L^M_k(y(t))$.
 
 In particular the existence theorem to be developed in the next sections applies to this case as well, although the classical theory of ODEs, for comparisons to linear theory, since the collision operator can be shown to be a Lipschitz form, and the existence and uniqueness of  $f(v,t)$ solving the Cauchy problem if obtained by a  classical Picard type iteration constructs a fulfilling the global moments bounds \eqref{moments odi Maxwell2}. Much work has been written for the theory of conservative and non-conservative models for Maxwell type of interactions, and a sample of them can be found at \cite{Bo76, B88, BG-06, BoCeGa2-06,BG-17-maxw-bounds, Bobylev-book20} and references therein.
 
 \medskip

\bigskip

\section{Proof of  existence and uniqueness Theorem \ref{CauchyProblem}}\label{existence-uniqueness}

All  estimates  for $k^{th}$-Lebesgue polynomial moments for $1<k\in \R$,   obtained in Theorem~\ref{mom-coll-op},  are enough to  prove the  existence and uniqueness of solutions to the initial value problem \eqref{collision3}, for the space homogeneous problem of Theorem~\ref{CauchyProblem} for Maxwell and hard potentials for $\gamma\in[0,2]$ and integrable angular scattering transition probability rate.
This result is based on the following result on Banach spaces, Theorem~\ref{Theorem_ODE}, which was initially proposed in the unpublished note \cite{bressan} in order  to solve the Boltzmann equation for hard spheres in three dimensions, that is $\gamma=1$, and uniform scattering kernel.  While the ideas to prove this theorem are somehow classical, see for example \cite{Martin}, we revisit  this approach here showing     the use of the a-priori estimates on the $k^{th}$-Lebesgue polynomial moments  and their  impact to verify that  the   sub-tangent condition is satisfied,  for any real valued $1<k\in \ \R$. The proof of the  sub-tangent   is crucial for the selection of the solvability invariant set. This point not covered in the proposed proof  in  \cite{bressan}.   The following argument fully completes the argument sub-tangent condition being a sufficient one.  It also extends the argument for the aforementioned more general models for of collision cross sections for the homogeneous Boltzmann equation for elastic binary interactions, and other broader problems such collisional multi-species gas mixtures, as we well as polyatomic gas models.

Following the notation  used earlier for the Lebesgue norm  $\|g\|_{L^1_\ell(\R^d)}$ for $\ell\in \R$,  and the  $k^{th}$-moment  $m_k=\int_{\R^d} g(v) \langle v\rangle^{2k} dv$, as defined in\eqref{k-norm} and \eqref{k-moment}, respectively,  notice that the norm and moments coincide for a positive $g(v)$ and $2k=\ell$.  
 
Hence, focusing the  proof of initial value problem stated Theorem~\ref{CauchyProblem}, initial data is given by moments of positive probability   density  functions taking the identical notation   $m_0:=m[f_0] =\|f_0\|_{L^1(\mathbb R^{d})}$ and  $m_1:=m_1[f_0] =\|f_0\|_{L^1_{2}(\mathbb R^{d})}$ which are  initial mass and energy invariants of the Boltzmann flow, respectively.  In addition, since $k^{th}$-moments associated to to time dependent probability densities $f(v,t)$,   the short notation $m_k(t) := m_k[f](t)$ will be used, unless it is relevant  for the the Boltzmann solution flow.

%
%

\begin{thm}\label{Theorem_ODE}
Let $\mathcal{E}:=(\mathcal{E},\|\cdot\|)$ be a Banach space, $\Omega$  be a bounded, convex and closed subset of $\mathcal{E}$, and $Q:\Omega\rightarrow \mathcal{E}$ be an operator  satisfying the following properties:
\begin{itemize}
\item[{\bf i)}]  H\"{o}lder continuity condition
\begin{equation}\label{Holder_C}
\big\|Q[f] - Q[g]\big\| \leq C\|f - g\|^{\beta},\quad \beta\in(0,1), \quad \forall\,f,g\in\Omega\,.
\end{equation}
\item[{\bf ii)}]  Sub-tangent condition
\begin{equation}\label{subtangent}
\liminf_{h\rightarrow0^+}h^{-1}\text{Dist}\big(f+h\,Q[f],\,\Omega\big)=0,\qquad \forall\,f\in\Omega\,,
\end{equation}
where $\text{Dist}(v,\Omega) :=  \inf_{\omega\in\Omega} \|v-w\|$.
\item[{\bf iii)}]   One-sided Lipschitz condition 
\begin{equation}\label{one-sided-lip}
\big[ Q[f] - Q[g], f - g \big]_{-}\  \leq C\ \|f - g\|,\qquad \forall\,f,g\in\Omega\,,
\end{equation}
where $\big[ \varphi,\phi \big]_{-}\ : = \  \lim_{h\rightarrow 0^{-}}h^{-1}\big(\| \phi + h\varphi \| - \| \phi \| \big)$.
\end{itemize}
Then, the abstract \text{ode} problem 
\begin{equation*}
\partial_t f = Q[f] \quad\text{on}\quad [0,\infty)\times \mathcal{E},\qquad f(0)=f_0 \in \Omega,
\end{equation*}
has a unique solution in $\mathcal{C}([0,\infty),\Omega)\cap\mathcal{C}^1((0,\infty),\mathcal{E})$.
\end{thm}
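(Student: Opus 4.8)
\textbf{Proof strategy for Theorem~\ref{Theorem_ODE}.} The plan is to build the solution by a standard Euler-type (or Tonelli-type) approximation scheme adapted to the invariant set $\Omega$, using the three hypotheses in turn: the sub-tangent condition \eqref{subtangent} guarantees that the scheme can be kept inside $\Omega$, the H\"older continuity \eqref{Holder_C} gives compactness/equicontinuity of the approximants, and the one-sided Lipschitz condition \eqref{one-sided-lip} gives uniqueness and, together with a Gronwall argument, convergence of the whole scheme to a genuine solution. This is the classical route for flow-invariance results of Martin / Crandall / Pavel type, and I would follow it in the following order.

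\emph{Step 1: construction of approximate solutions.} Fix $f_0\in\Omega$. Since $\Omega$ is bounded and $Q$ is H\"older continuous on $\Omega$, $Q$ is bounded on $\Omega$, say $\|Q[f]\|\le M$ for all $f\in\Omega$. Given a step $h>0$, use the sub-tangent condition \eqref{subtangent}: for each $f\in\Omega$ one can choose a time increment $h'\in(0,h]$ and an error vector $p$ with $\|p\|\le \varepsilon(h) h'$ (where $\varepsilon(h)\to 0$ as $h\to0^+$) such that $f + h' Q[f] + h' p \in\Omega$. Iterating this from $f_0$ produces a partition $0=t_0<t_1<\dots$ of $[0,\infty)$ and points $f_n\in\Omega$ with $f_{n+1} = f_n + (t_{n+1}-t_n)\bigl(Q[f_n]+p_n\bigr)$, $\|p_n\|\le\varepsilon(h)$; interpolating affinely in $t$ gives a piecewise-linear curve $f_h:[0,\infty)\to\Omega$ which is Lipschitz in time with constant $\le M+\varepsilon(h)$ and satisfies $f_h'(t) = Q[f_h(\tau_h(t))] + p_h(t)$ a.e., where $\tau_h(t)$ is the left endpoint of the subinterval containing $t$ and $\sup\|p_h\|\le\varepsilon(h)$. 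One must check the partition mesh can be forced to $0$ (so $|t-\tau_h(t)|\to0$): this uses that whenever the chosen $h'$ would be too small, boundedness of $Q$ and convexity of $\Omega$ let one take a full step of size $h$ up to a controlled error — this is the one slightly delicate bookkeeping point.

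\emph{Step 2: passage to the limit.} The family $\{f_h\}$ is uniformly Lipschitz in $t$ and takes values in the bounded convex closed set $\Omega$. On each $[0,T]$, extract (by a diagonal/Arzel\`a--Ascoli argument, valid because the curves are equicontinuous and, by the construction, relatively compact in $t$ — if $\mathcal E$ is not assumed separable one works instead with the a priori Cauchy estimate below and dispenses with compactness). The cleanest route avoids compactness entirely: show directly that $\{f_h\}$ is Cauchy as $h\to0$. For two steps $h_1,h_2$ set $w(t)=f_{h_1}(t)-f_{h_2}(t)$; differentiating $\|w(t)\|$ and using the definition of $[\cdot,\cdot]_-$ together with \eqref{one-sided-lip} gives
\begin{equation}\label{eq:ODEcauchy}
\frac{d^+}{dt}\|w(t)\| \ \le\ C\,\|w(t)\| \ +\ \omega(h_1,h_2),
\end{equation}
where $\omega(h_1,h_2)\to0$ collects the discretization errors $\|p_{h_i}\|$, the H\"older modulus of $Q$ evaluated at $|t-\tau_{h_i}(t)|$, and the time-mesh gaps. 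Gronwall on \eqref{eq:ODEcauchy} with $w(0)=0$ yields $\sup_{[0,T]}\|f_{h_1}(t)-f_{h_2}(t)\|\le (e^{CT}-1)\,\omega(h_1,h_2)/C\to0$, so $f_h\to f$ uniformly on compacts, $f(t)\in\overline\Omega=\Omega$, and $f$ is Lipschitz in time. Then $Q[f_h(\tau_h(\cdot))]\to Q[f(\cdot)]$ uniformly (again by H\"older continuity of $Q$ and the uniform convergence), so integrating $f_h'(t)=Q[f_h(\tau_h(t))]+p_h(t)$ and passing to the limit gives $f(t)=f_0+\int_0^t Q[f(s)]\,ds$; hence $f\in\mathcal C([0,\infty),\Omega)$ and, since $s\mapsto Q[f(s)]$ is continuous (H\"older continuity $+$ continuity of $f$), $f\in\mathcal C^1((0,\infty),\mathcal E)$ with $f'=Q[f]$.

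\emph{Step 3: uniqueness.} If $f,g$ are two solutions with $f(0)=g(0)$, repeat the computation leading to \eqref{eq:ODEcauchy} but now with no discretization error: $\frac{d^+}{dt}\|f(t)-g(t)\|\le C\|f(t)-g(t)\|$, and Gronwall with zero initial datum forces $f\equiv g$. This uses only \eqref{one-sided-lip} and the elementary properties of the bracket $[\cdot,\cdot]_-$ (sub-additivity, upper semicontinuity, and the fact that $t\mapsto\|f(t)-g(t)\|$ is Lipschitz hence differentiable a.e. with $\frac{d}{dt}\|u(t)\|\le[u'(t),u(t)]_-$).

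\emph{Main obstacle.} The genuinely technical heart is Step~1: turning the pointwise, asymptotic sub-tangent condition \eqref{subtangent} into a global discretization scheme whose time-mesh can be made uniformly fine while keeping every node exactly in $\Omega$. One must handle the dichotomy between points where arbitrarily small admissible steps are forced and points where a macroscopic step is available, and verify that the accumulated error vector $p_h$ and the mesh size both vanish as $h\to0^+$; boundedness of $Q$ on $\Omega$ and convexity of $\Omega$ are exactly what make this work (convexity lets one take the affine interpolant and still land in $\Omega$ after absorbing the error). Everything after that is a routine Gronwall/limit argument. In the application to the Boltzmann operator, the nontrivial inputs feeding these three hypotheses are the collision-operator moment estimates of Theorem~\ref{mom-coll-op} (which yield the a priori bounds defining $\Omega$ and, via Lemma~\ref{prop_subtan}, the sub-tangent condition) — but at the abstract level those are given.
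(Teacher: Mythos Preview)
Your proposal is correct and follows essentially the same route as the paper's proof in the Appendix: build $\epsilon$-approximate solutions in $\Omega$ via the sub-tangent condition plus convexity (your Step~1 is the paper's Steps~1--2), then use the one-sided Lipschitz bracket with Gronwall to show the approximants form a Cauchy family and the limit is the unique solution (your Steps~2--3 are the paper's Step~3). The only cosmetic difference is that the paper parametrizes by the approximation tolerance $\epsilon$ rather than the mesh $h$, and handles your ``mesh-to-zero'' obstacle by a maximal-interval continuation argument rather than by forcing a full step; the underlying mechanism --- uniform boundedness of $Q$ on $\Omega$, closedness of $\Omega$, and H\"older continuity to transfer the sub-tangent estimate along the piecewise-linear arc --- is identical.
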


\begin{remark}
In the Banach space $L^{1}_{1}$, the one-sided Lipschitz bracket reduces to
\begin{align*}
&\big[\varphi, \psi\big]_{-}\  = \ \lim_{h\rightarrow 0^{-}}h^{-1}\big(\| \varphi + h\psi \|_{L^1_{2}} - \| \varphi \|_{L^1_{2}} \big)\\
&\hspace{-.5cm}= \lim_{h\rightarrow 0^{-} } \int h^{-1}\Big(\big| \varphi + h \psi \big | - | \varphi | \Big) \ \langle v \rangle^{2} \ dv  =     \  \int \psi \,\text{sign}(\varphi)  \ \langle v \rangle^{2} \ dv \,.
\end{align*}
The last step follows by Lebesgue's dominated convergence because
\begin{equation*}
\big| h^{-1} \big|\,\Big|\big| \varphi + h \psi \big | - | \varphi | \Big| \leq |\psi|\,,
\end{equation*}
and,
\begin{equation*}
h^{-1}\Big(\big| \varphi + h \psi \big | - | \varphi | \Big) = \frac{ 2\,\varphi\,\psi + h\,\psi^{2} }{\big| \varphi + h \psi \big | + | \varphi | } \rightarrow \psi \,\text{sign}(\varphi)\quad\text{as}\quad h\rightarrow0\,.
\end{equation*}
We point out that  the one-sided Lipschitz condition \eqref{one-sided-lip} is sufficient to obtain uniqueness for the Boltzmann problem as noticed by \cite{diblassio}.
\end{remark}

The proof of theorem~\ref{Theorem_ODE} is shown in the Appendix,  in the exact version as done by A. Bressan in the unpublished work of \cite{bressan}.  For completeness, it is included in his manuscript.

\medskip

\begin{proof}{\em The Cauchy problem as posed  in  Theorem~\ref{CauchyProblem}.}
 without loss of generality  taking just  $s=1$. This proof consists in two stages. The first stage  shows that  conditions {\bf i), ii)} and {\bf iii)} from Theorem~\ref{Theorem_ODE} are fulfilled  for the collision operator $Q:=Q_\gamma(f,f)(v)$ as defined in the statement of  Theorem~\ref{mom-coll-op} with all properties proved in that theorem, after setting Banach space $\E := L^1_2(\R^d)$
and the subset $\Omega \subset L^1_2(\R^d)$ defined as follows. 
Set $k=1+\gamma$ be moment $m_{1+\gamma}[f]=\int_{\R^d}f(v)\langle v\rangle^{2(1+\gamma)}dv$ and consider the constant $\mathfrak{h}_{1+\gamma} := \mathbb{E}_{1+\gamma} + \B_{1+\gamma}$, where 
 $\mathbb{E}_{1+\gamma}$ is the only root of the operator $\mathcal L_{k}(y):=\B_{k} -  \frac{A_{\kc}}2 m^{-c_\gamma(k)}_1[f_0] y^{1+c_\gamma(k)}$ as defined in \eqref{Qmoment-0}.

Hence, define the   bounded convex subset $\Omega\subset L^{1}_{2}({\R}^{d}) $ by 
%
\begin{equation}\label{SetOmega}
\begin{split}
\Omega_{\gamma} := &\Big\{ \ g \in L^{1}_{2} (\real^d) \ : \ g \geq 0,
\int g(v)\,dv \!=\!m_0,  \int g(v) |v|^2 dv \!=\! m_{1}-m_0>0,  \\
&\qquad\qquad  \int  g(v) \langle v \rangle^{2(1+\gamma)} dv =m_{1+\gamma} \!\le\!   K \!<\!\infty \ \Big\},
\end{split}
\end{equation}
for any finite constant $K >  \mathfrak{h}_{1+\gamma}.$ 
In fact,   conditions {\bf i)} and {\bf iii)} are satisfied with any initial moment to have at least $1+\gamma$ moments, that is  $f_0\in L^{1}_{2+2\gamma}(\mathbb{R}^{d})$,     while the Sub Tangent condition  {\bf ii)} proof  enables the existence  as well characterizes the  bounded, convex, closed subset $\Omega \in \mathcal E,$  by that any element $g\in \Omega$ may have   moment $m_{1+\gamma}[g]\ge  \mathfrak{h}_{1+\gamma}.$

  The second stage consists in showing the solution set can be enlarge  to solution set posing just initial data in  $L^{1}_{2^+}(\mathbb{R}^{d})$ by means of relaxing the initial data to the lower moment condition $f_0\in L^{1}_{2+\epsilon}(\mathbb{R}^{d})$, for  some $\epsilon>0$.  It covers two delicate points. The first one if to construct a sequence of approximants  in the set $\Omega$  to the initial data $f_0\in L^{1}_{2+\epsilon}(\mathbb{R}^{d})$ by density and, the second one consists	in  showing that the sequence of solutions is uniformly bounded strongly in  $L^{1}_{2+\epsilon}$ and so has a limit that generates the solution for such  $f_0\in L^{1}_{2+\epsilon}(\mathbb{R}^{d}).$

\begin{remark}\label{moment mathfrakh} It is important to notice that the constant $\mathfrak{h}_{1+\gamma}$   is  fully  determined only by the transition probability, or collision kernel term $B_\gamma(|u|,\hat{u}\cdot \sigma)$, and the initial conserved quantities $m_0$ and $m_{1}$, and so it  
only depends on the data associated to the Cauchy problem  to solved in Theorem~\ref{CauchyProblem}. 
\end{remark}

Without loss of generality, the binary collisional Boltzmann operator   $Q_{\gamma}(\cdot,\cdot)$ if considered to  have $B_\gamma(|u|,\hat u\cdot \sigma):=|u|^{\gamma}\,b(\hat{u}\cdot\sigma)$, with $\gamma\in[0,2]$, and   the angular transition measure  $b(\hat u\cdot \sigma)$  is in $L^1({\mathbb{S}^{d-1}}).$
The subindex $\gamma\in[0,2]$ on $Q_{\gamma}(\cdot,\cdot)$ denotes the potential rate.  Thus, $Q_{0}(\cdot,\cdot)$ represents the Maxwell Type of  interactions. 

\medskip

This following Lemma shows the  H\"older property {\bf i)} hold for any   $g\in L^1_{\ell}(\mathbb{R}^d)$, with $\ell\ge2(1+\gamma)$. It does not need  the conservation property stating that  any pair of  functions must have   the same   $L^1_{\ell}(\mathbb{R}^d)$ for $\ell=1,2$, since $L^1_{\ell}(\mathbb{R}^d) \subset L^1_{2(1+\gamma)}(\mathbb{R}^d)$

\begin{lem}[{\bf H\"older estimate  for elements  $L^1_{\ell}(\mathbb{R}^d)$ for any $\ell\ge2(1+\gamma)$  -- Property i)}]\label{Holder-estimate}

Let $f$ and $g$ functions lying in $L^1_{\ell}(\mathbb{R}^d)$ for any $\ell\ge2(1+\gamma)$, with their  $L^1_{2(1+\gamma)}$-norms are bounded. 
Denote by $h=f-g$ and $H=f+g$, noticing that $H$ is nonnegative whenever $f$ and $g$ are nonnegative.
Then, 
\begin{equation*}
\int_{\real^d}  \big|Q_{\gamma}(f,f) - Q_{\gamma}(g,g)\big| \langle v\rangle^{2} \ dv \leq  \|b\|_{ L^{1}( \mathbb{S}^{d-1} ) }\,C_{\Omega}\,\|f-g\|^{1/2}_{L^{1}_{2}}\,.
\end{equation*}
with constant given by $C_{\Omega}:=6\, K^{3/2}$.
\end{lem}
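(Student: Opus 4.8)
The plan is to exploit the bilinear structure of the collision operator together with the weak formulation from Corollary~\ref{coro4}, reducing everything to an estimate on the weight function $G_{\langle v\rangle^2}$. First I would write the difference using bilinearity,
\begin{equation*}
Q_\gamma(f,f) - Q_\gamma(g,g) = Q_\gamma(h,f) + Q_\gamma(g,h)
= \tfrac12\big(Q_\gamma(h,H) + Q_\gamma(H,h)\big)\,,
\end{equation*}
where $h = f-g$ and $H = f+g$; this is the symmetrized version that lets one bound the difference by symmetric bilinear terms in $|h|$ and $H$. The point of passing to $H$ is that $H \geq 0$ when $f,g\geq 0$, so $H$-moments are controlled by the $L^1_{2(1+\gamma)}$-bound $K$ entering the definition of $\Omega_\gamma$.

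Next I would invoke the Maxwell weak form tested against $\varphi(v) = \langle v\rangle^2$, controlling the weight function $G_{\langle v\rangle^2}(v_*,v)$ pointwise. Writing it in symmetrized form as in \eqref{weightuv2} and \eqref{eq:maxwell}, one has
\begin{equation*}
\Big|\int_{\mathbb{S}^{d-1}}\big(\langle v'\rangle^2+\langle v'_*\rangle^2-\langle v\rangle^2-\langle v_*\rangle^2\big)\,b(\hat u\cdot\sigma)\,d\sigma\Big|
\leq 2\|b\|_{L^1(\mathbb{S}^{d-1})}\big(\langle v\rangle^2+\langle v_*\rangle^2\big)\,,
\end{equation*}
using that $\langle v'\rangle^2 + \langle v'_*\rangle^2 = \langle v\rangle^2 + \langle v_*\rangle^2$ (local energy conservation, recall $\langle v\rangle^2 = 1+|v|^2$ and the collision conserves $|v|^2+|v_*|^2$, but the constant $2$ in $\langle\cdot\rangle^2$ shifts by $1$ on each side — so in fact the bracket integrand for the pure-energy part telescopes and only the ``$1$'' parts survive, giving an even cleaner bound $2\|b\|_{L^1}$). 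Combined with the potential bound $\Phi_\gamma(|u|)\leq C_\Phi|u|^\gamma \leq \tilde C_\gamma(\langle v\rangle^\gamma+\langle v_*\rangle^\gamma)$ from \eqref{pot-Phi_2.0}, this yields
\begin{equation*}
\int_{\real^d}\big|Q_\gamma(|h|,H)+Q_\gamma(H,|h|)\big|\langle v\rangle^2\,dv
\lesssim \|b\|_{L^1(\mathbb{S}^{d-1})}\iint |h(v)|\,H(v_*)\,\big(\langle v\rangle+\langle v_*\rangle\big)^{\gamma+2}\,dv\,dv_*\,,
\end{equation*}
where I have folded all the conserved-quantity constants. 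Expanding $(\langle v\rangle+\langle v_*\rangle)^{\gamma+2}$ by the binomial/convexity inequality distributes the weights between $|h|$ and $H$, producing terms of the form $\|h\|_{L^1_{j}}\|H\|_{L^1_{\gamma+2-j}}$ with $0\le j\le \gamma+2\le 4$.

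The final step is the interpolation that converts the linear dependence on $\|h\|_{L^1_j}$ into the H\"older power $\|h\|_{L^1_2}^{1/2}$. For $j\in[0,4]$ one interpolates between $L^1$ (or $L^1_2$) and $L^1_{\gamma+2}\subset L^1_4$: using $\|h\|_{L^1_j}\leq \|h\|_{L^1}^{1-j/4}\|h\|_{L^1_4}^{j/4}$ and then crudely $\|h\|_{L^1}\leq\|h\|_{L^1_2}$, $\|h\|_{L^1_4}\leq 2\max\{\|f\|_{L^1_4},\|g\|_{L^1_4}\}\leq 2K$ (since both lie in $\Omega_\gamma$ with the $L^1_{2(1+\gamma)}\supset L^1_4$ bound $K$ when $\gamma\geq 1$; for $\gamma<1$ one uses $L^1_{2+2\gamma}$ directly and the exponents are even more favorable), one gets a factor $\|h\|_{L^1_2}^{\tau}$ with $\tau \geq 1/2$. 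Since $\Omega_\gamma$ is bounded, $\|h\|_{L^1_2}\leq 2\sqrt{K}\cdot\|h\|_{L^1_2}^{1/2}$, so any power $\tau\in[1/2,1]$ can be absorbed down to the exponent $1/2$; tracking the $H$-moment factors gives the constant $C_\Omega = 6K^{3/2}$ — one power of $K^{1/2}$ from absorbing $\|h\|_{L^1_2}^{1-\tau}$, one power of $K$ from the $\|H\|$-moments, times the combinatorial constant $6$ from expanding $(\langle v\rangle+\langle v_*\rangle)^4$. I expect the main obstacle to be purely bookkeeping: choosing the interpolation exponents consistently across all the binomial terms so that the worst one still yields exponent at least $1/2$, and verifying that the $H$- and $|h|$-moments appearing never exceed order $2(1+\gamma)$, so that membership in $\Omega_\gamma$ genuinely controls them; there is no conceptual difficulty, since the energy-conservation cancellation in $G_{\langle v\rangle^2}$ is what keeps the weight from growing faster than $(\langle v\rangle+\langle v_*\rangle)^{\gamma+2}$.
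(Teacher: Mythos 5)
Your opening decomposition $Q_\gamma(f,f)-Q_\gamma(g,g)=\tfrac12 Q_\gamma(H,h)+\tfrac12 Q_\gamma(h,H)$ and your closing interpolation-plus-$|h|\le H$ step are the paper's, but the middle of your argument has a genuine gap: you cannot invoke the Maxwell/weak formulation with test function $\varphi(v)=\langle v\rangle^{2}$ to control $\int_{\mathbb{R}^d}|Q_\gamma(h,H)|\langle v\rangle^{2}\,dv$, because the absolute value sits \emph{inside} the $v$-integral. The weak form only controls the signed pairing $\int Q_\gamma(h,H)\langle v\rangle^{2}dv$, and since $\langle v'\rangle^{2}+\langle v'_*\rangle^{2}=\langle v\rangle^{2}+\langle v_*\rangle^{2}$ exactly, the bracket you display is identically zero; taken literally your reasoning would ``prove'' the left-hand side vanishes, which it does not. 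The correct step, and the one the paper takes, uses no cancellation at all: bound $|Q_\gamma(h,H)|\le Q^{+}_\gamma(|h|,H)+Q^{-}_\gamma(|h|,H)$, then use the pointwise inequalities $\langle v\rangle\le\langle v'\rangle\langle v'_*\rangle$ and $|u|=|u'|\le\min\{\langle v\rangle\langle v_*\rangle,\langle v'\rangle\langle v'_*\rangle\}$ to dominate $|Q_\gamma(h,H)|\langle v\rangle^{2}$ by the Maxwell-type operators $Q^{\pm}_{0}$ acting on the weighted functions $|h|\langle\cdot\rangle^{2+\gamma}$ and $H\langle\cdot\rangle^{2+\gamma}$, and finally use the identity $\int_{\mathbb{R}^{d}}Q^{\pm}_{0}(\psi,\phi)\,dv=\|b\|_{L^{1}(\mathbb{S}^{d-1})}\|\psi\|_{L^{1}}\|\phi\|_{L^{1}}$, giving $\int|Q_\gamma(h,H)|\langle v\rangle^{2}dv\le 2\|b\|_{L^{1}(\mathbb{S}^{d-1})}\|h\|_{L^{1}_{2+\gamma}}\|H\|_{L^{1}_{2+\gamma}}$ directly.

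Two smaller points. The weight you claim, $(\langle v\rangle+\langle v_*\rangle)^{\gamma+2}$, is too small: the kernel $|u|^{\gamma}$ contributes degree $\gamma$ in \emph{each} of $v$ and $v_*$, so the correct weight is of the form $\langle v\rangle^{2+\gamma}\langle v_*\rangle^{\gamma}+\langle v\rangle^{\gamma}\langle v_*\rangle^{2+\gamma}$ (total degree $2+2\gamma$); each variable still carries at most $2+\gamma$, so your worst-term bookkeeping survives, but the displayed inequality is false as written for large velocities. Also, the detour through $L^{1}_{4}$ is both unnecessary and unavailable when $\gamma<1$ (the hypothesis only controls $L^{1}_{2(1+\gamma)}$); the single interpolation $\|h\|_{L^{1}_{2+\gamma}}\le\|h\|_{L^{1}_{2(1+\gamma)}}^{1/2}\|h\|_{L^{1}_{2}}^{1/2}\le\|H\|_{L^{1}_{2(1+\gamma)}}^{1/2}\|h\|_{L^{1}_{2}}^{1/2}$, using $|h|\le H$, gives the exponent $1/2$ exactly for all $\gamma\in(0,2]$ and, with $\|H\|_{L^{1}_{2(1+\gamma)}}\le 2K$, yields the constant $C_\Omega=6K^{3/2}$ with no absorption argument.
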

\begin{proof} 
Observe that for $h=f-g$ and $H=f+g$,  the  following identity holds
\begin{equation*}
Q_{\gamma}(f,f) - Q_{\gamma}(g,g) \ =\ \tfrac{1}{2}Q_{\gamma}(H, h) + \tfrac{1}{2}Q_\gamma(h,H) \,. 
\end{equation*}
In addition, note that
\begin{equation*}
|u| = |u'| \leq \min\big\{\langle v \rangle \langle v_{*} \rangle  ,  \langle v' \rangle \langle v'_{*} \rangle  \big\}\,, \quad \text{and} \quad \langle v \rangle \leq \langle v' \rangle\langle v'_{*} \rangle\,, 
\end{equation*}
where the latter uses local conservation of energy $|v'|^2 + |v'_*|^2 = |v|^2 + |v_*|^2$.

Therefore,
\begin{equation*}
\big|Q_{\gamma}(h,H)\big| \langle v \rangle^{2} \leq Q^{+}_0(|h|\langle v \rangle^{2+\gamma}, H \langle v \rangle^{2+\gamma}) + Q^{-}_0(|h|\langle v \rangle^{2+\gamma}, H \langle v \rangle^{2+\gamma})\,.
\end{equation*}

Moreover, multiplying  $Q^{\pm}_0(|\psi|,|\varphi|)(v)$  by  any $\psi, \varphi \in L^{1}(\mathbb{R}^{d})$,  in $v\in\mathbb{R}^{d}$,   the following identity holds%
\begin{equation*}
\int_{\mathbb{R}^{d}}Q^{\pm}_0(|\psi|,|\varphi|)(v)dv = \|b\|_{\mathbb{S}^{d-1}}\| \psi \|_{L^{1}(\mathbb{R}^{d})} \| \varphi \|_{L^{1}(\mathbb{R}^{d})}\,,
\end{equation*}
that allow us to conclude that $Q_{\gamma} (h,H)$ is control by the $(1+\gamma/2)$-moments of $h$ and $H$, respectively, proportionally to the integral of the angular transition function $b(\hat u\cdot \sigma)$, that is 
\begin{equation*}
\int_{ \mathbb{R}^{d} } \big| Q_{\gamma} (h,H) \big| \langle v \rangle^{2} \ dv \leq 2\ \|b\|_{ L^{1}( \mathbb{S}^{d-1} ) } \  \|H\|_{L^{1}_{2+\gamma}} \ \|h\|_{L^{1}_{2+\gamma}}\,.
\end{equation*}
Further, using Lebesgue's interpolation, which uses the $L^1_{2(1+\gamma)}$-norm  of  $H=f+g$,   yields 
{\begin{equation}\label{holder 1}
\|h\|_{L^1_{2 + \gamma}} \ \le \|h\|^{1/2}_{L^1_{2(1+\gamma)}} \|h\|^{1/2}_{L^1_{2}} \leq \|H\|^{1/2}_{L^1_{2(1+\gamma)}} \|h\|^{1/2}_{L^1_{2}}\,,
\end{equation}}
where the second inequality follows by observing that for nonnegative functions $|h| \leq H $.  Therefore,
\begin{equation}\label{holder 2}
\int_{ \mathbb{R}^{d} } \big| Q_{\gamma} (h,H) \big| \langle v \rangle^{2} \ dv \leq 2\ \|b\|_{ L^{1}( \mathbb{S}^{d-1} ) } \  \|H\|^{3/2}_{L^{1}_{2(1+\gamma)}} \ \|h\|^{1/2}_{L^{1}_{2}}\leq 6\,K^{3/2} \|b\|_{ L^{1}( \mathbb{S}^{d-1} ) } \|h\|^{1/2}_{L^1_{2}}\,.
\end{equation}
The latter estimate holds since $\|H\|_{L^1_{2+2\gamma}}\le 2m_{1+\gamma}[f]\leq  2K$ 
for any $f,g\in\Omega$. 

\end{proof}
\begin{lem}[{\bf Sub-tangent condition -- Property ii)}]\label{prop_subtan}
Fix $f\in\Omega$.  Then, for any $\epsilon>0$ there exists $\eta_{*}:=\eta_{*}(f,\epsilon)>0$ such that the ball centered at $f+\eta\,Q_{\gamma}(f,f)$ with radius $\eta\,\epsilon>0$ intersects $\Omega$.  That is,
\begin{equation}\label{non-empty-set}
B(f+\eta\,Q_{\gamma}(f,f),\eta\,\epsilon)\cap\Omega,\;  \text{is non-empty for any}\; 0<\eta\leq \eta_{*}\ .
\end{equation}
In particular,
\begin{equation}\label{invariant region limit}
\lim_{\eta\to 0+} \eta^{-1}\text{Dist}\big(f+\eta\,Q_{\gamma}(f,f),\Omega\big)=0\,.
\end{equation}
\end{lem}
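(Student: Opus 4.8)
\textbf{Proof plan for Lemma~\ref{prop_subtan}.}

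The plan is to show that a single forward Euler step $f + \eta\,Q_\gamma(f,f)$ can be corrected back into $\Omega$ at a cost that is $o(\eta)$. The only way the Euler iterate can leave $\Omega$ is by violating one of the three defining constraints: nonnegativity, the mass/energy/momentum identities, or the bound $m_{1+\gamma}[g]\le K$. The conservation identities are automatically preserved exactly, since testing $Q_\gamma(f,f)$ against $1$, $v$ and $|v|^2$ gives zero by Corollary~\ref{coro4} and \eqref{elastic-col}; hence $f+\eta Q_\gamma(f,f)$ has the same mass, momentum and energy as $f$. The $m_{1+\gamma}$ constraint is where the choice $K>\mathfrak h_{1+\gamma}$ enters: by Theorem~\ref{mom-coll-op}, $m_{1+\gamma}[Q_\gamma(f,f)]\le \mathcal L(m_{1+\gamma}[f]) = \B_{1+\gamma} - \tfrac{A_\kc}{2}m_1^{-c}[f_0]\,m_{1+\gamma}^{1+c}[f]$, which is strictly negative as soon as $m_{1+\gamma}[f]>\mathbb E_{1+\gamma}$. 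So if $m_{1+\gamma}[f]$ is close to $K$ (hence well above $\mathbb E_{1+\gamma}$, since $K>\mathfrak h_{1+\gamma}=\mathbb E_{1+\gamma}+\B_{1+\gamma}$), then $\tfrac{d}{d\eta}m_{1+\gamma}[f+\eta Q_\gamma(f,f)]\big|_{\eta=0}<0$ and the iterate only \emph{decreases} this moment for small $\eta$; and if $m_{1+\gamma}[f]$ is bounded away from $K$, then for $\eta$ small enough the perturbation $\eta\,m_{1+\gamma}[Q_\gamma(f,f)]$ (finite since $f\in L^1_{2(1+\gamma)}$ and $\gamma\le 2$, so $Q_\gamma(f,f)\in L^1_{2(1+\gamma)}$ by the moment estimates) keeps $m_{1+\gamma}$ below $K$. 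Either way the moment constraint is respected for $\eta\le\eta_*$.

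The genuine obstacle is nonnegativity: $f+\eta Q_\gamma(f,f)$ need not be nonnegative because the loss term $-\eta f\nu(f)$ can overshoot where $f$ is small but $\nu(f)$ is large (and $\nu(f)$ is unbounded for hard potentials, growing like $\langle v\rangle^\gamma$). The standard fix, which I would carry out, is to truncate. Write $Q_\gamma(f,f) = Q_\gamma^+(f,f) - f\,\nu(f)$ with $\nu(f)(v)=\int f(v_*)\Phi(|v-v_*|)\,dv_*$, and for a large parameter $R>0$ define a regularized iterate
\begin{equation*}
f_{\eta,R} := f + \eta\Big( Q_\gamma^+(f,f) - f\,\nu_R(f)\Big),\qquad \nu_R(f)(v):=\min\{\nu(f)(v),R\}.
\end{equation*}
For $\eta$ small (depending on $R$) one has $1-\eta\nu_R(f)\ge 0$ pointwise, so $f_{\eta,R}\ge 0$. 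The correction one pays for replacing $\nu$ by $\nu_R$ is $\eta\,\|f(\nu(f)-\nu_R(f))\|_{L^1_2} = \eta\int_{\{\nu(f)>R\}} f\,(\nu(f)-R)\langle v\rangle^2\,dv$. Since $\nu(f)(v)\le \tilde C_\gamma m_{\gamma/2}[f]\langle v\rangle^\gamma + \tilde C_\gamma m_0[f]\langle v\rangle^\gamma$ is dominated by a fixed multiple of $\langle v\rangle^\gamma$ and $f\in L^1_{2+\gamma}$ (indeed $L^1_{2+2\gamma}$), dominated convergence gives $\int_{\{\nu(f)>R\}} f\,\nu(f)\,\langle v\rangle^2\,dv\to 0$ as $R\to\infty$. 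So: given $\epsilon>0$, first pick $R=R(f,\epsilon)$ making this tail integral $<\epsilon/2$; then pick $\eta_*=\eta_*(f,\epsilon,R)$ small enough that (a) $\eta\nu_R(f)\le 1$, (b) the $m_{1+\gamma}$ constraint holds as above, and (c) the conservation identities — which are only \emph{approximately} preserved by $f_{\eta,R}$ because of the truncation — are restored. For (c), note $f_{\eta,R}$ and $f+\eta Q_\gamma(f,f)$ differ by $\eta f(\nu(f)-\nu_R(f))\ge 0$, a nonnegative function of small $L^1_2$-norm; subtracting a suitable small nonnegative multiple of a fixed Maxwellian (or redistributing mass) fixes mass, momentum and energy at $O(\eta\epsilon)$ cost while keeping nonnegativity and not spoiling the $m_{1+\gamma}$ bound for $R$ large. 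The resulting corrected function $g_{\eta}\in\Omega$ satisfies $\|f+\eta Q_\gamma(f,f)-g_\eta\|_{L^1_2}\le \eta\epsilon$, which is exactly \eqref{non-empty-set}, and letting $\epsilon\downarrow 0$ yields \eqref{invariant region limit}.

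I expect the bookkeeping of the simultaneous correction — nonnegativity, three conservation laws, and the $m_{1+\gamma}$ ceiling, all at $o(\eta)$ cost — to be the delicate part; it is essentially a matter of choosing the order of quantifiers ($\epsilon$, then $R$, then $\eta_*$) carefully and using that the conserved quantities of $Q_\gamma(f,f)$ vanish identically so the only errors come from the truncation, which is controlled by a single tail integral that tends to zero. The moment estimate from Theorem~\ref{mom-coll-op} does the work of guaranteeing the ceiling $K>\mathfrak h_{1+\gamma}$ is never breached, and the decomposition of $b$ at its singularities plays no role here since $b\in L^1(\mathbb S^{d-1})$ is all that is used through $\|b\|_{L^1}$ and $\mu_{1+\gamma}$.
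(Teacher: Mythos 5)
Your reading of the structure is right — conservation, the $m_{1+\gamma}$ ceiling via Theorem~\ref{mom-coll-op}, and nonnegativity as the real obstacle — but the way you resolve the obstacle leaves a genuine gap. You truncate the collision \emph{frequency}, $\nu_R=\min\{\nu(f),R\}$, keeping the full gain term. This destroys the exact conservation identities: the iterate $f_{\eta,R}=f+\eta\big(Q^+_\gamma(f,f)-f\,\nu_R(f)\big)$ has mass and energy excesses $\eta\int f(\nu-\nu_R)\,dv>0$ and $\eta\int f(\nu-\nu_R)|v|^2\,dv>0$, and your proposed repair — subtracting ``a suitable small nonnegative multiple of a fixed Maxwellian (or redistributing mass)'' — is not a proof: a single scalar multiple cannot match the two defects (mass \emph{and} energy) simultaneously, and any subtraction threatens nonnegativity exactly where $f_{\eta,R}$ is small or vanishes, which is the constraint you introduced the truncation to protect. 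There is a second, quieter problem: with the truncated loss, the sign argument for the ceiling needs $m_{1+\gamma}[Q^+_\gamma(f,f)-f\nu_R(f)]=m_{1+\gamma}[Q_\gamma(f,f)]+\int f(\nu-\nu_R)\langle v\rangle^{2(1+\gamma)}dv$, and the correction term is controlled by moments of order up to $1+\tfrac{3\gamma}{2}$, i.e. weights $\langle v\rangle^{2+3\gamma}$, which membership in $\Omega\subset L^1_{2+2\gamma}$ does not provide; so negativity of the untruncated moment does not transfer to your iterate without further work.

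The paper sidesteps all of this by truncating the \emph{density} rather than the frequency: it takes $w_{R,f}=f+\eta\,Q_\gamma(f_R,f_R)$ with $f_R=\mathbf{1}_{\{|v|\leq R\}}f$. Because $Q_\gamma(f_R,f_R)$ is itself a bona fide collision operator, mass and energy are preserved \emph{exactly} and no repair step is needed; nonnegativity holds because the loss term carries the factor $f_R$, so it is supported in $\{|v|\leq R\}$ where the collision frequency is bounded by $C(m_0)(1+R^{\gamma})$, giving $w_{R,f}\geq f\big(1-\eta\|b\|_{L^1}C(m_0)(1+R^{\gamma})\big)\geq0$ for $\eta$ small; the ceiling is handled by the same two-case comparison with $\mathbb{E}_{1+\gamma}$ you describe, choosing $R$ large enough that $m_{1+\gamma}[f_R]\geq\mathbb{E}_{1+\gamma}$ in the second case; and the distance $\eta\|Q_\gamma(f_R,f_R)-Q_\gamma(f,f)\|_{L^1_2}\leq 6\eta\,\mathfrak{h}_{1+\gamma}^{3/2}\|b\|_{L^1}\|f_R-f\|^{1/2}_{L^1_2}\leq\eta\epsilon$ follows directly from the H\"older estimate of Lemma~\ref{Holder-estimate}. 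If you want to salvage your route you must actually construct the conservative correction (two free parameters, positivity-preserving, with $(1+\gamma)$-moment control), which is precisely the bookkeeping you deferred; as written, the argument is incomplete at that point.
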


\begin{proof}
The proof consists, for a given $f(v)\in \Omega$, we construct an associated function, denoted by $w_{R,f}(v)$ and first show that it is possible to find a small parameter $\eta(\epsilon, f)$,  such that $w_{R,f}(v)\in \Omega$. 
The second part simple shows that, for the given arbitrary $\epsilon>0$,  also $ w_{R,f}(v)\in B(f+\eta\,Q_{\gamma}(f,f),\eta\,\epsilon)$. 
 
\smallskip

\noindent {\em Part 1:} Start by recalling $m_0:=m_0[f](0)$ and  $m_1:=m_1[f](0)$, the conserved in time quantities mass and mass plus energy, respectively for any distribution density  $f(v,t)\in \Omega$.  

Next,  consider the angular transition $b(\hat u\cdot \sigma)$  integrable 
and define  the following cut off function depending on the fixed $f\in \Omega$  with a couple of positive real valued parameters $R$ and $\eta$, to be chosen later, given by  
\begin{align}\label{wRf}
w_{R,f}(v) &:=  f(v) + \eta\,Q(f_{R},f_{R})(v) \\
&\ = f(v) + \eta\Big(Q^{+}_{\gamma}(f_{R},f_{R})(v) - f_{R}(v) \|b\|_{L^1(\S^{d-1})}\big(f_{R}\ast \Phi \big)(v)\Big)\,, 
 \nonumber
\end{align}
with  $ f_{R}(v):= \text{1}_{\{|v|\leq R\}}\,f(v)\geq 0$,  and    the potential function $\Phi(v)$ satisfies  \eqref{pot-Phi_1} and \eqref{pot-Phi_2}.  
 
 The  arguments that follow prove that  $w_{R,f}(v)$ is an element in  the set \eqref{non-empty-set} for $R>0$,  chosen sufficiently large,  and    $\eta>0$  in \eqref{invariant region limit}  chosen sufficiently small,  both parameters only depending on $\epsilon$ and $f$.  
 
Indeed,  observing that $\gamma\in (0,2]$,   and invoking  \eqref{pot-Phi_2}, the collision frequency  is bounded above by, 
\begin{align}\label{Cm0}
\big(f_{R}\ast \Phi \big)(v)\leq 2^{\frac{\gamma}2} C_{\Phi}  m_0[f](0) (1+ |v|^{\gamma})  = C(m_0) (1+ |v|^{\gamma})\, , 
\end{align}
then,  choosing $0< \eta \leq 1/\big(\|b\|_{L^1(\S^{d-1})}C(m_{0})\big(1 + R^{\gamma}\big)\big)$, the function $w_{R,f}(v)$ can be estimated from below by
\begin{align}\label{wRf lower bound}
w_{R,f}(v)&\geq f(v) + \eta\,\Big(Q^{+}_{\gamma}(f_{R},f_{R})(v) - \text{1}_{\{|v|\leq R\}} \, f(v) \|b\|_{L^1(\S^{d-1})} \big(f_{R}\ast \Phi \big)(v,t)    \Big)\\
&\geq f(v,t)\Big(1 - \eta \|b\|_{L^1(\S^{d-1})}C(m_{0}) \big(1+R^{\gamma}\big)\Big)\geq0\, , \qquad \text{for} \  C(m_{0}):= 2^{\frac{\gamma}2} C_{\Phi}  m_0[f] .     \nonumber
\end{align}  

In addition, since the definition of $w_{R,f}(v,t)$ in \eqref{wRf}  preserves the binary structure of the collisional form
$Q(f_{R},f_{R})(v)$, then by conservation of mass and energy, both such Lebesgue moments of   $w_{R,f}(v)$ coincide with the ones for $f(v)$, that is 
\begin{equation}\label{moments-wRf}
\int_{\real^{d}}w_{R,f}(v,t) \left(\!\begin{matrix} 1 \\  |v|^2 \end{matrix}\!\right)  dv \!=\! \int_{\real^{d}}f(v) \left(\!\begin{matrix} 1 \\  |v|^2 \end{matrix}\!\right) dv \!+\!\int_{\real^{d}}Q(w_{R,f}, w_{R,f})(v) \left(\!\begin{matrix} 1 \\  |v|^2 \end{matrix}\!\right) dv  \!=\!
\left(\!\begin{matrix} m_0 \\  m_1-m_0 \end{matrix}\!\right), \  \forall \ R>0\,, 
\end{equation}
or, equivalently,
\begin{equation}\label{conserve-wRf}
\int_{\real^{d}}w_{R,f}(v)  - f(v)\, \left ( 1 +  |v|^2\right)  dv = 0.
\quad \forall \ R>0\,.
\end{equation}

Thus, in order to show $w_{R,f}(v,t)$ is an element in  the set \eqref{non-empty-set} for $R>0$, recall first the \textit{a priori} estimates developed for the $k^{th}$-Lebesgue moments of the collisional integral $Q(f,f)(v,t)$  in Theorem~ \ref{propagation-generation}, after setting $s=1$,  namely, for the nonlinear operator  acting on $k$-Lebesgue moments, 
\begin{align}\label{Lk operator}
\int_{\mathbb{R}^d} \!Q(f,f) \langle v\rangle^{2k} dv \!\leq \! \B_{k} \!- \!A_{\kc}m_1^{-c}[f_0(0)]m_{k}[f]^{ 1 + c }\! =:\!  \mathcal{L}_{k}\big(m_{k} [f] \big), \  \text{for} \ c\!=\!\frac{\gamma}{2(k-1)}, \ k \!\ge \!{\kc}\! >\!1\,,
\end{align}
maps $\mathcal{L}_{k}(y):[0,\infty)\rightarrow \mathbb{R}$, 
with strictly positive  constants $\B_{k}$ and  $A_{\kc}$ (the coercive factor) defined in \eqref{moment-factors}  depending on $f$ only through the initial moments mass and energy initial moments $m_0$ and $m_1$, respectively,  the fundamental constant  $\mu_k $ from the angular averaging  Lemma~\ref{BGP-JSP04},  and the parameter $\gamma$, and the bounds of $\Phi$ from \eqref{pot-Phi_1} depending on the data, for any $k>1$.  
In addition, the functional   $\mathcal{L}_k(y)$ is  non-linear form  function acting on $y$,  that is strictly decreasing since it has a super linear negative term due to the fact that $c(k, \gamma)$ is strictly positive for any $\gamma\in(0,2]$ and $k>1.$
As a consequence,   $\mathcal{L}_k(y)$ has {\em at most one positive root} characterized by $\mathcal{L}_{k}(\mathfrak{h}_k^{*})=0 $, depending on  $c= \gamma/(2(k-1))$.   

It is worthy to note that the root $\mathfrak{h}_k^{*}$ coincides  with the equilibrium solution associated to the flow $y_t(t)=\mathcal{L}y(t)$, denoted by  $\mathbb{E}_k$ defined in \eqref{RateE},  to the upper ODE problem \eqref{upperODE},  with a positive initial $k$-Lebesgue moment that controls the a priori estimates for the $k$-Lebesgue moments ODE, so the root  $\mathfrak{h}_k^{*}$  is identify with the notation introduced in \eqref{RateE}, namely
 %
%
\begin{equation}\label{Lk 0}
\mathfrak{h}_k^{*}:= \mathbb{E}_k =
m_1^{\frac {c}{1+c}}\Big(
\tfrac{\B_{k}}
{ A_{\kc}}\Big)^{\frac1{1+c}}\,,\qquad  k \ge {\kc} >1\,,
\end{equation}   
at which $\mathcal{L}_{k}(y)$ changes from positive to negative,  if $y<\mathbb{E}_k$ or $y>\mathbb{E}_k$ respectively.  

In addition, looking at  the explicit formula   \eqref{Lk operator},   
the maximum value of the non-linear operator $\mathcal{L}_{k}(y) $ is achieved at $\mathcal{L}_{k}(y)\mid_{y=0} :=\B_k $, since $\partial_y \mathcal{L}_{k}(y) <0$ if $c> 0$, or, equivalently, if $\gamma\ge 0$, and $k>1$.

 Therefore,  choosing  $\kc$, such that $k=1+\frac\gamma 2\ge \kc$, not only the exponent factor becomes $c(1+\gamma)=1/2$, but also the sum of  the equilibrium solution $ \mathbb{E}_{1+\gamma} $ plus  the maximum value of the nonlinear operator   $\max_{y\in \Omega}{\mathcal{L}}_{1+\gamma}(y):=  \B_{1+\gamma}$,  provides a characterization to find a sufficient bound  on the moment $m_{1+\gamma}[g] <   {\mathfrak{h}}_{1+\gamma} \le  K,$ that secures the set $\Omega$ to be the invariant region, that is    
 \begin{align}\label{Lk-frac-h}
  {\mathfrak{h}}_{1+\gamma}    &:=  \mathbb{E}_{1+\gamma} + B_{1+\gamma}
= m_1^{\frac13}\Big(
\tfrac{\B_{1+\gamma}}
{ A_{\kc}}\Big)^{\frac 23} + \B_{1+\gamma}= m_1^{\frac13}\Big(\tfrac{\beta_{1+\gamma}}
{ A_{\kc}}\,  \mathcal C_{1+\gamma} \Big)^{\frac 23}+ \beta_{1+\gamma}\mathcal C_{1+\gamma}\\
&= \! m_1^{\frac13} \left[ \left(\! \frac{ \beta_{1+\gamma}}{ A_{\kc} }\!\right)^2 \mathbb{I}_{ k\ge \kgamma }
  \!+\! \left(\frac{ \beta_{1+\gamma}}{ A_{{  s}\kc} }\!\right)^{1+\frac\gamma3}  \mathbb{I}_{ k <  \kgamma } \right] 
 + \beta_{1+\gamma}\left[ \left(\! \frac{ \beta_{1+\gamma}}{ A_{\kc} }\!\right)^2 \mathbb{I}_{ k\ge \kgamma }
  \!+\! \left(\frac{\beta_{1+\gamma} }{ A_{\kc} }\!\right)^{\frac12+\frac94\gamma }  \mathbb{I}_{ k <  \kgamma }\right] \le K,\nonumber \\[4pt]
  \text {for} \quad&\beta_{1+\gamma}=2C_\Phi (2^{\frac32\gamma}\mu_\kc +1).  \nonumber
\end{align}

Now, it becomes  clear that for  any  given $f(\cdot,t)\in\Omega$ with $t$ fixed, the constructed it $w_{R,f}$ is an element in the intersection set  defined in \eqref{non-empty-set}, for a choice of $\eta(\eps)$ that makes the limit \eqref{invariant region limit} to hold, so  showing that $\Omega$ is an invariant region for the Boltzmann flow under consideration.

 First, observe there are only two options  for the associated moment $y:=m_{1+\gamma}[f]$, namely either $y$ is smaller than  $ \mathbb{E}_{1+\gamma}$,   or larger than it. 

Thus for the first case, when $y =m_{1+\gamma}[f] \leq \mathbb{E}_{1+\gamma},$
   and for any $\eta<1$  from   \eqref {Lk-frac-h}, it clearly follows that
\begin{align}\label{wRf-case1}
m_{1+\gamma}[w_{R,f}] &= \int_{ \mathbb{R}^{d} } w_{R,f}(v) \langle v\rangle^{2(1+\gamma)}  dv = 
\int_{\mathbb{R}^{d}}\big( f(v) + \eta \, Q(f_{R},f_{R})(v)\big) \langle v\rangle^{2(1+\gamma)}  \text{d}v  \nonumber\\
&\leq  \mathbb{E}_{1+\gamma} + \eta\  
 \max_{y\in\Omega}{\mathcal{L}}_{1+\gamma}   =    \mathbb{E}_{1+\gamma}+ \eta\  \B_{1+\gamma}   \leq \mathfrak{h}_{1+\gamma} \le K,
\end{align} 

And for the one when $y:=m_{1+\gamma}[f] > 
\mathbb E_{1+\gamma}$,  make the choice  of letting the cut-off parameter $R:=R(f)$  from \eqref{wRf} to be sufficiently large  for  $m_{1+\gamma}[f_{R}]\geq\mathbb E_{1+\gamma}$, implying
\begin{equation*}\label{wRf-case2-0}
\int_{\mathbb{R}^d} Q(f_{R},f_{R}) \langle v\rangle^{2(1+\gamma)} dv \leq \mathcal{L}_{1+\gamma}\big(m_{1+\gamma}[f_{R}]\big)\leq 0\,, 
\end{equation*}
and, hence, 
\begin{equation}\label{wRf-case2}
m_{1+\gamma}[w_{R,f}] = \int_{\mathbb{R}^{d}} \big(f(v)+\eta \, Q(f_{R},f_{R})\big) \langle v \rangle^{2(1+\gamma)} dv \leq \int_{\mathbb{R}^{d}} f(v) \langle v\rangle^{2(1+\gamma)} dv \leq \mathfrak{h}_{1+\gamma}  \le K. 
\end{equation}
Thus, {\em Part 1} is complete, since   for any $f\in\Omega$, we have proven in  \eqref{wRf lower bound}  that  $w_{R,f} \ge  c(\eta, R) \, f(v) \ge 0$,  from  \label{moments-wRf} that the  mass and energy of $w_{R,f}(v)$ coincide with those of $f(v,t)$ and,   from \eqref{wRf-case1} and \eqref{wRf-case1},   that $m_{1+\gamma}[w_{R,f} ] \leq \mathfrak{h}_{1+\gamma} $, and then  the constructed $w_{R,f}$ belongs to  the invariant set $\Omega$.

\smallskip

\noindent {\em Part 2:} In order to show that    the constructed $ w_{R,f}(v)$ in {\em Part 1} belongs to $B(f+\eta\,Q_{\gamma}(f,f),\eta\,\epsilon)$ for the given arbitrary $\epsilon>0$, one needs to calculate the normed distance 
\begin{equation*}
\big\|w_{R,f} - f - \eta \, Q_{\gamma}(f,f) \big\|_{L^{1}_{2}(\real^{d})} = \eta\, \big\|Q(f_{R},f_{R}) - Q_{\gamma}(f,f) \big\|_{L^{1}_{2}(\real^{d})}\leq  6\,\eta\,  \mathfrak{h}_{1 + \gamma}^{3/2} \|b\|_{ L^{1}( \mathbb{S}) }\|f_{R} - f\|^{1/2}_{L^{1}_{2}(\real^{d})}\leq \eta \, \epsilon\,, 
\end{equation*}
which follows from taking the definition of 	$ w_{R,f} - f = Q(f_{R},f_{R})$ from \eqref{wRf} and invoking  the H\"{o}lder property proven in the previous  Lemma \ref{Holder-estimate}, estimate \eqref{holder 2},  when taking  $R\geq R(\epsilon, f)>0$, sufficiently large such that $\|f_{R} - f\|^{1/2}_{L^{1}_{2}(\real^{d})}\leq \epsilon$.  

To end, adjust the choice of  $w_{R,f} \in B(f+\eta\,Q[f],\eta\,\epsilon)$  for $\bar R =\max\{R(f),R(\epsilon,f)\}$  and  a possible smaller   $0< \eta_*(\epsilon,f) \leq 1/\big(C(m_{0})\big(1 + \bar R^{\lambda}\big)\big)$ to obtain,  not only $w_{\bar R,f}\in \Omega$, but also  for an arbitrary $\epsilon$, it follows that 
\begin{equation}\label{limit 0+}
\eta^{-1}\text{dist}\big( f + \eta\, Q_{\gamma}(f,f) , \Omega\big) \leq \epsilon\,, \qquad  \forall \ 0<\eta\leq  \eta_{*}\,.
\end{equation}

Therefore  taking  the limit as $\eta\to 0^+$ of \eqref{limit 0+} vanishes, proving that    property \eqref{invariant region limit} holds, and so  the proof of Lemma \ref{prop_subtan} is now complete. \\
\end{proof}

The last condition to check that is satisfied is {\bf iii)},  the One-Sided Lipschitz condition,  which is a sufficient to show that if   the initial $f_0\in L^{1}_{2+2\gamma}(\mathbb{R}^{d})$, solutions of  Theorem~\ref{CauchyProblem} are unique . The original proof was performaned by G. Di Blassio in \cite{diblassio} for the case $\gamma =1$, were polynomial moments were defined with weights $|v|^{2\gamma}$.

\begin{lem}[{\bf one-sided Lipschitz condition -- Property iii)}]\label{one-Lips-prop3}
For any function $f$ and $g$ lying in $\Omega$, it holds that
\begin{equation}\label{osLip}
\big[f-g,Q_{\gamma}(f,f) - Q_{\gamma}(g,g)\big]_{-}\leq \|b\|_{L^{1}(\mathbb{S}^{d-1})}\,C_{\Omega}\,\|f-g\|_{L^{1}_{2}(\mathbb{R}^{d})}\,,
\end{equation}
with $C_{\Omega}:= 8K$.
\end{lem}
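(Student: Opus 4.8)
The plan is to use the bilinear decomposition $Q_\gamma(f,f)-Q_\gamma(g,g)=\tfrac12 Q_\gamma(H,h)+\tfrac12 Q_\gamma(h,H)$ with $h=f-g$, $H=f+g$ (exactly as in the proof of Lemma~\ref{Holder-estimate}), together with the explicit computation of the one-sided bracket $[\varphi,\psi]_-=\int \psi\,\text{sign}(\varphi)\,\langle v\rangle^2\,dv$ recorded in the Remark after Theorem~\ref{Theorem_ODE}. Thus I would first write
\begin{equation*}
\big[f-g,\,Q_\gamma(f,f)-Q_\gamma(g,g)\big]_- = \int_{\real^d}\big(Q_\gamma(f,f)-Q_\gamma(g,g)\big)\,\text{sign}(h)\,\langle v\rangle^2\,dv,
\end{equation*}
and then split into the gain and loss contributions of $Q_\gamma(H,h)$ and $Q_\gamma(h,H)$.

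The key step is the standard gain/loss cancellation trick for the sign weight. For the loss term one has $-Q^-_\gamma(\cdot,\cdot)\,\text{sign}(h)$ contributing with a favorable sign (the loss term applied to $|h|$-type quantities is nonnegative, so against $\text{sign}(h)$ it gives something bounded by the loss term of $|h|$ against $H$). For the gain term, one uses $Q^+_\gamma(F,G)\,\text{sign}(h)\le Q^+_\gamma(|F|,|G|)$ pointwise after symmetrization, together with the weight transfer $|u|^\gamma\langle v'\rangle^2\le \langle v\rangle^{2+\gamma}\langle v_*\rangle^{2+\gamma}$ coming from $|u|\le\langle v\rangle\langle v_*\rangle$, $|u|=|u'|$, $\langle v'\rangle\le\langle v'\rangle\langle v'_*\rangle=\langle v\rangle\langle v_*\rangle$-type inequalities and local energy conservation — precisely the estimates already invoked in Lemma~\ref{Holder-estimate}. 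After these pointwise bounds and the identity $\int Q^\pm_0(|\psi|,|\varphi|)\,dv=\|b\|_{L^1(\mathbb S^{d-1})}\|\psi\|_{L^1}\|\varphi\|_{L^1}$, everything reduces to a product of weighted $L^1$ norms of $h$ and $H$. Concretely I expect to reach
\begin{equation*}
\big[f-g,\,Q_\gamma(f,f)-Q_\gamma(g,g)\big]_- \le C\,\|b\|_{L^1(\mathbb S^{d-1})}\,\|H\|_{L^1_{2+\gamma}}\,\|h\|_{L^1_{2+\gamma}},
\end{equation*}
for an absolute constant $C$ (something like $4$).

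The final step is to remove the extra weight: since $f,g\in\Omega$ are nonnegative and share the same mass and energy, $|h|\le H$ and one can either interpolate $\|h\|_{L^1_{2+\gamma}}\le\|h\|_{L^1_2}^{\,1/2}\|h\|_{L^1_{2+2\gamma}}^{\,1/2}\le\|h\|_{L^1_2}^{\,1/2}\|H\|_{L^1_{2+2\gamma}}^{\,1/2}$ as in Lemma~\ref{Holder-estimate}, or — and this is the real point distinguishing iii) from i) — exploit that for the \emph{one-sided} bracket the pointwise inequality $|h|\le H$ lets us bound $\|h\|_{L^1_{2+\gamma}}\le\|H\|_{L^1_{2+\gamma}}\le 2K$ directly, leaving a single clean factor $\|h\|_{L^1_2}$ without a square root. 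Combining with $\|H\|_{L^1_{2+\gamma}}\le\|H\|_{L^1_{2+2\gamma}}\le 4K$ (using $m_{1+\gamma}[f],m_{1+\gamma}[g]\le K$ on $\Omega$), I get the claimed bound $[f-g,Q_\gamma(f,f)-Q_\gamma(g,g)]_-\le \|b\|_{L^1(\mathbb S^{d-1})}\,C_\Omega\,\|f-g\|_{L^1_2}$ with $C_\Omega=8K$.

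The main obstacle I anticipate is the careful sign bookkeeping: one must check that $\big(Q^+_\gamma(H,h)+Q^+_\gamma(h,H)\big)\,\text{sign}(h)$ is genuinely dominated by $Q^+_0$ of the absolute values after distributing the weight $\langle v\rangle^2$, and that the loss contribution $-\big(f_R\ast\Phi\big)$-type terms do not spoil the sign — this is where writing $Q_\gamma(H,h)=Q^+_\gamma(H,h)-H\,(\,|h|\ast\Phi\,)$... actually more precisely $Q_\gamma(h,H)=Q^+_\gamma(h,H)-h\,(H\ast\Phi)\|b\|_{L^1}$ and noting $-h(H\ast\Phi)\text{sign}(h)=-|h|(H\ast\Phi)\le 0$ gives the favorable loss sign — must be done with some care. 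Everything else is a direct transcription of the machinery already assembled in Lemma~\ref{Holder-estimate} and the bracket Remark, so I do not expect genuine new difficulties beyond that bookkeeping.
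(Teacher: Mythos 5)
Your setup is fine (the formula for the bracket, the decomposition into gain and loss, the weight-transfer inequalities), but the concluding step does not hold together, and the gap is exactly at the point you flagged as "bookkeeping." After bounding the gain terms by absolute values and discarding the favorable loss term, the best you can reach along your route is an estimate of the form $C\,\|b\|_{L^1(\mathbb{S}^{d-1})}\,\|H\|_{L^{1}_{2+\gamma}}\,\|h\|_{L^{1}_{2+\gamma}}$, and from there you cannot produce a factor $\|h\|_{L^{1}_{2}}$: if you bound $\|h\|_{L^{1}_{2+\gamma}}\leq\|H\|_{L^{1}_{2+\gamma}}\leq 2K$ as you propose, \emph{no} factor of $\|f-g\|_{L^{1}_{2}}$ remains at all (you get a constant, which is useless for the one-sided Lipschitz condition needed in Theorem~\ref{Theorem_ODE}); and the only other way to trade $\|h\|_{L^{1}_{2+\gamma}}$ for $\|h\|_{L^{1}_{2}}$ is the interpolation of Lemma~\ref{Holder-estimate}, which gives back the H\"older exponent $1/2$, i.e.\ property i), not the linear bound of property iii). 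Pointwise there is no escape either, since $|h(v)|\langle v\rangle^{2+\gamma}=|h(v)|\langle v\rangle^{2}\cdot\langle v\rangle^{\gamma}$ with $\langle v\rangle^{\gamma}$ unbounded. The deeper reason your plan fails is structural: in the split $\tfrac12 Q_\gamma(H,h)+\tfrac12 Q_\gamma(h,H)$, in the term $Q_\gamma(H,h)$ the difference $h$ sits at $v_*$ while $\operatorname{sign}(h)$ sits at $v$ and $v'$, so there is no sign cancellation available and the heavy weight $\langle v_*\rangle^{2+\gamma}$ inevitably lands on $h$.

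The paper's proof avoids this by never separating gain from loss. It writes the bracket through the symmetrized weak form, pairing $h\,f_*+g\,h_*$ with
\begin{equation*}
\Delta(v,v_{*}) = \langle v' \rangle^{2}\operatorname{sign}(h') + \langle v'_{*} \rangle^{2}\operatorname{sign}(h'_{*}) - \langle v \rangle^{2}\operatorname{sign}(h) - \langle v_{*} \rangle^{2}\operatorname{sign}(h_{*}),
\end{equation*}
so that $h$ and $\operatorname{sign}(h)$ always sit at the same velocity. Then $h\operatorname{sign}(h)=|h|$ together with local energy conservation gives the pointwise cancellations $h\,\Delta\leq 2|h|\langle v_*\rangle^{2}$ and $h_*\,\Delta\leq 2|h_*|\langle v\rangle^{2}$: the $\langle v\rangle^{2}$ coming from $\langle v'\rangle^{2}+\langle v'_*\rangle^{2}=\langle v\rangle^{2}+\langle v_*\rangle^{2}$ is exactly eaten by the negative loss contribution. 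Consequently all heavy weights (including the $|u|^\gamma$ factor) land on $f_*$ and $g$, which are controlled by $K$ on $\Omega$, while $h$ only carries weight $\gamma\leq 2$, giving $4\|b\|_{L^1}\|H\|_{L^{1}_{2+2\gamma}}\|h\|_{L^{1}_{\gamma}}\leq 8K\|b\|_{L^1}\|f-g\|_{L^{1}_{2}}$. To fix your argument you would need to abandon the separate gain/loss estimates and reorganize the bracket so that this sign--energy cancellation is preserved, which is essentially the paper's computation.
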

\begin{proof} Recall that in $L^{1}_{2}(\real^{d})$, the Lipschitz bracket simply reads
\begin{equation}\label{LipBracket}
\big[\varphi, \psi\big]_{-} =    \int_{\real^{d}} \psi \,\text{sign}(\varphi)  \,\langle v\rangle^2 dv \,.
\end{equation}
Then, using the weak formulation (with the shorthand $\int=\int_{\real^d}\int_{\real^d}\int_{\mathbb{S}^{d-1}}$ and $h=f-g$)
\begin{align}\label{osLip1}
\big[f-g&,Q_{\gamma}(f,f) - Q_{\gamma}(g,g)\big]_{-} =\int_{\real^{d}}\big(Q_{\gamma}(f,f) - Q_{\gamma}(g,g)\big)\,\text{sign}(f-g)\,\langle v \rangle^{2} dv\\
&= \int \big(f\,f_{*} - g \, g_{*}\big)\Delta(v,v_{*})|u|^{\gamma}b(\hat{u}\cdot\sigma) \text{d}\sigma \text{d}v_*\text{d}v =\int \big( h\,f_{*} + g\,h_{*} \big)\Delta(v,v_{*})|u|^{\gamma}b(\hat{u}\cdot\sigma) \text{d}\sigma \text{d}v_*\text{d}v\,.\nonumber
\end{align}
Here, $\Delta(v,v_{*}) = \langle v' \rangle^{2}\,\text{sign}(h') + \langle v'_{*} \rangle^{2}\,\text{sign}(h'_{*}) - \langle v \rangle^{2}\,\text{sign}(h) - \langle v_{*} \rangle^{2}\,\text{sign}(h_{*})$.  Observe that
\begin{align}\label{osLip2}
h\, \Delta(v,v_{*})&\leq |h|\,\Big(\langle v' \rangle^{2} + \langle v'_{*} \rangle^{2} + \langle v_{*} \rangle^{2} - \langle v \rangle^{2}\Big) = 2\,|h|\,\langle v_{*} \rangle^{2}\,,\quad\text{ and }\nonumber\\
h_{*}\, \Delta(v,v_{*})&\leq |h_{*}|\,\Big(\langle v' \rangle^{2} + \langle v'_{*} \rangle^{2} + \langle v \rangle^{2} - \langle v_{*} \rangle^{2}\Big) = 2\,|h_{*}|\,\langle v \rangle^{2}\,.
\end{align}
Combining,  both \eqref{osLip1} and   \eqref{osLip2} estimates, local conservation of energy.  Therefore, (with the shorthand $H=f+g$)  
\begin{align}\label{one-sided}
 \big[f-g,&Q_{\gamma}(f,f) - Q_{\gamma}(g,g)\big]_{-} \leq 2\,\|b\|_{L^{1}(\mathbb{S}^{d-1})}\int_{\real^{d}}\int_{\real^{d}}\big(|h|\,f_{*}\,\langle v_{*}\rangle^{2} + |h_{*}|\,g\,\langle v \rangle^{2} \big)\,|u|^{\gamma} \, dv_{*}\,dv\\
&= 4\,\|b\|_{L^{1}(\mathbb{S}^{d-1})}\int_{\real^{d}}\int_{\real^{d}}|h|\,H_{*}\,\langle v_{*}\rangle^{2}\,|u|^{\gamma} \, dv_{*}\,dv\leq 4\,\|b\|_{L^{1}(\mathbb{S}^{d-1})}\|H\|_{L^{1}_{2+2\gamma}}\|h\|_{L^{1}_{\gamma}}\,.\nonumber
\end{align}
Since $\|H\|_{L^{1}_{2+2\gamma}}= m_{1+\gamma}[f] + m_{1+\gamma}[g]  \,  \le 2K $,  estimate  \eqref{osLip} follows.\\
\end{proof}

  %

\bigskip

Hence, Lemmata \ref{Holder-estimate}, \ref{prop_subtan}, and \ref{one-Lips-prop3} ensures that conditions $\bf{i), ii), iii)}$ of Theorem~\ref{Theorem_ODE},  and so  the following existence and uniqueness result,  with 
$f_0\in L^1_{2+2\gamma}({\mathbb S^{d-1})}$ to the initial value problem for the space homogeneous Boltzmann equation under consideration holds, 
\begin{equation*}
f_t = Q(f,f)\,, \qquad \text{with initial data}\qquad  f(v,0) = f_0(v) \in \Omega\,,
\end{equation*}
has a unique solution $0\leq f(v,t)\in \mathcal{C} \big([0,\infty); \Omega\big)\cap  \mathcal{C}^1\big((0,\infty);L^{1}_{2}\big(\mathbb{R}^{d}\big)\big)$.

\medskip

Hence, the second stage of proof of Cauchy Problem, Theorem~\ref{CauchyProblem}, are fulfilled and, as a consequence, the homogeneous Boltzmann problem, and the new    moment condition on the initial data can be relaxed to having an initial data that belongs to $L^1_{2^+}$ for $\gamma\in(0,2]$.   In fact, under Remark~\ref{kc=eps} conditions, set $\epsilon =  2(\kc-1) \in (0, \gamma)$.

The following Lemma justifies this statement.

\begin{lem}\label{Lemma_ODE-extension2}
 Let $\gamma\in(0,2]$ and $0\leq f_0\in L^{1}_{2+\epsilon}(\mathbb{R}^{d})$, with  $\epsilon\in(0,\gamma)$ fixed. 
Then, 
\begin{itemize} 
\item[{\it i)}]\  There exists a non negative sequence of initial data $\{f^{j}_{0}\}\subset L^{1}_{2+\gamma}(\mathbb{R}^{d})\cap \Omega$ strongly converging to $f_0$, that is 
\begin{equation} \label{fj-convergence}
f^{j}_{0} \longrightarrow f_0\,,\quad\text{strongly in} \; L^{1}_{2+\epsilon}(\mathbb{R}^{d})\,.
\end{equation} 

\item[{\it ii)}]\ There is a sequence of solutions of the Boltzmann Cauchy problem $f^j(v,t) \in L^{1}_{2+\gamma}(\mathbb{R}^{d})$,  with initial data $f^{j}_{0}\in L^{1}_{2+\gamma}(\mathbb{R}^{d})$,  converging  to   $f\in \mathcal{C}\big([0,\infty); L^{1}_{2}(\mathbb{R}^{d}) \big)\ \cap\ \mathcal{C}^{1}\big((0,\infty); L^{1}_{\ell}(\mathbb{R}^{d}) \big),$ with $2\le \ell<2(1+\gamma), $ that solves  the Cauchy problem with initial data $f_0\in L^{1}_{2+\epsilon}(\mathbb{R}^{d}).$ 
\end{itemize} 

\end{lem}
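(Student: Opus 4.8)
\textbf{Proof plan for Lemma~\ref{Lemma_ODE-extension2}.}

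The plan is to approximate $f_0 \in L^1_{2+\epsilon}(\mathbb{R}^d)$ by truncations that sit inside the set $\Omega$ on which the existence theory of the first stage applies, then pass to the limit using a stability estimate in the weak $L^1_2$-norm together with uniform-in-$j$ moment bounds coming from Theorem~\ref{propagation-generation}. First I would construct the sequence $\{f_0^j\}$: take $f_0^j(v) := c_j\,\mathbf{1}_{\{|v|\le j\}}(v)\,f_0(v)$, where $c_j$ is a normalization constant (tending to $1$) chosen so that the mass is exactly $m_0$; a small additive Maxwellian correction, vanishing in $L^1_{2+\epsilon}$ as $j\to\infty$, can be used to also pin the energy to $m_1-m_0$ and the momentum to zero, so that each $f_0^j$ satisfies the moment constraints defining $\Omega_\gamma$. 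Since $f_0^j$ is compactly supported it automatically lies in $L^1_{2+2\gamma}(\mathbb{R}^d)\cap\Omega$; and because $f_0\in L^1_{2+\epsilon}$ and the truncations are dominated by $f_0$, dominated convergence gives $f_0^j\to f_0$ strongly in $L^1_{2+\epsilon}(\mathbb{R}^d)$, which is part {\it i)}. Here I am invoking Remark~\ref{kc=eps}, which guarantees $\kc=(2+\epsilon)/2$ can be taken below $\min\{\kb,\kgamma\}$ so that the coercive moment machinery is available at the level of the $(1+\epsilon/2)^{\text{th}}$ moment.

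Next, for part {\it ii)}, the first stage of the proof of Theorem~\ref{CauchyProblem} produces for each $j$ a unique solution $f^j(v,t)\in\mathcal{C}([0,\infty);\Omega)\cap\mathcal{C}^1((0,\infty);L^1_2(\mathbb{R}^d))$ with $f^j(0)=f_0^j$. The key uniform bounds come from Theorem~\ref{propagation-generation}: the conserved quantities $m_0[f^j](t)=m_0$, $m_1[f^j](t)=m_1$ are independent of $j$, and the moment-propagation estimate \eqref{mom-prop} applied with $sk=1+\epsilon/2$ (so $\ell=2+\epsilon$) yields
\begin{equation*}
m_{1+\epsilon/2}[f^j](t)\le \max\{m_{1+\epsilon/2}[f_0^j]\,;\,\mathbb{E}_{c_\gamma(1+\epsilon/2)}\}\le C(f_0,\gamma,b),\qquad t\ge 0,
\end{equation*}
uniformly in $j$, since $m_{1+\epsilon/2}[f_0^j]\le m_{1+\epsilon/2}[f_0]+o(1)$ is bounded. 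Thus $\{f^j(t)\}$ is bounded in $L^1_{2+\epsilon}(\mathbb{R}^d)$ uniformly on $[0,\infty)$, and by the generation estimate \eqref{mom-gen} it is in fact bounded in $L^1_\ell$ for every $\ell<2(1+\gamma)$ on each interval $[t_0,\infty)$ with $t_0>0$.

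Then I would establish a Cauchy estimate for $\{f^j\}$ in $\mathcal{C}([0,T];L^1_2(\mathbb{R}^d))$. For $j,m$ write $h^{jm}=f^j-f^m$; subtracting the weak formulations and testing against $\text{sign}(h^{jm})\langle v\rangle^2$, the one-sided Lipschitz estimate of Lemma~\ref{one-Lips-prop3} (whose constant depends only on the uniform bound $K$ for the $(1+\gamma)$-moment, here replaced by the uniform $L^1_{2+\epsilon}$ bound together with an interpolation up to $L^1_{2+2\gamma}$ valid for $t\ge t_0>0$ by the generation estimate) gives
\begin{equation*}
\frac{d}{dt}\|h^{jm}(t)\|_{L^1_2}\le C\,\|h^{jm}(t)\|_{L^1_2},
\end{equation*}
whence Gr\"onwall yields $\|h^{jm}(t)\|_{L^1_2}\le e^{Ct}\|f_0^j-f_0^m\|_{L^1_2}\to 0$. (The $L^1_2$ distance of the initial data goes to zero because $\|f_0^j-f_0\|_{L^1_2}\le\|f_0^j-f_0\|_{L^1_{2+\epsilon}}\to0$.) Hence $f^j\to f$ in $\mathcal{C}([0,T];L^1_2(\mathbb{R}^d))$ for every $T$; the uniform higher-moment bounds upgrade this by interpolation to convergence in $\mathcal{C}([t_0,\infty);L^1_\ell)$ for any $2\le\ell<2(1+\gamma)$, $t_0>0$. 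Passing to the limit in the (mild/weak) formulation of \eqref{collision3} — using the H\"older continuity of $Q_\gamma$ from Lemma~\ref{Holder-estimate} on bounded sets of $L^1_{2+2\gamma}$ to control $Q_\gamma(f^j,f^j)\to Q_\gamma(f,f)$ — identifies $f$ as a solution with initial data $f_0$, lying in $\mathcal{C}([0,\infty);L^1_2(\mathbb{R}^d))\cap\mathcal{C}^1((0,\infty);L^1_\ell(\mathbb{R}^d))$ for $2\le\ell<2(1+\gamma)$; uniqueness in this class follows once more from the one-sided Lipschitz bracket.

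The main obstacle I anticipate is the loss of the a priori $L^1_{2+2\gamma}$ control at $t=0$: the H\"older estimate (property i)) and the one-sided Lipschitz estimate (property iii)) both used membership in $\Omega$, i.e. a uniform bound on the $(1+\gamma)$-moment, whereas the limit data only has $2+\epsilon$ moments with $\epsilon<\gamma$. The resolution is exactly the generation estimate \eqref{mom-gen}: for any $t_0>0$ the solutions $f^j(t)$ have $(1+\gamma)$-moments bounded uniformly in $j$ on $[t_0,\infty)$, so the stability argument runs cleanly on $[t_0,\infty)$, and on $[0,t_0]$ one relies only on the weaker $L^1_{2+\epsilon}$ bound plus the contraction estimate at the level of $L^1_2$, which needs a one-sided Lipschitz bound whose constant is controlled by $\|H\|_{L^1_{2+2\gamma}}$ — finite for positive times — together with a careful handling of the short-time layer by continuity in $L^1_2$ and the uniform smallness of $\|f_0^j-f_0\|_{L^1_2}$. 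Making the interplay between the $t_0\downarrow0$ limit and the $j\to\infty$ limit rigorous (in the spirit of the $\epsilon$-$\eta$ bookkeeping already used in Lemma~\ref{ODI Comparison}) is the technical heart of the argument.
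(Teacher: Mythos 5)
Your overall skeleton coincides with the paper's: truncate $f_0$ to get $\{f_0^j\}\subset\Omega$ converging strongly in $L^1_{2+\epsilon}$, solve for each $j$ by the first stage of Theorem~\ref{CauchyProblem}, extract uniform-in-$j$ bounds from the propagation and generation estimates of Theorem~\ref{propagation-generation}, run the one-sided Lipschitz bracket of Lemma~\ref{one-Lips-prop3} to get a Cauchy property in $L^1_2$, and pass to the limit. However, there is a genuine gap at exactly the point you flag as ``the technical heart'': you never resolve the short-time singularity, and the $t_0$-layer splitting you sketch would not close. On $[0,t_0]$ the one-sided Lipschitz constant is proportional to $\|f^j(t)+f^l(t)\|_{L^1_{2+2\gamma}}$ (it is $8K$ with $K$ a bound on the $(1+\gamma)$-moment), which is \emph{not} controlled by the uniform $L^1_{2+\epsilon}$ bound when $\epsilon<\gamma$; ``continuity in $L^1_2$ plus smallness of $\|f_0^j-f_0^m\|_{L^1_2}$'' cannot substitute for a Gr\"onwall coefficient on that layer, and letting $t_0\downarrow 0$ after $j\to\infty$ gives no control of the limit at $t=0$ unless one quantifies how fast the coefficient blows up.

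The paper's resolution, which is the one idea missing from your proposal, is quantitative: interpolate the needed $(2+\gamma)$-weighted norm between the \emph{uniformly propagated} $L^1_{2+\epsilon}$ norm (from \eqref{mom-prop} at order $1+\epsilon/2$) and the \emph{generated} $L^1_{2+\gamma+\epsilon}$ norm (from \eqref{mom-gen}, which blows up like $t^{-\frac{\gamma+\epsilon}{\gamma}}$), as in \eqref{interpol-free}. This yields a Gr\"onwall coefficient
\begin{equation*}
\mathcal{A}(t)\ \lesssim\ 1+t^{-\theta},\qquad \theta=\frac{\gamma^{2}-\epsilon^{2}}{\gamma^{2}}\in(0,1),
\end{equation*}
uniformly in $j$, and since $\theta<1$ this singularity is integrable at $t=0$; integrating the differential inequality \eqref{e1Cauchy} then gives the Cauchy bound \eqref{uni-j-bound} on all of $[0,T]$, with no splitting of the time interval and no interplay of limits to manage. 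Note also that the exponent $\frac{\gamma+\epsilon}{\gamma}$ of the generation bound alone exceeds one, so the generated moment by itself is \emph{not} time-integrable near zero: it is precisely the fractional power $\frac{\gamma-\epsilon}{\gamma}$ coming from interpolation against the propagated $2+\epsilon$ moment that brings the exponent below one. Without this step (or an equivalent quantitative mechanism), your argument does not yield a limit solution on $[0,\infty)$ with the stated initial data, only on $[t_0,\infty)$ for each fixed $t_0>0$. The remaining items in your plan (identification of the limit via the H\"older continuity of $Q_\gamma$, time regularity, conservation laws, uniqueness via the bracket) then proceed essentially as in the paper.
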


\begin{proof}
Let $0\leq f_0\in L^{1}_{2+\epsilon}(\mathbb{R}^{d})$, by density, there exits an approximating sequence   $\{f^{j}_0\} \subset \Omega\subset L^1_2(\R^d)$, with $K$ arbitrarely large,   such that  $\lim_{j\to\infty} f^{j}_0 =f_0$  in $L^{1}_{2+\epsilon}(\mathbb{R}^{d})$ satisfying property {\it i)}, such that  $\|f^{j}_0\|_{L^{1}_{2+\epsilon}(\mathbb{R}^{d}) }< C_{\epsilon}[f_0], $ uniformly in $j$.
Such approximations can be obtained, for instance, by taking $f^j_0 (v):=  f_0(v) \, \mathbb I_{v<j}\in L^1_\ell$ for all $\ell\le 2(1+\gamma)$, which implies $\{f^{j}_0\} \subset \Omega^j_\gamma$ and  clearly $\|f^{j}_0 - f_0\|_{L^{1}_{2+\epsilon}(\mathbb{R}^{d})}$  converges  strongly to $0$.

In particular,  for each $f^{j}_0\in \Omega_\gamma^j\subset L^1_2(\R^d)$,  the first stage of the Cauchy problem  Theorem~\ref{CauchyProblem} secures the existence and uniqueness of solutions, invoking   the tools from Lemmas \ref{Holder-estimate}, \ref{prop_subtan}, and \ref{one-Lips-prop3} which are  sufficient  conditions $\bf{i), ii), iii)}$ for Theorem~\ref{Theorem_ODE} to hold for each $j$ fixed, that is  there exists and $f^j(v,t)\in \C(0,\infty, \Omega^j_\gamma) \ \cap\ \mathcal{C}^{1}\big((0,\infty); L^{1}_{2}(\mathbb{R}^{d}) \big).$
 
Next, it is important to verify the uniformity of the $\{ f^j(\cdot,t)\}$  sequence   in the $ L^{1}_{2+\epsilon}(\mathbb{R}^{d}) \big)$ topology. 
To this end,  start from invoking the multi linearity structure of the collisional form  and   the  one-sided Liptchitz condition \eqref{one-sided} from  Lemma \ref{one-Lips-prop3},  which clearly remains valid  independently of the pair $f^{j_1}$ and $f^{j_2}$ are in different $\Omega_\gamma^{j_1}$ and  $\Omega_\gamma^{j_2}$, respectively, to obtain
the following estimate  for the difference on any pair $f^j$ and $f^l$ of the approximating sequence $\{ f^j(\cdot,t)\}$, 
\begin{align}\label{e1Cauchy}
\frac{d}{dt}\|f^{j}(t) - f^{l}(t)\|_{L^1_2}  &= \frac{d}{dt}\int_{\R^d} (f^j(t) - f^l(t)) \,\text{sign}(f^j(t) - f^l(t)) \langle v \rangle^2 dv  \nonumber\\
%
%
&= \int_{\R^d} \left( Q_\gamma(f^j+f^l,f^j-f^l )  \,\text{sign}(f^j - f^l) \right)(v,t) \langle v \rangle^2 dv
 \\
&\leq 4\|b\|_{L^{1}(\mathbb{S}^{d-1})}
\|f^{j}(t) \!+\! f^{l}(t)\|_{L^{1}_{2+2\gamma}}\|f^{j}(t) \!-\! f^{l}(t)\|_{L^{1}_{\gamma}} =: \mathcal A(t) \,\|f^{j}(t) \!-\! f^{l}(t)\|_{L^{1}_{\gamma}}, \nonumber
\end{align}
for   $\A(t):= 4\|b\|_{L^{1}(\mathbb{S}^{d-1})}
\|f^{j}(t) \!+\! f^{l}(t)\|_{L^{1}_{2+2\gamma}}$. 

Hence,  since $\|f^{j}(t) - f^{l}(t)\|_{L^1_2}= \|f^{j}(0) - f^{l}(0)\|_{L^1_2}\exp(-\int_0^t\A(\tau) d\tau)$,  then, the  uniformity of the  sequence $\{ f^j(\cdot,t)\}\in L^{1}_{2+\epsilon}(\mathbb{R}^{d}) $   follows from the uniform integrability of  $\A(t), $ independently of sequence's index $j$.

%
%
Therefore,   invoking the propagation of  moments and global in time  estimates \eqref{mom-prop} from Theorem~\ref{propagation-generation}   for each $j$-term in the sequence,  it ensures a boundedness of 
$\{ f^{j}(t)\}$  in the $L^{1}_{2+\epsilon}(\R^d)$  topology,  with  for all $m_0[f^j_0]=m_0[f_0]$ and  $m_1[f^j_0]=m_1[f_0]$, not only, for $c_\gamma(\kc)=\frac{\gamma}{\epsilon}$, 
\begin{align}\label{M1eps}
\sup_{t\geq0}\|f^{j}(t)\|_{L^{1}_{2+\epsilon}}
 \le \sup_{t\geq0} m_{1+\frac\epsilon 2}[f^{j}](t)
&= \max\{  m_{1+\frac\epsilon 2}[f^{j}](0),  \mathbb E^j_{c_\gamma(1+\epsilon)}  \}\nonumber\\
&=: M_{1+\frac\epsilon 2 }(m_0[f_0], m_1[f_0],  b(\hat u\cdot\sigma), \gamma, \epsilon), 
\end{align}
but also,  since the convergence $f^{j}_0\to f_0$ in the $L^{1}_{2+\epsilon}(\mathbb{R}^{d})$ also holds in $L^{1}_{2}(\mathbb{R}^{d})$, that means there is a constant $C_{\epsilon}[f^{j}_0]$ that only depends on $m_0[f_0]$ and 
$m_1[f_0]$,  converging to the mass and energy of $f_0$,   yielding the uniform bound estimate
\begin{equation}\label{uniform j}
\sup_{t\geq0}\|f^{j}(t)\|_{L^{1}_{2+\epsilon}} \leq \max\big\{M_{1+ \epsilon }[f_0],\, C_{\epsilon}[f_0], \gamma, \epsilon\big\}=: C_0(f_0, b(\hat u\cdot\sigma),\gamma,\epsilon), \ \ \text{ independent of} \ j.  
\end{equation}

Furthermore, for each fixed $j$,  invoking the generation  of  moments and its  priori global in time estimates Theorem \ref{propagation-generation},  each solution $f^j(t)\in \C(0,\infty, \Omega^j_\gamma) \ \cap\ \mathcal{C}^{1}\big((0,\infty); L^{1}_{2}(\mathbb{R}^{d}) \big)$ constructed from the initial data  $f^j_0$ es estimated by  invoking  generation global bounds \eqref{mom-gen},   with $k=1+\frac\gamma 2+\frac\epsilon 2$,  so $c^{-1}_\gamma(k)= \frac{\gamma +\epsilon}{\gamma} $ to obtain
\begin{equation}\label{gen j}
\|f^{j}(t)\|_{L^{1}_{(2+\gamma+\epsilon)}}\leq \tilde{C}_0\Big(1 + t^{- \frac{\gamma+\epsilon}{\gamma}}\Big)\,,\quad \text{for any}\ \  \epsilon\in (0,\gamma),
\end{equation}
with constant $ \tilde{C}_0 =\tilde{C}_0(f_0,b(\hat u\cdot\sigma), \gamma,\epsilon)$ independent of $j$, after combining   by the aforementioned estimates \eqref{e1Cauchy} with \eqref{M1eps} and \eqref{uniform j}.  
Therefore, by interpolation,
\begin{align}\label{interpol-free}
\|f^{j}(t)\|_{L^{1}_{2+\gamma}(\mathbb{R}^{d})} \ &\leq \ \|f^{j}(t)\|^{\frac{\gamma-\epsilon}{\gamma}}_{L^{1}_{(2+\gamma+\epsilon)}(\mathbb{R}^{d})}\|f^{j}(t)\|^{\frac{\epsilon}{\gamma}}_{L^{1}_{2+\epsilon}(\mathbb{R}^{d})}\\
&\le\  \tilde{C}_0^{\frac{\gamma-\epsilon}{\gamma}} \left(1 + t^{- \frac{\gamma+\epsilon}{\gamma}}\right)^{\frac{\gamma-\epsilon}{\gamma}}\, \|f^{j}(t)\|^{\frac{\epsilon}{\gamma}}_{L^{1}_{2+\epsilon}(\mathbb{R}^{d})} 
\leq \  \tilde C_1 \, \mathcal{O}\!\left(1 + t^{-\theta}\right), 
\quad \text{with }\theta = \frac{\gamma^{2}-\epsilon^{2}}{\gamma^{2}}  \in(0,1)\, , \nonumber 
\end{align}
and $\tilde C_1 =\tilde{C}_0^{\frac{\gamma-\epsilon}{\gamma}} C_0^{\frac{\epsilon}{\gamma}}$,   uniformly in $j$.

This last  estimate  is crucial: not only shows the control the term in \eqref{e1Cauchy}, but also allows the control of the 
$ \|f^{j}(t)\|_{L^{1}_{2+\gamma}(\mathbb{R}^{d})}$ by  $ \|f^{j}(t)\|^{\frac{\epsilon}{\gamma}}_{L^{1}_{2+\epsilon}(\mathbb{R}^{d})}$  proportional to  $\kappa(t) := \tilde{C}_0^{\frac{\gamma-\epsilon}{\gamma}}\mathcal{O} \left(1 + t^{- \frac{\gamma+\epsilon}{\gamma}}\right)^{\frac{\gamma-\epsilon}{\gamma}}$  in any time interval $t\in(0, T), \ T>0$, uniformly in the $j$-index.

Hence,  the first listed fact implies the control of $\A(t)$ in \eqref{e1Cauchy}, to obtain 
\begin{align}\label{mathcalA(t)-uni j}
\A(t)&:= 4\|b\|_{L^{1}(\mathbb{S}^{d-1})}
\|f^{j}(t) \!+\! f^{l}(t)\|_{L^{1}_{2+\gamma+\epsilon}} 
\le \tilde C_1 \mathcal{O}\!\left(1 \!+\! t^{-\theta}\right) \,  \|f^{j}(t)\|^{\frac{\epsilon}{\gamma}}_{L^{1}_{2+\epsilon}(\mathbb{R}^{d})}=
\tilde{C}_1\mathcal{O}\!\left(1 \!+\! t^{-{\frac{\gamma^{2}-\epsilon^{2}}{\gamma^{2}}}}\!\right)\!, 
\end{align}
   where the constants $\tilde C_0$ and $\tilde C_1$ depend only on $f_0, b(\hat u\cdot\sigma), \gamma,\epsilon$, independently of the sequence's index $j$.
   
Therefore, inserting  this last estimate in \eqref{e1Cauchy} 
leads to, after performing the time integration
\begin{equation}\label{uni-j-bound}
\|f^{j}(t) - f^{l}(t)\|_{L^{1}_{2}} \leq \|f^{j}_0 - f^{l}_0\|_{L^{1}_{2}} \exp\left(\tilde C_1\mathcal{O}\!\left(t +\tfrac{\gamma^{2}}{\epsilon^{2}}t^{1-\theta}\right) \right)\,,\quad \text{with}\  \theta \in(0,1), \ \  0< t\le T.
\end{equation}

As a consequence, the sequence  $\{f^{j}(t)\}$ is Cauchy, and so it  converges strongly to a unique $0\le f(v,t) \in L^{\infty}\big([0,T);L^{1}_{2}(\mathbb{R}^{d})\big)$,  for any  arbitrary $T>0$.  And thus, the second estimate from \eqref{interpol-free}, ensures that the limiting function $f(v,t)$ satisfies for any terminal $T>0,$
\begin{equation}\label{f-bound 0 }
\|f(t)\|_{L^{1}_{2+\gamma}(\mathbb{R}^{d})}\ \le\  \tilde{C}_1\mathcal{O} \!\left(1\! +\! t^{- \frac{\gamma+\epsilon}{\gamma}}\right)^{\frac{\gamma-\epsilon}{\gamma}} \|f(t)\|^{\frac{\epsilon}{\gamma}}_{L^{1}_{2+\epsilon}(\mathbb{R}^{d})} \quad \text{for any}\  \ \epsilon\in(0,\gamma), \ \ \ 0<t\le T.
 \end{equation}

In addition,  the convergence $f^j(v,t) \to f(v,t)$  in  $L^{\infty}\big([0,T);L^{1}_{2}(\mathbb{R}^{d})\big)$ holds. Invoking the interpolation estimates developed the H\"older's estimates     Lemma~\ref{Holder-estimate},  \eqref{holder 1} and \eqref{holder 2} in  $L^1_{\ell}(\mathbb{R}^d), \ell\ge2+\gamma$, 
 the  collision  operator  $Q_\gamma(f^j,f^j)$ is continuous operator in  from  ${L^1_2(\R^d)}$ into itself,  shown as follows.
%

 Indeed, recalling the previous argument from \eqref{e1Cauchy}, set
 \begin{equation}\label{oslim 1}
\frac{d}{dt}\left( f^j-f^l\right) (v,t) = \left(Q_\gamma(f^j,f^j)-Q_\gamma(f^l,f^l)\right)(v,t) 
=Q_\gamma(H,h)(v,t),  
\end{equation}
for $H=f^j+f^l$ and  $h=f^j-f^l.$
 Its $L^1_2(\R^d)$ norm is
   \begin{equation}\label{oslim 2}
 \int_{\mathbb{R}^{d}} Q^-_\gamma(H,h)(v,t) \sign(h)  \langle v\rangle^2 \text{d}v +  \frac{d}{dt} \| h(t) \|_{L^{1}_{2}}(t) = 
  \int_{\mathbb{R}^{d}} Q^+_\gamma(H,h)(v,t) \sign( h) \langle v\rangle^2 \text{d}v,
  \end{equation}
 noticing that both the gain and loss collisional forms associated  $\int_{\mathbb{R}^{d}} Q^\pm_\gamma(H,h)(v,t) \sign(h) \, 
\langle v\rangle^2 \text{d}v $ are similar due to the local conservation property $\langle v'_*\rangle^2 +\langle v'_*\rangle^2=\langle v_*\rangle^2 +\langle v_*\rangle^2$.  This difference of gain and loss operators are estimated by
 \begin{align}\label{oslim 4}
 \int_{\mathbb{R}^{d}} Q^\pm_\gamma(H,h)(v,t) \sign( h) & \langle v\rangle^2 \text{d}v  
 \le \int_{\mathbb{R}^{2d}\times\mathbb{S}^{d-1}} H_*\,h |u|^{\gamma}\left( \sign(h'_*) \langle v'_*\rangle^2 + \sign(h') \langle v'\rangle^2 \right) b(\hat{u}\cdot\sigma) \text{d}\sigma \text{d}v_*\text{d}v \nonumber\\
&\le C_\Phi 2^{\frac\gamma 2 +1}\|b\|_{L^{1}(\mathbb{S}^{d-1})} \int_{\mathbb{R}^{2d}} \left(\langle v\rangle^\gamma+ \langle v_*\rangle^\gamma\right) \left( \langle v_*\rangle^2  + \langle v\rangle^2 \right)
|h|\,  |H_*| \text{d}v_*\text{d}v \nonumber \\
&\le
C_\Phi 2^{\frac\gamma 2+2}\, \|b\|_{ L^{1}( \mathbb{S}^{d-1} ) }  \|H\|^{\frac{\gamma-\epsilon}{\gamma}}_{L^{1}_{2+\gamma+\epsilon}} \, \|h\|^{\frac{\epsilon}{\gamma}}_{L^{1}_{2}}\\
&\le C_\Phi 2^{\frac\gamma 2+4}\, \|b\|_{ L^{1}( \mathbb{S}^{d-1} ) } \,  \tilde{C}_1 \mathcal{O}\!\left(1\! + \!t^{-\theta}\right)^{\frac{\gamma-\epsilon}{\gamma}} \, \|h\|^{\frac{\epsilon}{\gamma}}_{L^{1}_{2}} =:
{\bf \it K}_0   \|h\|^{\frac{\epsilon}{\gamma}}_{L^1_2},\nonumber
 \end{align}
 %
%
for  any $t\in(0,T)$ for $T$ arbitrary,   ${\bf \it K}_0 \ge C_\Phi 2^{\frac\gamma 2+4} \|b\|_{ L^{1}( \mathbb{S}^{d-1} ) }   \tilde{C}_1 \mathcal{O}\!\left(1 \!+\! t^{-\theta}\right)^{\frac{\gamma-\epsilon}{\gamma}}$,  is uniform a time $t$  and index $j$ depending only on  $f_0, b(\hat u\cdot\sigma), \gamma,\epsilon$,  for any $\gamma\in (0,2]$,  $\epsilon\in(0,\gamma)$.

Therefore, \  $(f-f^j)(t)=\lim_{l\to\infty}(f^l-f^j)(t)$ in $L^1_2(\R^d)$, and consequently, 
  inserting estimates \eqref{oslim 4} into or   \eqref{oslim 1},  yields 
    \begin{equation}\label{oslim 5}
 - {\bf \it K}_0\,  \|f^j-f\|^{\frac{\epsilon}{\gamma}}_{L^1_2}(t)\ \le\   \|\frac{d}{dt} (f^j-f) - Q_\gamma(f^j+f, f^j-f)\|_{L^{1}_{2}}(t) \ <\  {\bf \it K}_0\,    \|f^j-f\|^{\frac{\epsilon}{\gamma}}_{L^1_2}(t),
  \end{equation}
uniformly in the sequence index $j$ and for all $t\in (0,T)$, for  arbitrary large $T>0$.

\medskip

This last estimate proves   $f(v, t)  \in L^1_{2}$ is  the unique solution of the Cauchy Boltzmann  problem with initial data $f_0\in L^1_{2+\epsilon}$ after the following  regularity in time is shown.  

In order to accomplish this task,   replace $f^l$ by the limiting $f$ into the 
  identity \eqref{oslim 1} to   set
   $\bar{\A}(t):= 4\|b\|_{L^{1}(\mathbb{S}^{d-1})}
\| f^{j}+f \|_{L^{1}_{2+\gamma}}(t)$, is the $j$-uniform upper bound  from  \eqref{mathcalA(t)-uni j}, yield  estimates  for $\bar{\A}(t) \le \tilde C_1(f_0, b(\hat u\cdot\sigma),\gamma,\epsilon)\mathcal{O}\!\left(1 \!+\! t^{-\theta}\right)$ 
with $\theta = \left({\gamma^{2}-\epsilon^{2}}\right)/{\gamma^{2}}\in(0,1)$ and $\epsilon\in (0,\gamma).$

\begin{equation}\label{oslim 2}
\|f- f^{j}\|_{L^{1}_{2}}(t) \leq \|f_0 - f^{j}_0\|_{L^{1}_{2}} \exp\left(\tilde C_1\left(t +\tfrac{\gamma^{2}}{\epsilon^{2}}t^{1-\theta}\right) \right)\,,\quad \text{with}\  \theta \in(0,1), \ \  t\geq0.
\end{equation}


To this end,  using estimates developed in  \eqref{oslim 4} for the continuity of the $L^{1}_{2}(\R^d)$-norm  of the collision operator, the differentiability in time follows by observing that, for any $t\ge 0.$
Then, since $f(v,t)$ is non-negative, both collisonal forms, gain  $Q_\gamma^{+}(f,f)$ and loss $Q_\gamma^{-}(f,f)$, are positive operators implying that  $\|Q_\gamma^{\pm}(f,f)\|_{L^{1}_{2}(\R^d)}(t) = \int_{\mathbb{R}^{d}}\!Q_\gamma^{\pm}(f,f)(vat)\langle v \rangle^{2}\text{d}v $, therefore
  \begin{equation*}
-\tilde{C}_1\mathcal{O}\left(1+\frac{1}{t^{\theta}}\right) \leq -\|Q_\gamma^{-}(f,f) \|_{L^{1}_{2}(\R^d)} \leq \frac{d}{dt} \| f(t) \|_{L^{1}_{2}} \leq \|Q_\gamma^{+}(f,f) \|_{L^{1}_{2}(\R^d)}\leq\tilde{C}_1\mathcal{O}\left(1+\frac{1}{t^{\theta}}\right).
\end{equation*}

Consequently, time integration in the interval $(s,t)$ leads to
\begin{equation}\label{oslim 7}
\Big| \| f(t) \|_{L^{1}_{2}} - \| f(s) \|_{L^{1}_{2}} \Big| \leq \frac{\tilde C_1}{1-\theta}\mathcal{O}\left(\left|t - s\right|^{1-\theta}\right)\,.
\end{equation}

Hence,  the limiting sequence $\lim_{j\to\infty} f^j $ in $ L^1_2(\R^d)$  makes the limiting $f(v,t)$ the unique solution of the Boltzmann equation in $\C^1((0,\infty);  L^1_2(\R^d))$ with initial data $f_0\in L^{1}_{2+\epsilon}(\R^d)$ for any $\epsilon\in (0,\gamma).$

 This strong convergence of the sequence   implies that the conservation laws hold as well.

Finally, using the generation of moments, estimates extends to  
\begin{equation*}
\Big|\frac{d}{dt}f(t)\Big| = \Big| Q_\gamma(f,f)(t)\Big| \in L^{1}_{\ell}(\mathbb{R}^{d})\, \ \text{for}\ \  \ell> 2, \  t>0.
\end{equation*}\\


The proof of Lemma~\ref{Lemma_ODE-extension2} is complete.
\end{proof}

Consequently, the proof of Theorem~\ref{CauchyProblem} is also complete.  \\
\end{proof}

\noindent
\begin{remark}\label{no-entropy need}
The proof does not need initial bounded entropy. However, if initially so, the following estimate holds
\begin{equation*}
\int_{\real^d}f(v,t)\ln\big( f(v,t) \big) dv \leq \int_{\real^d}f(v,0)\ln\big( f(v,0) \big) dv\,,\qquad t\geq0\,.
\end{equation*}
\end{remark}
\bigskip

\smallskip
\bigskip
\bigskip


  \section{Propagation and generation of exponential moments.}\label{exponential-tails}

One may view the study of  exponential moments associated to a unique solutions in $f(v,t)\in L^1_k(\R^d)$ for all $k\ge 2^+$, as   constructed in sections\ref{hardpotsection} and~\ref{existence-uniqueness}, 
 is the  property that states such solution remain in $L^1(\R^d)$ when exponentially weighted.  In fact, from probability methods  it is well establish that the sumability of polynomial expectations (referred as  moments in this manuscript)  is related to an exponential expectation with a rate related to a few initial moments. Such observation is closely related to the use of the Fourier transform  that transfer the topology of probability measure to the topology of  bounded functions with the $L^\infty(\R^d)$ norm in Fourier space. In the case of classical Boltzmann binary, elastic collisional flows, such proof is subtle, as it depends on the collision kernel, both potential and angular transition parts is in need to be verified.

In the case of  the Boltzmann equation for binary elastic particle interactions, with  hard sphere potentials, in three dimensions with a constant angular transitions,  the first were  developed in \cite{boby97}, who  showed that it was possible to prove that the summability of moments  was true allowing to conclude that the probability density  would  be  . These results were extended seven years later  in \cite{BGP04} for the hard sphere case for binary inelastic collisions with bounded angular transitions, and in \cite{GPV09}, to show, in the classical case of binary elastic interactions for variable hard potentials, the solutions are exponential weighted in $L^1(\R^d)$ if initially does as well, with a new slower rate need to be strictly positive and finite.
  
While much has been said about the summability of moments in the last twenty five years, the work in this manuscript clarified missing points in many of the previous approaches to this problem. 
 That means the extension to summability properties can now be extended to the case of hard potential rate $\gamma\in (0,2]$ with a transition angular functions just satisfying integrability, as  ran

 This moment summability properties arise from the  relation between the rate of an exponential form to its Taylor expansion in relations to and polynomial moments of the associated probability density that solves the initial valued problem to the Boltzmann flow. Hence we start by  formally writing
\begin{align}\label{exp-moment}
E_s[f](t, z(t)) &:=   \int_{\real^d} f(v,t) \, e^{z(t) \left\langle v \right\rangle^{2s}} dv
= \int_{\real^d} f(t, v) \sum_{k=0}^\infty  \langle v\rangle ^{2sk}\frac{z^k(t)}{k!} 
dv=\sum_{k=0}^\infty \frac{z^k(t)}{k!} \, m_{s k}[f](t), \quad\text{for} \ t>0,  
\end{align}
which we refer as to exponential moments with order $2s$, with $s\in (0,1]$,  $s=1$ corresponding to Gaussians,  and rate $z(t)    > 0$ where $z(t)$ may be constant.
There are two fundamental cases to study. The first one will be the propagation of exponential moments, for which the rate $ z(t) =z$ is  a constant function in time. The second one is the generation of exponential moments, for which the rate $ z(t) =zt$ for $0<t\le1$ and  $ z(t) =z$ is  a constant function for   $1\le t$.

Such moments   summability propagates whenever the initial data satisfies the same  summability property, and generates when starting with $2^+$-moments, and invoking the moments generation results from the previous two sections, the summability of all generated moments will be finite as well. 

 by showing that there is an associated geometric convergent series of moments, with a positive and finite convergence  radius resulting in integrability of the $L^1$- exponentially weighted norm, uniformly in time, whose  is the rate  of decay  depends of the data associated to the initial flow as much as from the moments bounds in the space $L^1_k(\mathbb R^n)$ studied in subsection~\ref{L^1-MM-1}.

In fact, these properties follows from  the constructed solutions  in  Section~\ref{existence-uniqueness}, which  propagate  or generate exponential moments depending on the integrability properties of the exponential moment of the initial data can be shown to be summable. 
  Propagation of initial data in this context  means given $f_0\in \Omega$, an order factor $0<s    \leq 1$ and a rate $0<z_0$ such that   $E_s[f]( 0,z_0) $ is finite, then the solution of the Cauchy problem for $f(t, v)$ posed in Theorem~\ref{CauchyProblem},    satisfies  $E_s[f](t,z) $ is finite for all time $t\ge0$    and $0<z\leq z_0.$ 
However, generation of data means the following stronger property:  given just  $f_0\in \Omega$,  (not necessarily   $E_s[f](0,z_0)$ finite for any order $2s$, $ 0<s\leq 1$, and $0<z_0:=z(0)$) the solution of the corresponding Cauchy Problem,Theorem~\ref{CauchyProblem},  satisfies  that $E_s[f](z(t),t) $ is finite for all time $t>0$, for some order factor $2s$ and rate $z(t)$ to be found depending on the data. 

The following Theorem proves the accuracy of these two statements. Their proofs consists in developing ordinary differential inequalities for  the quantities $E_s[f](z(t),t) $ valid for $t>0$, whose initial data is referred as  to $E_s[f](z(t),t)\mid_{t=0}$.  

The proof that follows shows  the generated exponential moment, associated to such  initial polynomial $k_{in}^{th}$-moment initial, has exponential order $\gamma   \in (0,2]$ with a rate $z$ depending on the evolution variable $t$, which may be view as the "instantaneous generation time" for a transition from a  $k_{in}^{th}$-polynomial moment to an exponential tail of order $\gamma$,  as well as depending on such initial data, as much as on the coercive factor, and,  naturally, on the potential rate $\gamma$.  This result, surely relies   on the  developed generation of moments estimates presented in   Theorem~\ref{propagation-generation},   \eqref{ODI-0} and 	 \eqref{moment-factors}.

The techniques developed in this manuscript, are an improvement with respect to the existing ones, in the sense that developed a unified framework for solutions of Boltzmann flows for binary elastic interactions, for hard potential and integrable  differential crossections  conforming the transition probabilities in such interacting particle flows.
 The specifics developed here  can be applied to any   solution $f(v,t) \in \Omega$ of the Boltzmann equation constructed by Theorem~\ref{CauchyProblem}, were inspired  in  the works on \cite{boby97, BGP04, GPV09, AG-JMPA08, mouhot06} for elastic and inelastic theories,  by framing the summability of moments into shifted power series (or Mittag-Leffler form), as well as on the work of \cite{AlonsoCGM} introducing the technique of study the summability of partial sums for hard potentials with integrable angular part of the transition probability,  and by non-integrable cross sections  in \cite{TAGP, PC-T18} for Maxwell type of interactions with or without the angular integrability requirement.
 
 These new techniques developed in this revision, not only are extended  to Lebesgue norms in Banach Spaces, but also revised several misses on some of the cited works, but improve the characterization of calculating rather explicit exponential rates by the coercive factors,  as  just  functions of the data. These techniques have been recently implemented  \cite{IG-P-C-mixtures} for a system of Boltzmann equation for disparate masses, and  most recently applied  to polyatomic gas models in  in  \cite{IG-P-C-poly-2020}.
 
 \smallskip
 
Starting from the   solution $f(v,t) \in \Omega$ of the Boltzmann equation constructed by Theorem~\ref{CauchyProblem},

 In theorem \ref{propagation-generation} we have shown that solutions of  the Boltzmann equation, with  finite initial mass and energy, satisfy propagation and generation property of  polynomial moments of any order or degree.  In this subsection we prove that, in fact,  exponential tails, up to the order of the potential $\gamma$, are generated and, even more, Gaussian tails are propagated \cite{BGP04, GPV09, AG-JMPA08}.  %

\begin{thm}[\bf{Exponential moments propagation}]\label{thm:exp-moment-propag}
Let $f\geq0$ be a solution to the Cauchy problem associated to the  Boltzmann flow, as posed in Theorem~\ref{CauchyProblem}, with a potential rate $\gamma\in(0,2]$ and integrable angular transition function $b(\hat u\cdot \sigma)\in L^1(\mathbb{S}^{d-1})$.  Assume, moreover, that the initial data satisfies for some $2s\in (0, 2]$
\begin{equation}\label{expotail-prop-data} 
E_s[f](z(t),t) \mid_{t=0} = E_s[f_0](0,{z_0}) = \int_{\real^d} f_0(v)  \, e^{z_0 \left\langle v\right\rangle^{2s}} dv =: M_P  < \infty, 
\end{equation}
Then,  for the coercive factor  satisfying the relation $A_{s\kc}\equiv c_{lb}^{-1} 2^{\gamma/2}\A_\kc$, for $c_{lb}$ from \eqref{c-lb},  there is an exponential  rate constant 
\begin{align}\label{exp-prop-rate} 
z(t):=z <\min \left\{
 z_0,\,  1, \,   
 \left(c_{lb}2^{\gamma/2} A_{s\kc}  (M_P-m_0[f_0]) \,
\frac{ \mathcal R(c_{lb}2^{\gamma/2} A_{s\kc}/\beta_{sk_0+\gamma/ 2})}{\kappa^P_{sk_0} }\, \right)^{2/\gamma} \right\},
\end{align} 
with a rate factor $ \mathcal R(\A_{s\kc}/\beta_{sk_0+\gamma/2})= \mathcal R(\delta_{sk_0+\gamma/2}) $  defined in \eqref{coercive rate max inverse},   satisfying  $\mathcal R(c_{lb}2^{\gamma/2} A_{s\kc}/\beta_{sk_0+\gamma/2})/\kappa^P_{sk_0} < 1$, is actually the maximum of  $  \mathcal R(\A_{s\kc}/\beta_{sk_0+\gamma/2})=\max\{  R_\C(\A_{s\kc}/\beta_{sk_0+\gamma/2}\; ; \;   R_{\mathbb E}(\A_{s\kc}/\beta_{sk_0+\gamma/2} \}$, with  $R_{\mathbb E}$ was defined in \eqref{RateE},  
as characterized by the choice of exponential rate $z$ from \eqref{choose z1} and  inequality \eqref{choose z1-2}, for any $0<\gamma\le 2$. 

 The coercive constant  ${A}_{s\kc} =  c_{\Phi} 2^{-\frac\gamma2 }( \|b\|_{L^1(\mathbb{S}^{d-1})}-\mu_{s\kc}) $,  the contractive factor $\mu_{\kc}$,   the integrability order of the angular transition function $b(\hat u\cdot\sigma)\in L^1(S^{d-1})$, the potential rate $\gamma$, also depending on the lower bound constant  $c_{lb}$ from \eqref{c-lb}, and positive and finite initial mass and positive energy $m_1[f_0]:=\|f_0\|^1_1$,  the initial exponential moment $M_P$ with order $s\in (0,1]$, rate $z_0$,  as well as the positive  constant $\kappa^P_{sk_0 } $ is  defined in \eqref{kappa-g},  depends only on the Cauchy problem data whose moment order $sk_0$, selected in \eqref{choose ko}, is explicitly characterized in  \eqref{choose ko-22} if the angular transition function $b\in L^p(\S^{d-1})$, for  $1\le p\le\infty$.  
 
 In addition, this constant $z$ defines the global in time estimate for the propagation of the following exponential moment of order $2s$ with rate $z$, 
%
 the infinite sum
\begin{align}\label{exp-moment-3} 
\sum_{k=0}^\infty \frac{z^k(t)}{k!} \, m_{s k}[f](t) &<
\sum_{k=\infty}^\infty \frac{z^k(t)}{k!} \, m_{s k}[f](t) =  
\int_{\real^d} f(v,t) \, e^{z(t) \left\langle v \right\rangle^{2s}} dv   =E_s[f](t,z) \nonumber\\
 & \leq  E_s[f_0](0,z_0) + m_0[f_0] =  M_P+ (e-1)m_0[f_0]   
\quad\text{for} \ t>0. 
\end{align}
\end{thm}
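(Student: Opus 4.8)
The plan is to run a partial-sum / induction argument for the exponential moment $E_s[f](t,z)$ following the Bobylev--Gamba--Panferov scheme, using the polynomial moment bounds from Theorem~\ref{propagation-generation} and the Angular Averaging Lemma~\ref{BGP-JSP04} (through the contracting factor $\mu_{sk}$) as the main ingredients. Introduce the partial sums $E_n(t):= \sum_{k=0}^n \tfrac{z^k}{k!} m_{sk}[f](t)$ and differentiate in time, inserting the moment ODI \eqref{momentineq}. The positive contribution, via Lemma~\ref{PovznerII}, produces a convolution-type quadratic term in the $m_{sk}$'s which---after applying the binomial estimate \eqref{binsum-2} and resumming---can be bounded by a product of two partial exponential sums of lower order, i.e.\ by something like $C\,z^{\gamma/2}\,E_n(t)\,\tilde E_n(t)$ where the extra $z^{\gamma/2}$ comes from the shifted moments $m_{sk-1+\gamma/2}$ carrying one extra $\langle v\rangle^\gamma$ factor. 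The negative contribution, via the coercive term $-\tfrac12 2^{-\gamma/2}c_\Phi(\|b\|_{L^1}-\mu_{sk})m_0[f_0]\,m_{sk+\gamma/2}[f]$ (or, using the Lower Bound Lemma~\ref{lblemma}, a genuinely $\langle v\rangle^\gamma$-coercive lower bound with constant $c_{lb}$), gives back a strictly negative multiple of $E_{n}$-type sums shifted up by $\gamma/2$; the point is that for $k$ large the factor $\|b\|_{L^1}-\mu_{sk}\to\|b\|_{L^1}$ is bounded below, so there is a fixed coercive gain $A_{s\kc}$.

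First I would fix the threshold moment order $sk_0$ large enough (as in \eqref{choose ko}, made explicit in \eqref{choose ko-22} when $b\in L^p$) so that for all $sk\ge sk_0$ the contracting factor $\mu_{sk}$ is small enough that $2^{sk+\gamma/2}\mu_{sk}$ is controlled and the coercive quotient $\delta_{sk}=A_{s\kc}/\beta_{sk}<1$; this is exactly the mechanism already used in Proposition (Loss to Gain Moment Rate Property) / Theorem~\ref{mom-coll-op}. Then I would split $E_s[f](t,z)$ into a finite "head" $\sum_{k\le k_0}$, which is uniformly bounded for $z\le z_0$ by the propagated polynomial moments $m_{sk}[f](t)\le\max\{m_{sk}[f_0],\mathbb E_{c_\gamma(sk)}\}$ from \eqref{mom-prop}, and a "tail" $\sum_{k>k_0}$, for which I set up a differential inequality. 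The key structural inequality to establish is of the form
\begin{equation*}
\frac{d}{dt}E^{\mathrm{tail}}(t) \;\le\; z^{\gamma/2}\Big(C_1\,E_s[f](t,z) - C_2\,E^{\mathrm{tail}}_{+\gamma/2}(t)\Big),
\end{equation*}
where $E^{\mathrm{tail}}_{+\gamma/2}$ dominates $E^{\mathrm{tail}}$ (since $\langle v\rangle^{\gamma}\ge1$) and $C_2$ is proportional to the coercive constant $c_{lb}2^{\gamma/2}A_{s\kc}m_0[f_0]$ while $C_1$ collects the head bound $\kappa^P_{sk_0}$. Choosing $z$ small as in \eqref{exp-prop-rate}---so that $z^{\gamma/2}C_1$ is beaten by $C_2$ with room to spare---forces the tail to stay below the head-dependent constant, and summing head plus tail yields the bound $M_P+(e-1)m_0[f_0]$ in \eqref{exp-moment-3}. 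The barrier/comparison step is the same autonomous-ODI comparison used in Lemma~\ref{ODI Comparison} and the proof of Theorem~\ref{propagation-generation}.

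The main obstacle, and the step deserving the most care, is the resummation of the quadratic positive term: after using \eqref{binsum-2} to bound $(\langle v\rangle^2+\langle v_*\rangle^2)^{sk}$ by the sum $\sum_j \binom{sk}{j}(\langle v\rangle^{2(sk-j)}\langle v_*\rangle^{2j}+\cdots)$, one must exchange the $k$-sum and $j$-sum, re-index, and recognize the result as a Cauchy product of two exponential-moment series (with appropriate shifts by $\gamma/2$ and by $1$), while keeping track of the combinatorial constants $2^{J_{2sk}}$ and of the fact that the binomial coefficients for non-integer $sk$ are only controlled by Lemma~\ref{BGP04-bino-sum}. One must ensure the shifted/lower-order exponential sums appearing on the right-hand side are themselves controlled by $E_s[f](t,z)$ uniformly (this is why only $0<2s\le2$ and the extra $z^{\gamma/2}$ smallness are needed), and that the "missing" low-$k$ terms are absorbed into the head constant $\kappa^P_{sk_0}$. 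A secondary technical point is justifying term-by-term differentiation of the infinite series, which I would handle by first proving the estimate for finite partial sums $E_n$ with bounds uniform in $n$, then passing to the limit by monotone convergence—exactly as the statement \eqref{exp-moment-3} already hints by writing the partial sum inequality before the closed series.
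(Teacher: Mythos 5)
Your overall scheme --- partial sums $E^{n}_{s}(t,z)$, the angular-averaging/Povzner bound, resummation of the quadratic gain term as a discrete Cauchy-type convolution (this is exactly Lemma~\ref{lem:convolution}), absorption by the coercive loss term, a cut at a large order $k_0$, smallness of $z$, and passage $n\to\infty$ by a uniform-in-$n$ bound --- is the same as the paper's. However, the mechanism by which you propose to close the absorption has a genuine gap. The smallness of $z$ does not act on the positive quadratic term: the shifted moments $m_{sk+\gamma/2}$ are \emph{larger} than $m_{sk}$, so resumming the gain term yields $\mu_{sk_0}\,E^{n}_{s}\,I^{n}_{s,\gamma}$ with no factor $z^{\gamma/2}$, and in your displayed inequality the prefactor $z^{\gamma/2}$ multiplies both $C_1E_s$ and $C_2E^{\mathrm{tail}}_{+\gamma/2}$, so shrinking $z$ cannot make one beat the other. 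In the paper the $z$-gain appears on the coercive side: restricting the integral to $\{\langle v\rangle\ge z^{-1/2}\}$ gives $I^{n}_{s,\gamma}\ge z^{-\gamma/2}\bigl(E^{n}_{s}-e\,m_0[f_0]\bigr)$, estimate \eqref{lower bound Isgamma}, which turns the loss term into a strong linear damping $-\tfrac{\mathcal{A}_{s\kc}}{2z^{\gamma/2}}\bigl(E^{n}_{s}-e\,m_0[f_0]\bigr)$; the smallness condition \eqref{choose z1} on $z$ then only balances the bounded head source $\kappa^{P}_{sk_0}\,\mathcal R$ against the equilibrium of that linear ODI so that the limit stays below $M_P-m_0[f_0]$.

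Moreover, the coefficient of the quadratic term after resummation is $2^{\frac{3\gamma}{2}-1}C_\Phi\,\mu_{sk_0}\,E^{n}_{s}(t,z)$, i.e.\ it involves the unknown $E^{n}_{s}$ itself, so there are no fixed constants $C_1,C_2$ to compare a priori. Choosing $k_0$ merely so that $\delta_{sk}<1$, as you propose, is not enough: $k_0$ must be taken so large that $\mu_{sk_0}$ beats the level $M_P+(e-1)m_0[f_0]$ relative to $A_{s\kc}$, condition \eqref{choose ko}, and this only helps on a time interval where one already knows $E^{n}_{s}\le M_P+(e-1)m_0[f_0]$. Closing this circle requires the continuity/maximal-time bootstrap with $t_n:=\sup\{t:\ E^{n}_{s}(\tau,z)\le M_P+(e-1)m_0[f_0]\ \forall\,\tau\in[0,t]\}$ as in \eqref{timen}, on $(0,t_n]$ linearizing the inequality, and then showing the bound remains strict so that $t_n=+\infty$ uniformly in $n$. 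Your appeal to the autonomous comparison Lemma~\ref{ODI Comparison}, which is designed for moment generation with a superlinear absorption of fixed sign, does not supply this step; without the bootstrap the argument as written does not close.
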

 \ \\ 

%
%

In addition, the following generation of exponential moments is proven in the following theorem.

\begin{thm}[\bf{Exponential moments generation}]\label{thm:exp-moment-gen}
Let $f\geq0$ be a solution to the homogeneous Boltzmann equation with potential $\gamma\in(0,2]$ and integrable angular transition $b$ and with finite initial mass and energy denoted by $m_1(t)=m_1(0)=\|f_0\|_1^1$.

Assume that the prescribed initial data satisfies the conditions to be in the solution set $\Omega$ defined in \eqref{SetOmega}, that is, for some fixed $k_{in}$-Lebesgue moment associated to the initial data $f_0(v)$, 
\begin{equation}\label{initial data exp gen}
m_{k_{in}}[f_0]    =\int_{\real^d} f_0(v) \, \left\langle v \right\rangle^{2k_{in}} \, dv =: M_G  < \infty, \qquad k_{in}\ge 1\,.
\end{equation}
Then,  for the coercive factor  satisfying the relation $A_{s\kc}\equiv c_{lb}^{-1} 2^{-\gamma/2}\A_\kc$, for $c_{lb}$ from \eqref{c-lb}, 
\begin{align} \label{exp-gene-rate}
0<z<  c_{lb}2^{\gamma/2-1}\min\left\{ A_{s\kc}  \,; \,  A_{s\kc} \frac{(M_G-m_0[f_0]) \, {\mathcal R}\left(\delta_{sk_0+\frac\gamma2}
\right) \, e^{-{c_{lb} 2^{\gamma/2}} A_{s\kc}}}{2 \, \kappa^G_{k_0} \left(  {\A}_{s\kc} + 1  \right)  
\left(1+\mathcal R(\delta_{sk_0+ \frac\gamma 2}) 
  {\bf \C}_{\text{\bf max}}^G \right) 
  }   \right\} ,
\end{align}
satisfying 
depending only on the coercive constant  ${A}_{s\kc} =2^{-\frac{\gamma}{2}} c_{\Phi}  ( \|b\|_{L^1(\mathbb{S}^{d-1})}-\mu_{s\kc})$,   the rate factor ${\mathcal R}(\delta_{sk_0+\frac\gamma2}):={\mathcal R}( c_{lb}2^{\gamma/2}A_{s\kc}/\beta_{sk_0+\gamma/2} )$ the  same rate in the propagation rate estimate \eqref{exp-prop-rate}defined in \eqref{coercive rate max}, with the moment order $sk_0$ selected in \eqref{choose ko-22-g} depending only on the Cauchy problem data.
    The positive constants $ {\bf\C}^G_{k_0}, \kappa^G_{sk_0}$ and $ {\bf \C}_{\text{\bf max}}^G$ are  
defined in  \eqref{CGk0}, \eqref{kappa-g}and   \eqref{CG max}, respectively
depends only on the Cauchy problem data.   

Hence,
\begin{equation*}
M^*_G := \sup_{t\geq0}\int_{\real^d} f(t)(v) \, \left\langle v \right\rangle^{2k_{in}} \, dv\,,
\end{equation*}
the exponential moment is generated and then bounded uniformly in time by the initial polynomial moment with estimate
\begin{equation}\label{exp generation p}
E_s[f](t, z \min\left\{t,1\right\})
\leq M^\star_G \,,\qquad  
\text{for all}\quad t \geq 0, \quad  \text{and} \quad  s\in\left(0,\frac\gamma2\right].
\end{equation}
In particular, when initial moment corresponds to the kinetic energy $k_{in}=1$ one has $M^{*}_G=M_G$ due to conservation of mass and energy.
\end{thm}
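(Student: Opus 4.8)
\textbf{Plan of proof for Theorem~\ref{thm:exp-moment-gen} (exponential moments generation).}

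The plan is to work directly with the Mittag--Leffler-type partial sums of the normalized polynomial moments and derive a closed differential inequality for them. First I would fix the order $2s$ with $s\in(0,\gamma/2]$ and introduce the normalized moments $\mathbf z_{sk}(t) := z^k(t)\, m_{sk}[f](t)/k!$ together with the partial sums $E_n(t) := \sum_{k=0}^{n} \mathbf z_{sk}(t)$, where $z(t) = z\min\{t,1\}$. Differentiating $E_n$ and inserting the moment ODI \eqref{momentineq} from Lemma~\ref{mom-lemma-I} (equivalently the sharper \eqref{sharp coercive}), the key structural point is that the gain contribution, after summation against $z^k/k!$ and use of the binomial estimate in Lemma~\ref{BGP04-bino-sum}, reorganizes into a convolution-type double sum $\sum_{k}\frac{z^k}{k!}\sum_j \binom{sk}{j}(\cdots)$ that is controlled by $E_n^2$ up to a factor $z^{\gamma/2}$ coming from the shifted moments $m_{sk-1+\gamma/2}$. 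Meanwhile the loss term contributes a genuinely negative piece; here I would invoke the lower bound on the collision frequency. Since the solution $f$ satisfies the hypotheses of the Lower Bound Lemma~\ref{lblemma} (conserved mass and energy, vanishing momentum, and a generated $2+\beta$ moment by Theorem~\ref{propagation-generation}), one gets $(f\ast|\cdot|^\gamma)(v)\ge c_{lb}\langle v\rangle^\gamma$, so the loss term dominates $c_{lb}\,2^{-\gamma/2}A_{s\kc}\, c_{lb}^{-1}\cdots$ times the shifted moment $m_{sk+\gamma/2}[f]$, which after normalization is bounded below by a multiple of $\mathbf z_{s(k+?)}$ via the interpolation $m_{sk}\le m_1^{c/(1+c)}m_{sk+\gamma/2}^{1/(1+c)}$ used already in \eqref{interpol-2}. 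This is exactly where the relation $A_{s\kc}\equiv c_{lb}^{-1}2^{-\gamma/2}\mathcal A_\kc$ in the statement is engineered to make the coercive constant appear cleanly.

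The heart of the argument is then the bootstrap: choosing the rate $z$ small enough — precisely the quantitative bound \eqref{exp-gene-rate}, in which the rate factor $\mathcal R(\delta_{sk_0+\gamma/2})$ and the data constants $\mathbf C^G_{k_0}$, $\kappa^G_{sk_0}$, $\mathbf C^G_{\mathbf{max}}$ enter — one arranges that along the evolution $E_n(t)\le M^\star_G$ for all $n$ and all $t$, by a continuity/induction-on-$n$ argument. Concretely I would show that if $E_n(t)\le M^\star_G$ on some maximal interval, the differential inequality for $E_n$ reads $E_n'(t)\le \beta_{sk_0+\gamma/2}\,z^{\gamma/2}\,(\text{bounded data terms}) + \mathcal A\, z^{\gamma/2} E_n(t)^2 - \mathcal A'\, E_n(t)$ for suitable positive constants, and that the choice of $z$ forces the right-hand side to be $\le 0$ as soon as $E_n$ reaches $M^\star_G$; hence $E_n$ cannot cross that barrier. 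The finitely many low-order moments $k\le k_0$ are handled separately using the generation estimate \eqref{mom-gen} (the factor $z(t)=z\min\{t,1\}$ exactly compensates the $t^{-1/c}$ singularity there, since $s\le\gamma/2$ makes the relevant power of $z t$ nonnegative), and this is what produces $\kappa^G_{k_0}$, $\mathbf C^G_{k_0}$. Passing $n\to\infty$ by monotone convergence yields $E_s[f](t,z\min\{t,1\})\le M^\star_G$, which is \eqref{exp generation p}; the final remark on $k_{in}=1$ is immediate from conservation of mass and energy in \eqref{collision3.2}.

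The step I expect to be the main obstacle is the careful accounting in the summation of the gain term: one must split the double binomial sum $\sum_{k}\frac{z^k}{k!}\sum_{j=1}^{J_{2sk}}\binom{2sk}{j}(\langle v\rangle^{2j}\langle v_*\rangle^{2(sk-j)}+\cdots)$ into a "bulk'' part reconstructing $E_n\cdot E_n$ (with the crucial gain of a power of $z^{\gamma/2}$ from the fact that the summation index is shifted and $\Phi(|u|)\le \tilde C_\gamma(\langle v\rangle^\gamma+\langle v_*\rangle^\gamma)$) and small "boundary'' terms near $j\approx sk/2$ and $j\approx 1$ that must be absorbed into the negative term or into the data constants. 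Getting the constants $\beta_{sk}$, $\mu_{sk}$ and the threshold $sk_0$ (through \eqref{choose ko-22-g}) to line up so that $\mathcal R(\delta_{sk_0+\gamma/2})\,\mathbf C^G_{\mathbf{max}}$ is the operative quantity, and verifying that the denominator in \eqref{exp-gene-rate} is the right one, is delicate bookkeeping rather than deep — but it is where almost all the work, and all the sharpening relative to \cite{GPV09,AlonsoCGM}, resides. A secondary technical point is justifying the termwise differentiation of the infinite series, which is handled by working with finite partial sums throughout and only passing to the limit at the very end via monotone convergence.
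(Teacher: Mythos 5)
Your overall architecture (partial sums $E^n_s$, splitting the sum at a threshold $k_0$, the discrete convolution control of the gain, the Lower Bound Lemma for the loss, a continuity/bootstrap barrier, then $n\to\infty$) matches the paper's proof, but the master differential inequality you propose, $E_n'\le \beta\,z^{\gamma/2}(\text{data})+\mathcal A\,z^{\gamma/2}E_n^2-\mathcal A'E_n$, cannot be derived, and this is a genuine gap rather than bookkeeping. After the discrete convolution step (Lemma~\ref{lem:convolution}) the gain is bounded by $\mu_{sk_0}\,C\,E^n_s\,I^n_{s,\gamma}$, where $I^n_{s,\gamma}$ is the \emph{shifted}-moment partial sum; since $m_{sk+\gamma/2}\ge m_{sk}$ one always has $I^n_{s,\gamma}\ge E^n_s$, so no upper bound of the form $I^n_{s,\gamma}\lesssim z^{\gamma/2}E^n_s$ is available for small $z$ — the factor $z^{\gamma/2}$ you attach to the quadratic term does not exist. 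Likewise the low-order source is not $O(z^{\gamma/2})$: its $k=0$ contribution ($\mathcal A_{s\kc}\,m_{\gamma/2}[f]\le \mathcal A_{s\kc}\,m_1[f_0]$) carries no power of $z$ at all, so your barrier mechanism ``the right-hand side becomes $\le 0$ at $E_n=M^\star_G$ for $z$ small'' does not close: with an order-one source and only a constant-coefficient damping $-\mathcal A'E_n$, smallness of $z$ buys you nothing at the barrier.

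The actual mechanisms in the paper are different at exactly these two points. First, the quadratic gain is absorbed into the loss not by a power of $z$ but by the decay of the contraction factor: $k_0$ is chosen (see \eqref{choose ko-1}, \eqref{choose ko-22-g}) so large that $2^{3\gamma/2-1}C_\Phi\,\mu_{sk_0}\,M_G\le \mathcal A_{s\kc}/4$, leaving a net term $-\tfrac{\mathcal A_{s\kc}}{4}I^n_{s,\gamma}$ inside the bootstrap region $E^n_s\le M_G$ (you mention the threshold $sk_0$ but your displayed ODI does not use this absorption). Second — and this is the ingredient your plan is missing — the leftover loss is converted into a \emph{time-singular} damping via the re-indexing inequality $I^n_{s,\gamma}(z(t),t)\ge \tfrac{1}{z(t)}\big(E^n_s(z(t),t)-m_0[f_0]\big)$ (estimate \eqref{K3-1}, valid precisely because $s\le\gamma/2$), so that with $z(t)=zt$ the inequality becomes $X'\le K-\tfrac{\mathcal A_{s\kc}}{4zt}X$ for $X=E^n_s-m_0[f_0]$; integrating this singular linear ODI gives $X<\tfrac{4zt}{\mathcal A_{s\kc}}K$ (see \eqref{ODI tail gen7-11}–\eqref{upper ODE gen1}), and it is only through this \emph{time-integrated} bound that the smallness of $z$ (the second entry of the min in \eqref{exp-gene-rate}, cf. \eqref{choose z gen-final}) keeps $E^n_s<M_G$. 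Two further items you should make explicit: differentiating the time-dependent rate produces a term $z'(t)\,I^{n-1}_{s,\gamma}$ that must be absorbed by the loss, which is what forces the first constraint $z\le \mathcal A_{s\kc}/2$ in \eqref{exp-gene-rate}; and the extension from $t\in(0,1]$ to all $t\ge0$ is done in the paper by time-invariance, redoing the argument on shifted unit intervals with $M^\star_G$ in place of $M_G$, rather than by a single global barrier.
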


\begin{remark}\label{prop-gen-expo-rates}
One should expect that   for same model parameters and same initial data the  exponential rate $z$ are smaller for the  generated exponential high energy tails than for the propagated ones. This effect may be interpreted as the generating $L^1$-exponential bounds from above (or coarser)  than the propagating ones. 
\end{remark}

\medskip

The method of proof for these two theorems presented in this manuscript is an improvement from the one  inspired in the ideas \cite{AlonsoCGM} by the authors with collaborators by estimating the convergence of partial sums associated to \eqref{exp-moment}, that followed from the original work in \cite{boby97}  and \cite{GPV09} using estimates to estimate the radius of convergence of the geometric series on moments \eqref{exp-moment}.  

The strategy consists in generating Ordinary Differential Inequalities for partial sum sequences associated to the moment series defined in \eqref{exp-moment}, denoted by $ E^n_s(t,z)$,    which admits  global in time solutions,  uniform  in the parameter $n$,  and whose  limit  in $n$  yields the  generation and propagation of  exponential moment associated to the solution of the Boltzmann flow, characterizing their corresponding exponential order $2s$ and rates $z$, both   depend on the Cauchy problem data, on the global estimates  upper bound estimates obtained for the  $sk^{th}$-moments  Theorem~\ref{propagation-generation}, as well as  a moments lower bound  given by  the general property  proven in Lemma~\ref{lblemma} for showing that convolutions of  
probability distribution function, with bounded  mass, energy, a arbitrary over the energy moment,  with the potential rate function $\Phi_\gamma(|u|)$ satisfying (\ref{pot-Phi_1}-\ref{pot-Phi_2})  are controlled from below by a constant proportional to  the Lebesgue bracket $\langle v\rangle^\gamma$.

\begin{defn} For any function a function $f(v,t)\in \Omega$,  for $t$  fixed and  set $\Omega$ from \eqref{SetOmega} the solution set for the Cauchy problem shown in Theorem~\ref{CauchyProblem},  the partial $n^{th}$-sum the exponential moment  $ E_s[f](z(t), t)$ defined in \eqref{exp-moment}  of order $2s$  with $s\in(0,1)$ and rate $z>0$, is given by the shortest notation, 
\begin{equation}\label{expotail5}
E_{s}^n(t,z):= E^n _s[f](z(t),t)  :=   \sum_{k=0}^n m_{sk}[f](t) \frac{z^k(t)}{k!}\, , \qquad \text{ for all} \ \, n \in \mathbb N\,.
\end{equation}
\end{defn}

In order to derive an ODI, in time $t$,  for the partial exponential sums  $ E^n_{z,s}[f](t)$ while avoiding the use of interpolation techniques that we used in the derivation  of  moments estimates  and bounds from \eqref{mom-prop} and \eqref{mom-gen} for establishing a-priori estimates and the subsequent existence and uniqueness of the Boltzmann flow solution to the Cauchy problem from Theorem~\ref{CauchyProblem}.

The first estimate consists in a sharper form of binary sums  obtained from Lemma~\ref{BGP04-bino-sum} 
inequalities  \eqref{binsum} and \eqref{binsum-2}, which enable the control of partial $n^{th}$-sums  associated to the $k^{th}$-moments collisional form  $Q(f,f)(v,t)$  above,  and  the second one  is to invoke  Lemma~\ref{lblemma} to obtain a 
lower bound to the contribution  from the negative part from the moment of the binary collision operator by controlling  averaged
the potential part of the transition probability with respect to the solution density.

Thus, the  first  new improvement needs upper estimates for the $sk$-moment of the binary collisional form   $Q^+(f,f)(v,t)$ that arises from  the Corollary of the Angular Averaging Lemma, by  modifying Lemma~\ref{PovznerII} on the angular average of moments estimates  obtained in  previous sections showing global in time  propagation and generation of polynomial moments, as stated in Theorem~\ref{propagation-generation}.

Thus, the  first  new improvement needs upper estimates for the $sk$-moment of the binary collisional form   $Q^+(f,f)(v,t)$ that arises from the following second version of the Angular Averaging Lemma $\mathrm{II}$, by  invoking the Lower Bound Lemma~\ref{lblemma}, and
modifying Lemma~\ref{PovznerIII} on the angular average of moments estimates  obtained in  previous sections showing global in time  propagation and generation of polynomial moments, as stated in Theorem~\ref{propagation-generation}.   

These new techniques contain two novel components. 
The first one is to invoke to use the Lower Bound Lemma~\ref{lblemma} giving control from below the collision frequency associated to the potential function $\Phi(|u|)$ introduced in \eqref{pot-Phi_1}, instead of simply using the pointwise estimates \eqref{pot-Phi_2}. 
That means the $sk^{th}$-moment of the collisional integral estimates, as calculated in Theorem~\ref{mom-coll-op}, bounded by the $\mathcal L(m_{sk})[f]$ form defined in \eqref{Qmoment-0} is now written as
\begin{align}\label{Qmoment-0-expo}
m_{sk}[Q_\gamma(f,f)]  \le \mathcal{L}_{c_{lb}}(m_{sk}[f]):= 
\B_{  sk} \!-\!  \frac{\A_{{  s}\kc}}2 {m^{-c_\gamma(sk+\frac\gamma2) }_{1}[f_0]}  m^{1+c_\gamma(sk+\frac\gamma2)}_{ sk}[f], 
\  \text{for any } \ sk\ge s\kc>1, 
  \end{align}
with the  new  coercive constant $\A_{s\kc}:=c_{lb} 2^{\frac\gamma2} A_\kc$.

The second novel component consists in developing, a quantitative lower bound for the exponential moments  rates and orders, much as developed in  \cite{AlonsoCGM}, \cite{boby97}  and \cite{GPV09},  and more recently developed in  \cite{Alonso-Gamba-Tran}, \cite{IG-P-C-mixtures},  \cite{IG-P-C-poly-2020};
and  under development  in
\cite{DLC-IG-PC-mixturesL1} and \cite{Ampatzoglou_G_P_T-23}.
The following form of Angular Averaging Lemma an upper forum for the collisional  weight form associated to the weak formulation when testing the collision operator  by the  Lebesgue weight function $\langle v\rangle^{sk}$, resulting in a  the weighted angular average of the transition probability form that follows.

\begin{lem}[\bf{Angular Averaging Lemma $\mathrm{II}$}]\label{PovznerIII}
The Boltzmann weight function \eqref{weightuv2} associated to the weak form \eqref{bina-weak2} for  $\varphi(v)=\langle v\rangle ^{sk}$  for $sk\geq1$,  $s\in(0,1]$ and $k\in\mathbb{N}$,  now denoted by $G_{sk}(v_*,v)$, is estimated by 
\begin{align}\label{Pov2}
\begin{split}
G_{sk}(v_*,v) &= \int_{S^{d-1}}\Big( \langle v'\rangle^{2sk}   +   \langle v'_*\rangle^{2sk}  - \langle v \rangle^{2sk} -  \langle v_*\rangle^{2sk}\Big) B(|u|, \hat u\cdot\sigma)\, d\sigma\\
&\le \Phi(|u|)\Big( \mu_{sk}\big( \langle v \rangle^{2s} + \langle v_{*}\rangle^{2s} \big)^{k}  - \langle v\rangle^{2sk} -  \langle v_*\rangle^{2sk} \Big)\\
&\hspace{-1cm}\leq \Phi(|u|)\bigg( \mu_{sk}\sum^{k_j}_{j=1} {k \choose j} \langle v \rangle^{2s(k-j)}\langle v_{*}\rangle^{2sj}  - (\|b\|_{L^i(\mathbb{S}^{d-1}}- \mu_{s\kc})\big( \langle v\rangle^{2 sk} +  \langle v_*\rangle^{2 sk}\big) \bigg)\,, 
\end{split}
\end{align}
with $k_j = \lfloor{k + 1}\rfloor$, the contractive parameter $\mu_{sk}$  from (\ref{povzner0}, \ref{povzner0.1})  derived in the proof of the first  Angular Averaging Lemma~\ref{BGP-JSP04}.
\end{lem}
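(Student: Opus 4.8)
\textbf{Proof plan for the Angular Averaging Lemma $\mathrm{II}$.}

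The plan is to follow the same three-step structure used in Lemma~\ref{PovznerII}, but replacing the factor $\|b\|_{L^1(\mathbb{S}^{d-1})}$ in the negative contribution with the smaller constant $\|b\|_{L^1(\mathbb{S}^{d-1})}-\mu_{s\kc}$, which is what the coercive mechanism will actually deliver, and keeping the binomial expansion in terms of the weights $\langle v\rangle^{2s}$ and $\langle v_*\rangle^{2s}$ rather than $\langle v\rangle^{2sk-2}$-type terms, so that later the sum can be resummed into an exponential partial sum. First I would write $G_{sk}(v_*,v)$ as the difference of its positive part $G^+_{sk}$ and the loss part, exactly as in \eqref{weightuv2}. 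For the positive part I apply Theorem~\ref{BGP-JSP04} (the first Angular Averaging Lemma) with $\psi_r(x)=\langle x\rangle^{2r}$, $r=sk$, via the energy identity Lemma~\ref{PovznerI}: this yields
\begin{equation*}
G^+_{sk}(v_*,v)\le \mu_{sk}\,\Phi(|u|)\,\bigl(\langle v\rangle^{2}+\langle v_*\rangle^{2}\bigr)^{sk}\,.
\end{equation*}
The elementary inequality $(a+b)^{s}\le a^{s}+b^{s}$ for $s\in(0,1]$ applied with $a=\langle v\rangle^2$, $b=\langle v_*\rangle^2$ shows $\bigl(\langle v\rangle^{2}+\langle v_*\rangle^{2}\bigr)^{s}\le \langle v\rangle^{2s}+\langle v_*\rangle^{2s}$, hence $\bigl(\langle v\rangle^{2}+\langle v_*\rangle^{2}\bigr)^{sk}\le\bigl(\langle v\rangle^{2s}+\langle v_*\rangle^{2s}\bigr)^{k}$, which produces the middle line of \eqref{Pov2}.

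For the loss part I use that $\int_{\mathbb{S}^{d-1}}B(|u|,\hat u\cdot\sigma)\,d\sigma=\Phi(|u|)\|b\|_{L^1(\mathbb{S}^{d-1})}$, and then deliberately keep only the fraction $\|b\|_{L^1(\mathbb{S}^{d-1})}-\mu_{s\kc}$ of it in the estimate, absorbing the remaining $\mu_{s\kc}$ portion into the positive contribution — this is the standard ``split the collision frequency'' bookkeeping, legitimate because $\mu_{s\kc}<\|b\|_{L^1(\mathbb{S}^{d-1})}$ by Theorem~\ref{BGP-JSP04}, and monotone in the moment order so $\mu_{s\kc}\ge\mu_{sk}$ for $sk\ge s\kc$. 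Combining the two pieces and then invoking Lemma~\ref{BGP04-bino-sum} (Newton's generalized binomial upper estimate) with $x=\langle v\rangle^{2s}$, $y=\langle v_*\rangle^{2s}$ and exponent $k\in\mathbb{N}$, so that $(x+y)^k - x^k - y^k \le \sum_{j=1}^{k_j}\binom{k}{j}(x^{k-j}y^j + x^j y^{k-j})$ with $k_j=\lfloor k/2\rfloor$ — here since $k$ is a positive integer the binomial theorem gives the cleaner form $(x+y)^k=\sum_{j=0}^{k}\binom{k}{j}x^{k-j}y^j$, and subtracting the $j=0$ and $j=k$ endpoints yields the displayed cross-term sum — produces the last line of \eqref{Pov2}. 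The bound $k_j=\lfloor k+1\rfloor$ stated in the lemma is just the (loose) claim that the summation index does not exceed $k$, which is automatic.

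The main obstacle, such as it is, is purely bookkeeping: making sure the factor in front of the negative terms is exactly $\|b\|_{L^1(\mathbb{S}^{d-1})}-\mu_{s\kc}$ (with $s\kc$, not $sk$, in the subscript) so that it is uniform in $k$ and coincides with the coercive constant $A_{s\kc}$ appearing in \eqref{Akc} and \eqref{Qmoment-0-expo}, while simultaneously verifying that the leftover $\mu_{s\kc}-$contribution genuinely has the right sign to be discarded from the positive side — i.e. that $\mu_{sk}\le\mu_{s\kc}$, which is the strict monotonicity $\mu_r\searrow$ established at the end of the proof of Theorem~\ref{BGP-JSP04}. Once that is in place the three displayed inequalities follow by the chain: (i) first Angular Averaging Lemma; (ii) subadditivity of $x\mapsto x^{s}$; (iii) integer binomial theorem and endpoint subtraction. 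No interpolation, no use of $f$ solving the Boltzmann equation, and no appeal to Lemma~\ref{lblemma} is needed \emph{for this lemma itself} — the lower bound lemma enters only at the next stage, when this $G_{sk}$ estimate is integrated against $f(v)f(v_*)$ to form the ODI for the partial exponential sums $E^n_s(t,z)$.
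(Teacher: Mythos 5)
Your proposal is correct and follows essentially the same route as the paper's (very terse) proof: apply the first Angular Averaging Lemma with $r=sk$, use the elementary estimate $\big(\langle v\rangle^{2}+\langle v_*\rangle^{2}\big)^{sk}\leq\big(\langle v\rangle^{2s}+\langle v_*\rangle^{2s}\big)^{k}$ for $s\in(0,1]$, and then expand $(x+y)^{k}$ binomially with $x=\langle v\rangle^{2s}$, $y=\langle v_*\rangle^{2s}$, absorbing the $j=0$ and $j=k$ endpoint terms into the loss so that the negative contribution carries the factor $\|b\|_{L^{1}(\mathbb{S}^{d-1})}-\mu_{s\kc}$ via the monotonicity $\mu_{sk}\leq\mu_{s\kc}$. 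Your additional bookkeeping of the loss term and of the index bound $k_j$ just makes explicit what the paper leaves implicit (and delegates to \cite{BGP04}), so no gap remains.
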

\begin{proof}   
Applying the first  Angular Averaging Lemma \ref{BGP-JSP04} with $j=sk$, and invoking  the elementary  estimate
\begin{equation*}
\big(\langle v \rangle^{2}+\langle v_{*}\rangle^{2}\big)^{sk} \leq \big(\langle v\rangle^{2s} + \langle v_{*}\rangle^{2s}\big)^{k}\,,\quad s\in(0,1]\,,
\end{equation*}
and using the exact combinatorial   binomial expansion \eqref{binsum}, applied with $x=\langle v \rangle^{2s}$ and $y=\langle v_{*}\rangle^{2s}$,  yields estimate \eqref{Pov2} for the weight function $G_{sk}(v_*,v)$ associated to the weak form of the Boltzmann flow. A detailed proof of this statement can also be found in \cite{BGP04}, Lemma~2. 
\end{proof}

\nn In addition, a straight forward consequence of Lemma \ref{PovznerII}, the upper and lower inequalities  from \eqref{pot-Phi_2} 
yield  a sharper estimates to the collision operator's moments,  than the ones obtained in \eqref{momentineq},  namely
\begin{align}\label{binom-1}
\tfrac{\partial}{\partial t}m_{sk}&=
\int_{\real^{d}}Q(f,f)(t, v)\langle v\rangle^{2sk}dv \nonumber \\
&\leq  2^{\frac{3\gamma}2 -1}   C_\Phi \mu_{sk}\sum^{k_j}_{j=1} {k \choose j}m_{sk-j}[f]m_{j+\gamma/2}[f]
- ( \|b\|_{L^1(\mathbb{S}^{d-1})} - \mu_{\kc}) \int_{\real^{d}}f(v)\langle v \rangle^{2sk} (f\ast\Phi)(v) dv \\
&:= 2^{\frac{3\gamma}2-1}   C_\Phi\mu_{sk}\,S_{sk} - \mathcal{A}_{s\kc} \, m_{sk+\gamma/2}[f]\,. \nonumber
\end{align}
 corresponding to the $sk^{th}$-moments Ordinary Differential  Inequality, where now $S_{sk} $ is a bilinear sum expressing the discrete convolutional structure 
 \begin{align}\label{binom-1-2}
S_{sk} [f]:= \sum^{k_j}_{j=1} {k \choose j}m_{s(k-j)}[f]m_{sj+\gamma/2}[f]
\end{align}
and   the  coercive constant  that is now is given by 
\begin{align}\label{new coercive}
\mathcal{A}_{s\kc}=c_{lb}c_{\Phi}( \|b\|_{L^1(\mathbb{S}^{d-1})}-\mu_{s\kc})m_0[f_0],  \qquad {with} \ \ b(\hat u\cdot \sigma)\in L^1(\mathbb{S}^{d-1}),
\end{align}
with the lower bound $c_{lb}$ from  \eqref{c-lb}  explicitly calculated in the  Lower Bound Lemma~\ref{lblemma}.

\begin{remark}\label{new-coercive-2} 
This new  ODI  resulting from the $sk^{th}$-moment estimates, using the operator $\mathcal L_{c_{lb}}$ from \eqref{Qmoment-0-expo}, is sharper  than the one  calculated for  Theorem~\ref{Coll-mom-estimates} in the proof of the both propagation and generation estimates. It is important to note that   the coercive constant $\mathcal{A}_{\kc}$, just defined in \eqref{new coercive},  is different from the  calculated constant  $A_{s\kc}:=c_{\Phi } 2^{-\frac{\gamma}2} \,( \|b\|_{L^1(\mathbb{S}^{d-1})}-\mu_{s\kc}) \,{m_0[f_0]}= 
\left(c_{lb} 2^{\frac{\gamma}2}\right)^{-1}  \mathcal{A}_{s\kc} $,  coercive constant \eqref{Akc}  (as defined  in  \eqref{moment-factors}  in the statement of Theorem~\ref{mom-coll-op}).  The main difference  is that    $A_{s\kc}$ depends only on their   first conserved moments of the initial data with respect to $k$, while  $\mathcal{A}_{\kc}$ also depends, inversely proportional to the upper bound $B$ of a $\beta^{th}$ classical moment introduced in the Lower Bound Lemma~\ref{lblemma}, with $\beta>1$, expressed explicitly in the the conformation of the the $c_{lb}$ constant.  This observation is relevant as the  exponential rate estimates depend on the $sk^{th}$-moment bounds, inversely with respect to $\mathcal{A}_{\kc}$. 
%
\end{remark}

\medskip

Therefore, it is necessary to briefly revisit the impact of the new coercive form $\A_{s\kc}$ in the development of precise global moments bound obtain by either propagation and generation of the initial data. 
As in the case for Theorem~\ref{mom-coll-op} for the collision moments estimates and Theorem~\ref{propagation-generation} recall the form   $c=c_\gamma(sk)=\gamma/(2(sk-1)) <1$ for $sk>1$, and  define the moment orders $\kmin$ and $\kmax$ by 
\begin{align}\label{kmaxmin}
0<  1< \kmin:= \min\{\klb,\kc \} \le \max\{\klb,\kc \}=:\kmax\,,
   \end{align}
for $\kc$ fixed from Theorem~\ref{propagation-generation}, but  $\klb$ replaces  $\kb$ in the defined in \eqref{bf kb} and Lemma~\ref{Lemma_ODE-extension2} that  now sets,    for any  $sk$ satisfying $2\le s\kmin\le s\kc\le s\kmax < sk$,  needs to be replace by
  \begin{align}\label{Bsk}
\ \ &\delta_{sk}:=\frac{\A_{s\kc}}{\beta_{sk}}=\frac{ c_{lb}2^\frac\gamma2 A_{s\kc}}{\beta_{sk}} <1; \quad  \nonumber\\
 \ \  \ &\beta_{sk}  =  {C}_{\Phi}(2^{(sk+\frac\gamma2)}\mu_{s\kc} +2\|b\|_{L^1(\mathbb{S}^{d-1})})\le\|b\|_{L^1(\mathbb{S}^{d-1})} \left({C}_{\Phi}2^{(sk+\frac\gamma2)} +2\right) ,  \\
\text{and}\qquad   &\B_{sk} :=  \beta_{sk}\, \C^{\text{up}}_{sk}(\delta),\ \  \ 
\ \ \, \mathcal{C}^{\text{up}}_{sk} 
= \ttm_{1,sk}[f_0]  
\left[  \delta_{sk}^{-e_\alpha}  \mathbb{I}_{ k\ge \kgamma } +  \delta_{sk}^{-e_\theta}   \mathbb{I}_{ k<\kgamma }  \right] , 
\nonumber\end{align}
 for the time invariant constant $\ttm_{1,sk}[f_0]  $  defined in \eqref{invariant k factor}.
 
 Note that $\klb$ and $\kb$ are not ordered, since the factor $ c_{lb}2^\frac\gamma2$ may or may not be bigger than unity.

Therefore under these conditions   the existence  and uniqueness properties  shown in Theorem \eqref{Theorem_ODE} hold, as well as propagation and generations estimates hold, meaning that 
for  any initial data $M_P =m_{k_{\text{ini}}}[f_i](0)$, coercive constant   $\A_{s\kc} = c_{lb} 2^{\frac\gamma2} A_\kc$,   following from \eqref{Qmoment-0-expo}, 
   replacing   on \eqref{Akc} in  the  proof of Theorem~\ref{propagation-generation}, 
 for any $s\in (0,1]$ yields the same results whose quantitative estimates replace $ A_\kc$ by $ \A_\kc$.


The exponents  
 $e_\alpha(sk)= {c^{-1}_\gamma(sk)} $ and $  e_\theta(sk)=\left(sk-1\right)\left(1+c_\gamma(sk) \right)^2 +c_\gamma(sk)  $, as defined in  \eqref{e max},  
varying with respect to $\kgamma $, as in  \eqref{compare delta expo}, satisfying, both  
\begin{align}\label{expk-sk2}
 e_\alpha(sk)= {c^{-1}_\gamma(sk)} = \frac{(sk-1)}{\gamma/2}>1  \ \ \text{and }\ \   e_\theta(sk)=\frac1{sk-1} \left( (sk-1+\gamma/2)^2+ {\gamma/2}\right)  \ge 1,  
 \end{align}
 are well defined for any $sk\ge 2$ and, as described in  \eqref{equi-2.2}, these exponents satisfy 
 \begin{align*}
\frac{1+e_\alpha}{1+c_\gamma(sk)} =e_\alpha=\frac{2(sk-1)}\gamma,  \qquad\text{and}\qquad \frac{1+e_\theta}{1+c_\gamma(sk)}=sk+\frac\gamma2. 
\end{align*}

 \medskip
 
Next, the focus is to find from   these last three inequalities a  characterization for  the coerciveness factor as functions of the quotient  $\beta_{sk}/\A_{s\kc}$.

Thus,  it becomes  important  to  study  the upper bounds of the a moment $2\le sk_0$, for $k_0>\kmax$ and $s\in(0,1]$ in order to study the summability of moments and   the rate of convergence of exponential moments.

Starting from the choice of $\kb$ in connections with the definition of  making the quotient $\delta=\A_{s\kc}/\beta_{sk}<1$,  the first goal is to estimate both  factors containing $\delta$ from the constant $\mathcal{C}^{\text{up}}_{sk} $
from \eqref{Bsk} and from the definition of equilibrium constant $\mathbb{E}_{c_\gamma(sk)}$ from \eqref{equi-2}. 

To this end, it convenient to invoke the  rate constants depending on $sk$ and $\gamma$  as defined in \eqref{RateC} and \eqref{RateE}, respectively,
  \begin{align}\label{coercive rate  RC & RE}
   R_{\C}(\delta_{sk}) &:=  R_{\C}(\A_{s\kc}/\beta_{sk}) :=  \delta_{sk}^{e_\alpha}  \mathbb{I}_{ k\ge \kgamma } +  \delta_{sk}^{e_\theta}   \mathbb{I}_{ k<\kgamma },\\   
   R_{\mathbb E}(\delta_{sk}) &:=   R_{\mathbb E}(\A_{s\kc}/\beta_{sk}) :=   \delta_{sk}^{e_\alpha}  \mathbb{I}_{ k\ge \kgamma } +  \delta_{sk}^{sk+\frac\gamma 2}   \mathbb{I}_{ k<\kgamma }, \nonumber
     \end{align}
   with the exponents, recall from 
     \begin{align}\label{coercive rate  R(/A)}
    R_{\C}^{-1}(\delta_{sk}):= \delta_{sk}^{-e_\alpha}  \mathbb{I}_{ k\ge \kgamma } +  \delta_{sk}^{-e_\theta}   \mathbb{I}_{ k<\kgamma }\qquad 
    R^{-1}_{\mathbb E}(\delta_{sk}):= \delta_{sk}^{-e_\alpha}  \mathbb{I}_{ k\ge \kgamma } +  \delta_{sk}^{-(sk+\frac\gamma 2)}   \mathbb{I}_{ k<\kgamma }.
      \end{align}
  
Their algebraic inverse follows the relation by inverting the quotient $\delta =(\A_{s\kc}/\beta_{sk})$,  namely   both
 \begin{align}\label{coercive rate  R inverse}
 R^{-1}_{\C}(\delta_{sk}):= R_{\C}(\delta_{sk}^{-1})\qquad\text{and}\qquad 
R^{-1}_{\mathbb E}(\delta_{sk}):= R_{\mathbb E}(\delta_{sk}^{-1}).
\end{align}

\smallskip

The first observation relates  exponents for different of $sk$ when compared to $\kgamma$ defined in \eqref{compare delta expo}. Starting from  noting that exponents $e_\alpha(sk)$ and $e_\theta(sk)$ are both bigger than $1$ for $sk\ge2$, as shown in \eqref{expk-sk2}, for any $k\ge \kmax=\max\{\kb,\kc\}$, with $\kb$ was defined in \eqref{choose kb} to make 
   \begin{align}\label{coercive rate 1}
 &1< \left(\frac{ 2^{sk-2 +\gamma } \mu_{ sk}+\|b\|_{L^1(\mathbb{S}^{d-1})}}{\|b\|_{L^1(\mathbb{S}^{d-1})}m_0[f_0] }\right)^{-e_\alpha(sk)}\mathbb{I}_{k\geq \kgamma} 
 + \left(\frac{ 2^{sk-2 +\gamma } \mu_{ sk}+\|b\|_{L^1(\mathbb{S}^{d-1})}}{\|b\|_{L^1(\mathbb{S}^{d-1})}m_0[f_0] }\right)^{-e_\theta(sk)}\mathbb{I}_{k< \kgamma} \nonumber\\
&\leq\delta_{sk}^{-e_\alpha(sk)}  \mathbb{I}_{k\geq \kgamma} + \delta_{sk}^{-e_\theta(sk)}   \mathbb{I}_{k<\kgamma} 
\le  R_{\C}(\delta_{sk}^{-1}) =:
  R_{\C}(\beta_{sk}/\A_{s\kc}),
\end{align}
 or, equivalently,  using the inverse relations from\eqref{coercive rate  R inverse}, implies	
$ R_{\C}(\A_{s\kc}/\beta_{sk}) =  R^{-1}_{\C}(\beta_{sk}/\A_{s\kc})< 1.$

 Analog as shown in \eqref{coercive rate R inverse},  the  estimate for the equilibrium  factor \eqref{equi-2} and identity   \eqref{equi-2.2},  is  
  \begin{align}\label{coercive rate  inverse 2}
   R_{\mathbb E}(\delta_{sk}) :=
   \delta_{sk}^{e_\alpha}  \mathbb{I}_{ k\ge \kgamma }  +   \delta_{sk}^{sk+\frac\gamma 2}   \mathbb{I}_{ k<\kgamma }
 =   \left( \frac{ \A_{{  s}\kc}  }{\beta_{  sk}} \right)^{\frac{2(sk-1)}\gamma }  \mathbb{I}_{ k\ge \kgamma }+ 
  \left(\frac{ \A_{{  s}\kc}  }{\beta_{  sk}}\right)^{sk+\frac\gamma2}\! \!   \mathbb{I}_{ k< \kgamma }<1. 
\end{align}


In addition,   since 
$\frac{\|b\|_{L^1(\mathbb{S}^{d-1})}m_0[f_0] } { 2^{sk-2 +\gamma } \mu_{ sk} +\|b\|_{L^1(\mathbb{S}^{d-1})}} <1, $ for any $sk>s\kmax \ge s\kmin\ge 2$ and $2\le\gamma$, then  
 the minimum from  rates calculations  \eqref{coercive rate 1} and \eqref{coercive rate inverse 2} is defined by and satisfies
\begin{align}\label{coercive rate min}
  \mathcal{R}(\delta_{sk}) := \mathcal{R}(\A_{s\kc}/\beta_{sk}) := \min\left\{{R_{\C}(\A_{s\kc}/\beta_{sk}) }  \, ;\,{   R_{\mathbb E}(\A_{s\kc}/\beta_{sk})}  \right\}  
\equiv   R_{\C}(\delta_{sk}) \leq 1, 
\end{align}
or, equivalently, its inverse 
is estimated from above by 
\begin{align}\label{coercive rate min-inverse}
1\le  \mathcal{R}^{-1}(\delta_{sk})
%
\!:= \!\max_{2\le sk} \left\{R_{\C}(\delta^{-1}_{sk})   ;  R_{\mathbb E}(\delta^{-1}_{sk})   \right\}\! =\! \max_{2\le sk}  \left[  \delta^{- \bar e_\alpha(sk)}  \mathbb{I}_{k\geq \kgamma} \!+\!
  \delta^{-\bar e_\theta(sk) }   \mathbb{I}_{k<\kgamma} \right] < \infty ,  
  \end{align}
  where the  following minimal and maximal  form of the exponent $ e_\alpha(sk)$ and $ e_\theta(sk)$, respectively,  are 
  \begin{align}\label{expo maxs}
  \underline e_\alpha(sk) &:= \frac\gamma2 \le \min_{2\ge sk}\left\{ e_\alpha(sk)  \,;\,\frac{1+ e_\alpha(sk)}{1+c}  \right \}\equiv 
  e_\alpha(sk) = \frac2\gamma (sk-1),    \nonumber
 \\
\quad \text{but} 
\quad  \underline e_\theta(sk) \!&:=\!\min_{2\le sk}\left\{ e_\theta(sk)  ;\frac{1+ e_\theta(sk)}{1+c}  \right \} \!=\!
\min_{2\le sk}\left\{(sk-\!1) \left(1\!+\!\frac{\gamma/2}{sk-1}\right)^2 \!+\!\frac{\gamma/2}{sk-1}  ; sk\!+\!\frac\gamma2 \right\}\!\\
&<\!
\end{align}
 
%
%
\bigskip

%

It is interesting to observe that ${\underline e}_\alpha=e_\alpha=\bar{e}_\alpha$. However,  the same can not be assess for the $e_\theta$ exponents, and in fact the form of ${\underline e}_\theta, e_\theta$ and $\bar{e}_\theta$ are different since, each  the maximum and minimum values of  $e_\theta$  change with respect  the value of $\gamma\in(0,2].$

Therefore,  for orders $\max{\{2<s\kmin< sk\le  sk_0\}}$ these  two minimal and maximal exponents  follow from the notation introduced above, \eqref{coercive rate min-inverse}, namely
\begin{align}\label{exp k_0-1}
 \max_{\{2<s\kmin< sk\le  sk_0\}} \frac{1+e_\alpha}{(1+c_\gamma(sk))} 
 &:= \frac{2(sk_0-1)}\gamma, 
 \quad \text{and} \quad
  \max_{\{2<s\kmin< sk\le  sk_0\}}\!\frac{1+e_\theta}{(1+c_\gamma(sk))}  &:=   sk_0 +\frac \gamma 2,
\end{align}
and their maximum exponents values \eqref{coercive rate min-inverse},  for any $k_0$ with $sk_0>2 $, are denoted by
\begin{align}\label{exp k_0}
\bar e_\alpha(sk_0)  &:=\max_{\{2\le sk_0\}}e_\alpha(sk) = \frac2\gamma (sk_0-1),
  \qquad \qquad \text{and}\\[4pt]
\bar e_\theta(sk_0) &:= \max_{\{2\le sk_0\}} \bar e_\theta(sk)  = \left\{(sk_0-\!1) \left(1\!+\!\frac{\gamma/2}{sk_0-1}\right)^2 \!+\!\frac{\gamma/2}{sk_0-1}  ; sk_0\!+\!\frac\gamma2 \right\}\!
 \nonumber
\end{align}


 In particular upper bound for both $ R(\beta_{sk}/\A_{s\kc})$ and $ R_{\mathbb E}(\beta_{sk}/\A_{s\kc}) $ follows, since the reader can easily check from estimates \eqref{coercive rate 1} and \eqref{coercive rate  R(/A)}, respectively, as well as from \eqref{coercive rate min-inverse}, yield
 
\begin{align}\label{coercive rate max}
  \mathcal R(\delta^{-1}_{sk_0}) :=    \mathcal R(\beta_{sk_0}/\A_{s\kc}) :=     
  %
  \left(\frac{\beta_{sk_0}} {\A_{s\kc}}\right)^{  \frac{2(sk_0-1)}{\gamma} }  \mathbb{I}_{k\geq \kgamma} +
  \left(\frac{\beta_{sk_0}} {\A_{s\kc}}\right)^{\bar e_\theta(sk_0)  }   \mathbb{I}_{k<\kgamma}   > 1, 
  %
\end{align}
and, $ \mathcal R(\delta^{-1}_{sk_0}) = \mathcal R^{-1}(\delta_{sk_0}) $ after invoking the relation  \eqref{coercive rate  R inverse} for inverse rates, it follows that 
\begin{align}\label{coercive rate max inverse}
  \mathcal R(\delta_{sk_0}) :=   \mathcal R(\A_{s\kc}/\beta_{sk_0}) &:=  
  %
  \left(\frac{\A_{s\kc}}{\beta_{sk_0}} \right)^{  \frac{2(sk_0-1)}{\gamma} }  \mathbb{I}_{k\geq \kgamma} +
  \left(\frac {\A_{s\kc}}{\beta_{sk_0}} \right)^{\bar e_\theta(sk_0) }   \mathbb{I}_{k<\kgamma}  < 1.
\end{align}

In this notation we can reformulate the global bound for moments' propagation \eqref{moments-bounds-2} 
 \begin{align}\label{moments-prop-sum}
m_{  sk}[f](t) \leq  \max \!\left\{ m_{  sk}[f_0] \, ;  \mathbb{E}_{c_\gamma{sk}}\right\} 
  = \max \!\left\{ m_{  sk}[f_0] \, ;  
  \,  \left( m_1\,\ttm_{1,k}\right)^\frac{ks-1}{sk-1+\frac\gamma2}[f_0] R_{\mathbb E}(\delta^{-1}_{sk}) \right\}, 
  \end{align}
  and the corresponding global estimate for moments' generation \eqref{generation-bound-2}
 \begin{align}\label{moments-gen-sum}
m_{  sk}[f](t) &\leq   \mathbb{E}_{c_\gamma(sk)} + m_1[f_0]\left( \frac 1{c_\gamma(sk)\, \A_{s\kc}}\right)^{\frac1{c_\gamma(sk)}} t^{-\frac1{c_\gamma(sk)} }\nonumber\\
&= \max 
  \,  \left( m_1\,\ttm_{1,k}\right)^\frac{ks-1}{sk-1+\frac\gamma2}[f_0] \, R_{\mathbb E}(\delta^{-1}_{sk}) + 
   m_1[f_0]\bigg( \frac {2(sk-1)}{\gamma\, \A_{s\kc}\, t}\bigg)^{\frac{2(sk-1)}{\gamma}},
  \end{align}
 hold for any $sk\ge s\kmax>2$.

Hence,  two fundamental estimates are need  to obtain and upper ODI for the partial sums of moments of the Cauchy problem solution,  set on Theorem~\ref{CauchyProblem}, with estimates from \eqref{binom-1}. One of them,   estimates  each partial sums coming from  positive contributions of  moments of the collision operator containing the term  $S_{sp}$. The other one  estimates  from below each partial sums of negative  contributions from   the collision operator moments. 

Such upper bound can be   obtained  by  a weighted discrete double mixing convolution that controls partial sums  of  $S_{sp}$ by   the binary interaction of partial sums expressed by a bilinear  product of the $n^{th}$-partial sum of the exponential expansion $E^{n} _s (t,z)$ from \eqref{expotail5}, and  of the  $n^{th}$-partial sum of the shifted $sp+\gamma$-moment  $I^{ell,n} _{s,\gamma}(t,z)$, defined by 
\begin{equation}\label{expotail6}
I^{n} _{s,\gamma}[f](z(t),t) := \sum_{k=0}^n m_{sk+\gamma/2}[f](t) \frac{z^{k}(t)}{k!}\, ,
\end{equation}
 arises in the calculation of  upper bounds for $n^{th}$-partial sums   $k^{th}$-moments associated to the collision operator  $Q(f,f)(v,t)$, as shown in the next Lemma.

%
%

  \medskip  
    
%
 
 From now on, and without loss of generality, the dependance of these partial sums on the solution $f$ is omitted in the notation of moments partial sums, i.e. $E^{n} _s (t,z):=E^{n} _s [f](t,z)$ and $I^{n} _{s,\gamma}(t,z):= I^{n} _{s,\gamma}[f](t,z)$, as all estimates that follow are uniform in $f(v,t)$, for $t>0$, as it will be shown that only depends of the a finite number of initial moments associated to the initial data that solves the 
 Cauchy problem, Theorem~\ref{CauchyProblem},  by means of the moment  propagation and generation estimates obtained in 
 Theorem~\ref{thm:exp-moment-gen}, as it will be indicated in the sequel

\begin{lem}[\bf{Discrete convolution for partial sums of $k^{th}$-moments of the binary collisional form}] \label{lem:convolution}
For any $\gamma, s \in(0,1]$ and $\tilde{k}<n$, both natural numbers in $\mathbb{N}$ with $s\tilde{k}  \geq 1$, and $S_{sk}(t)$ defined in  \eqref{binom-1-2}, the discrete convolutional form is estimated by 
\begin{equation}\label{expotail7}
\sum_{k=\tilde{k}} ^n \frac{z^k}{k!} S_{sk}[f](t) \le E^n _s (t,z)\, I^n _{s,\gamma}(t,z)\, ,
\end{equation}
where the product terms in the righthand side of this inequality have been defined  in \eqref{expotail5} and \eqref{expotail6}, respectively. 
\end{lem}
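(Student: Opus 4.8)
The plan is to expand the left-hand side using the definition \eqref{binom-1-2} of $S_{sk}$, namely $S_{sk}[f](t)=\sum_{j=1}^{k_j}\binom{k}{j}m_{s(k-j)}[f](t)\,m_{sj+\gamma/2}[f](t)$, and to recognize the resulting double sum as (a subsum of) the Cauchy product of the two power series $E_s^n(t,z)=\sum_{k\ge0}\tfrac{z^k}{k!}m_{sk}[f](t)$ and $I_{s,\gamma}^n(t,z)=\sum_{k\ge0}\tfrac{z^k}{k!}m_{sk+\gamma/2}[f](t)$. Concretely, I would write
\begin{equation*}
\sum_{k=\tilde k}^n\frac{z^k}{k!}S_{sk}[f](t)=\sum_{k=\tilde k}^n\frac{z^k}{k!}\sum_{j=1}^{k_j}\binom{k}{j}m_{s(k-j)}[f](t)\,m_{sj+\gamma/2}[f](t),
\end{equation*}
and then use the elementary identity $\tfrac{1}{k!}\binom{k}{j}=\tfrac{1}{j!\,(k-j)!}$ to turn each term into $\tfrac{z^{k-j}}{(k-j)!}m_{s(k-j)}[f](t)\cdot\tfrac{z^{j}}{j!}m_{sj+\gamma/2}[f](t)$. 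After the substitution $i=k-j$ this exhibits the left side as a sum over a triangular index set of products of a term from the $E_s$-series and a term from the $I_{s,\gamma}$-series.

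The key step after that is a positivity/inclusion argument: every moment $m_{sk}[f](t)$ and $m_{sj+\gamma/2}[f](t)$ is nonnegative (since $f\ge0$), so each product $\tfrac{z^{i}}{i!}m_{si}[f](t)\cdot\tfrac{z^{j}}{j!}m_{sj+\gamma/2}[f](t)$ with $z>0$ is nonnegative, and the index set appearing on the left — pairs $(i,j)$ with $j\ge1$, $i=k-j\ge k-k_j\ge0$, and $k$ ranging from $\tilde k$ to $n$ — is contained in the full rectangle $\{0\le i\le n\}\times\{0\le j\le n\}$ whose associated double sum is exactly $E_s^n(t,z)\,I_{s,\gamma}^n(t,z)$. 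Hence the left side, being a sum of a subcollection of nonnegative terms, is bounded above by the total, which is $E_s^n(t,z)\,I_{s,\gamma}^n(t,z)$. I should double-check the index bookkeeping: since $k_j=\lfloor k+1\rfloor$ as declared in Lemma~\ref{PovznerIII} (so the binomial sum runs at most up to roughly $k/2$ in the intended application, but in any case up to $k$), one has $j\le k$, so $i=k-j\ge0$ is automatic, and $i\le k\le n$, $j\le k\le n$, confirming containment in the rectangle.

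I would close with the explicit display
\begin{equation*}
\sum_{k=\tilde k}^n\frac{z^k}{k!}S_{sk}[f](t)\le\sum_{i=0}^n\sum_{j=0}^n\frac{z^i}{i!}m_{si}[f](t)\,\frac{z^j}{j!}m_{sj+\gamma/2}[f](t)=E_s^n(t,z)\,I_{s,\gamma}^n(t,z),
\end{equation*}
which is \eqref{expotail7}. The only mild subtlety — and the place where a careless proof could go wrong — is making sure one really is discarding nonnegative terms rather than adding spurious ones: the factorial rearrangement $\tfrac{1}{k!}\binom kj=\tfrac1{i!\,j!}$ must be matched against the product-series coefficients $\tfrac{z^i}{i!}\cdot\tfrac{z^j}{j!}$ so that no hidden combinatorial factor is lost, and the lower cutoff $\tilde k$ together with $j\ge1$ only removes terms, never introduces them. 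Once that bookkeeping is pinned down the inequality is immediate from nonnegativity of all moments for $z>0$; no interpolation or moment bounds from the earlier sections are needed here, which is precisely the point of isolating this as a standalone combinatorial lemma.
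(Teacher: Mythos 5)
Your proposal is correct and follows essentially the same route as the paper: the identity $\tfrac{1}{k!}\binom{k}{j}=\tfrac{1}{j!\,(k-j)!}$, reindexing, and then bounding the resulting triangular sum of nonnegative terms by the full product $E_s^n\, I_{s,\gamma}^n$. The only cosmetic difference is that the paper exchanges the order of summation explicitly before comparing with the two partial sums, while you compare directly with the full rectangle; the bookkeeping point you flag (terms with $j=k+1$ when $k_j=k+1$) is harmless since the binomial coefficient vanishes there.
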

\begin{proof}
By a simple combinatorial counting argument on the left hand side of \eqref{expotail7}, it follows that
\begin{align*}
\sum_{k=\tilde{k}} ^n \frac{z^k}{k!} S_{sk}[f](t) = \sum_{k=\tilde{k}}^n \frac{z^k}{k!} &\sum_{j=1}^{k_j} {k \choose j}  m_{sj + \gamma/2}[f]\,m_{s(k - j)}[f]  
= \sum_{k=\tilde{k}}^{n} \sum_{j=1}^{k_j} m_{sj + \gamma/2}[f] \frac{z^{j}}{j!}  \, m_{s(k - j)}[f] \frac{z^{k-j}}{(k-j)!}\\
&= \sum^{n-1}_{j=1} m_{sj + \gamma/2}[f] \frac{z^{j}}{j!}  \sum_{k=\max\{\tilde{k}, j\}}^n m_{s(k - j)}[f]
\frac{z^{k-j}}{(k-j)!}  \leq I^n _{s,\gamma} (t,z)\, E^n _s(t,z), 
\end{align*}
which proves Lemma~\ref{lem:convolution}. %
\end{proof}
%
%
%

\nn 
\medskip

\begin{proof}[Proof of Theorem~\ref{thm:exp-moment-propag}] 
 This proof focus  on the moments summability properties from  the global in time moments propagation estimates \eqref{mom-prop}, which in fact it is enough to show it  for any $t\in (0,1]$.
 
\medskip


\subsection{Estimates for summability of moments} 
In oder to calculate global in time estimates for the propagation of  summability of  $sk$-moments for any rate $0\le sk $.
This task is performed by controlling  the partial sums $E^{n}_s(z, t)= \sum_{k=0}^{n}m_{sk}[f](t)$, for initial data  $f_0\in L^1_{2sk}$, with $sk\ge\kmax\ge\kmin>2$,   uniformly in time $t$ and partial summability index $n$.

The way to obtain these estimates entices to invoke  the estimates for  of  Ordinary Differential Inequalities \eqref{ODI-0} to obtain  an    Ordinary Differential Inequality for the $n$-sum  $E^{n}_s(z, t)$.  

Here reader needs to observe the following a delicate point regarding the following two needed properties.  The first one is the propagation global $sk$-moment estimates needs to be  performed under the conditions from  of  Theorem~\ref{mom-coll-op}
and Theorem~\ref{propagation-generation}, for any $sk\ge s\kmax$. 

The second one consists  on the observation that  estimates on an ODI  for partial sum of for $E^{n}_s(z, t)$,  needs also global upper bounds  for  $\frac{d}{dt} m_{sk}[f](t)=:m'_{sk}[f](t)$ for any   $0\le sk\le s(\kmax+1)$. 

To this end, start by setting the ODI for $E^{n}_s(z, t)$ adding up to order $0<k\le n$ the modified $sk^{th}$-moments ODIs    obtained now  from the $sk^{th}$-moments of the Boltzmann Equation from Angular Averaging Lemma II (Lemma~\ref{PovznerIII}), and \eqref{binom-1-2}, to get
  \begin{align}\label{mskprime estimates}
  m_{sk}'[f](t) &\le  \left[2^{\frac{3\gamma}2 -1}   C_\Phi \mu_{s\kc}\sum^{k+1}_{j=0} {k \choose j}m_{s(k-j)}[f](t)\,m_{sj+\gamma/2}[f](t)
  \right]  \mathbb{I}_{ k <\kmax+1} + \B_{sk}\,\mathbb{I}_{ k \ge\kmax+1} \\
&< \|b\|_{L^1(\mathbb {S}^{d-1})}  2^{\frac{3\gamma}2 -1}   \kmax C_\Phi  \,  2^{\frac k2 +1} \, m_{1+\frac{\gamma}2}[f_0]\, m_{s\kmax}[f_0]
\mathbb{I}_{ k <\kmax+1}
  +\ttm_{1,sk}[f_0] \, \beta_{sk} \,R_{\C}(\delta^{-1}_{sk} ) \, \mathbb{I}_{ k \ge\kmax+1}.  \nonumber
  \end{align}

Hence, for the first property,  upper bounds for moment's estimates for $m'_{sk}[f](t)$, 
one can obtain  upper bounds for $sk^{th}$-moment derivatives, for any $1<sk\le sk_0$,  such that $2< s\kmax\le sk_0$.

However,  for any initial $f_0\in L^1_{2sk}$, with $sk\ge s\kmax$,   clearly  $m_{sk}[f_0](t)\le m_{s\kmax}[f_0](t)$, for any $0\le sk<\kmax$, and so an upper bound estimate for  $m'_{sk} $ is  need for any $0\le sk\le s\kmax$ to be calculated as follows.

Therefore, after gathering all these estimates related to the coefficients and  moments estimates associated to Lebesgue (or polynomial) moments, conditions are sufficient to  study  moments summability properties  in order to find the tails associated to polynomial moments, denoted by  $m_{sk}:=m_{sk}[f](t)$ in the sequel.

Starting by  fixing $s\in(0,1]$, summing side-by-side over $k = 0,\cdots n$  the sharp form estimates   \eqref{binom-1} of the  $sk^{th}$-moment of the collision operator,  the following estimate to the time variation of  partial sum $E^{n}_s(z, t)$, for a choice of   $1<\kmax<k_0<n$ to be specified later,   yields
\begin{align}\label{expotail-9}
\frac{d}{dt} E_{s}^{n} (t)   &= \frac{d}{dt}\sum_{k=0}^n  m_{s k} \frac{z^k}{k!} 
= \sum_{k=0}^{k_0} \frac{z^{ k}}{k!}  m'_{s k} +  \sum_{k=k_0+1}^n \frac{z^{ k}}{k!} m'_{s k},
\nonumber\\
& \ \ \leq    \sum_{k=0}^{k_0}  m'_{s k} \frac{z^k}{k!} +2^{\frac{3\gamma}2-1}C_{\Phi} \sum_{k=k_0+1}^n \frac{z^k}{k!} \mu_{s k} S_{s, k} - \mathcal{A}_{s\kc} \sum_{k=k_0+1}^{n} m_{s k + \gamma/2} \,\frac{z^k}{k!} 
\\ 
&=    \sum_{k=0}^{k_0}  m'_{s k} \frac{z^k}{k!} +2^{\frac{3\gamma}2-1}C_{\Phi} \sum_{k=k_0+1}^n \frac{z^k}{k!} \mu_{s k} S_{s, k} 
-  \mathcal{A}_{s\kc} \left( I^{n} _{s,\gamma}(t,z) - \sum_{k=0}^{k_0} m_{s k + \gamma/2} \,\frac{z^k}{k!} \right)   \nonumber, \\
&=   \sum_{k=0}^{k_0}\left(  m'_{s k}  + \mathcal{A}_{s\kc} \sum_{k=0}^{k_0} m_{s k + \gamma/2} \right) \frac{z^k}{k!} +2^{\frac{3\gamma}2-1}C_{\Phi} \sum_{k=k_0+1}^n \frac{z^k}{k!} \mu_{s k} S_{s, k} 
-  \mathcal{A}_{s\kc}  I^{n} _{s,\gamma}(t,z)  \,\frac{z^k}{k!}    \nonumber, 
\end{align}
where   now { $\mathcal{A}_{s\kc}= c_{lb} 2^\frac{\gamma}2 A_{s\kc}$} is the renormalized coercive constant, defined in \eqref{new coercive},  as the moment estimates bounds \eqref{moments-prop-sum} are also written in terms of  $\mathcal{A}_{s\kc}$.

%

Next we focus on  estimates of each terms in the partial sum \eqref{expotail-9},  for the case of propagation of initial data given by $E_{s}^{n} (0)= \sum_{k=0}^{n}m_{sk}[f](0)$, with all $m_{sk}[f](0)$ finite, that can be estimated by means of  
combining inequalities \eqref{Bsk}, \eqref{moments-prop-sum} and \eqref{mskprime estimates}. They are 
 \begin{align}\label{mskprime estimates 2}
 m_{sk}'&[f](t) < \|b\|_{L^1(\mathbb {S}^{d-1})}  2^{\frac{\frac k2+3\gamma}2 }  \kmax C_\Phi  \, m_{1+\frac{\gamma}2}[f_0]\, m_{s\kmax+\frac{\gamma}2}[f_0] \mathbb{I}_{ k <\kmax+\frac\gamma 2}
  +\ttm_{1,sk}[f_0] \, \beta_{sk} \,R_{\C}(\delta^{-1}_{sk}) \, \mathbb{I}_{ k \ge\kmax+\frac\gamma 2}\nonumber\\
  &<   \|b\|_{L^1(\mathbb {S}^{d-1})} 2^{\frac{\frac \kmax2+3\gamma}2 }   \kmax C_\Phi  
    \!\left[ m_{1+\frac{\gamma}2}[f_0]\, m_{s\kmax+\frac\gamma 2} [f_0] 
   \mathbb{I}_{m_{s\kmax+\frac{\gamma}2}[f_0] \ge  \mathbb E_{s\kmax+\frac\gamma 2}} \right. \nonumber\\
    &\left.\qquad\qquad + \mathbb E_{c(1+\frac{\gamma}2)}\mathbb E_{c(s\kmax+\frac\gamma 2)}  \mathbb{I}_{m_{s\kmax+\frac{\gamma}2}[f_0] <  \mathbb E_{s\kmax+1}}    
\right]  \mathbb{I}_{ k < \kmax+1}+\ttm_{1,sk}[f_0] \, \beta_{sk} \,R_{\C}(\delta^{-1}_{sk}) \, \mathbb{I}_{ k \ge\kmax+1}  \nonumber\\
&<   \|b\|_{L^1(\mathbb {S}^{d-1})} \left[  2^{\frac{\frac{\kmax}2 +3\gamma}2 }   \kmax C_\Phi  
    \left( m_{1+\frac{\gamma}2}[f_0]\, m_{s\kmax+\frac\gamma 2} [f_0]  +
     R_{\mathbb E}(\delta^{-1}_{s+\frac{\gamma}2})\,R_{\mathbb E}(\delta^{-1}_{s\kmax+\frac\gamma 2})
     \right) \right] \mathbb{I}_{ k <\kmax+1} \nonumber\\
&\quad\quad\quad\quad\quad\quad +\ttm_{1,sk}[f_0] \, \beta_{sk} \,R_{\C}(\delta^{-1}_{sk} ) \, \mathbb{I}_{ k \ge\kmax+1} \nonumber\\
 &=:  \| b\|_{L^1(\mathbb S^{d\!-\!1})} \kmax C_\Phi \left(2^{sk_0+\frac\gamma2}\! +\! 2\!+\!2^{\frac{\frac \kmax2+3\gamma}2 }\right)  
 {\bf C}^P_k(f_0, \kmax, \gamma) \,R_{\C}(\delta^{-1}_{sk} ), 
      \end{align}
with the constant
      \begin{align}\label{constant CPk}
  {\bf C}^P_k(f_0, \kmax, \gamma)\!=\! \max\left\{\kmax C_\Phi    m_{1+\frac{\gamma}2}[f_0]\, m_{s\kmax+\frac\gamma 2} [f_0]  \!+\! 
      R_{\mathbb E}(\delta^{-1}_{s+\frac{\gamma}2})\,R_{\mathbb E}(\delta^{-1}_{s\kmax+\frac\gamma 2})\, ;\,
\ttm_{1,sk}[f_0] \!\right\}.
      \end{align}\\

%
%

Consequently, taking any $k_0>  \kmax + 2$,  and setting $\delta_{sk_0 +\frac\gamma2}=  \A_{s\kc}/ \beta_{sk_0+ \frac\gamma 2}$,  the  first term in that last RHS of \eqref{expotail-9} is estimated by 
\begin{align}\label{expotail-91}
\sum_{k=0}^{k_0}  \left( m'_{s k}(t) + \mathcal{A}_{s\kc} m_{s k + \frac\gamma 2} \right)  \frac{z^k}{k!} 
&<\left[ 2^{\frac{k_0  + 3\gamma}2} {\bf C}^P_{k_0}(f_0, \kmax, \gamma)   +  \A_{s\kc}  
\right]  \mathcal{R}(\delta^{-1}_{sk_0 +\frac\gamma 2} )  \sum_{k=0}^{k_0} \frac{z^k}{k!} \nonumber\\
& <  \kappa^P_{sk_0}\,  \mathcal{R}(\delta^{-1}_{sk_0 +\frac\gamma 2} )   \,  e^z\  \leq \   \kappa^P_{sk_0}\,  \mathcal{R}(\delta^{-1}_{sk_0 +\frac\gamma 2} ), 
\end{align}
for the rate  factor  $ \mathcal{R}(\delta^{-1}_{sk_0 +\frac\gamma 2} )=\mathcal R(\beta_{sk_0+ \frac\gamma 2}/\A_{s\kc}) $ as defined in \eqref{coercive rate max},  
with the propagation factor 
\begin{align}\label{kappa-2}
\kappa^P_{sk_0}= \| b\|_{L^1(\mathbb S^{d\!-\!1})} \kmax C_\Phi \left(2^{sk_0+\frac\gamma2}\! +\! 2\!+\!2^{\frac{\frac \kmax2+3\gamma}2 }\right) \,  
 {\bf C}^P_k(f_0, \kmax, \gamma)  +\A_{s\kc}, \
 \end{align}
with the constant rate     $\mathcal{R}(\beta_{sk_0+ \frac\gamma 2}/\A_{s\kc})$ 
 is calculated  from \eqref{coercive rate max}, provided that $k_0>\kmax+1$,  with  ${\bf C}^P_{k_0}(f_0, \kmax, \gamma)$  depending on $\kmax, \gamma$ and  on the initial $m_{sk_0+\frac\gamma2}[f_0]$ moments, and exponential rate $z$ chosen to be  
\begin{align}\label{zlne}
 z\leq \ln e =1 .
\end{align}

The second term from the right hand side estimate  \eqref{expotail-9} is controlled   by the inequality   from the  Discrete Convolution Lemma~\ref{expotail7} and the fact that $\mu_{sk}$ decreases in $sk$,  to obtain 
\begin{equation*}\label{Cpo2}
2^{\frac{k_0  + 3\gamma}2}C_{\Phi}\sum_{k=k_0+1}^n \frac{z^k}{k!} \mu_{s k_0}S_{s, k} \leq 2^{\frac{k_0  + 3\gamma}2}C_{\Phi} \mu_{s k}\,E^{n}_{s}(t,z)\,I^{n}_{s,\gamma}(t,z)\,,\qquad   sk_0\geq s\kmax>1.
\end{equation*}


Therefore,   gathering estimates \eqref{expotail-91} and  \eqref{Cpo2} 
 the right hand side 
of \eqref{expotail-9} is estimated by
\begin{align}\label{expotail-10}
\frac{d}{dt} E_{s}^{n} (t,z)   \le  \kappa^P_{sk_0}\, \mathcal{R}(\delta^{-1}_{sk_0 +\frac\gamma 2} )    - \Big( \A_{s\kc} - 2^{\frac{3\gamma}2-1}C_{\Phi}\mu_{s k_0}\,E^{n}_{s}(t,z)\Big) I^{n} _{s,\gamma}(t,z)\,,\qquad sk_0\geq s\kmax>1 \,. 
\end{align}

Next, step is motivated by the search of an exponential rate $z$ small enough that enables an upper bound for $E^{n}_{s}(t,z)$. To this end,  fix  $z \in (0,z_0]$, where $z_0$ is the exponential rate on the initial data  set in \eqref{expotail-prop-data}, 
 and introduce  the set of  times 
 \begin{equation}\label{timen}
t_n:=\sup\big\{ t\geq0\,:\, E^{n}_{s}(\tau,z) \leq M_P + (e-1) m_0[f_0], \; \forall\, \tau\in[0,t] \big\}\,,\quad \text{and}\ \ \ n\in  \mathbb{N}.
\end{equation}
where $M_P$ is the initial moment with exponential tail rate $z_0$ and the factor $(e-1)$ proportional to the initial mass $m_0[f_0]$ is the smallest one for any the initial exponential moment with rate $z_0$ and order $s$, as it will become a transparent choice  in the developing estimates.  

Clearly, the  value of $0<z_0$  need to be chosen sufficiently small   after the choice of rate \eqref{zlne}, making  $z\le\min\{ z_0, \ln e\}\le1$,  for each $n$ fixed. Hence  the partial moment sum $E^{n}_{s}(t,z)$ is uniformly bounded in $t$ and $n$, by showing  $E^{n}_{s}(t,z) \leq M_P + (e-1) m_0[f_0]$ for an unbounded $t \in \mathbb{R}^+$.  
%

Indeed,   first note that  when $0 < z < z_0$,
\begin{align}\label{expotail-11}
E^{ n}_{s}(0,z) =  \sum_{k=0}^n \frac{z^{ k}}{k!} m_{s k}(0) \leq  E_{s}({z_0},0)= M_P,  
\quad\text{uniformly in} \ n\in \mathbb{N}, \ \text{for any} \ z\le\min\{ z_0, \ln e\}. 
\end{align}

Now, since each $m_sk[f]$(t) moment  is continuous in time $t$, then  the partial sum $E_{s}^n(t,z)$ is also continuous in $t$. Hence,  since  $E_{s}^n(t, z)\leq  M_P + (e-1)m_0[f_0]$, for $t\in [0,t_n]$, then the right hand side of inequality \eqref{expotail-10} is dominated by 
\begin{align}\label{expotail-12}
\frac{d}{dt} E_{s}^{ n} (t)   &\le    \kappa^P_{sk_0}  \mathcal{R}(\delta^{-1}_{sk_0 +\frac\gamma 2} ) 
\!- \! \left(   \A_{s\kc} \!
-  \!   2^{\frac{3\gamma}2-1}C_{\Phi}\mu_{s k_0} (M_P \! + \! (e \!- \!1) m_0[f_0])\right)I^{ n} _{s,\gamma}(t,z) , 
\end{align}
for any $z\le\min\{ z_0, \ln e\}$, with rate  $\mathcal{R}(\delta^{-1}_{sk_0 +\frac\gamma 2} )$  from \eqref{coercive rate max} and $   sk_0 > s \kmax> 2$.\\

Therefore,  since  $0<\mu_{sk} \searrow 0$ as $k\to 0$,  then it is possible to choose  a sufficiently large $k_0$,   cutting-off in the initial  partial sums in \eqref{expotail-9}, such that 
  \begin{align}\label{choose ko}
  &\A_{s\kc}-  2^{\frac{3\gamma}2-1}C_{\Phi} \mu_{s k_{0}}(M_P+(e-1)m_0[f_0]) \ge \frac{  \A_{s\kc}}2, \ \ \text{or, equivalently,} \nonumber\\
  &\qquad  \qquad\qquad\qquad \mu_{s k_{0}} \ge  \frac{\A_{s\kc}}{2^{\frac{3\gamma}2} C_{\Phi} (M_P+ (e-1)m_0[f_0])}. 
\end{align}

\begin{remark}\label{k_0} Note that if  the angular transition $b(\hat u\cdot\sigma) \in L^p(S^{d-1})$, \ for $1<p\le\infty$, then by \eqref{Povp1},  the $k_0$ moment order can be chosen by taking  
  \begin{align}\label{choose ko-22}
    sk_0=  \mathcal{O}\left(\frac{2^{\frac{3\gamma}2} C_{\Phi} (M_P+ (e-1)m_0[f_0])}{\A_{s\kc}}\right)^q,  \qquad \text{for} \  q= \frac{p}{p-1} \ \ \text{and} \ \ \ sk_0>s\kmax.
\end{align}
\end{remark}

\ \\

 These estimates immediately lead to Ordinary Differential Inequality 
\begin{align}\label{expotail-12}
\frac{d}{dt} E_{s}^n (t, z)  \le\mathcal{R}(\delta^{-1}_{sk_0 +\frac\gamma 2} ) \, \kappa^P_{sk_0}   -  \frac{\mathcal{A}_{s\kc}}2  \, I^n _{s,\gamma}(t,z)\,,\qquad \ sk_0>s\kmax. 
\end{align}

In addition, a lower bound  for the shifted moments' partial sum follows form a modified estimate  \cite{AlonsoGThar} is
\begin{align}\label{lower bound Isgamma}
I^{n} _{s,\gamma} (t,z) &\geq \sum^{n}_{k=0}\frac{z^k}{k!}\int_{\{\langle v \rangle \geq z^{-1/2}\}} f(v,t)\langle v \rangle^{2(sk + \gamma/2)}\text{d}v \nonumber\\
& \geq z^{-\gamma/2}\bigg( E^{n}_{s}(t,z) - \sum^{n}_{k=0}\frac{z^k}{k!}\int_{\{\langle v \rangle \leq z^{-1/2}\}} f(v,t)\langle v \rangle^{2sk}\text{d}v \bigg)\nonumber\\
&\geq z^{-\gamma/2}\bigg( E^{n}_{s}(t,z) - m_0[f_0]\sum^{n}_{k=0}\frac{z^{k(1-s)}}{k!} \bigg) > z^{-\gamma/2} \Big( E^{n} _{s} (t,z) - e^{z^{1-s}}\,m_0[f_0] \Big) \\
& 
\geq z^{-\frac{\gamma}2} \left(E_{s}^{n}(t,z) - e  m_0[f_0] \right)\, . \nonumber
\end{align}
 And, since  the rate is already chosen to be  $0<z\le\min\{ z_0, \ln e\},$  then  $ 1< e^z< e^{z^{1-s}} \le e$  in the range of    $0<s\le 1$ values, as the initial data may be an arbitrary rate $z_0>0.$\\
 
 
The ODI  solution estimate  for partial the sums $E_{s}^{n}(t,z)$ is finalized after defining the state variable
\begin{equation}\label{XODI}
X^{n}_{k_0}(t,z):= E_{s}^{n}(t,z) -  e\,m_0[f_0],  
\end{equation}
whose initial condition is $X^{n}_{k_0}(0,z_0):= E_{s}^n(0,z_0) -  e\,m_0[f_0] = M_P -  e\,m_0[f_0].$ Therefore,  writing an new ODI  for $X^n(t,k)$ after invoking  the lower estimate for   $I^{n} _{s,\gamma} (t,z)$ from  \eqref{lower bound Isgamma}, applied  into \eqref{expotail-12} yields  
\begin{equation}\label{expotail-13}
\frac{d}{dt}X^{n}_{k_0}(t,z) <-\tfrac{\mathcal{A}_{s\kc}}{2z^{\gamma/2}} \, X^{n}_{k_{0}}(t,z) +  \kappa^P_{sk_0}  \mathcal{R}(\delta^{-1}_{sk_0 +\frac\gamma 2} ) \qquad t\in(0,t_{n})\,,
\end{equation}
whose time integration of this differential inequality leads to
\begin{equation}\label{expotail-13}
X^{n}_{k_0}(t,z) < X^{n}_{k_0}(0,z)\,e^{-\tfrac{\mathcal{A}_{s\kc}\,t}{2z^{\gamma/2}}} +  \frac{2z^{\gamma/2} \kappa^P_{sk_0}  \mathcal{R}(\delta^{-1}_{sk_0 +\frac\gamma 2} )}{\A_{s\kc}} \Big(1 - e^{-\tfrac{\mathcal{A}_{s\kc}\,t}{2z^{\gamma/2}}}\Big),\qquad t\in(0,t_{n})\,.
\end{equation}
Thus, the choice of  rate $z>0$ small enough  can be made setting 
 \begin{equation*}
 z^{\gamma/2} \, \frac{2\mathcal{R}(\delta^{-1}_{sk_0 +\frac\gamma 2} ) \, \kappa^P_{sk_0} }{\mathcal{A}_{s\kc}} =   M_P - m_0[f_0]  >0,\qquad t\in(0,t_{n})\,,
\end{equation*}
since  $M_P - m_0[f_0] = \sum_{k=1}^n \frac{z_0^{ k}}{k!} m_{s k}(0)>0$,  or equivalently, identifying a lower bound rate  $z_1$ by
 \begin{equation*}
  z\leq z_1 := \, \left(    \frac{ 
   {\A}_{s\kc} \mathcal R(\A_{s\kc}/\beta_{sk_0+ \frac\gamma 2})  (M_P-m_0[f_0])} { \kappa^P_{sk_0} }     \right)^{2/\gamma}
,\qquad t\in(0,t_{n})\,, 
\end{equation*}
with the rate  factor    $\mathcal R(\A_{s\kc}/\beta_{sk_0+ \frac\gamma 2}):= \mathcal R(\delta_{sk_0 +\frac\gamma 2} ) =\mathcal R^{-1}(\delta^{-1}_{sk_0 +\frac\gamma 2} ) <1$ as shown in  \eqref{coercive rate max inverse}, with  $\beta_{sk_0+\frac\gamma2}$ calculated from \eqref{betak}. 

As a consequence the parameter  rate  $z_1$  is uniform in $n\in{\mathbb N}$ and time $t$.

It is also worth to notice that  the propagation constant $\kappa^P_{sk_0}$,  calculated  \eqref{kappa-2} is $2\kappa^P_{sk_0}=2^{O({\frac{k_0  + 3\gamma}2} +1)},$ 
as a consequence a quantitative estimate for the $z_1$ rate can be obtained as follows
\begin{align}\label{choose z1}
 z_1 &:=  2 
 \left(\A_{s\kc}   (M_P-m_0[f_0])\, 
 \frac{\mathcal R(\delta_{sk_0 +\frac\gamma 2} )}{\kappa^P_{sk_0}}\right)^{\frac{2}{\gamma}},\qquad t\in(0,t_{n})\,, 
  \end{align}
  with 
 \begin{align}\label{choose z1-2}
\left(\frac{ \mathcal R(\delta_{sk_0 +\frac\gamma 2} )}{\kappa^P_{sk_0} }\right)^{\frac{2}{\gamma}}
\! =\!  \mathcal{O} \left({2^{-\frac{k_0 +3\gamma + 2}\gamma} }\right)
  \left[  \!  \left(\delta_{sk_0 +\frac\gamma 2} \right)^{ e_\alpha(sk_0) }   \! \!  \mathbb{I}_{ k\ge\kgamma }   
\!  +\!   \left(\delta_{sk_0 +\frac\gamma 2}\right)^{\bar e_\theta(sk_0) }   \! \!  \mathbb{I}_{ k < \kgamma }  \right]^{\frac{2}{\gamma}} \! <\! 1, 
\end{align}
for $e_\alpha(sk_0) =c^{-1}_\gamma(sk_0)$ and $\bar e_\theta(sk_0)$ defined in \eqref{exp k_0}.
Clearly $z_1$ degenerates the choice moment rate  as $sk_0$ grows, so the best possible rate should be taking with the smallest $sk_0$ satisfying condition \eqref{choose ko}, or  \eqref{choose ko-22}, depending on the integrabitlity order of the angular functions $b(\hat u\cdot\sigma)$.

In particular, $X^{n}_{k_0}(0,z) < E_{s}^{n}(0,z_0) - m_0[f_0]$, which implies together with \eqref{expotail-13} and \eqref{choose z1} that 
 \begin{equation*}
E_{s}^{n}(t,z)  -e\,m_0[f_0]   <   M_P -  m_0[f_0] \,,  \quad  \text{for any} \ \ t\in(0,t_{n})\,.
\end{equation*}
with the upper bound uniform in $n\in{\mathbb N}$,or equivalently, 
 \begin{align}\label{expotail-15}
 E_{s}^{n}(t,z)    <   M_P   + (e-1) m_0[f_0]    \qquad           \text{for all}\   t\in(0,t_{n})\ \ \text{and}   \  z\le \min\{z_0, \ln 2, z_1 \} <1, 
\end{align}
  for any $k> k_0$, with $k_0$ large enough to satisfy condition \eqref{choose ko}, and a uniform in time exponential rate $z$ the minimum between $z_0$ is the initial exponential rate, $z_1$ from condition \eqref{choose z1}.

 Now, the continuity of $E_{s}^{n}(z,t)$ with respect to time $t$ ensures that a strict inequality holds, at least, in a slightly larger time interval $[0, t_n + \varepsilon)$, $\varepsilon >0$. Yet, this fact  contradicts the maximality of $t_n$ unless $t_n = + \infty$.  Therefore,  
$E_{s}^{n}(z, t) < M_P + (e-1)\,m_0[f_0]$ for all $t\geq 0$ and $n\in \mathbb{N}$. 

Thus, letting $n\rightarrow \infty$ we conclude
\begin{align*}
E_s^{n}[f](z, t) &= \lim\limits_{n\rightarrow \infty}E_{s}^{n}[f](z, t)  \leq (e-1) m_0[f_0] + M_P  \\
 &\qquad< (e-1)m_0[f_0] + \int_{\mathbb{R}^{d}} f_0(v) \, e^{z_0 \left\langle v \right\rangle^{2s}} dv = (e-1)m_0[f_0] + 
 E_s[f](z_0, 0)
, \quad \forall t\geq 0,
\end{align*}
i.e. the solution $f(v,t)$ to   Boltzmann equation with finite initial exponential moment of order $2s$ and rate $z_0$ propagates the exponential moments of order $2s$, but at a lower constant  rate $z$ than the initial  rate $z_0$ ,  with $z$ defined by 
\begin{equation}\label{expotail-16}
z(t):= z <\min \left\{ z_0,\,  \ln e, \, z_1 \right\}=\left\{ z_0,\,  1, \, z_1 \right\}, 
\end{equation}
for $z_1$ defined by \eqref{choose z1}, and well characterized by  \eqref{choose z1-2}.

\end{proof}

\begin{proof}[Proof of Theorem~\ref{thm:exp-moment-gen}] ({\bf Generation of Exponential moments})
This proof, has similarities with respect to the one of  previous Theorem~\ref{thm:exp-moment-propag}, mostly in the sense that we seek  an ODI for   the exponential moment, denoted for simplicity of notation, by $E_{s}[f](t) := E_{z,s}[f](t)  $ where the exponential order factor $2s$ and rate $z(t)$ need to be found and fully characterized, by   
the initial data    $f_{in}\in L^1_{k_{in}}(\mathbb{R}^{d})$ for $k_{in}\ge 2\epsilon +1$, by the potential rate $\gamma\in(0,2]$ and and by properties on the binary collision operator as described for  Cauchy Problem stated in Theorem~\ref{CauchyProblem}. That means  the initial data is defined takes the form  $\int_{\mathbb{R}^{d}} f_0(v) \, \left\langle v \right\rangle^{k_{in}}  \mathrm{d}v =: M_G $ finite,        with $k_{in}\ge\kc> 1$,  as defined in \eqref{exp generation p}.  

We require $z(t)$ to be continuous, increasing  for $t \in (0,1]$ and with $z(0)=0$ since the exponential weight is not available at $t=0$.

As  the proof for propagation of exponential tails was initiated, in Theorem~\ref{thm:exp-moment-propag},   recall  the first  the $sk^{th}$-moments generation global bounds derived in Theorem~\ref{propagation-generation}, \eqref{generation-bound},   which can be large for short times. 

Set constant $C_{\gamma k_0} > 0$ defined by the upper bound of first $sk_0$ finite moments,  and invoke the $sk$-parameters  introduced in the polynomial moments estimates, to obtain, for any $0<s\le1<s\kc<sk.$
Then,  starting  from  the solution of the Boltzmann flow for  $f_{in}(v) $, constructed in Theorem~\ref{CauchyProblem}, with the solution set $\Omega$ as defined in \eqref{SetOmega},
our task consists on studying the time evolution of generated moment partial power series in $z(t)$ up to order $n$, denoted  $E^{n}_{s}(z(t),t)$,  with the goal  to find uniform in $n$  rates $z(t)>0$ dependent  by the given  initial data. 

As performed in the proof of the propagation of exponential tail estimates, the corresponding moments ODI satisfied by the solution $f$ of the Cauchy problem posed in Theorem~\ref{CauchyProblem} written as in \eqref{binom-1-2} becomes again the starting point and the  representation of $I^n _{s,\gamma}[f]$ defined explicitly  in \eqref{expotail6}, is now invoked evaluated at $(z(t),t)$ to obtain  
\begin{equation}\label{expotail6}
I^n _{s,\gamma}[f](z(t), t) := \sum_{k=0}^n m_{sk+\frac\gamma2}[f](t) \frac{z^k(t)}{k!}\,.
\end{equation}

Thus, starting first for values of  $t\in(0,1]$ and invoking  the product rule and the fact that contractive constant $ \mu_k$  decreasing in $k$,    $ I^n _{s,\gamma}(z(t),t)$ given now by  the $n^{th}$-partial sum of the shifted to the $sk+\gamma/2$-moment  defined in \eqref{expotail6},  after some elementary algebra it follows that for $t\in(0,1]$.
Hence,  taking the rate of change of the partial moment's sums 
$ E^{n}_{s}(z(t),t)$, with respect to time $t$, and invoking the $n^{th}$-partial $sk^{th}$-Lesbegue moments sums and the more accurate binary expansion for  moments estimates  from the collisional form  \eqref{PovznerIII}, the for a suitable $k_0 > \kmax$ to be chosen later,  the following  estimate holds
\begin{align}\label{ODI tail gen}
\frac{d}{dt} E^{n}_{s}(z(t),t) &=  \sum_{k=0}^{k_{0}}  m'_{sk}[f] \frac{z^k(t)}{k!}  + \sum_{k=k_0+1}^n  m'_{sk}[f] \frac{z^k(t)}{k!} + 
z'(t)\sum_{k=1}^{n}  m_{sk}[f] \frac{z^{k-1}(t)}{(k-1)!}\nonumber\\
&\leq \sum_{k=0}^{k_{0}}  m'_{sk}[f] \frac{z^k(t)}{k!}  + \sum_{k=k_0+1}^n \frac{z^k(t)}{k!}  \Big(
2^{\frac{3\gamma}2-1}C_{\Phi} \mu_{sk} S_{sk}[f] -\mathcal {A}_{s\kc} m_{sk+\frac\gamma2}[f] \Big) \nonumber\\
&\hspace{4.3cm} + z'(t) \sum_{k=0}^{n-1} m_{sk+\frac\gamma2}[f] \, \frac{z^k(t)}{(k)!}\\
&\le\sum_{k=0}^{k_{0}}  \left(m'_{sk}[f] + \mathcal {A}_{s\kc}   m_{sk+\frac\gamma2}[f] \right) \frac{z^k(t)}{k!}  + 2^{\frac{3\gamma}2-1}C_{\Phi}\mu_{sk}\sum_{k=k_0+1}^n \frac{z^k(t)}{k!}  S_{sk}[f] \nonumber\\
&\hspace{1.5cm}  - \mathcal {A}_{s\kc}   I^n _{s,\gamma}(z(t),t) +z'(t) I^{n-1}_{s,\gamma}(z(t),t), \nonumber
\end{align}
much like the split made in the Propagation of partial some case \eqref{expotail-9}, and so  the right hand side from \eqref{ODI tail gen} will be  estimated for times adapting the estimates performed  in \eqref{expotail-9}, taking into account that the unknown exponetial rate $z=z(t)$ depends on time $t\in (0,1]$.

Hence,  invoking the global  moments generation estimates from \eqref{generation-bound-2}, for  $c_{\gamma}(sk) = \frac{\gamma}{2(sk-1)}$,  for any $sk\ge s\kmax>2$, 
the first step starting from  consists in  performing the $m'_{sk}[f]$ estimates similar to those on  \eqref{mskprime estimates}, for any   $0<sk \le s\kmax$ but now  applying the generation upper bounds  \eqref{moments-gen-sum} when needed.
   In addition,  if  the initial data $M_G =m_{s\kmax}[f_1](0)$, and $sk_0\ge  s\kmax$, then
    \begin{align}\label{mskprime estimates gen 2}
  m_{sk}'[f](t)&<\! \ttm_{1,sk}[f_0] \, \beta_{sk} \,R_{\C}(\delta^{-1}_{sk}) \mathbb{I}_{ k \ge\kmax+1}\ \nonumber\\
  &\qquad\quad+ \|b\|_{L^1(\mathbb {S}^{d-1})} \ 2^{\frac{\frac \kmax 2+3\gamma}2 }  \kmax C_\Phi   m_{1+\frac{\gamma}2}[f](t) m_{s\kmax+\frac{\gamma}2}[f](t) \mathbb{I}_{ k <\kmax+1}
\nonumber \\
  &<\ttm_{1,sk}[f_0] \, \beta_{sk} \,R_{\C}(\delta^{-1}_{sk}) \, \mathbb{I}_{ k \ge\kmax+1}+  \|b\|_{L^1(\mathbb {S}^{d-1})} 2^{\frac{\frac{\kmax}2 +3\gamma}2 } \!  \kmax C_\Phi  
   \left[\left(R_{\mathbb E}(\delta^{-1}_{s+\frac\gamma 2 }) + \frac{m_1[f_0]}{\gamma\, A_{s\kc}}  \nonumber \right)\right.\\
   &\left.\quad\quad
    \left(R_{\mathbb E}(\delta^{-1}_{s\kmax+\frac\gamma 2 }) +  m_1[f_0]\max_{k<\kmax+1} \left(  \!\frac {1}{c_\gamma(sk\!+\!\frac\gamma2) \A_{s\kc}}\right)^{\frac{1}{c_\gamma(sk\!+\!\frac\gamma2)}}
  \right)  t^{ -\frac{1}{c_\gamma(sk\!+\!\frac\gamma2)}} \right] \, \mathbb{I}_{ k <\kmax+1} \nonumber\\
  &  <   \kappa^G_{k_0}
    \left( \mathcal R (\delta^{-1}_{sk}) +  
{\C}^G_{\bar k} \, t^{-\frac{2sk\!+\!\gamma-2}{\gamma}}\right),   
     \end{align}
     with the time constant factors
 \begin{align}\label{kappa-g}
\kappa^G_{k_0} :=   \| b\|_{L^1(\mathbb S^{d\!-\!1})} \left(  C_\Phi \left(2^{sk_0+\frac\gamma2}\! +\! 2\!\right)+\!2^{\frac{\frac \kmax2+3\gamma}2 }\right)   \,  \max\left\{\left(m_1 ;\ttm_{1,sk_0+\frac\gamma2};\left( m_1\,\ttm_{1, sk_0}\right)^{\!\frac{sk_0-1}{s\kmin+\frac\gamma2}}\right) \right\}[f_0] 
\end{align}
and 
      $ {\C}^G_{\kbar}$ denoting, for any moment's order $1<k\le \kbar$ such that  $ s\kbar\ge sk> 1$ and $0<s\le 1$, 
          \begin{align}\label{CGk}
     {\bf \C}^G_{\kbar}:=
 \max_{k<\kbar+1}\!\!\left(  \!\frac {c^{-1}_\gamma(sk\!+\!\frac\gamma2)}{ \A_{s\kc}}\right)^{c^{-1}_\gamma(sk\!+\!\frac\gamma2)}.
 \end{align}       %

Since  the positive  coercive constant $\A_\kc=c_{lb}2^{\frac\gamma2}{ A_\kc}$, may or may not be bigger that unity, and the power associated to the quotient $c^{-1}_\gamma(sk+\gamma/2)/ \A_{s\kc}$ needs to be maximize between $1<sk<s \kbar$, then
 the corresponding constant upper $ {\bf \C}^G_{k_0}$ bound associated to  \eqref{CGk}   is simply  stated to be 
 \begin{align}\label{CGk0}
{\bf \C}^G_{k_0} := \max_{k\le k_0+1} \left(  \!\frac {sk-1+\gamma/2}{ \A_{s\kc} \, \gamma/2}\right)^{\!\!\frac{sk-1+\gamma/2}{\gamma/2}}.
  \end{align}

Therefore, for any $t\in(0, 1]$ and  moment order  $k_0\ge 0$,  the  first term in that last RHS of \eqref{moments-gen-sum}  estimates the partial moments sum majoring with the  constant $ {\C}^G_{k}$ from \eqref{CGk} by $ {\C}^G_{sk_0}$, and natural upper bounds for $m'_{sk+\frac\gamma 2}$ from \eqref{mskprime estimates gen 2}
as follows
\begin{align}\label{ODI tail gen 1} \nonumber
\sum_{k=0}^{k_{0}}  &\left( \mathcal{A}_{s\kc}  m_{s k+\frac\gamma2}[f] +m'_{sk}[f] \right)  \frac{z(t)^k}{k!} < \sum_{k=0}^{k_{0}}  \left[{\A}_{s\kc}  m_{s k+\frac\gamma2}[f]     + \kappa^G_{k_0}
    \left( \mathcal R (\delta^{-1}_{sk}) +  
{\C}^G_{\bar k} \, t^{-\frac{2sk\!+\!\gamma-2}{\gamma}}\right) \right] \frac{z(t)^k}{k!}\nonumber\\
&\ \  <  {\A}_{s\kc}\left(  \left( m_1\,\ttm_{1, sk_0}\right)^{\!\frac{sk_0-1}{s\kmin+\frac\gamma2}} \mathcal R(\delta^{-1}_{sk_0+ \frac\gamma 2} )  \max_{0<t\le 1} e^{z(t)} +
  m_1[f_0]\,{\C}^G_{sk_0}\sum_{k=0}^{k_0}t^{-{\frac{sk-1+\gamma/2}{\gamma/2}}}  \frac{z(t)^k}{k!}\!\right)   \nonumber \\
& \qquad \quad+   \kappa^G_{k_0} \mathcal R (\delta^{-1}_{sk_0}) \max_{0<t\le 1} e^{z(t)}  +  \kappa^G_{k_0} {\C_\kmax^G} 
   \, \sum_{k=0}^{\kmax+1} t^{-\frac{2sk+\gamma-2}{\gamma}}    \frac{z(t)^k}{k!}\\
 &\quad < 
\kappa^G_{k_0} \left(  {\A}_{s\kc} + 1  \right)  \max_{0<t\le 1} e^{z(t)} \mathcal R(\delta^{-1}_{sk_0+ \frac\gamma 2}) +  \left[ {\A}_{s\kc}\, m_1[f_0]\, 
  {\bf\C}_{sk_0+\frac\gamma2}^G+   \kappa^G_{k_0} \,
  \C_\kmax^G \right]  
  \sum_{k=0}^{k_{0}}   \, t^{-\frac{2sk+\gamma-2}{\gamma}}    \frac{z(t)^k}{k!}\nonumber . 
  \end{align}
with the constant rate factor     $\mathcal R(\delta^{-1}_{sk_0+ \frac\gamma 2}):=\mathcal{R}(\beta_{sk_0+ \frac\gamma 2}/\A_{s\kc})$ 
 is from  \eqref{coercive rate max}, provided that $sk_0>1$.

%
 It follows that estimate \eqref{ODI tail gen 1} can be expressed term of these constant factors  \eqref{kappa-g} by
%
 \begin{align}\label{ODI tail gen 1} 
\sum_{k=0}^{k_{0}}  &\left( \mathcal{A}_{s\kc}  m_{s k+\frac\gamma2}[f] +m'_{sk}[f] \right)  \frac{z(t)^k}{k!} 
<\kappa^G_{k_0} \left(  {\A}_{s\kc} + 1  \right)  
\left(\max_{0<t\le 1} e^{z(t)}\,\mathcal R(\delta^{-1}_{sk_0+ \frac\gamma 2}) 
  +   {\bf \C}_{\text{\bf max}}^G   
  \sum_{k=0}^{k_{0}}   \, {t}^{-\frac{2sk_0+\gamma-2}{\gamma}} \frac{z(t)^k}{k!}\right),
  \end{align}
for
 \begin{align}\label{CG max} 
 {\bf \C}_{\text{\bf max}}^G   =  \max\left\{  m_1[f_0]\,{\bf\C}_{sk_0+\frac\gamma2}^G  \,
  \, ; \,  \C_\kmax^G  \right\}
  \end{align}

The second term from \eqref{ODI tail gen} is estimated by the Discrete Convolution Lemma~\ref{lem:convolution} set up for  $\tilde{k}=k_0$, inequality \eqref{expotail7}, 
now for any $s\le\gamma/2$, 
yields
\begin{equation}\label{ODI tail gen 2}
 2^{\frac{3\gamma}2-1}C_{\Phi}\mu_{sk_0}\sum_{k=k_0+1}^n \frac{z^k(t)}{k!}  S_{sk}[f]   \le  2^{\frac{3\gamma}2-1}C_{\Phi} \mu_{sk}\  E^n _{s} (z(t),t)\, I^n _{s,\gamma}(z(t),t),
\quad \text{for any} \    0< t \le 1. \end{equation}

\smallskip

Then, the last term from the right hand side of \eqref{ODI tail gen} is gathered by the fact that $I^{n-1}_{s,\gamma}(z(t),t)\leq I^{n}_{s,\gamma}(z(t),t)$ together with  \eqref{ODI tail gen 1} and \eqref{ODI tail gen 2} to obtain  upper bounds 
 to  the ODI for  $ E^{n}_{s}(t,z\,t)$ as follows
\begin{align}\label{ODI tail gen5}
\frac{d}{dt} E^{n}_{s}(t,z(t) ) &\le  \kappa^G_{k_0} \left(  {\A}_{s\kc} + 1  \right) 
\left(\max_{0<t\le 1} e^{z(t)} \mathcal R(\delta^{-1}_{sk_0+ \frac\gamma 2}) 
  +   {\bf \C}_{\text{\bf max}}^G   
  \sum_{k=0}^{k_{0}}   \, {t}^{-\frac{2sk_0+\gamma-2}{\gamma}} \frac{z(t)^k}{k!}\right)
 \nonumber\\
&\qquad\quad -  I^n _{s,\gamma}(z(t),t)\left( \A_{s\kc} - z'(t) - 2^{\frac{3\gamma}2-1}C_{\Phi} \mu_{sk_0}\  E^n _{s} (z(t),t) \right)
     \quad \text{for any} \ \ 0<t\le 1, 
\end{align}


Further,  the negative term proportional to the shifted moments partial sums  $I^n _{s,\gamma} (z(t),t)$,  is estimated  from below by
\begin{align}\label{K3-1}
 {I}^n_{s; \gamma}(z(t),t) &:=  \sum_{k=0}^{n} \,  \frac{z^k(t)} {k!} m_{sk+\gamma/2}(t) 
 =\frac{1}{z(t)} \left( \sum_{k=0}^{n}(k+1) \,  \frac{z^{k+1}(t)} {(k+1)!} m_{sk+\gamma/2}(t) \right)   \\
 &\ge \frac{1}{z(t)} \left( \sum_{k=1}^{n+1} \,  \frac{z^k(t)} {k!} m_{s k}(t) \right) >     \frac{1}{z (t)}  \left(E^{n} _{s} (z(t),t) - m_0[f_0]\right)\geq0, \nonumber
\end{align}
which implies, under conditions $0<t\le 1$ and  $0< s\le\gamma/2$
\begin{align}\label{ODI tail gen 6}
\frac{d}{dt} E^{n}_{s}(t,z(t) )  &< 
\kappa^G_{k_0} \left(  {\A}_{s\kc} + 1  \right) 
\left(\max_{0<t\le 1} e^{z(t)}\,\mathcal R(\delta^{-1}_{sk_0+ \frac\gamma 2}) 
  +   {\bf \C}_{\text{\bf max}}^G   
  \sum_{k=0}^{k_{0}}   \, {t}^{-\frac{2sk_0+\gamma-2}{\gamma}} \frac{z(t)^k}{k!}\right) \nonumber\\
&
\qquad\quad  -\frac{(E^{n} _{s} (z(t),t) - m_0[f_0])}{z(t)}  \left(\A_{s\kc} - z'(t) - 2^{\frac{3\gamma}2-1}C_{\Phi} \mu_{sk_0}\  E^n _{s} (z(t),t) \right). 
%
\end{align}

Since the latter term in the right side needs to be an absorption term, we require the condition on the exponential rate   $z(t)$ that maximizes is the largest possible one satisfying  the condition
\begin{align}\label{choose ko-0}
z'(t) \leq \frac{\A_{s\kc}}2   \ \ \ \ \text{for any}\   0 < t <1. 
\end{align}
In particular, after performing a time integration recalling that $z(0)=0$, the natural choice is $0\leq z(t)  := zt \leq \frac{\A_{s\kc}}2 t$, and therefore the $z$ factor  must satisfy
 \begin{align}\label{choose z-ge}
z \leq \frac{\A_{s\kc}}2
\end{align}
 
Hence,
\begin{align*}
\frac{d}{dt} E^{n}_{s}(t,z(t) )  &<\kappa^G_{k_0} \left(  {\A}_{s\kc} + 1  \right) 
\left(\max_{0<t\le 1} e^{z(t)}\,\mathcal R(\delta^{-1}_{sk_0+ \frac\gamma 2}) 
  +   {\bf \C}_{\text{\bf max}}^G   
  \sum_{k=0}^{k_{0}}   \, {t}^{-\frac{2sk_0+\gamma-2}{\gamma}} \frac{z(t)^k}{k!}\right)
 \nonumber  \\
&\qquad\qquad - \frac{(E^{n} _{s} (z(t),t) - m_0[f_0])}{z(t)} \left( \frac{\A_{s\kc}}2 - 2^{\frac{3\gamma}2-1}C_{\Phi} \mu_{sk_0}\  E^n _{s} (z(t),t)
\right).
\end{align*}

Now,  invoking \eqref{choose z-ge} for the choice $z(t)=\frac{\A_\kc}2 t$, then  for any $t\in (0,1]$, it follows 
\begin{align}\label{sum-mom-control}
  \sum_{k=0}^{k_{0}}  & \, {t}^{-\frac{1}{c_\gamma(sk)}} \frac{z(t)^k}{k!}= \sum_{k=0}^{k_{0}} t^{-\frac{2sk -\gamma + 2}{\gamma }}  t^k  \frac 1{k!}\left( \frac {\A_\kc}{2}\right)^k  = \sum_{k=0}^{k_{0}} t^{ {k} (1-\frac{2s}\gamma) +(1-\frac2\gamma)}   \frac{1}{k!} \left( \frac {\A_\kc}{2}\right)^k  <e^{\frac{\A_{s\kc}}2 } , 
\end{align}
for any $s\in (0,\gamma/2]$ and $ \gamma\in (0,2]$.

\medskip
\noindent
To this end, a reduced ODI for the partial sums of $sk^{th}$-moments arises 
\begin{align}\label{ODI tail gen 6}
\frac{d}{dt} E^{n}_{s}(t,zt )  &<  \kappa^G_{k_0} \left(  {\A}_{s\kc} + 1  \right)  e^{\frac{\A_{s\kc}}2 }
\left(\mathcal R(\delta^{-1}_{sk_0+ \frac\gamma 2}) 
  +   {\bf \C}_{\text{\bf max}}^G \right)\\
 & \qquad- \frac{(E^{n} _{s} (t,zt) - m_0[f_0])}{zt} \left( \frac{\A_{s\kc}}2 - 2^{\frac{3\gamma}2-1}C_{\Phi} \mu_{sk_0}\  E^n _{s} (z(t),t)
\right) , \ \ \text{for any }\ 0<t< 1.\nonumber
\end{align}

\medskip

Next, recall that 
\begin{equation}\label{ODI tail gen7}
m_0[f_0]  <  \int_{\mathbb{R}^d} f_0(v)\langle v\rangle^{2k_{in}} dv =: M_G, 
\quad\text{for} \ \ n\in \mathbb{N}\,.
\end{equation}

Therefore, since $m_0[f_0](0)=E^{n}_{s}(0 , z(0)) <M_G$, proceeding in an analog way that we did for the exponential propagation, we set the following sequence of times $t_n$, motivated by the search of the exponential rate $z$ small enough that enables an upper bound for $E^{n}_{s}(t,zt)$, given by 
\begin{equation}\label{timen}
t_n:=\sup\big\{ t\in(0,1]\,:\, E^{n}_{s}(\tau , z(\tau)) \leq M_G, \; \forall\, \tau\in[0,t] \big\}\,,\quad \text{for}\ \ \ n\in  \mathbb{N},
\end{equation}
where  $t_n\le 1$ is the first touching time where $E^{n}_{s}(t , z(t)) = M_G$, with 
$M_G$ is the initial moment with $k_{in}^{th}$-order Lebesgue weighted tail.   

In addition  invoking  moment generation estimates  up to $k=n$, for any time $0<t<1$, it holds that, 
\begin{equation*}
t^{k} \, m_{sk} (t) \le t^k\, \kappa^G_{k_0}  \mathcal R(\delta^{-1}_{sk_0+ \frac\gamma 2})  +{\A}_{s\kc}^{-\frac{2sk-\gamma+2}\gamma} \,  t^{k(1-\frac2\gamma s)+\frac2\gamma}\,,\quad \text{for}\quad 1\leq k \leq n\,,
\end{equation*}
so that, if $s\in(0,\frac\gamma2]$ and $\gamma\in(0,2]$, the limit as $t\searrow0$ of this product vanishes.  Hence,
\begin{align}\label{timen 2}
E^{n}_{s}(t , z(t))&:= \sum_{k=0}^{n} \,  \frac{(zt)^k} {k!} m_{sk}(t)= m_0[f_0]+\sum_{k=1}^{n} \,  \frac{(zt)^k} {k!} m_{sk}(t) \le m_0[f_0]\nonumber \\
& \qquad\qquad +  \left(e^{\frac{\A_{s\kc}}2} -1\right) 
\left(\kappa^G_{k_0}  \mathcal R(\delta^{-1}_{sk_0+ \frac\gamma 2})  +{\A}_{s\kc}^{-\frac{2sk-\gamma+2}\gamma} \right)
 \,   t^{\frac 2\gamma}\,.
\end{align}
Consequently, since $m_0<M_G$, it holds that 
\begin{equation*}
E^{n}_{s}(t , z(t)) \le M_G \quad \text{in the interval} \quad 0 < t\leq \left(\frac{{\A}_{s\kc}^{\frac{2sk-\gamma+2}\gamma}\left(M_{G}-m_0[f_0]\right)}{ \left(e^{\frac{\A_{s\kc}}2} -1\right)\, \left(\kappa^G_{k_0}  \mathcal R(\delta^{-1}_{sk_0+ \frac\gamma 2})  {\A}_{s\kc}^{\frac{2sk-\gamma+2}\gamma}+1 \right) 
  }
\right)^{\frac\gamma{2}}\ =:\ T_{k_0}\,.
\end{equation*}
This guarantees that the differential estimate \eqref{ODI tail gen 6} satisfied by $E^{n}_{s}(t , z(t))$ holds in the non-empty interval $t\in (0,t_n]$ where
\begin{equation*}
1\ge t_{n}\geq T_{k_0} \,, \quad\text{for all}\quad  n\in\mathbb{N}\,, 
\end{equation*}
or equivalently, the set \eqref{timen} is non-empty.\\

In addition, due to the fact  that the contractive factor $0<\mu_{sk_0} \searrow 0$, with  a sufficiently large $k_0:=k_0(\kc,M_G)$, renders the lower bound estimate
\begin{align*}
&\frac{\mathcal {A}_{s\kc}}{2} - 2^{\frac{3\gamma}2-1}C_{\Phi} \mu_{sk_0}\  E^n _{s} (t,zt) \geq \frac12 \mathcal {A}_{s\kc} - 2^{\frac{3\gamma}2-1}\,C_{\Phi}\, \mu_{sk_0}\,  M_G   \ge \frac{ \mathcal {A}_{s\kc}}4, 
\end{align*}
to hold for the precise, uniform in $n\in \mathbb N$, choice of 
    \begin{align}\label{choose ko-1}
  &\qquad  \qquad\qquad\qquad\qquad \mu_{s (k_{0}-1)} \leq  \frac{\mathcal {A}_{s\kc} }{2^{\frac{3\gamma+2}2} C_{\Phi}  M_G}, \qquad\text{for any}\quad  0<s \le \frac\gamma2 \quad \text{and} \quad   0< t\le t_n\le 1.
\end{align}
In particular,   {\bf Remark~\ref{k_0}} stated  that for any  the angular transition $b(\hat u\cdot\sigma) \in L^p(S^{d-1})$, \ for $1<p\le\infty$,   invoking \eqref{Povp1} or  \eqref{Povp2},  the $k_0^{th}$-moment becomes explicit 
\begin{align}\label{choose ko-22-g}
s k_0>  \mathcal{O}\left(\frac{2^{\frac{3\gamma+4}2} C_{\Phi}  M_G}{ \mathcal {A}_{s\kc}}\right)^{\frac{p}{p-1}},  \qquad \text{for} \ k_0>\kc \ \text{and}\  0<s\le 1.
\end{align}
Therefore, directly from estimate \eqref{ODI tail gen 6}, substituting $z(t)=zt$ it follows that
\begin{align} \label{ODI tail gen7-1} 
\frac{d}{dt}E^{n}_{s}(t,zt)  &<   \kappa^G_{k_0} \left(  {\A}_{s\kc} + 1  \right)  e^{\frac{\A_{s\kc}}2 }
\left(\mathcal R(\delta^{-1}_{sk_0+ \frac\gamma 2}) 
  +   {\bf \C}_{\text{\bf max}}^G \right)  -  
\frac {\mathcal {A}_{\kc}}{4zt} \big(E^{n} _{s} (t,zt) - m_0[f_0]  \big)\\
&\text{valid for}\quad 0 < t\leq t_n\leq1 \quad\text{and} \quad s\in \left(0, \frac\gamma2\right]\, .\nonumber
 \end{align}
where the coefficients of this new ODI are uniform in $n$.

\smallskip
\noindent
Hence, in order to find bound to solutions of the initial value problem to inequality \eqref{ODI tail gen7-1},  it is convenient  to set  $X^{n}_{k_0}(t,zt):= E_{s}^n(t,zt) -  m_0[f_0]\geq0$, to obtain  the  contractive ODI 
\begin{align} \label{ODI tail gen7-11} 
&\frac{d}{dt}X^n(t,zt)  <  \kappa^G_{k_0} \left(  {\A}_{s\kc} + 1  \right)  e^{\frac{\A_{s\kc}}2 }
\left(\mathcal R(\delta^{-1}_{sk_0+ \frac\gamma 2}) 
  +   {\bf \C}_{\text{\bf max}}^G \right)
 -  \frac {\mathcal {A}_{s\kc}}{4zt}\,X^{n} (t,zt)\,,\qquad 0<t\leq t_n\leq1\,, \nonumber\\
 &X^n(0,0):= \lim_{t\to 0^+}  E_{s}^n(t,zt) -  m_0[f_0]=0   \qquad \text{its inital data}.
\end{align}

A function satisfying the differential inequality \eqref{ODI tail gen7-11} is controlled by direct integration of the inequality $\bar X'+c\, t^{-1} \bar X < K$ after using the multiplicative factor $t^c$ so that $(t^c \bar X)' \leq t^c\,K$, for $t>0$.  We arrive to
\begin{equation*}
\bar X < \bar X(0) + t(1+c)^{-1} B\,,\quad t>0\,.
\end{equation*}
When this super solution is applied with  $c:= \frac{\mathcal {A}_{\kc}}{2z}$ and $K:=  \kappa^G_{k_0} \left(  {\A}_{s\kc} + 1  \right)  e^{\frac{\A_{s\kc}}2 }
\left(\mathcal R(\delta^{-1}_{sk_0+ \frac\gamma 2}) 
  +   {\bf \C}_{\text{\bf max}}^G \right)$, and initial $\bar X(0)=0$ the target function $X^n(t,zt)$ is then controlled by
\begin{align} \label{upper ODE gen1} 
X^n(t,zt) &<   \frac{4zt} {\mathcal {A}_{s\kc} +4z}  \kappa^G_{k_0} \left(  {\A}_{s\kc} + 1  \right)  e^{\frac{\A_{s\kc}}2 }
\left(\mathcal R(\delta^{-1}_{sk_0+ \frac\gamma 2}) 
  +   {\bf \C}_{\text{\bf max}}^G \right) \\
 &\ \ \ <    \frac{4zt} {\mathcal {A}_{\kc}}  \kappa^G_{k_0} \left(  {\A}_{s\kc} + 1  \right)  e^{\frac{\A_{s\kc}}2 }
\left(\mathcal R(\delta^{-1}_{sk_0+ \frac\gamma 2}) 
  +   {\bf \C}_{\text{\bf max}}^G \right)   \qquad \text{for any}\  0<t\leq t_n\le 1. \nonumber
\end{align} 
Therefore choosing $zt>0$ such that
\begin{align} \label{upper ODE gen1} 
 &\frac{4zt} {\mathcal {A}_{s\kc}}  \kappa^G_{k_0} \left(  {\A}_{s\kc} + 1  \right)  e^{\frac{\A_{s\kc}}2 }
\left(\mathcal R(\delta^{-1}_{sk_0+ \frac\gamma 2}) 
  +   {\bf \C}_{\text{\bf max}}^G \right) <   M_G-  m_0[f_0],   \nonumber\\
\text{or equivalently, }&\ \ \ \ \ \ \ \\
&{0<zt} <  \frac{(M_G-m_0[f_0]) \mathcal {A}_{s\kc}   e^{-\frac{\mathcal{A}_{s\kc}}2}  }{4 \,   \kappa^G_{k_0} \left(  {\A}_{s\kc} + 1  \right)  \left(\mathcal R(\delta^{-1}_{sk_0+ \frac\gamma 2}) 
  +   {\bf \C}_{\text{\bf max}}^G \right) },  \qquad \text{for any}\  0<t\leq t_n\le 1,\nonumber
\end{align} 
it follows  that, for any $0<t\leq t_n\le 1$, 
\begin{align*}
E_{s}^n(t,zt) -  m_0[f_0] &<  \frac{4zt} {\mathcal {A}_{s\kc}}  \kappa^G_{k_0} \left(  {\A}_{s\kc} + 1  \right)  e^{\frac{\A_{s\kc}}2 }
\left(\mathcal R(\delta^{-1}_{sk_0+ \frac\gamma 2}) 
  +   {\bf \C}_{\text{\bf max}}^G \right)  < M_G -m_0[f_0], 
\end{align*}
implying 
\begin{align}\label{ODI tail gen7-2} 
E_{s}^n(t,zt)  < M_G,  \qquad \text{for any}\  0<t\leq t_n\le 1. 
\end{align}
Since $t\leq1$ we conclude that any constant parameter $z$, uniformly in $n\in\mathbb N$,  satisfying
 \begin{align}\label{choose z gen-final}
 0<z<  \min\left\{ \frac{\A_{s\kc}}2\  ;\  \mathcal {A}_{s\kc}\frac{ \mathcal {R}(\delta_{sk_0+\frac\gamma2}) (M_G-m_0[f_0])  \,   e^{-\frac{\mathcal{A}_{s\kc}}2}  }{4   \kappa^G_{k_0} \left(  {\A}_{s\kc} + 1  \right)  
\left(1+ \mathcal R(\delta_{sk_0+ \frac\gamma 2})   {\bf \C}_{\text{\bf max}}^G \right) 
  }  \right\} ,
\end{align}
is the admissible rate,  such that $E^{n}_{s}(t,zt)<M_G$ for $0<t\leq t_n\le 1$,  for any  $s\in (0,\gamma/2]$ and the constant $  \kappa^g_{sk_0}$ defined in \eqref{kappa-g}. 

Since the rate factor $0<{\mathcal R}(\A_{s\kc}/\beta_{sk_0+\frac\gamma2})=\mathcal{R}(\delta_{sk_0+\frac\gamma2})<1$, the exponential  rate $z$ will not degenerate for any $0<\gamma\le 2$. In addition, recalling  $\mathcal{A}_{\kc} := c_{lb} 2^{\frac{\gamma}2} A_{s\kc}$, the form of the rate of decay for the exponential tail,  depending on the coercive factor $A_{s\kc}$, by replacing on  \eqref{choose z gen-final}, to obtain the rate \eqref{exp-gene-rate} in the statement of generation of exponential moments.

 Note that the inequality  \eqref{ODI tail gen7-2} for the choice of such rate $z(t)=zt$ is strict, that means there is no touching time in $(0,t_n]$, so it must be the case that $t_n=1$ due to time continuity of the moments evolution.  

\smallskip
\noindent
Sending $n\rightarrow\infty$ one concludes 
\begin{align} \label{ODI tail gen10} 
E_{s}(t,zt)   <   M_G,   \quad \ \text{for all}\quad 0\leq t\le 1,\quad s\in\left(0, \frac\gamma2\right]\,.
\end{align}

Note that the dependence of $z$ is in terms of the conserved quantities, that is, the mass and energy of $f$ and on $M^\star_G$, defined as $M^\star_G:=\sup_{t\geq0} m_{k_{in}}[f(t)]$, replacing $M_G$.  All such quantities are independently on time.    Therefore, by the time-invariance of the equation, one redoes the argument in the intervals $t\in [\kmax/2,1+\kmax/2]$, with $\kmax=1,2,\cdots$, to conclude that
\begin{align} \label{ODI tail gen10} 
E_{s}(t,zt)  \le  M^*_G,   \quad \ \text{for all}\quad t\geq0,\quad s\in\left(0, \frac\gamma2\right]\,.
\end{align}

\bigskip

The proof Theorem~\ref{thm:exp-moment-gen} is now completed.

\end{proof}

\section{Convolution inequalities and gain of integrability for the collision operator}\label{convo-ine+gain int}
After the groundbreaking work of Arkeryd \cite{Arkeryd-Linfty} in 1982 on propagation of $L^\infty$ estimates for bounded angular cross-sections, in late eighties and nineties Gustaffson \cite{Gustaff}, Wennberg \cite{WennbergLp}, Lions \cite{LionsIII}, among others, studied $L^r(\R^d)$ estimates, for $1\le r <\infty$, for the collision operator.  Lions introduced the concept of {\em gain of integrabilty}, by meaning that it is possible to obtain smoothing properties of the collision operator under certain regularity conditions of the potential kernel.  His work culminated in estimates of the form
\begin{equation*}
\int_{\mathbb{R}^{d}} Q^+(f, g)(v,t)f(v,t) dv  \leq  C(\|g\|_{L^1})  \| f\|_{L^2}^{2-\delta},\quad \text{for some} \quad \delta\in(0,1)\,.
\end{equation*}
Yet,  both constants $C(\|g\|_{L^1})$ and $\delta\in(0,1)$ where shown to exist but hard to calculate.  These results provided a venue to obtain energy estimates suitable for the study of the evolution of $L^r$-norms for solutions to the Boltzmann equation later in \cite{TV, MV04}.  Such initial results where limited to potentials satisfying certain regularity conditions in both kinetic and angular variables, including angular cut-off near $\hat u\cdot\sigma=\pm1$.

\smallskip
\nn The  new approach presented here comprises recent versions of  functional inequalities that will enable to address the study of the propagation for the $L^{r}$-norms for the Boltzmann equation, not only  under general integrable angular kernels for the first time, but also  following estimates based in previous work \cite{AC,ACG-cmp10,AG-krm11},  it is possible to obtain explicit constants important to describe  the dependence of the key parameters at play such as the angular transition function.

\smallskip
{
\nn Motivated by this goal, we revisit the Young's inequality's proof  with techniques  introduced by the authors and Carneiro in \cite{AC, ACG-cmp10},  based on radial symmetrization techniques applied to the weak form of the gain collision Boltzmann operator, not by invoking   the classical representation (\ref{bina-weak2}, \ref{weightuv2}),  but rather written in terms of   center of mass and relative velocity coordinates as described in \eqref{eq:9.3-Vu} taking the form $v'_* -v= -u^{+}:=-\tfrac{1}{2}\big(u+|u|\sigma\big)$ and  $v'-v = u^{-}:=\tfrac{1}{2}(-u+|u|\sigma)$, as follows.

\subsection{Young's type inequality for the gain collision operator} In the late eighties, it was show in \cite{Gustaff} that the gain part of the collision operator satisfies Young's convolution inequalities.  Indeed, as the reader may have noticed, the gain collision operator is a special type of nonlinear convolution between the entries and the collision kernel.  The reader can find in  \cite{AC, ACG-cmp10} the following version of such convolution inequalities.  These estimates will be important for the  propagation of $L^r$-norms, for the range $1< r \leq \infty$, to be shown as an application to the Boltzmann equation in the next section.
\begin{thm}[\bf{Young's type inequality}]\label{Young}
Let $1 \leq p,\,q,\,r \leq \infty$ with $1/p + 1/q = 1 + 1/r$.  Assume that $$B(|u|,\hat{u}\cdot\omega)=b(\hat{u}\cdot\omega)\in L^1(\mathbb{S}^{d-1})\,.$$
Then, the bilinear operator $Q^{+}$ satisfies
\begin{equation}\label{YoungIn}
\left\| Q^{+}(f,g)\right\|_{L^{r}(\R^d)} \leq C_{b}(p,q) \,
 \|f\|_{L^p(\R^d)} \, \|g\|_{L^q(\R^d)}.
\end{equation}
The constant $C_{b}(p,q)$ is given by
\begin{align}\label{YoungInC1-0}
\begin{split}
C_{b}(p,q)=&\left(\int_{\mathbb{S}^{d-1}}\left(\tfrac{1 + e_{1}\cdot\sigma}{2}\right)^{-\tfrac{d}{2r'}}b(e_{1}\cdot\sigma)d\sigma\right)^{\tfrac{r'}{p'}}\left(\int_{\mathbb{S}^{d-1}}\left(\tfrac{1 - e_{1}\cdot\sigma}{2}\right)^{-\tfrac{d}{2r'}}b(e_{1}\cdot\sigma)d\sigma\right)^{\tfrac{r'}{q'}}.
\end{split}
\end{align}
In the case $(p,q,r) = (1,1,1)$ the constant $C_{b}(1,1)$ is understood as
\begin{equation}\label{YoungInC1}
 C_{b}(1,1) = \int_{\mathbb{S}^{d-1}}b(e_{1}\cdot\sigma)d\sigma = \|b\|_{L^{1}(\mathbb{S}^{d-1})}.
\end{equation}
In the cases that $p=1$ or $q=1$ one interprets $(\cdot)^{r'/p'}=(\cdot)^{0} = 1 $ and $(\cdot)^{r'/q'}=(\cdot)^{0} = 1$ respectively.
\end{thm}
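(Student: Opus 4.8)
The plan is to work from the weak (Maxwell) formulation of the gain operator written in center-of-mass and relative-velocity coordinates, so that $Q^+(f,g)$ is tested against an arbitrary $\phi\in L^{r'}(\mathbb{R}^d)$ and the problem becomes estimating
\[
\mathcal{I} := \int_{\mathbb{R}^d}\int_{\mathbb{R}^d}\int_{\mathbb{S}^{d-1}} f(v)\,g(v_*)\,\phi\bigl(v + u^-\bigr)\, b(\hat u\cdot\sigma)\, d\sigma\, dv_*\, dv,
\]
with $u=v-v_*$ and $u^{\pm}=\tfrac12(|u|\sigma\pm u)$. By duality, $\|Q^+(f,g)\|_{L^r}=\sup\{\mathcal{I} : \|\phi\|_{L^{r'}}\le 1\}$, so it suffices to bound $\mathcal{I}$ by the right-hand side of \eqref{YoungIn}. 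First I would, for each fixed $\sigma$, perform the linear change of variables $(v,v_*)\mapsto(v',v'_*)$ where $v'=v+u^-$ and $v'_*=v-u^+$; the geometry $u^-\cdot u^+=0$ and $|u^-|^2+|u^+|^2=|u|^2$ (established around \eqref{eq:9.3-Vu}) lets one express $v$ and $v_*$ in terms of $v'$, $v'_*$ and $\sigma$, and the Jacobian of this map is a fixed constant depending only on $d$ and $\hat u\cdot\sigma$. The key point is that after this substitution the three functions $f$, $g$, $\phi$ get evaluated at linear images of $(v',v'_*)$, and the singular weights $\bigl(\tfrac{1\pm \hat u\cdot\sigma}{2}\bigr)^{-d/2r'}$ appear precisely from the Jacobian factors.

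The second step is a three-function Hölder argument. I would split the product $f\cdot g\cdot\phi$ using the exponents dictated by $1/p+1/q=1+1/r$: writing $1/p'+1/q'+1/r=1$ after rearrangement is not quite right, so more precisely one uses the trilinear Hölder inequality with exponents $p$, $q$, $r'$ satisfying $1/p+1/q+1/r'=2$, distributing each function between two of the three integration variables $v'$, $v'_*$, $\sigma$ in the standard Brascamp-Lieb / Young fashion. Concretely, after the change of variables the $\sigma$-integral factors into two pieces, one controlling the $f\cdot\phi$ interaction weighted by $\bigl(\tfrac{1+\hat u\cdot\sigma}{2}\bigr)^{-d/2r'}b$ and one controlling the $g\cdot\phi$ interaction weighted by $\bigl(\tfrac{1-\hat u\cdot\sigma}{2}\bigr)^{-d/2r'}b$; raising these to the powers $r'/p'$ and $r'/q'$ respectively and applying Hölder in $v'$, $v'_*$ gives exactly $C_b(p,q)\,\|f\|_{L^p}\,\|g\|_{L^q}\,\|\phi\|_{L^{r'}}$. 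Taking the supremum over $\phi$ finishes the bound. The endpoint case $(p,q,r)=(1,1,1)$ is separate and trivial: there the weights carry exponent zero, so one simply integrates $b$ over $\mathbb{S}^{d-1}$ and uses $\int\int\int f g\, b = \|b\|_{L^1(\mathbb{S}^{d-1})}\|f\|_{L^1}\|g\|_{L^1}$ directly from the mass-conservation identity for $Q^+_0$.

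The main obstacle I anticipate is verifying that the singular integrals $\int_{\mathbb{S}^{d-1}}\bigl(\tfrac{1\pm e_1\cdot\sigma}{2}\bigr)^{-d/2r'}b(e_1\cdot\sigma)\,d\sigma$ are finite for the relevant range of $r$, and more delicately, controlling them uniformly when only $b\in L^1(\mathbb{S}^{d-1})$ is assumed rather than $b$ bounded. The weight $(1-e_1\cdot\sigma)^{-d/2r'}$ blows up near $\sigma=e_1$ like $\theta^{-d/r'}$ in the polar angle, which against the measure $\sin^{d-2}\theta\,d\theta$ is integrable precisely when $d/r' < d-1$, i.e. $r' > d/(d-1)$, i.e. $r<d$; this is the natural restriction and must be tracked carefully, with the $r=\infty$ case being unconditionally fine and $r=1$ handled by the endpoint argument. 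Since $b$ is only $L^1$ on the sphere and can concentrate anywhere, the finiteness of the product $C_b(p,q)$ should be read as a hypothesis on the pair $(b,r)$ rather than something provable in general — I would state it as: whenever the displayed spherical integrals are finite, \eqref{YoungIn} holds with constant \eqref{YoungInC1-0}. The rotation-invariance of the sphere (reducing the general direction $\hat u$ to $e_1$) and the careful bookkeeping of which Jacobian power lands on which function are the places where sign errors and exponent mismatches are most likely, so I would double-check the exponent arithmetic $r'/p'+r'/q' = r'(2 - 1/p - 1/q) = r'(1 - 1/r') = r' - 1$ type identities before committing to the final constant.
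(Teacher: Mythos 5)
Your plan breaks down at its central step. For fixed $\sigma$ the map $(v,v_*)\mapsto(v',v'_*)$ in the scattering representation is \emph{not} a change of variables of $\mathbb{R}^{2d}$: writing $V=\tfrac{v+v_*}{2}$, $u=v-v_*$, it sends $(V,u)\mapsto (V,|u|\sigma)$, so its image is the codimension-$(d-1)$ set $\{v'-v'_*\parallel\sigma\}$ and it has no Jacobian in the sense you use. (The fixed-direction collision map is an involution with $|\det J|=1$ only in the impact-direction $\eta$ representation, Lemma~\ref{lemma1}, and there it produces no weights at all.) Consequently the singular factors $\bigl(\tfrac{1\pm\hat u\cdot\sigma}{2}\bigr)^{-d/2r'}$ cannot be generated the way you describe. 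They come instead from the dilation relation $|u^{\pm}|=|u|\sqrt{\tfrac{1\pm\hat u\cdot\sigma}{2}}$: the admissible changes of variables are $u\mapsto u^{\pm}$ at fixed $\sigma$ (Jacobian $\sim 2^{-d}(1\pm\hat u\cdot\sigma)$, i.e. a full power $d$), and the reduction of that power $d$ to $d/r'$ is exactly the content of the $L^p$ estimates for the spherical averaging operator $\mathcal{P}(\Psi,\Phi)(u)=\int_{\mathbb{S}^{d-1}}\Psi(u^-)\Phi(u^+)b(\hat u\cdot\sigma)\,d\sigma$, Theorem~\ref{HPop}, proved by radial symmetrization (the route the paper takes, following \cite{AC,ACG-cmp10}). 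The paper's proof is: duality in the weak form so that the test function sits at $v'=v+u^-$ (and, after swapping, at $v'_*=v-u^+$), which exhibits $\int Q^+(f,g)\phi$ as an iterated integral of $f$, a translate/reflection of $g$, and $\mathcal{P}$ acting on translates of $\phi$; then Theorem~\ref{HPop} plus H\"older gives \eqref{YoungIn}, and the split of $b=b^{r'/p'}b^{r'/q'}$ in the $\sigma$-integral produces the two factors in \eqref{YoungInC1-0}. Your trilinear-H\"older idea is the right spirit, but without the $\mathcal{P}$-operator estimate (or an equivalent Carleman/rearrangement device) the weights and their exponents are not obtained.

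Two smaller points. Your exponent check is wrong: from $1/p+1/q=1+1/r$ one gets $1/p'+1/q'=1-1/r=1/r'$, hence $r'/p'+r'/q'=1$ (not $r'-1$); this sum being one is precisely what legitimises the H\"older splitting of $b$ between the two spherical factors. And the integrability discussion is unnecessary as a hypothesis: the theorem is stated with the constant \eqref{YoungInC1-0} as is, and when $b\in L^1(\mathbb{S}^{d-1})$ concentrates near the poles that constant may simply be infinite, in which case the inequality is vacuous; no restriction such as $r<d$ is imposed in the statement, and the endpoint $(p,q,r)=(1,1,1)$ is handled, as you say, by the mass identity for $Q^+$.
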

\medskip

 It is of interest to recall that Young's inequality   framework's proof  that developed by introducing  
 a weighted $\mathcal P$-operator as a function of the  the angular integration, written in center of mass and relative velocity coordinates  \eqref{eq:9.3-Vu} is defined by
\begin{align}\label{P-operator1}
\mathcal{P}({\it\Psi}, {\it\Phi})(u) &:= \int_{\mathbb{S}^{d-1}} {\it\Psi}(u^-(u,\sigma) {\it\Phi}(u^+(u,\sigma)) b(\hat{u} \cdot \sigma) d\sigma,\  \qquad u^{\pm}:=\tfrac{1}{2}\big(u\pm|u|\sigma\big)\,  
\end{align}

Indeed, the operator $\mathcal{P}$ in \eqref{P-operator1} used in the weak formulation plays the central role of angular averaging occurring in $Q^{+}(f,f)(v)$ as will be explained below, after Theorem \ref{HPop}.
From this representation, it is clear that the integrability properties of $\mathcal{P}$ are central to the understanding of the integrability properties of $Q^{+}$ itself.
 More precisely, using radial symmetrization one can readily prove the following result, 
\begin{thm}\label{HPop}
Let $1 \leq p,q,r \leq \infty$ with $1/p+1/q=1/r$. Then, 
\begin{align}\label{HPop1}
&\|\mathcal{P}_b({\it\Psi}, {\it\Phi})\|_{L^r(\mathbb{R}^{d})}\leq \tilde{C}_{b}(p,q) \|{\it\Psi}\|_{L^p(\mathbb{R}^{d})}\|\it\Phi\|_{L^q(\mathbb{R}^{d})}, \nonumber\\
\text{where}\qquad \qquad \qquad &\ \\
&\tilde{C}_{b}(p,q)=\int_{\mathbb{S}^{d-1}} \left(\tfrac{1 - e_{1}\cdot\sigma}{2}\right)^{-\frac{d}{2p'}} \left(\tfrac{1 + e_{1}\cdot\sigma}{2}\right)^{-\frac{d}{2q'}}\, b\big( e_{1}\cdot\sigma \big)d\sigma. \nonumber
\end{align}
In addition, if either functions $\it\Psi$ or $\it\Phi$ are unity, but not simultaneously, then the statement \eqref{HPop1} is modified by setting $q=\infty,$ $r=p$, since  $1=\phi\in L^\infty(\mathbb{R}^{d}\times\mathbb{S}^{d-1})$ to estimate, for \ $\mathbb{S}_{\mp}^{d-1}:= \{e_{1}\cdot\sigma \lessgtr 0 \}, $ \\
\begin{align}\label{HPop2}
\|\mathcal{P}^-_b({\it\Psi},1)\|_{L^p(\mathbb{R}^{d})}:=\|\mathcal{P}_b({\it\Psi},1)\|_{L^p(\mathbb{R}^{d})}
&\leq \tilde{C}_{b}(p,\infty) \|\it\Psi\|_{L^p(\mathbb{R}^{d})},\nonumber\\[4pt]
 \ \text{with}\ 
 \tilde{C}_{b}(p, \infty)&=\int_{\mathbb{S}_-^{d-1}} \left(\tfrac{1 -e_{1}\cdot\sigma}{2}\right)^{-\frac{d}{2p'}} b\big( e_{1}\cdot\sigma \big)d\sigma, \nonumber\\[4pt]
\text{or setting}\  p=\infty, r=q,\ \text{for} \ 1=\psi \in L^\infty(\mathbb{R}^{d}\times\mathbb{S}^{d-1}), &\ \text{to estimate}  \qquad   \qquad   \qquad &\ \\[4pt]
\|\mathcal{P}^+_b(1,{\it\Phi})\|_{L^q(\mathbb{R}^{d})}:= \|\mathcal{P}_b(1,{\it\Phi})\|_{L^q(\mathbb{R}^{d})}
&\leq \tilde{C}_{b}(\infty,q) \|\it\Phi\|_{L^q(\mathbb{R}^{d})}, \nonumber\\[4pt]
  \text{with} \ \  \tilde{C}_{b}(\infty,q)&=\int_{\mathbb{S}_+^{d-1}} \left(\tfrac{1 + e_{1}\cdot\sigma}{2}\right)^{-\frac{d}{2q'}} b\big( e_{1}\cdot\sigma \big)d\sigma. \nonumber
\end{align}
\end{thm}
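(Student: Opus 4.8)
\textbf{Plan of proof for Theorem~\ref{HPop}.}
The plan is to reduce the estimate on $\mathcal{P}_b$ to a one-variable convolution inequality by exploiting the radial structure of the kernel, following the radial symmetrization strategy of \cite{AC, ACG-cmp10}. First I would fix $\sigma\in\mathbb{S}^{d-1}$ and observe that the change of variables $u\mapsto u^{-}=\tfrac12(u-|u|\sigma)$ and $u\mapsto u^{+}=\tfrac12(u+|u|\sigma)$ are, for fixed $\sigma$, linear maps of $\mathbb{R}^d$ whose images are the two orthogonal subspaces spanned by, respectively, the reflection directions; the key geometric fact is that $|u^{-}|^2 = \tfrac{|u|^2}{2}(1-\hat u\cdot\sigma)$ and $|u^{+}|^2 = \tfrac{|u|^2}{2}(1+\hat u\cdot\sigma)$, so that $u^{-}\perp u^{+}$ and $|u^{-}|^2+|u^{+}|^2=|u|^2$. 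Writing the angular density $b(\hat u\cdot\sigma)$ only depends on the cosine, I would then rewrite
\[
\mathcal{P}_b({\it\Psi},{\it\Phi})(u)=\int_{\mathbb{S}^{d-1}} {\it\Psi}(u^{-}){\it\Phi}(u^{+})\,b(\hat u\cdot\sigma)\,d\sigma,
\]
decompose $d\sigma$ using polar coordinates with zenith $\hat u$ (so $\hat u\cdot\sigma=\cos\theta$ and ${\rm d}\sigma=\sin^{d-2}\theta\,{\rm d}\theta\,{\rm d}\epsilon$, as in \eqref{angle.2}), and for each fixed deflection angle recognize the resulting inner integral over the azimuthal sphere as a genuine convolution of ${\it\Psi}$ (restricted to the $u^-$ component) with ${\it\Phi}$ (restricted to the $u^+$ component).

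The main technical step is the application of Young's convolution inequality on $\mathbb{R}^d$ in these rotated coordinates: for each fixed $\sigma$ one gets a bound of the form $\|{\it\Psi}(\cdot^{-})\,{\it\Phi}(\cdot^{+})\|_{L^r}\le J^{-}(\sigma)^{?}J^{+}(\sigma)^{?}\|{\it\Psi}\|_{L^p}\|{\it\Phi}\|_{L^q}$, where the Jacobian-type weights $J^{\pm}(\sigma)$ are powers of $\tfrac{1\mp\hat u\cdot\sigma}{2}$; tracking the exponents carefully via the scaling relations $|u^{\mp}|\sim |u|(\tfrac{1\mp\hat u\cdot\sigma}{2})^{1/2}$ forces the exponents $-\tfrac{d}{2p'}$ on $\tfrac{1-\hat u\cdot\sigma}{2}$ and $-\tfrac{d}{2q'}$ on $\tfrac{1+\hat u\cdot\sigma}{2}$ to appear, precisely matching the claimed $\tilde C_b(p,q)$. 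One then integrates over $\sigma$ using Minkowski's integral inequality in $L^r$ to move the $L^r$-norm inside the $d\sigma$-integral, yielding
\[
\|\mathcal{P}_b({\it\Psi},{\it\Phi})\|_{L^r}\le \Big(\int_{\mathbb{S}^{d-1}}\big(\tfrac{1-e_1\cdot\sigma}{2}\big)^{-\frac{d}{2p'}}\big(\tfrac{1+e_1\cdot\sigma}{2}\big)^{-\frac{d}{2q'}}b(e_1\cdot\sigma)\,d\sigma\Big)\|{\it\Psi}\|_{L^p}\|{\it\Phi}\|_{L^q},
\]
where rotational invariance lets us replace $\hat u$ by the fixed pole $e_1$. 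The case $1/p+1/q=1/r$ (rather than $1+1/r$) is what makes this work cleanly, since the homogeneity of the two linear pieces $u^{\mp}$ each contribute a full factor of $\mathbb{R}^d$.

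The degenerate endpoint cases where ${\it\Psi}\equiv 1$ or ${\it\Phi}\equiv 1$ are handled by taking $q=\infty$ (resp.\ $p=\infty$), so that $r=p$ (resp.\ $r=q$); here the convolution collapses to a single rearrangement/change-of-variables estimate on the nontrivial factor, and the Jacobian weight only survives on the corresponding hemisphere $\mathbb{S}^{d-1}_{\mp}=\{e_1\cdot\sigma\lessgtr 0\}$ because on the complementary hemisphere the relevant component $u^{\mp}$ degenerates and the contribution either vanishes or is absorbed trivially into the $L^\infty$ bound; tracking which hemisphere appears is a matter of checking the sign conventions in $u^{\pm}=\tfrac12(u\pm|u|\sigma)$. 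The hard part of the argument is the bookkeeping of exponents in the Jacobian weights: one must verify that the singular factors $(\tfrac{1\mp\hat u\cdot\sigma}{2})^{-d/(2p')}$ (respectively $-d/(2q')$) are exactly the powers produced by combining the local change of variables $u\mapsto u^{\mp}$ with the Hölder/Young splitting, and that the resulting angular integral is the one asserted; all other steps (Minkowski, Young on $\mathbb{R}^d$, rotational invariance to fix the pole) are routine. I would also note that finiteness of $\tilde C_b(p,q)$ is an implicit hypothesis tied to the interplay between the integrability of $b$ near $\hat u\cdot\sigma=\pm1$ and the exponents $p,q,r$, and I would flag exactly when it holds (e.g.\ automatically for $p=q=r=1$, giving $\tilde C_b(1,1)=\|b\|_{L^1(\mathbb{S}^{d-1})}$, consistent with \eqref{YoungInC1}).
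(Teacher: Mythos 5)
Your plan correctly names the strategy the paper points to (the radial symmetrization argument of \cite{AC}, which is all the paper itself offers, since it defers the proof to \cite[Theorem 1]{AC}), but as written it has a genuine gap exactly at the step you call "the main technical step". For fixed $\sigma$ the maps $u\mapsto u^{\mp}=\tfrac12(u\mp|u|\sigma)$ are \emph{not} linear maps onto orthogonal subspaces: they are nonlinear bijections onto the half-spaces $\{x\cdot\sigma\lessgtr 0\}$, and the scaling factor $\big(\tfrac{1\mp\hat u\cdot\sigma}{2}\big)^{1/2}$ in $|u^{\mp}|=|u|\big(\tfrac{1\mp\hat u\cdot\sigma}{2}\big)^{1/2}$ depends on $u$ through $\hat u$, so it is not a constant dilation once $\sigma$ is frozen. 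Concretely, setting $x=u^{-}(u,\sigma)$ one finds $1-\hat u\cdot\sigma=2(\hat x\cdot\sigma)^2$ and $du=2^{d-1}(\hat x\cdot\sigma)^{-2}\,dx$, an unbounded, direction-dependent Jacobian whose angular singularity is not integrable; hence the scheme "Minkowski in $\sigma$, then H\"older in $u$ with Jacobian weights that are powers of $\tfrac{1\mp e_1\cdot\sigma}{2}$" does not close for general $\Psi,\Phi$. This is precisely the point the radial symmetrization is designed to handle: one must first reduce to radially symmetric $\Psi,\Phi$ (or, equivalently, work with the trilinear form $\iint \Psi(u^-)\Phi(u^+)\psi(u)\,b(\hat u\cdot\sigma)\,d\sigma\,du$, apply a weighted H\"older inequality in the joint variables, and evaluate each factor by Fubini and the above change of variables), because only for radial entries does the evaluation at fixed scattering cosine become a genuine dilation with a constant factor. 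Your proposal omits this reduction entirely, and it is the heart of the proof in \cite{AC}.

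Two further points. First, the inequality you invoke is misnamed: with $1/p+1/q=1/r$ the relevant tool is H\"older for a pointwise product (there is no Young convolution inequality with that exponent relation), and nothing in the operator is a convolution at fixed $\sigma$. Second, the exponent bookkeeping you leave as "$?$" is not routine: a quick scaling test with $b$ concentrated near $\hat u\cdot\sigma=1$ (so $|u^-|\approx\varepsilon|u|$) shows that the naive dilation count produces factors $\big(\tfrac{1-s}{2}\big)^{-d/(2p)}$, $\big(\tfrac{1+s}{2}\big)^{-d/(2q)}$, so matching the constant asserted in the theorem (with the conjugate exponents $p',q'$) requires an actual computation, not just "careful tracking". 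Similarly, the endpoint cases in \eqref{HPop2} do not follow because "the contribution vanishes on the complementary hemisphere"; the hemisphere restriction comes from how $b$ is split in the application \eqref{P-operator2}, and the bound there is again a change-of-variables/weighted-H\"older computation, not a degeneration argument. To complete your proof you should either carry out the symmetrization lemma of \cite{AC} or replace your per-$\sigma$ step by the weighted trilinear H\"older argument sketched above.
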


All details on how to proof Therorem~\ref{HPop} can be found at \cite[Theorem 1]{AC},  as well as at \cite{ACG-cmp10} for further applications to Young's inequalities for gain operators.
Theorem \ref{HPop} is the base to prove Theorem \ref{Young}.  It also explains the peculiar expression of the constants in such theorem.  Interestingly, the constants given in Theorem \ref{HPop} are known to be sharp in the sense of functional inequalities.

However, it is important to notice that  the definition of  $\mathcal{P}_b(\psi,\phi)(u)$ in \eqref{HPop1} depends on test functions $\psi$ and $\phi$ acting on pairs $u^\pm(u,\sigma)= \frac12(|u|\sigma\pm u)$, respectively. In addition, 
$|u^\pm| = |u| \,\frac12(1\pm e_{1}\cdot\sigma)^{1/2}$. This last identity  means  that $ |u^+|=|u| \,\frac12(1 + e_{1}\cdot\sigma)^{1/2}>0 $, i.e. non-vanishing, whenever $\sigma\in \mathbb{S}_+^{d-1}:= \{e_{1}\cdot\sigma >0\}$ and, analogously,  $ |u^-|=|u| \,\frac12(1 -  e_{1}\cdot\sigma)^{1/2}>0 $ whenever $\sigma\in \mathbb{S}_-^{d-1}:= \{e_{1}\cdot\sigma <0\}.$

Thus, while the reader can observe that  the definition of the constant $C_{b}(p,q)$  in \eqref{HPop1} may contain two singularities at the points $e_{1}\cdot\sigma=\pm1$, (i.e in the north and south poles of the scattering sphere where $e_1=\hat u$ fixed), 
it becomes transparent  in the next paragraph that the  weak formulation associated to the  bilinear collisional form  $Q^+(f,g)(v)$, written in center of mass and relative velocity or scattering direction coordinates \eqref{eq:9.3-Vu},  takes a form in terms of the simpler  estimates  for  $\mathcal{P}$ from \eqref{P-operator1} described in \eqref{HPop2}, by associating the definition of the domain of spherical integration of the test  functions to whether it is evaluated on $u^+(u,\sigma)$ with $\sigma\in \mathbb{S}_+^{d-1}$; or $u^-(u,\sigma)$ with $\sigma\in \mathbb{S}_-^{d-1}$.

Indeed, the characterization of the operator $\mathcal{P}({\it\Psi}, {\it\Phi})(u):= \int_{\mathbb{S}^{d-1}} {\it\Psi}(u^+(u,\sigma) {\it\Phi}(u^-(u,\sigma)) b(\hat{u} \cdot \sigma) d\sigma,$ 
in the weak formulation of the  $Q^+(f,g)(v)$ is conveniently  written in scattering direction coordinates  \eqref{eq:9.3-Vu}, namely $v'=v+u^-(\sigma)$ and $v'_*=v-u^+(\sigma)$. 

Thus, recalling  that $\tau$ is the translation operator $\tau_v\psi(x) := \psi(x-v)$ and $\mathcal{R}$ the reflexion operator $\mathcal{R}\psi(x):=\psi(-x)$, then   
for any given test function $\psi$ set  ${\it\Psi}(u^-(u,\sigma)):=  \psi(v+u^-(u,\sigma))= (\tau_{-v} \psi)(u^-(u,\sigma))$,  and 
${\it\Phi}(u^+(u,\sigma)) :=  \psi(v-u^+(u, \sigma))= (\tau_{-v}\mathcal{R} \psi)(u^+(u,\sigma))$
and applied the estimates from Theorem~\ref{HPop} 
\begin{align}\label{P-operator}
\mathcal{P}({\it\Psi}, {\it\Phi})(u)&:= \int_{\mathbb{S}^{d-1}} {\it\Psi}(u^-(u,\sigma))\,{\it\Phi}(u^+(u,\sigma)) b(\hat{u} \cdot \sigma) d\sigma,
\end{align}

Thus,  splitting the spherical integration domain into  $\sigma\in\mathbb{S}^{d-1} = \{\hat u\cdot\sigma<0 \}\cup\{\hat u\cdot\sigma\ge0  \} =: \mathbb{S}_-^{d-1} \cup\mathbb{S}_+^{d-1}$ 
 is given by, the weak form of $Q^+(f,g)(v)$ becomes 
\begin{align}\label{P-operator2}
\int_{\mathbb{R}^d} \!Q^{+}&(g,f)(v)\psi(v) dv = \int_{\mathbb{R}^{d}} f(v)\int_{\mathbb{R}^{d}}\Phi(|u|) g(v-u)\left(\int_{\mathbb{S}_-^{d-1}}\psi(v+u^-(\sigma)) {b(\hat u \cdot\sigma)} d\sigma  \right)\!(u) du dv \nonumber\\
&+ \int_{\mathbb{R}^{d}} g(v_*)\int_{\mathbb{R}^{d}}\Phi(|u|) f(v_*+u)\left(\int_{\mathbb{S}_+^{d-1}}\psi(v-u^+(\sigma)) {b(\hat u \cdot\sigma)} d\sigma  \right)\! (u)du dv_*\nonumber\\
&=:\!\int_{\mathbb{R}^{d}}\! f(v)\int_{\mathbb{R}^{d}}\Phi(|u|) ,\tau_{-v}\mathcal{R}g(u)  \left[ \mathcal{P}^-_{b}\left((\tau_{-v}\psi)(u^-), 1\right)\!+\!   \mathcal{P}^+_{b}(1,(\tau_{-v}\mathcal{R}\psi)(u^+) \right]\!(u) du dv\nonumber\\
&=:\int_{\mathbb{R}^{d}}\! f(v)\int_{\mathbb{R}^{d}}\Phi(|u|) \, \tau_{-v}\mathcal{R}g(u)\, \left[ \mathcal{P}^-_{b}\left({\it\Psi}(u^-), 1\right)\! + \!  \mathcal{P}^+_{b}(1,{\it\Phi})(u^+) \right]\!(u) du dv, 
\end{align}
since  $\it\Psi(u^-) := (\tau_{-v}\psi)(u^-)$ and $\it\Phi(u^+) := (\tau_{-v}\mathcal{R}\psi)(u^+)$.

\medskip

The potential part of the transition scattering rates
$ \Phi(|u|)$ is the potential function defined on \eqref{pot-Phi_1} under the  assumption \eqref{pot-Phi_2.0}, where for the time being, we illustrate the technique for  the case of $\gamma=0$, which implies	 $0<c_{\Phi}\le  \Phi(|u|)\le  C_{\Phi}<\infty$ uniformly in $|u|$.  Section~8.2  will extended  the following estimate  to general potential, satisfying this conditions, for any $\gamma\in (0,2]$.

Therefore,  an estimate of \eqref{P-operator2} needs to evaluate  the first term noting that  the test function $\psi(v+u^-(\sigma))$ acts on the first component of the $\mathcal P$ operator  \eqref{P-operator1}, while the second term now the test function, not only is evaluated at a different velocity vector, $\psi(v-u^+(\sigma))$, but also acts on the second component of the $\mathcal P$.

This observation implies  the constant  $\tilde{C}_{b}(p,q)$ can be calculated  finite using the estimates \eqref{HPop2} from Theorem~\ref{HPop}.
 Therefore, since  the first term of  identity \eqref{P-operator2}  has test function $\psi(v-u^-(u,\sigma))$ acting on the first component of the $\mathcal P$  from \eqref{P-operator1} while  the test function  in the second component $\phi$ of  the $\mathcal P$ operator is taken to be  unity,   
hence, Theorem~\ref{HPop}  can be applied  with estimate 
\eqref{HPop2} by choosing  $r = p = s$ and $q=\infty$,  
to obtain, after invoking   H\"{o}lder's inequality, finite inequality 
\begin{equation*}
\int_{\mathbb{R}^{d}}(\tau_{-v} \mathcal{R}g)(u)\mathcal{P}^-_{b}({\it\Psi}, 1)(u) du 
\leq \tilde{C}_{b}(s,\infty)\|\tau_{-v}g\|_{L^{s}}\| \tau_{-v}\psi \|_{L^{s'}}=\tilde{C}_{b}(s,\infty)\|g\|_{L^{s}}\| \psi \|_{L^{s'}}\,, 
\end{equation*}
with 
\begin{equation*}
\tilde{C}_{b}(s,\infty)=\int_{\mathbb{S}_-^{d-1}} \left(\tfrac{1 - e_{1}\cdot\sigma}{2}\right)^{-\frac{d}{2s'}} b( e_{1}\cdot\sigma)d\sigma \leq 2^{\frac{d}{2s'}} \|b\|_{L^{1}(\mathbb{S}_-^{d-1})}\,.
\end{equation*}

Analogously,  the second term in  \eqref{P-operator2} has the first component of  $\mathcal P$ from  \eqref{P-operator1} in $L^\infty(\mathbb{R}^{d}\times \mathbb{S}_+^{d-1})$,  while the test function $\psi(v-u^+(\sigma))$ is acting on the second component of the $\mathcal P$ on  the  integrating domain $ \mathbb{S}_-^{d-1}$. Therefore, Theorem~\ref{HPop} is applied  with estimate \eqref{HPop2}, with the choice  $q=r=s$ and $p=\infty$,   yielding	 
\begin{align*}
\int_{\mathbb{R}^{d}}(\tau_{-v}g) (u)\mathcal{P}^+_{b}(1,{\it\Phi})(u)  du \leq \tilde{C}_{b}(\infty,s)\| \tau_{-v}g \|_{L^{s}}\| \tau_{-v}\mathcal{R}\psi \|_{L^{s'}}=\tilde{C}_{b}(\infty,s)\| g \|_{L^{s}}\| \psi \|_{L^{s'}(\mathbb{S}_+^{d-1})}\,,
\end{align*}
where
\begin{equation*}
\tilde{C}_{b}(\infty,s)=\int_{\mathbb{S}_+^{d-1}} \left(\tfrac{1 + e_{1}\cdot\sigma}{2}\right)^{-\frac{d}{2s'}} b( e_{1}\cdot\sigma)d\sigma\leq 2^{\frac{d}{2s'}} \|b\|_{L^{1}(\mathbb{S}_+^{d-1})}\,.
\end{equation*}
Combining these two last estimates, it follows that
\begin{align}\label{Qfg Lp estimates}
\int_{\mathbb{R}^d}Q^{+}_{b}(fig)(v) \psi(v) dv &\leq \tilde{C}_{b}(s,\infty)\| g\|_{L^1} \|f\|_{L^{s}}\| \psi \|_{L^{s'}} + \tilde{C}_{b}(\infty,s)\|f\|_{L^1} \| g \|_{L^{s}}\| \psi \|_{L^{s'}}\\
&\leq 2^{\frac{d}{2s}}\| \psi \|_{L^{s'}} \Big(\|b\|_{L^{1}(\mathbb{S}_-^{d-1})}\| g\|_{L^1} \|f\|_{L^{s}} + \|b\|_{L^{1}(\mathbb{S}_+^{d-1})} \|f\|_{L^1} \| g \|_{L^{s}} \Big)\,, \nonumber
\end{align}
or, by duality,
\begin{equation}\label{Qfg  Lp estimates dual}
\|Q^{+}_{b}(f,g) \|_{L^{s}(\mathbb{R}^d)} \leq 2^{\frac{d}{2s}}\Big(\|b\|_{L^{1}(\mathbb{S}_-^{d-1})}\| g\|_{L^1} \|f\|_{L^{s}} +  \|b\|_{L^{1}(\mathbb{S}_+^{d-1})}\|f\|_{L^1} \| g \|_{L^{s}} \Big)\,.
\end{equation}
%
%
In the quadratic case $f=g$  estimate \eqref{Qfg  Lp estimates dual} reduces to
\begin{equation}\label{Lp Q+ estimate}
\|Q^{+}_{b}(f,f) \|_{L^{s}(\mathbb{R}^d)} \leq 2^{\frac{d}{2s}}\|b\|_{L^{1}(\mathbb{S}^{d-1})}\| f\|_{L^1} \|f\|_{L^{s}}\,,\qquad s\in[1,\infty]\,.
\end{equation}

\smallskip

The expression \eqref{P-operator2} does not necessarily holds for gas mixtures with disparate masses or inelastic interacting systems since the definition of vectors $u^\pm:=u^\pm(u, \sigma)$ do not enjoy the symmetry $u^{\pm}(u,-\sigma)=u^{\mp}(u, \sigma)$ in such cases.  Such vectors will change depending on the masses or restitution coefficients respectively.  This, of course, makes the analysis a bit more subtle.
}

\subsection{The Carleman representation, mixing convolution, and  gain of integrability}
Carleman in his work \cite{Ca57} represents the gain part of the collisional integral with the integration in the sphere replaced by an integration along special planes.  In a sense, he introduced a trick to project the sphere in $\mathbb{R}^{d-1}$ in the context of the collision operator.  A similar trick has been implemented in the context of radiative transfer in the non-cutoff case in \cite{AS} and has the benefit of flatten the geometry of the operator at the expense of adding weights in the integration.  This representation can be quite handy in the study of fine properties of the collision operator.  A modern derivation of such representation is given in \cite{AC, ACG-cmp10,AG-krm11, GPV09}, which revised the Carleman representation in higher dimension ( as is given by\footnote{This formula correct a misprint in \cite{AG-krm11}, equation (9) and (10).}. )

\smallskip
 Starting by setting  the variables  $z:=-u^+=\frac12\left(|u|\sigma+u\right)$ and $x:=u^- = \frac12\left(|u|\sigma-u\right)$. This choices makes  
 to  obtain $-z-u= u^+-\frac12 u = u^- $, which  implies 
$|u|=|z+x|$.

In addition, since $|z|^2= \frac12|u|^2(1+ \hat u\cdot \sigma)$, then  $\cos \theta ={\hat u\cdot \hat{u'}}= 1- 2\tfrac{|z|^2}{|u|^2}= 1- 2\tfrac{|z|^2}{|z+x|^2}, $  and the $d$-dimensional spherical integration element $d\sigma =|u|^{2-\di}d(|u| \sigma +u) = |z+x|^{2-d}dx$, 
yields the transition probability rate element  in the collisional integral  is written by
\begin{equation}\label{transition rate}
 B(u,\hat u\cdot\sigma) \frac1{|u|^{\di-2}} d\sigma=  B\left( -(z+x), 1-2\frac{|z|^2}{|z+x|^2} \right)  \frac1{ |z+x|^{d-2}}dx.
\end{equation}

Thus, letting $g$ and $h$ be a pair suitable functions for which the following integral is well defined, evaluated $g(v'_*)=g(v-u^+)=g(v-x)$ and $h(v')=h(v+u^+)=h(v+z)$,  the Carleman's representation takes the form, 
 \begin{equation}\label{carle2}
 \begin{split}
 Q^+(g,h) (v) =
\ {2^{\di-1}}\!\int_{x\in\R^\di}
\frac{g(v+z)}{|x|}\int_{z\cdot x=0} h(v+x)\,
\!|z+x|^{2-\di}\,B\Big( -(z+x), 1-2\tfrac{|z|^2}{|z+x|^2} \Big)\,d\pi_z\, dx, 
\end{split}
\end{equation}
where the integral element  $d\pi_z $ is the Lebesgue measure on the hyperplane $\{z :   \hat{x}\cdot z = 0\}.$

Using the change of variables $x\to x-v$ yields
\begin{align}\label{Car}
\begin{split}
&Q^{+}(g,h)(v)=2^{d-1}\int_{\mathbb{R}^{d}}\frac{g(x)}{\left|v-x\right|}\int_{(v-x)\cdot z=0}\frac{\tau_{-x}h(z+(v-x))}{\left|z+(v-x)\right|^{d-2}}\,\tilde{a}\left(v,z,x\right) d\pi_z\,dx\\[2pt]
&\text{where} \qquad \tilde{a}\left(v,z,x\right) := B\left(|z+(v-x)|,1-2\tfrac{|z|^2}{|z+(v-x)|^2}\right).
\end{split}
\end{align}

As noticed in \cite{AC}, applying this representation to the linear form $Q^{+}(\delta_0,h)$ one obtains the $d$-dimensional weighted Radon transform 
\begin{equation}\label{CarCT}
Q^{+}(\delta_0,h)(v)=\frac{2^{d-1}}{|v|}\int_{v\cdot z=0}\frac{h(z+v)}{|z+v|^{d-2}}\;B\left(|z+v|,1-\frac{2|z|^{2}}{|z+v|^{2}}\right)d\pi_z.
\end{equation}
Combining \eqref{Car} and \eqref{CarCT} we obtain another representation for the collision operator, the double mixing convolution 
\begin{equation}\label{dmconvo}
Q^{+}(g,h)(v)=\int_{\mathbb{R}^d}g(x)\,\tau_{x}Q^{+}(\delta_0,\tau_{-x}h)(v)\,dx.
\end{equation}
Using this representation, the $L^2$-norm of $Q^{+}$ can be estimated in terms of the $L^{2}$-norm of $Q^{+}(\delta_0,\cdot)$ after invoking Minkowski's integral inequality
\begin{align}\label{mi}
\begin{split}
\big\|Q^{+}(g,h)\big\|_{L^{2}(\real^{d})}&=\left(\int_{\mathbb{R}^d}\left(\int_{\mathbb{R}^d}g(x)\tau_{x}Q^{+}(\delta_0,\tau_{-x}h)(v)dx\right)^2dv\right)^{1/2}\\
&\leq\int_{\mathbb{R}^d}\left(\int_{\mathbb{R}^d}\left(\tau_{x}Q^{+}(\delta_0,\tau_{-x}h)(v)\right)^2dv\right)^{1/2}g(x)dx\,.
\end{split}
\end{align}
Just as a minor remark, if we define the operator $\mathcal{P}_o(\cdot) = \mathcal{P}(\cdot,1)$, then $Q^{+}(\delta_0,\cdot)=\mathcal{P}^{\ast}_o(\cdot)$ the adjoint of $\mathcal{P}_o$.

\smallskip
\nn The following proposition is a slight improvement (in terms of moments) of its analog in \cite{AG-krm11}.  It shows that adding conditions on $b$, such as boundedness, one can improve the Young's inequality of Theorem \ref{Young}.  Other example of this fact can be found in \cite{LionsIII,WennbergRT,BD} that deal with Sobolev regularisation.  
\begin{prop}[\bf{Gain of integrability for angular transition $b\in L^\infty(\mathbb{S}^{d-1}) $} ]\label{lem-cancel}
Let $h \in L^{1}(\real^{d}) \cap L^{\frac{2d}{2d-1}}(\real^{d})$, with $d\geq3$.  Assume cutoff hard sphere potential $\Phi(u)=|u|\textbf{1}_{|u|\geq\varepsilon}$, with $\varepsilon\in(0,1]$, and with scattering kernel $b\in L^\infty(\mathbb{S}^{d-1}) $.  Then,
\begin{equation}\label{sp}
\big\|Q^{+}_{\Phi}(\delta_0,h)(v)\big\|_{L^{2}(\real^{d})}\leq \frac{C_{d}}{\varepsilon^{\frac{d-3}{2}}}\|b\|_{\infty}\|h\|_{L^{\frac{2d}{2d-1}}(\real^{d})}\,,
\end{equation}
for some constant $C_{d}$ depending only on the dimension.
\end{prop}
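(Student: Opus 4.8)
\textbf{Proof plan for Proposition~\ref{lem-cancel}.}

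The plan is to start from the Carleman-type formula \eqref{CarCT} specialized to the cutoff hard sphere potential $\Phi(u) = |u|\,\textbf{1}_{|u|\geq\varepsilon}$ and with $b\in L^\infty(\mathbb{S}^{d-1})$. Plugging $B(|w|,\cdot) = |w|\,\textbf{1}_{|w|\geq\varepsilon}\,b(\cdot)$ with $w = z+v$ into \eqref{CarCT} gives
\begin{equation*}
Q^{+}_{\Phi}(\delta_0,h)(v) = \frac{2^{d-1}}{|v|}\int_{v\cdot z = 0}\frac{h(z+v)}{|z+v|^{d-3}}\,\textbf{1}_{|z+v|\geq\varepsilon}\;b\!\left(\ldots\right)d\pi_z\,.
\end{equation*}
Since $z\perp v$ on the hyperplane of integration, $|z+v|^2 = |z|^2 + |v|^2$, so the kernel weight $|z+v|^{-(d-3)}$ is controlled and the cutoff $\textbf{1}_{|z+v|\geq\varepsilon}$ can be split according to whether $|v|\geq\varepsilon$ or $|v|<\varepsilon$; in the latter regime it forces $|z|^2 \geq \varepsilon^2 - |v|^2$, which is where the negative power of $\varepsilon$ will enter. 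First I would bound $b$ by $\|b\|_\infty$ and reduce everything to estimating the $L^2_v$ norm of the weighted Radon-type transform
\begin{equation*}
v\mapsto \frac{1}{|v|}\int_{v\cdot z = 0}\frac{|h(z+v)|}{(|z|^2+|v|^2)^{(d-3)/2}}\,\textbf{1}_{|z|^2+|v|^2\geq\varepsilon^2}\,d\pi_z\,.
\end{equation*}

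The key technical step is a Radon/$X$-ray transform $L^p\to L^2$ bound. The natural route is to recognize that integrating $h(z+v)$ over the hyperplane $\{z\perp v\}$ through $v$ is (up to the radial weight $|z+v|^{3-d}$ and the prefactor $|v|^{-1}$) a weighted version of the $k$-plane transform with $k = d-1$. The cleanest path is to use the known mapping properties: the $X$-ray/Radon transform along affine hyperplanes maps $L^p(\mathbb{R}^d)\to L^2$ of the dual variable for the endpoint exponent $p = \frac{2d}{2d-1}$ (this is precisely the $L^p$ improving estimate whose sharp exponent is $\frac{2d}{2d-1}$ for the hyperplane transform). The approach: after the change of variables to write the expression as a convolution-type object in $v$, apply Minkowski's integral inequality exactly as in \eqref{mi}, then estimate $\|\tau_x Q^+_\Phi(\delta_0,\tau_{-x}h)\|_{L^2}$ by the weighted Radon bound applied to $\tau_{-x}h$. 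The weight $(|z|^2+|v|^2)^{(d-3)/2}$ in the denominator, together with the cutoff, is used to absorb the singular behaviour at $z=0$: on the set $|z|^2 \geq \varepsilon^2 - |v|^2$ one has $(|z|^2+|v|^2)^{-(d-3)/2}\leq \varepsilon^{-(d-3)}$ times a harmless factor, and combining this crude bound with a fractional-integration/Hardy-Littlewood-Sobolev estimate on the remaining piece produces the factor $\varepsilon^{-(d-3)/2}$ after taking square roots (the $1/2$ power reflecting that the estimate is for the $L^2$ norm, i.e. the square of the quantity picks up $\varepsilon^{-(d-3)}$).

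In more detail, the steps in order would be: (i) substitute the cutoff hard-sphere $B$ into \eqref{CarCT} and reduce to the scalar weighted hyperplane integral above, pulling out $\|b\|_\infty$; (ii) via \eqref{dmconvo} and Minkowski's inequality \eqref{mi}, reduce the bilinear/delta estimate to a one-variable statement about $\|Q^+_\Phi(\delta_0,g)\|_{L^2}$ for $g = \tau_{-x}h$; (iii) dyadically decompose in $|z|$ (equivalently in the polar-angle variable), estimating the contribution of $|z|\sim 2^j$ by Hölder on the hyperplane combined with a Fourier-side or interpolation bound for the hyperplane transform; (iv) sum the dyadic pieces, using the cutoff $|z+v|\geq\varepsilon$ to truncate the sum from below and thereby generate the $\varepsilon^{-(d-3)/2}$ factor; (v) identify the endpoint Lebesgue exponent as $p = \frac{2d}{2d-1}$ by scaling: the hyperplane transform gains $1/2$ a derivative and maps $L^{\frac{2d}{2d-1}}\to \dot H^{1/2}\hookrightarrow$ (morally) $L^2$-on-the-sphere-of-radii, which matches the target. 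The main obstacle I expect is step (iii)–(v): carefully tracking the $\varepsilon$-dependence through the dyadic decomposition while simultaneously landing on the sharp exponent $\frac{2d}{2d-1}$ rather than a suboptimal one. This is exactly the place where one must invoke the correct $L^p$-improving bound for the Radon hyperplane transform (or the equivalent statement from \cite{AG-krm11}, whose proof this Proposition mildly sharpens in the moment/weight bookkeeping), and verify that the radial weight $|z+v|^{3-d}$ does not spoil that bound away from the $\varepsilon$-cutoff region. The restriction $d\geq 3$ enters precisely to keep the weight exponent $d-3\geq 0$ so that the truncation argument in step (iv) makes sense.
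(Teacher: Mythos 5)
Your starting point (the Carleman/Radon representation \eqref{CarCT} with $B=|u|\textbf{1}_{|u|\ge\varepsilon}b$, pulling out $\|b\|_\infty$) and your identification of the analytic engine (an $L^{\frac{2d}{2d-1}}\to L^2$ bound of Radon/HLS type, with the squared norm picking up $\varepsilon^{-(d-3)}$) match the paper. But there is a genuine gap exactly at the step you yourself flag as the "main obstacle", and it is not a bookkeeping issue: the $L^2$ norm in the Proposition is taken over $\mathbb{R}^d$ with Lebesgue measure, so in polar coordinates $v=r\sigma$ the prefactor $|v|^{-1}$ squared against $dv=r^{d-1}\,dr\,d\sigma$ leaves a \emph{growing} radial weight $r^{d-3}$, whereas the unweighted hyperplane-Radon bound $L^{\frac{2d}{2d-1}}\to L^2$ you want to invoke is with respect to $dr\,d\sigma$ and does not see this weight. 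If, as in your plan, you first bound the kernel weight $|z+v|^{3-d}\textbf{1}_{|z+v|\ge\varepsilon}$ crudely by $\varepsilon^{3-d}$, you discard precisely the decay needed to absorb $r^{d-3}$, and the reduction to the known Radon estimate fails; the proposed dyadic decomposition is where this would have to be repaired, and it is left unexecuted.

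The paper closes this gap differently and more economically: it squares the norm first, so that \emph{two} factors of $\tilde h(\cdot)=h(\cdot)\textbf{1}_{|\cdot|\ge\varepsilon}/|\cdot|^{d-3}$ appear; after the change of variables $x=z_1+r\sigma$ (so $r=|x\cdot\sigma|$) the Jacobian weight cancels against the $|x|^{-(d-3)}$ of \emph{one} factor via $|x\cdot\sigma|^{d-3}/|x|^{d-3}=|\hat x\cdot\sigma|^{d-3}\le1$, while only the \emph{other} factor is bounded crudely by $\varepsilon^{-(d-3)}h$. A delta-function identity then converts the remaining hyperplane integral into an integral against $|z|^{-1}$, the $\sigma$-integral is bounded by $|\mathbb{S}^{d-2}|$, and the resulting bilinear form $\iint h(x)h(z)|x-z|^{-1}dx\,dz$ is estimated by Hardy--Littlewood--Sobolev at the optimal exponent $p=r'=\tfrac{2d}{2d-1}$. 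Note that the Radon lemma you wish to cite is, via $R^{*}R=c\,|\cdot|^{-1}\ast$, equivalent to this same HLS bound, so your route is not genuinely different from the paper's — what is missing is the two-factor cancellation mechanism that handles the weight, which is the actual content of the proof. (A minor point: \eqref{dmconvo} and Minkowski's inequality \eqref{mi} are used \emph{after} this Proposition to deduce the bilinear estimate \eqref{gofi}; they play no role inside the proof of the Proposition itself.)
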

\begin{proof}
Using Radon's representation \eqref{CarCT} it follows that (for the current potential $B(u,s) = |u|\textbf{1}_{|u|\geq\varepsilon}\,b(s)$)
\begin{align*}
Q^{+}_{1}(\delta_0,h)(v)&=\frac{2^{d-1}}{|v|}\int_{z\cdot v=0}\frac{h(z+v)}{|z+v|^{d-2}}\;B\left(|z+v|,1-\frac{2|z|^{2}}{|z+v|^{2}}\right)d\pi_z\\
&=\frac{2^{d-1}}{|v|}\int_{z\cdot v=0}\frac{h(z+v)\textbf{1}_{|z+v|\geq\varepsilon}}{|z+v|^{d-3}}\;b\left(1-\frac{2|z|^{2}}{|z+v|^{2}}\right)d\pi_z\\
&\leq 2^{d-1}\left\|b\right\|_{\infty}\frac{1}{|v|}\int_{z\cdot v=0}\tilde{h}(z+v)d\pi_z\,,
\end{align*}
where $d\pi_z$ is the $\mathbb{R}^{d-1}$ Lebesgue's measure and $0\leq\tilde{h}(\cdot) = h(\cdot)\textbf{1}_{|\cdot|\geq\varepsilon}/|\cdot|^{d-3}$.  Using polar coordinates $v=r\sigma$, it is possible to estimate its $L^{2}$-norm in the following way
\begin{align*}
&(2^{d-1}\left\| b \right\|_{\infty})^{-2}\int_{\mathbb{R}^{d}}\big| Q^{+}(\delta_0,h)(v) \big|^{2}dv\\
&\hspace{-.5cm}\leq\int_{ \mathbb{S}^{d-1}} \int_{\mathbb{R}}\int_{z_1\cdot\sigma=0}\int_{z_2\cdot\sigma=0}\tilde{h}(z_1+r\sigma)\tilde{h}(z_2+r\sigma)d\pi_{z_1}d\pi_{z_2}r^{d-3}drd\sigma=:I.
\end{align*}
Perform the change of variables, for fixed $\sigma$, $x:=z_1+r\sigma$.  Note that $r=|x\cdot\sigma|$, therefore,
\begin{align*}
I=\int_{\mathbb{S}^{d-1}}\int_{\mathbb{R}^{d}}&\int_{z_2\cdot\sigma=0}\tilde{h}(x) \tilde{h}(z_2+(x\cdot\sigma)\sigma)d\pi_{z_2}|x\cdot\sigma|^{d-3}dxd\sigma\\
&\leq\int_{\mathbb{S}^{d-1}}\int_{\mathbb{R}^{d}}\int_{z_2\cdot\sigma=0}h(x) \tilde{h}(z_2+(x\cdot\sigma)\sigma)d\pi_{z_2}|\hat{x}\cdot\sigma|^{d-3}\textbf{1}_{|x|\geq\varepsilon}dxd\sigma\,.
\end{align*}
Writing
\begin{equation*}
z_2+(x\cdot\sigma)\sigma=x+(z_2+(x\cdot\sigma)\sigma-x):=x+z_3,
\end{equation*}
it easy to check that  $z_3\in\{z:z\cdot\sigma=0\}$.  Hence,
\begin{equation*}
I\leq\int_{\mathbb{S}^{d-1}}\int_{\mathbb{R}^{d}}\int_{z_3\cdot\sigma=0}h(x) \ \tilde{h}(x+z_3)d\pi_{z_3}|\hat{x}\cdot\sigma|^{d-3}\textbf{1}_{|x|\geq\varepsilon} dx d\sigma.
\end{equation*}
Next, invoking the identity
\begin{equation*}
\ir\delta_0(z\cdot y)\,\ph(z)\,dz = |y|^{-1} \int_{z\cdot y=0} \ph(z)\,d\pi_z \quad\iff\quad  
\ir\delta_0(z\cdot \hat{y})\,\ph(z)\,dz =  \int_{z\cdot \hat{y}=0} \ph(z)\,d\pi_z 
\end{equation*}
 valid for any smooth $\ph$ and $d\pi_z $ is the Lebesgue measure on the hyperplane $\{z :   \hat{y}\cdot z = 0\}$,  transforms  the  integration  in the hyperplane  
 $\{z_3\cdot \sigma=0\}$ into an integration in $\real^\di$, to get
\begin{align*}
\int_{z_3\cdot\sigma=0}\tilde{h}(x+z_3)d\pi_{z_3}&=\int_{\mathbb{R}^{d}}\delta(z\cdot\sigma)\tilde{h}(x+z)dz\\
&=\int_{\mathbb{R}^{d}}\delta(\hat{z}\cdot\sigma)\frac{\tilde{h}(x+z)}{|z|}dz.
\end{align*}

Hence,
\begin{equation*}
I\leq\int_{\mathbb{R}^d}\int_{\mathbb{R}^d}\frac{ h(x) \tilde{h}(x+z)}{|z|}\bigg(\int_{\mathbb{S}^{d-1}}\delta(\hat{z}\cdot\sigma)|\hat{x}\cdot\sigma|^{d-3}d\sigma\bigg)\textbf{1}_{|x|\geq\varepsilon} dx dz\,.
\end{equation*}
Observing that using polar coordinates with zenith $\hat{z}$ it holds that
\begin{equation*}
\int_{\mathbb{S}^{d-1}}\delta(\hat{z}\cdot\sigma)(\hat{x}\cdot\sigma)^{d-3}d\sigma\leq \int_{\mathbb{S}^{d-1}}\delta(\hat{z}\cdot\sigma)d\sigma=|\mathbb{S}^{d-2}|\,.
\end{equation*}
Moreover, in the region $\{|x|\geq\varepsilon\}$ it follows that $\tilde{h}\leq \varepsilon^{-(d-3)}h$.  Thus, one concludes that
\begin{equation*}
I\leq \varepsilon^{-(d-3)}|\mathbb{S}^{d-2}|\int_{\mathbb{R}^d}\int_{\mathbb{R}^d}\frac{ h(x) h(z)}{|z-x|}dzdx\,.
\end{equation*}
Invoking Hardy-Littlewood-Sobolev (HLS) inequality, we have for any $r\in(1,\infty)$,
\begin{align*}
I&\leq \varepsilon^{-(d-3)}|\mathbb{S}^{d-2}|\,\|h\|_{L^{r'}(\real^{d})}\|h\ast|x|^{-1}\|_{L^{r}(\real^{d})} \leq \varepsilon^{-(d-3)}C_{d,r,p}\,\|h\|_{L^{r'}(\real^{d})}\|h\|_{L^{p}(\real^{d})},
\end{align*}
where $1/p+1/d=1+1/r$ and $C_{d,r,p}$ is proportional to the HLS constant.  Taking the optimal choice $p=r'$, that is $p=\frac{2d}{2d-1}$,  yields $I\leq \frac{C_{d}}{\varepsilon^{d-3}}\|h\|^{2}_{L^{\frac{2d}{2d-1}}(\real^{d})}$, which is the desired conclusion.
\end{proof}
%
%

\nn In this way, estimate \eqref{mi} and Proposition \ref{lem-cancel} readily imply that for $b\in L^\infty(S^{d-1}) $,
\begin{equation}\label{gofi}
\big\| Q^{+}_{\Phi}(g,h) \big\|_{ L^{2}(\real^{d}) }\leq \tfrac{ C_{d} }{ \varepsilon^{\frac{d-3}{2}} }\|b\|_{\infty} \|g\|_{ L^{1}(\real^{d}) }\|h\|_{L^{\frac{2d}{2d-1}}(\real^{d})}\,\qquad \Phi(\cdot) = |\cdot|\textbf{1}_{|\cdot|\geq\varepsilon}\,,\;\; \varepsilon\in(0,1]\,.
\end{equation}
Estimate \eqref{gofi} is understood as a gain of integrability for the collision operator since the $L^{2}$-norm of $Q^{+}(g,\cdot)$ is estimated by the lower $L^{\frac{2d}{2d-1}}$-norm when considering a fixed integrable first entry $g$.  One can use Riesz--Thorin interpolation theorem to extend this estimate to the $L^{p}$-spaces when combined with the convolution estimates of Theorem \ref{Young}, see for instance \cite{MV04,AG-krm11}.  

\begin{thm}[{\it Gain of integrability}]\label{gainmaxwell}
Take functions $g, f\in (L^{1}\cap L^{p})(\real^{d})$ with $p\in(1,\infty)$.  Let $\tilde\Phi(u)=\textbf{1}_{\{|u|\geq\varepsilon\}}$, with $\varepsilon\in(0,1]$, and scattering kernel $b\in L^\infty(\mathbb{S}^{d-1})$.  Write $b=b^+ + b^-$ where
\begin{equation*}
b^+(s) = b(s)\text{1}_{\{ s\geq0 \} }\qquad \text{and} \qquad b^{-}(s)=b(s)\text{1}_{ \{ s<0 \}}\,,
\end{equation*}
are the forward and backward scattering components in $\mathbb{S}^{d-1}_{\pm}$.  Then,
\begin{align}
\big\|Q^{+}_{\tilde\Phi,b^+}(g,h)(v)\big\|_{L^{r}(\real^{d})}&\leq \frac{C}{\varepsilon^{\theta\frac{d-1}{2}}}\|b^+\|_{\infty}\|g\|_{ L^{1}(\real^{d}) }\|h\|_{L^{p}(\real^{d})}\,,\label{sp1}\\
\big\|Q^{+}_{\tilde\Phi,b^-}(g,h)(v)\big\|_{L^{r}(\real^{d})}&\leq \frac{C} {\varepsilon^{\theta\frac{d-1}{2}}}\|b^-\|_{\infty}\|h\|_{ L^{1}(\real^{d}) }\|g\|_{L^{p}(\real^{d})}\,,\qquad \theta=\tfrac{2}{\max\{r,r'\}}\,,\label{sp2}
\end{align}
for some constant $C$ depending only on the dimension $d\geq3$.  The integral exponents relate as
\begin{equation*}
r>p=
\left\{
\begin{array}{ccl}
\frac{r}{r(1-1/d)+1/d} & \text{if} & r \in (1,2] ,\\[0.2cm]
\frac{r}{2-1/d} & \text{if} & r \in [2,\infty) \,.
\end{array}\right.
\end{equation*}
As a consequence,
\begin{align}\label{sp3}
\begin{split}
\big\|Q^{+}_{\tilde\Phi,b}(g,h)(v)\big\|_{L^{r}(\real^{d})} &\leq \frac{C\,\|b\|_{\infty}}{\varepsilon^{\theta\frac{d-1}{2}}}\Big(\|g\|_{ L^{1}(\real^{d}) }\|h\|_{L^{p}(\real^{d})} + \|h\|_{ L^{1}(\real^{d}) }\|g\|_{L^{p}(\real^{d})}\Big)\,.
\end{split}
\end{align}

\end{thm}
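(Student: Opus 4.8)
The plan is to prove Theorem~\ref{gainmaxwell} by first establishing the two one-sided estimates \eqref{sp1} and \eqref{sp2} and then combining them with the decomposition $b = b^{+}+b^{-}$ to obtain \eqref{sp3}. The strategy rests on two ingredients already available in the excerpt: the Young-type convolution inequality of Theorem~\ref{Young} (with the explicit constants from Theorem~\ref{HPop} and the $L^{1}$-bound $\mu_{1}=\|b\|_{L^{1}}$), and the gain-of-integrability bound \eqref{gofi}, i.e.\ the $L^{2}$-estimate $\|Q^{+}_{\Phi}(g,h)\|_{L^{2}}\leq C_{d}\varepsilon^{-(d-3)/2}\|b\|_{\infty}\|g\|_{L^{1}}\|h\|_{L^{2d/(2d-1)}}$ coming from Proposition~\ref{lem-cancel} together with the Minkowski step \eqref{mi}. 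The key observation is that the forward component $Q^{+}_{\tilde\Phi,b^{+}}(g,h)$ is linear in $h$ for fixed $g$ (a nonlinear convolution in $h$), while the backward component $Q^{+}_{\tilde\Phi,b^{-}}(g,h)$ is linear in $g$ for fixed $h$; both assertions use the micro-reversibility symmetry $u^{\pm}(u,-\sigma)=u^{\mp}(u,\sigma)$ of the scattering vectors, which interchanges the roles of $v'$ and $v'_{*}$ and hence the roles of the two entries.

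First I would set up the interpolation. Fix $g\in L^{1}(\real^{d})$ and view $T_{g}h := Q^{+}_{\tilde\Phi,b^{+}}(g,h)$ as a linear operator in $h$. From Theorem~\ref{Young} applied with $b^{+}$ (using $\mu_{1}=\|b^{+}\|_{L^{1}(\mathbb{S}^{d-1})}\leq|\mathbb{S}^{d-1}|\,\|b^{+}\|_{\infty}$ and the cases $(p,q,r)=(1,1,1)$ and $(1,\infty,\infty)$, or more directly the estimate \eqref{Lp Q+ estimate} specialized to linear entries) one has the two endpoint bounds $\|T_{g}h\|_{L^{1}}\leq C\|b^{+}\|_{\infty}\|g\|_{L^{1}}\|h\|_{L^{1}}$ and $\|T_{g}h\|_{L^{\infty}}\leq C\|b^{+}\|_{\infty}\|g\|_{L^{1}}\|h\|_{L^{\infty}}$, while \eqref{gofi} gives the intermediate bound $\|T_{g}h\|_{L^{2}}\leq C_{d}\varepsilon^{-(d-3)/2}\|b^{+}\|_{\infty}\|g\|_{L^{1}}\|h\|_{L^{2d/(2d-1)}}$. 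Riesz--Thorin interpolation between the $L^{2d/(2d-1)}\to L^{2}$ bound and the $L^{1}\to L^{1}$ bound covers the range $r\in(1,2]$ with $p=r/(r(1-1/d)+1/d)$; interpolation between $L^{2d/(2d-1)}\to L^{2}$ and $L^{\infty}\to L^{\infty}$ covers $r\in[2,\infty)$ with $p=r/(2-1/d)$; the $\varepsilon$-exponent $\theta(d-1)/2$ with $\theta=2/\max\{r,r'\}$ is tracked by noting that the interpolation parameter multiplying the $L^{2}$-endpoint is exactly $\theta$ (and absorbing the mild $\varepsilon^{-(d-3)/2}$ vs.\ $\varepsilon^{-(d-1)/2}$ discrepancy, or redoing Proposition~\ref{lem-cancel}'s bookkeeping with the sharper exponent). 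This establishes \eqref{sp1}. The estimate \eqref{sp2} follows by the symmetry argument: $Q^{+}_{\tilde\Phi,b^{-}}(g,h)(v)=Q^{+}_{\tilde\Phi,\tilde b^{+}}(h,g)(v)$ for the reflected kernel $\tilde b^{+}(s)=b^{-}(-s)$, so \eqref{sp1} applied with the entries swapped yields \eqref{sp2} with the same constant and $\|\tilde b^{+}\|_{\infty}=\|b^{-}\|_{\infty}$.

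Finally, for \eqref{sp3} I would write $Q^{+}_{\tilde\Phi,b}(g,h)=Q^{+}_{\tilde\Phi,b^{+}}(g,h)+Q^{+}_{\tilde\Phi,b^{-}}(g,h)$, apply the triangle inequality in $L^{r}$, and bound each summand by \eqref{sp1} and \eqref{sp2} respectively, using $\|b^{\pm}\|_{\infty}\leq\|b\|_{\infty}$. The main obstacle I anticipate is bookkeeping of the $\varepsilon$-dependence: the Carleman-based Proposition~\ref{lem-cancel} is stated for the potential $\Phi(u)=|u|\mathbf{1}_{|u|\geq\varepsilon}$ with exponent $\varepsilon^{-(d-3)/2}$, whereas Theorem~\ref{gainmaxwell} uses $\tilde\Phi(u)=\mathbf{1}_{|u|\geq\varepsilon}$ and asks for $\varepsilon^{-\theta(d-1)/2}$; one must either re-run the HLS estimate in the proof of Proposition~\ref{lem-cancel} with $\tilde h(\cdot)=h(\cdot)\mathbf{1}_{|\cdot|\geq\varepsilon}/|\cdot|^{d-2}$ (one extra power from dropping the $|z+v|$ factor), giving $\varepsilon^{-(d-1)/2}$ at the $L^{2}$-endpoint, and then check that interpolation distributes this as $\varepsilon^{-\theta(d-1)/2}$. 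A secondary, purely technical point is making the reversibility symmetry $u^{\pm}(u,-\sigma)=u^{\mp}(u,\sigma)$ and the ensuing linearity of each scattering component rigorous at the level of the weak formulation \eqref{P-operator2}, but this is routine given the representation already displayed in the excerpt.
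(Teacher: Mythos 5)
Your proposal is correct and follows essentially the same route as the paper: interpolate the $L^{2}$ gain-of-integrability endpoint \eqref{gofi} coming from Proposition~\ref{lem-cancel} against the Young endpoints of Theorem~\ref{Young} (the $(1,1,1)$ case for $r\in(1,2]$ and the one-sided $L^{\infty}$ endpoint, which needs the forward/backward splitting to keep the constant finite, for $r\in[2,\infty)$), obtain \eqref{sp2} from \eqref{sp1} by the reflection $\sigma\mapsto-\sigma$, and conclude \eqref{sp3} by the triangle inequality. The only deviation is the $\varepsilon$-bookkeeping you flag: the paper resolves it with the one-line pointwise comparison $\tilde\Phi\leq\varepsilon^{-1}\Phi$ applied to \eqref{gofi}, rather than re-running the Carleman/HLS computation, and both give the same $\varepsilon^{-(d-1)/2}$ endpoint and hence the stated $\varepsilon^{-\theta\frac{d-1}{2}}$ after interpolation.
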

\begin{proof}
When $r\in(1,2]$ one uses the end point (Theorem \ref{Young} for Maxwell interactions with $(p,q,r)=(1,1,1)$)
\begin{equation*}
\big\| Q^{+}_{\tilde\Phi,b}(g,h) \big\|_{ L^{1}(\real^{d}) } \leq \|b\|_{L^{1}(\mathbb{S}^{d-1})}\|g\|_{ L^{1}(\real^{d}) }\|h\|_{L^{1}(\real^{d})}\,,\qquad \tilde\Phi = \textbf{1}_{|\cdot|\geq\varepsilon}\,,\;\; \varepsilon\in(0,1]\,.
\end{equation*}
Interpolating with estimate \eqref{gofi}, for any $r\in(1,2]$, one deduces that (note that $\tilde\Phi \leq \varepsilon^{-1}\Phi$)
\begin{equation*}
\big\| Q^{+}_{\tilde\Phi,b}(g,h) \big\|_{ L^{r}(\real^{d}) } \leq \tfrac{ C }{ \varepsilon^{\theta\frac{d-1}{2}} }\|b\|^{\theta}_{L^{\infty}(\mathbb{S}^{d-1})}\|b\|^{1-\theta}_{L^{1}(\mathbb{S}^{d-1})}\|g\|_{ L^{1}(\real^{d}) }\|h\|_{L^{p}(\real^{d})} \,,\quad p=\frac{r}{r(1-1/d)+1/d}<r\,.
\end{equation*}
For the case $r\in[2,\infty)$ we use the end point (Theorem \ref{Young} for Maxwell interactions with $(p,q,r)=(\infty,1,\infty)$)
\begin{equation*}
\big\| Q^{+}_{\tilde\Phi,b^+}(g,h) \big\|_{ L^{\infty}(\real^{d}) } \leq C\|b^+\|_{L^{1}(\mathbb{S}^{d-1})}\|g\|_{ L^{1}(\real^{d}) }\|h\|_{L^{\infty}(\real^{d})}\,.
\end{equation*}
Note that $b^{+}$ is necessary to keep the constant finite.  Interpolation with \eqref{gofi} proves that
\begin{equation*}
\big\| Q^{+}_{\tilde\Phi,b^+}(g,h) \big\|_{ L^{r}(\real^{d}) } \leq \tfrac{ C }{ \varepsilon^{\theta\frac{d-1}{2}} }\|b^+\|^{\theta}_{L^{\infty}(\mathbb{S}^{d-1})}\|b^+\|^{1-\theta}_{L^{1}(\mathbb{S}^{d-1})}\|g\|_{ L^{1}(\real^{d}) }\|h\|_{L^{p}(\real^{d})} \,,\quad p=\frac{r}{2-1/d}<r\,.
\end{equation*}
Here the interpolation parameter $\theta\in(0,1]$ is given by
\begin{equation*}
\theta=
\left\{
\begin{array}{ccl}
\frac{2}{r'} & \text{if} & r \in (1,2] ,\\[0.2cm]
\frac{2}{r} & \text{if} & r \in [2,\infty) \,.
\end{array}\right.
\end{equation*}
This proves inequality \eqref{sp1} since $\|b^+\|_{L^{1}(\mathbb{S}^{d-1})}\leq |\mathbb{S}^{d-1}|\,\|b^+\|_{L^{\infty}(\mathbb{S}^{d-1})}$.  As for inequality \eqref{sp2} one performs the change $\sigma\rightarrow -\sigma$ to notice that
\begin{equation*}
Q^{+}_{\tilde\Phi,b^-}(g,h)(v) = Q^{+}_{\tilde\Phi,\mathcal{R}b^-}(h,g)(v)
\end{equation*}
where $(\mathcal{R}b^-)(s)=b^-(-s)$ is supported in $\mathbb{S}^{d-1}_+ $.  Then one can apply inequality \eqref{sp1} and use that $\|\mathcal{R}b^-\|_\infty=\|b^{-}\|_\infty$.  For the latter inequality \eqref{sp3} simply note that
\begin{align*}
\big\|Q^{+}_{\tilde\Phi,b}(g,h)(v)\big\|_{L^{r}(\real^{d})} &= \big\|Q^{+}_{\tilde\Phi,b^+}(g,h)(v)+Q^{+}_{\tilde\Phi,b^-}(g,h)(v)\big\|_{L^{r}(\real^{d})}\\
&\leq \big\|Q^{+}_{\tilde\Phi,b^+}(g,h)(v)\big\|_{L^{r}(\real^{d})}+\big\|Q^{+}_{\tilde\Phi,b^-}(g,h)(v)\big\|_{L^{r}(\real^{d})}\,,
\end{align*}
and apply inequalities \eqref{sp1} and \eqref{sp2} accordingly.
\end{proof}
%
%
%
\section{An energy method: $L^r$-propagation theory}\label{Lr-propagation}

Propagation of the $L^{r}$-norms for the Boltzmann equation has been of great interest in the community for quite some time, see for instance \cite{WennbergLp, TV, MV04}.  In \cite{MV04} a complete study was  presented addressing not only the $L^r$-propagation of solutions to the space homogeneous Boltzmann equation, for $r\in(1,\infty)$, but also the propagation of Sobolev norms and convergence toward equilibrium.

In this section we use the work done in the previous section, based on \cite{ACG-cmp10} and \cite{AG-krm11}, to address the $L^{r}$-propagation theory filling along the way some of the issues that remained open in \cite{MV04}.  We free the scattering kernel $b$ of any higher integrability assumption, the only requirement is $b\in L^{1}(\mathbb{S}^{d-1})$, and deal with the case $r=\infty$ as well.  In addition, we present a quite natural argument to find a lower bound for the negative collision operator to clearly address this issue with a minimal requirement on solutions.  For a brief comparison between cutoff and non-cutoff $L^r$-propagation theories we refer the reader to \cite{alonso-BD}.

\medskip
Throughout the section we use the $L^{p}_{\ell}$ spaces which, we recall, are defined as:

\begin{equation}\label{Lpk-norm}
L^{p}_{\ell}(\real^d)\  = \  \Big\{ \ g \; \text{measurable} \ : \  \|g\|_{L^p_\ell} \  = \| \langle \cdot \rangle^{\ell} g \|_{L^{p}} \ < \ \infty\Big\}\,, \qquad p\geq1,\quad \ell\in\mathbb{R}\,.
\end{equation}

\subsection{Estimating the Dirichlet form of the collision operator for hard potentials }
This subsection focuses on  estimates of  following from Theorem \ref{Young} and Theorem \ref{gainmaxwell} in order to find a suitable estimate for the collision Dirichlet form
\begin{equation}\label{t123D}
\int_{\mathbb{R}^d} Q^+(f,f)(v)\,f(v)^{r-1} \langle v \rangle^{r\,\ell} dv\,,\qquad \ell \geq0,\quad r\in(1,\infty)\,, 
\end{equation}
with a transition probability $B(|u|, \hat u \cdot\sigma)= \Phi(|u|) b(\hat u \cdot\sigma)$ with the potential function $\Phi$ satisfying conditions from \eqref{pot-Phi_1} and $b\in L^{1}(\mathbb{S}^{d-1})$. Such estimates are central when implementing an energy method for the Boltzmann equation.  As stated earlier, these ideas started in \cite{TV, WennbergLp} and were successfully concluded in \cite{MV04}.  They are inspired in older ideas used for the analysis of the linearized Boltzmann equation where the gain operator is broken down in a big compact (smoothing) part that satisfies a gain of integrability property and a small rough part.  Then, one uses the loss part of the collision operator to control such small rough part.  In the following argument, we use Theorem \ref{gainmaxwell} which reduces considerably the technicalities of the original arguments given in \cite{MV04}. 

The main technicality of this new  argument is to handle the general case $b\in L^{1}(\mathbb{S}^{d-1})$.  To this end, one has to make a proper decomposition of the integration domains by writting  the collisional kernel as the sum of nonnegative functions as follows
\begin{equation}\label{CK-split}
|\cdot|^{\gamma} = \Phi(\cdot) = \Phi_{l}(\cdot) + \Phi_{s}(\cdot)\,,\qquad b(\cdot) = b^{\infty}(\cdot) + b^{1}(\cdot) \in L^{1}(\mathbb{S}^{d-1})\,.
\end{equation}
That is,  the component $b^{\infty}(\mathbb{S}^{d-1}) \in L^{\infty}$ is a good approximation for $b\in L^{1}(\mathbb{S}^{d-1})$ and $\Phi_{l}(u)$ relate to large relative velocities of the potential by choosing  sufficiently large   positive parameters $R$ and $\tilde{a}$ to make 
\begin{align}\label{R-tilde-a}
\Phi_{l}(u)=\Phi(u)\,\text{1}_{\{|u| \geq 1/R \}} 
\qquad \text{and} \qquad
\| b^{1} \|_{L^{1}(\mathbb{S}^{d-1})} \leq \tilde{a}^{-1}\|b\|_{L^{1}(\mathbb{S}^{d-1})}\, ,
\end{align}
  in order to estimate
 the gain part of collision operator with a collision cross section $|u|^{\gamma} \, b(\hat u\cdot\sigma)$ as a sum of  terms
\begin{equation}\label{t123}
Q^{+}_{\Phi,b}(f,f) = Q^{+}_{\Phi_{l},b^{\infty}}(f,f) + Q^{+}_{\Phi,b^{1}}(f,f)+Q^{+}_{\Phi_{s},b^{\infty}}(f,f)\,.
\end{equation}   
Before estimating each term in the right of \eqref{t123},  recall center of mass and relative velocities coordinates introduced in \eqref{eq:9.3-Vu} to be used in the form
\begin{equation*}
u^{+}=\frac{u+|u|\sigma}2=v-v'_*\qquad \text{and}\qquad  u^{-}=\frac{u-|u|\sigma}2=v'-v\,,
\end{equation*}
as defined in \eqref{eq:9.3-Vu}.   Then the  following inequalities hold
\begin{align}\label{u-abs-estimates}
&{\bf i)} \  \text{If}\ \  \widehat{u}\cdot\sigma\geq0, \ \text{then}  \ |u| = |u^{+}|\sqrt{\tfrac{2}{1+\widehat{u}\cdot\sigma}} =|v-v'_*| \sqrt{\tfrac{2}{1+\widehat{u}\cdot\sigma}} \leq \langle v'_{*} \rangle \langle v \rangle \sqrt{\tfrac{2}{1+\widehat{u}\cdot\sigma}} \leq \sqrt{2}\,\langle v'_{*} \rangle\, \langle v \rangle\,,\nonumber\\[3pt]
&\text{and}\  \\[2pt]
&{\bf ii)}\  \text{If}\ \  \widehat{u}\cdot\sigma\leq0, \ \text{then}  \ |u| = |u^{-}|\sqrt{\tfrac{2}{1-\widehat{u}\cdot\sigma}} =|v-v'| \sqrt{\tfrac{2}{1-\widehat{u}\cdot\sigma}} \leq \langle v' \rangle \langle v \rangle \sqrt{\tfrac{2}{1-\widehat{u}\cdot\sigma}} \leq \sqrt{2}\,\langle v' \rangle\, \langle v \rangle\,,\nonumber\
\end{align} 
Thus, from \eqref{u-abs-estimates} it follows for any $r\in(1,\infty)$ and $\frac1r+\frac{1}{r'}=1$
\begin{align}\label{ugamma-estimate}
\begin{split}
|u|^{\gamma} &= |u'|^{\frac{\gamma}{r}}|u|^{\frac{\gamma}{r'}} \leq 2^{\frac{\gamma}{2r'}}\,\langle v'_{*} \rangle^{\gamma}\langle v' \rangle^{\frac{\gamma}{r}}\, \langle v \rangle^{\frac{\gamma}{r'}}\quad \text{ for the case {\bf i)}}\,,\\
|u|^{\gamma} &= |u'|^{\frac{\gamma}{r}}|u|^{\frac{\gamma}{r'}} \leq 2^{\frac{\gamma}{2r'}}\,\langle v' \rangle^{\gamma}\langle v'_* \rangle^{\frac{\gamma}{r}}\, \langle v \rangle^{\frac{\gamma}{r'}}\quad \text{ for the case {\bf ii)}}\,.
\end{split}
\end{align}
Further, write $b^\infty = b^\infty_{+} + b^\infty_{-}$ the forward and backward scattering components supported in $\mathbb{S}^{d-1}_{\pm}$ respectively, then from \eqref{ugamma-estimate} it holds that
\begin{align}\label{t0}
Q^{+}_{\Phi_{l},b^{\infty}_+}(f,f)(v)\leq 2^{\frac{\gamma}{2r'}}\,\langle v \rangle^{\frac{\gamma}{r'}}\,Q^{+}_{\tilde\Phi,b^{\infty}_+}(\langle \cdot \rangle^{\gamma}\,f,\langle \cdot \rangle^{\frac{\gamma}{r}}\,f)(v) \nonumber\\
Q^{+}_{\Phi_{l},b^{\infty}_-}(f,f)(v)\leq 2^{\frac{\gamma}{2r'}}\,\langle v \rangle^{\frac{\gamma}{r'}}\, Q^{+}_{\tilde\Phi,b^{\infty}_-}(\langle \cdot \rangle^{\frac{\gamma}{r}}\,f,\langle \cdot \rangle^{\gamma}\,f)(v)
\end{align}
Thus, the corresponding Dirichlet form for the first term of \eqref{t123} is estimated as follows:  using Theorem \ref{gainmaxwell} inequality \eqref{sp1} and standard Lebesgue's interpolation one has
\begin{equation}\label{t1}
\begin{aligned}
\int_{\mathbb{R}^{d}}Q^{+}_{\Phi_{l},b^{\infty}_+}(f,f)(v)&\,f(v)^{r-1}  dv 
\leq 2^{\frac{\gamma}{2r'}}\|Q^{+}_{\tilde{\Phi},b^{\infty}_+}\big( \langle \cdot \rangle^{\gamma}\,f,\langle\cdot\rangle^{\frac{\gamma}{r} }f \big)\|_{L^{r}(\mathbb{R}^{d})}\| f \|^{r-1}_{L^{r}_{\frac{\gamma}{r} }(\mathbb{R}^{d})}\\
&\leq C_{R,b^{\infty}_+}\| f \|_{L^{1}_{\gamma}(\mathbb{R}^{d})}\| f \|_{L^{p}_{\frac{\gamma}{r}}(\mathbb{R}^{d})}\| f \|^{r-1}_{L^{r}_{{\frac{\gamma}{r} }}(\mathbb{R}^{d})}\\
&\leq C_{R,b^{\infty}_+}\| f \|_{L^{1}_{\gamma}(\mathbb{R}^{d})}\| f \|^{1-\theta_r}_{L^{1}_{\frac{\gamma}{r} }(\mathbb{R}^{d})}\| f \|^{r+\theta_r - 1}_{L^{r}_{\frac{\gamma}{r} }(\mathbb{R}^{d})}\,,\qquad \theta_r\in(0,1)\,.
\end{aligned}
\end{equation}
Similarly, Theorem \ref{gainmaxwell} inequality \eqref{sp2} renders 
\begin{equation}\label{t2}
\begin{aligned}
\int_{\mathbb{R}^{d}}Q^{+}_{\Phi_{l},b^{\infty}_-}(f,f)(v)&\,f(v)^{r-1}  dv 
\leq 2^{\frac{\gamma}{2r'}}\|Q^{+}_{\tilde{\Phi},b^{\infty}_-}\big( \langle \cdot \rangle^{\frac{\gamma}{r}}\,f,\langle\cdot\rangle^{\gamma }f \big)\|_{L^{r}(\mathbb{R}^{d})}\| f \|^{r-1}_{L^{r}_{\frac{\gamma}{r} }(\mathbb{R}^{d})}\\
&\leq C_{R,b^{\infty}_-}\| f \|_{L^{1}_{\gamma}(\mathbb{R}^{d})}\| f \|_{L^{p}_{\frac{\gamma}{r}}(\mathbb{R}^{d})}\| f \|^{r-1}_{L^{r}_{{\frac{\gamma}{r} }}(\mathbb{R}^{d})}\\
&\leq C_{R,b^{\infty}_-}\| f \|_{L^{1}_{\gamma}(\mathbb{R}^{d})}\| f \|^{1-\theta_r}_{L^{1}_{\frac{\gamma}{r} }(\mathbb{R}^{d})}\| f \|^{r+\theta_r - 1}_{L^{r}_{\frac{\gamma}{r} }(\mathbb{R}^{d})}\,,\qquad \theta_r\in(0,1)\,.
\end{aligned}
\end{equation}
Here the constants $C_{R,b^{\infty}_\pm} = C\, R^{ \frac{d-1}{\max\{r,r'\}} }\;\| b^{\infty}_\pm \|_{ L^{\infty} }$ blows up as $R$ and $\tilde{a}$ increase.   Writing
\begin{equation*}
Q^{+}_{\Phi_{l},b^{\infty}}(f,f)(v) = Q^{+}_{\Phi_{l},b^{\infty}_+}(f,f)(v) + Q^{+}_{\Phi_{l},b^{\infty}_-}(f,f)(v)\,,
\end{equation*}
one uses estimates \eqref{t1} and \eqref{t2} to conclude that
\begin{equation}\label{t3}
\int_{\mathbb{R}^{d}}Q^{+}_{\Phi_{l},b^{\infty}}(f,f)(v)\,f(v)^{r-1}  dv \leq C_{R,b^{\infty}}\| f \|_{L^{1}_{\gamma}(\mathbb{R}^{d})}\| f \|^{1-\theta_r}_{L^{1}_{\frac{\gamma}{r} }(\mathbb{R}^{d})}\| f \|^{r+\theta_r - 1}_{L^{r}_{\frac{\gamma}{r} }(\mathbb{R}^{d})}\,,\qquad \theta_r\in(0,1)\,.
\end{equation}
The second term in \eqref{t123} is estimated in a similar manner using Theorem \ref{Young} instead of Theorem \ref{gainmaxwell}.  Writing, as before, $b^1 = b^1_+ + b^1_{-}$ the forward and backward scattering components supported in $\mathbb{S}^{d-1}_\pm$ respectively, it follows that
\begin{equation}\label{t4}
\begin{aligned}
\int_{\mathbb{R}^{d}}Q^{+}_{\Phi,b^{1}_+}(f,f)(v)f(v)^{r-1}dv &\leq 2^{\frac{\gamma}{2r'}}\| Q^{+}_{b^{1}_+}\big(\langle \cdot \rangle^{\gamma} f,\langle\cdot\rangle^{\frac{\gamma}{r}}f\big)\|_{L^{r}(\mathbb{R}^{d})}\| f \|^{r-1}_{L^{r}_{\frac{\gamma}{r}}(\mathbb{R}^{d})} \\
&\leq C\|b^{1}_+\|_{L^{1}(\mathbb{S}^{d-1})} \,\|  f \|_{L^{1}_{\gamma}(\mathbb{R}^{d})}\| f \|^{r}_{L^{r}_{\frac{\gamma}{r}}(\mathbb{R}^{d})}\\
&\leq C\,\tilde{a}^{-1}\|b\|_{L^{1}(\mathbb{S}^{d-1})}\| f \|_{L^{1}_{\gamma}(\mathbb{R}^{d})}\| f \|^{r}_{L^{r}_{\frac{\gamma}{r}}(\mathbb{R}^{d})}\,.
\end{aligned}
\end{equation}
Similarly,
\begin{equation}\label{t5}
\begin{aligned}
\int_{\mathbb{R}^{d}}Q^{+}_{\Phi,b^{1}_-}(f,f)(v)f(v)^{r-1}dv &\leq 2^{\frac{\gamma}{2r'}}\| Q^{+}_{b^{1}_-}\big(\langle \cdot \rangle^{\frac{\gamma}{r}} f,\langle\cdot\rangle^{\gamma}f\big)\|_{L^{r}(\mathbb{R}^{d})}\| f \|^{r-1}_{L^{r}_{\frac{\gamma}{r}}(\mathbb{R}^{d})} \\
&\leq C\|b^{1}_-\|_{L^{1}(\mathbb{S}^{d-1})} \,\|  f \|_{L^{1}_{\gamma}(\mathbb{R}^{d})}\| f \|^{r}_{L^{r}_{\frac{\gamma}{r}}(\mathbb{R}^{d})}\\
&\leq C\,\tilde{a}^{-1}\|b\|_{L^{1}(\mathbb{S}^{d-1})}\| f \|_{L^{1}_{\gamma}(\mathbb{R}^{d})}\| f \|^{r}_{L^{r}_{\frac{\gamma}{r}}(\mathbb{R}^{d})}\,.
\end{aligned}
\end{equation}
Consequently, from inequalities \eqref{t4} and \eqref{t5} and since
\begin{equation*}
Q^{+}_{\Phi,b^{1}}(f,f)(v)=Q^{+}_{\Phi,b^{1}_+}(f,f)(v)+Q^{+}_{\Phi,b^{1}_-}(f,f)(v)\,,
\end{equation*}
it holds that
\begin{equation}\label{t6}
\int_{\mathbb{R}^{d}}Q^{+}_{\Phi,b^{1}}(f,f)(v)f(v)^{r-1}dv \leq C\,\tilde{a}^{-1}\|b\|_{L^{1}(\mathbb{S}^{d-1})}\| f \|_{L^{1}_{\gamma}(\mathbb{R}^{d})}\| f \|^{r}_{L^{r}_{\frac{\gamma}{r}}(\mathbb{R}^{d})}\,.
\end{equation}
For the latter term in \eqref{t123} simply note, since $\Phi_{s}(u)=|u|^{\gamma}\text{1}_{\{ |u|<1/R \} }\leq R^{-\gamma}$, that
\begin{equation*}
Q^{+}_{\Phi_{s},b^{\infty}}(f,f)\leq R^{-\gamma}Q^{+}_{b^{\infty}}(f,f)\,.
\end{equation*}
Therefore, by exploiting the quadratic nature of the estimate, Theorem \ref{Young} used in $b^{\infty}_{\pm}$ readily gives
\begin{equation}\label{t7}
\int_{\mathbb{R}^{d}}Q^{+}_{\Phi_{s},b^{\infty}}(f,f)(v) f(v)^{r-1}dv \leq C\,R^{-\gamma}\,\|b\|_{L^{1}(\mathbb{S}^{d-1})}\,\|f\|_{L^{1}(\mathbb{R}^{d})}\|f\|^{r}_{L^{r}(\mathbb{R}^{d})}\,.
\end{equation}
Gathering estimates \eqref{t3}, \eqref{t6}, and \eqref{t7}, one has
\begin{align*}
\int_{\mathbb{R}^{d}}Q^{+}(f,f)(v) f(v)^{r-1}dv \leq C\| f \|_{L^{1}_{\gamma}(\mathbb{R}^{d})}\Big( \big( \tilde{a}^{-1} + R^{-\gamma}\big)&\|b\|_{L^{1}(\mathbb{S}^{d-1})}\| f \|^{r}_{L^{r}_{\frac{\gamma}{r}}(\mathbb{R}^{d})} \\
& + C_{R,b^{\infty}}\| f \|^{1-\theta_r}_{L^{1}_{\frac{\gamma}{r} }(\mathbb{R}^{d})}\| f \|^{r+\theta_r - 1}_{L^{r}_{\frac{\gamma}{r} }(\mathbb{R}^{d})}\Big)\,.
\end{align*}
Adding the weight $\langle\cdot\rangle^{r\,\ell}$ in the Dirichlet form is not a problem after using the inequality $$\langle v \rangle^{r\,\ell}=\langle v \rangle^{\ell}\langle v \rangle^{(r-1)\,\ell}\leq\langle v'_{*} \rangle^{\ell}\langle v' \rangle^{\ell}\langle v \rangle^{(r-1)\,\ell}\,.$$ 
Consequently,
\begin{align*}
\int_{\mathbb{R}^{d}}&Q^{+}(f,f)(v) f(v)^{r-1}\langle v \rangle^{r\ell}dv \leq \int_{\mathbb{R}^{d}}Q^{+}\big(\langle\cdot\rangle^{\ell}f,\langle\cdot\rangle^{\ell}f\big)(v) \big( f(v)\langle v \rangle^{\ell} \big)^{r-1}dv \\
&\leq C\| f \|_{L^{1}_{\gamma+\ell}(\mathbb{R}^{d})}\Big( \big( \tilde{a}^{-1} + R^{-\gamma}\big)\|b\|_{L^{1}(\mathbb{S}^{d-1})}\| f \|^{r}_{L^{r}_{\frac{\gamma}{r}+\ell}(\mathbb{R}^{d})} + C_{R,b^{\infty}}\| f \|^{1-\theta_r}_{L^{1}_{\frac{\gamma}{r} +\ell }(\mathbb{R}^{d})}\| f \|^{r+\theta_r - 1}_{L^{r}_{\frac{\gamma}{r}  +\ell}(\mathbb{R}^{d})}\Big)\,.
\end{align*}
Furthermore, if $f$ satisfies the condition of Lemma \ref{lblemma} one has that
\begin{align*}
\int_{\mathbb{R}^{d}}Q^{-}(f,f)(v)\langle v \rangle^{r\,\ell}f(v)^{r-1}dv &= \|b\|_{L^{1}(\mathbb{S}^{d-1})}\int_{\mathbb{R}^{d}}\big(f(v)\langle v \rangle^{k}\big)^{r} \big(f\ast|\cdot|^{\gamma}\big)(v)dv \\
&\geq c_{lb}\|b\|_{L^{1}(\mathbb{S}^{d-1})}\int_{\real^{d}}\big(f(v)\langle v \rangle^{\frac{\gamma}{r} + \ell}\big)^{r}dv = c_{lb}\|b\|_{L^{1}(\mathbb{S}^{d-1})}\| f \|^{r}_{L^{r}_{\frac{\gamma}{2} + \ell}(\real^{d})}\,.
\end{align*}
As a consequence, adding these latter two estimates we obtain, for any $R>0$, $\tilde{a}>0$ and $r\in(1,\infty)$, that
\begin{align}\label{qcontrol}
\begin{split}
\int_{\mathbb{R}^{d}} Q(f,f)(v) \langle v \rangle^{r\,\ell}f(v)^{r-1}dv  &\leq C\,\| f \|_{L^{1}_{\gamma+\ell}(\mathbb{R}^{d})}\Big( \big( \tilde{a}^{-1} + R^{-\gamma}\big)\|b\|_{L^{1}(\mathbb{S}^{d-1})}\| f \|^{r}_{L^{r}_{\frac{\gamma}{r}+\ell}(\mathbb{R}^{d})} \\
&\qquad + C_{R,b^\infty}\| f \|^{1-\theta_r}_{L^{1}_{\frac{\gamma}{r} +\ell }(\mathbb{R}^{d})}\| f \|^{r+\theta_r - 1}_{L^{r}_{\frac{\gamma}{r}  +\ell}(\mathbb{R}^{d})}\Big) - c_{lb}\|b\|_{L^{1}(\mathbb{S}^{d-1})}\| f \|^{r}_{L^{r}_{\frac{\gamma}{2} + \ell}(\real^{d})}\,.
\end{split}
\end{align}
Before entering in the specifics of the statement of the following theorem, based on previous discussion, we mention the constant
\begin{equation}\label{C_ell_alpha}
C_{\ell,\alpha}:=\sup_{t\geq0}\| f(t) \|_{L^{1}_{\alpha+\ell}(\mathbb{R}^{d})}(t) =  \sup_{t\geq0} m_{\frac{\alpha+\ell}2}(t) \le\max\Big\{ m_{\frac{\alpha+\ell}2}[f _0],\,     \mathbb{E}_{\frac{\alpha+\ell}2}\Big\}<\infty\,,\quad\alpha\ge0\,,\quad \ell\geq 0\,,
\end{equation}
with the  $(\alpha+\ell)/2$-moments  globally controlled  by estimates \eqref{RateE} and \eqref{mom-prop}  developed in Theorem~\ref{propagation-generation}.
\begin{thm}[Hard potentials]\label{Dirichlet-r}
Let $b\in L^{1}(\mathbb{S}^{d-1})$, $\gamma\in(0,1]$, $\ell\geq0$, and $r\in(1,\infty)$.   Assume that $f$ satisfies the hypothesis of the Lower Bound Lemma \ref{lblemma} and $m_{\frac{\gamma+\ell}2}[f _0]<\infty$.  Then, there exists a constant $\bar C_\gamma:=\bar C_\gamma(\| b\|_1 ,C_{\ell,\gamma})$, estimated below in \eqref{C1-def0} and \eqref{C1gamma-def}, such that
\begin{align}\label{Dirichlet-form}
\int_{\mathbb{R}^{d}} Q(f,f)(v) \langle v \rangle^{r\,\ell}f(v)^{r-1}dv  &\leq \bar C_\gamma - \tfrac{c_{lb}}{2}\|b\|_{L^{1}(\mathbb{S}^{d-1})}\| f \|^{r}_{L^{r}_{\frac{\gamma}{2} + \ell}(\real^{d})} \,, \qquad r>1,
\end{align}
with $C_{\ell, \gamma}$ defined in \eqref{C_ell_alpha} controlling the propagation of moments estimates \eqref{RateE}and \eqref{mom-prop} developed in  Theorem~\ref{propagation-generation} and $c_{lb}$ was given in Lemma \ref{lblemma}.
\end{thm}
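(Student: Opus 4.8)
The statement follows by combining the master estimate \eqref{qcontrol} with a suitable choice of the auxiliary parameters $R>0$ and $\tilde a>0$, followed by an absorption (Young) argument. First I would recall that \eqref{qcontrol} holds for \emph{every} $R>0$, $\tilde a>0$, $r\in(1,\infty)$, and that under the hypotheses of the theorem (the moments of $f_0$ needed in Lemma~\ref{lblemma} together with $m_{(\gamma+\ell)/2}[f_0]<\infty$) the propagation estimate \eqref{mom-prop} of Theorem~\ref{propagation-generation} guarantees that all the $L^1_{\gamma+\ell}$, $L^1_{\gamma/r+\ell}$ norms appearing there are bounded uniformly in time by the constant $C_{\ell,\gamma}$ defined in \eqref{C_ell_alpha}. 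Thus \eqref{qcontrol} reads, schematically,
\begin{equation*}
\int_{\mathbb{R}^{d}} Q(f,f)\,\langle v\rangle^{r\ell}f^{r-1}\,dv \ \le\  C\,C_{\ell,\gamma}\Big(\big(\tilde a^{-1}+R^{-\gamma}\big)\|b\|_1\,\|f\|^{r}_{L^r_{\gamma/r+\ell}} + C_{R,b^\infty}\,\|f\|^{1-\theta_r}_{L^1_{\gamma/r+\ell}}\,\|f\|^{r+\theta_r-1}_{L^r_{\gamma/r+\ell}}\Big) - c_{lb}\|b\|_1\,\|f\|^{r}_{L^r_{\gamma/2+\ell}}.
\end{equation*}

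The next step is to absorb the two positive terms into the coercive negative term $-c_{lb}\|b\|_1\|f\|^r_{L^r_{\gamma/2+\ell}}$. For the first positive term I would note the elementary pointwise bound $\langle v\rangle^{\gamma/r}\le\langle v\rangle^{\gamma/2}$ (since $r>1$ and $\gamma>0$), so that $\|f\|_{L^r_{\gamma/r+\ell}}\le\|f\|_{L^r_{\gamma/2+\ell}}$; hence choosing $R$ and $\tilde a$ large enough that
\[
C\,C_{\ell,\gamma}\big(\tilde a^{-1}+R^{-\gamma}\big)\|b\|_1 \ \le\ \tfrac{c_{lb}}{4}\|b\|_1
\]
makes that term absorbable into one quarter of the coercive term. \emph{This is precisely where the freedom to pick $R,\tilde a$ independently of everything else is used}, and it fixes $R$ (hence fixes the constant $C_{R,b^\infty}=C R^{(d-1)/\max\{r,r'\}}\|b^\infty\|_\infty$ as a finite quantity depending only on $\|b\|_1$, $C_{\ell,\gamma}$, $d$, $r$, $\gamma$, $\ell$). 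For the second positive term, which is genuinely superlinear in $\|f\|_{L^r_{\gamma/r+\ell}}$ with exponent $r+\theta_r-1<r$, I would apply $\delta$-Young's inequality in the form $|ab|\le\delta^{-p/q}p^{-1}|a|^p+\delta q^{-1}|b|^q$ with $p=\tfrac{r}{r+\theta_r-1}>1$ acting on the factor $\|f\|^{r+\theta_r-1}_{L^r_{\gamma/r+\ell}}$ (again majorized by $\|f\|^{r+\theta_r-1}_{L^r_{\gamma/2+\ell}}$) and the conjugate exponent on the remaining factor $C_{R,b^\infty}C C_{\ell,\gamma}\|f\|^{1-\theta_r}_{L^1_{\gamma/r+\ell}}$, choosing $\delta$ so small that the $\|f\|^r_{L^r_{\gamma/2+\ell}}$ piece produced is $\le\tfrac{c_{lb}}{4}\|b\|_1\|f\|^r_{L^r_{\gamma/2+\ell}}$. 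The leftover constant term from this $\delta$-Young step, together with any fixed additive constants, is then collected into the single constant $\bar C_\gamma=\bar C_\gamma(\|b\|_1,C_{\ell,\gamma})$ appearing in \eqref{Dirichlet-form}; one also keeps track that $\tfrac{c_{lb}}{4}+\tfrac{c_{lb}}{4}=\tfrac{c_{lb}}{2}$, which is the coefficient in the stated inequality.

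The only genuinely delicate point is making sure the interpolation exponent $\theta_r$ coming from \eqref{t1}--\eqref{t3} is strictly inside $(0,1)$ (so that $r+\theta_r-1<r$ strictly and the $\delta$-Young absorption is legitimate, not a borderline case) and that the low-order Lebesgue norm $\|f\|_{L^1_{\gamma/r+\ell}}$ is indeed finite — both are ensured by the moment hypothesis $m_{(\gamma+\ell)/2}[f_0]<\infty$ through Theorem~\ref{propagation-generation} and by the range $\gamma\in(0,1]$, $\ell\ge0$, $r\in(1,\infty)$ specified in the statement. Everything else is bookkeeping of constants. I would close by writing out $\bar C_\gamma$ explicitly in terms of $R$ (now fixed), $\delta$ (now fixed), $\|b\|_1$, $C_{\ell,\gamma}$, $d$, $r$, $\gamma$, $\ell$, matching the forward references \eqref{C1-def0} and \eqref{C1gamma-def}, and noting that the uniform-in-time character of the bound is inherited entirely from the uniform-in-time moment bound $C_{\ell,\gamma}$. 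The main obstacle, such as it is, is purely organizational: threading the three separate smallness choices ($R$ large for the $R^{-\gamma}$ term, $\tilde a$ large for the $\tilde a^{-1}$ term, $\delta$ small for the $\delta$-Young term) in an order that does not create circular dependencies — $R$ and $\tilde a$ must be chosen first (they do not see $\delta$), then $C_{R,b^\infty}$ becomes a definite number, and only then is $\delta$ selected against it.
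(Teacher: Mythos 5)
Your proposal follows essentially the same route as the paper's proof: starting from \eqref{qcontrol}, invoking the uniform-in-time moment bound $C_{\ell,\gamma}$ from \eqref{C_ell_alpha}, fixing $R$ and $\tilde a$ first so that $C_{\ell,\gamma}\big(\tilde a^{-1}+R^{-\gamma}\big)\le \tfrac{c_{lb}}{4}$ as in \eqref{C1-def0}, and then applying a $\delta$-Young absorption to the superlinear term of exponent $r+\theta_r-1<r$, which leaves the coefficient $\tfrac{c_{lb}}{2}$ and collects the remaining constants into $\bar C_\gamma$ exactly as in \eqref{C1gamma-def}. One small caveat: your justification $\langle v\rangle^{\gamma/r}\le\langle v\rangle^{\gamma/2}$ holds only for $r\ge 2$, but the paper's own chain of inequalities tacitly uses the same comparison of $\|f\|_{L^r_{\gamma/r+\ell}}$ with $\|f\|_{L^r_{\gamma/2+\ell}}$, so this is not a departure from the paper's argument.
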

\begin{proof}
Using the hypothesis, estimate \eqref{qcontrol} reduces to
\begin{align*}
\int_{\mathbb{R}^{d}} Q(f,f)(v) \langle v \rangle^{r\,\ell}f(v)^{r-1}dv  &\leq C_{\ell,\gamma}\Big( \big( \tilde{a}^{-1} + R^{-\gamma}\big)\|b\|_{L^{1}(\mathbb{S}^{d-1})}\| f \|^{r}_{L^{r}_{\frac{\gamma}{r}+\ell}(\mathbb{R}^{d})} \\
&\qquad + C_{R,b^\infty}\,C^{1-\theta_r}_{\ell,\gamma}\| f \|^{r+\theta_r - 1}_{L^{r}_{\frac{\gamma}{r}  +\ell}(\mathbb{R}^{d})}\Big) - c_{lb}\|b\|_{L^{1}(\mathbb{S}^{d-1})}\| f \|^{r}_{L^{r}_{\frac{\gamma}{2} + \ell}(\real^{d})}\\
& \leq C_{R,b^\infty}\,C^{1-\theta_r}_{\ell,\gamma}\| f \|^{r+\theta_r - 1}_{L^{r}_{\frac{\gamma}{r}  +\ell}(\mathbb{R}^{d})} - \tfrac{3c_{lb}}{4}\|b\|_{L^{1}(\mathbb{S}^{d-1})}\| f \|^{r}_{L^{r}_{\frac{\gamma}{2} + \ell}(\real^{d})}\,,
\end{align*}
where, in the last inequality,  the parameters  $R$ and $\tilde{a}$ are chosen sufficiently large so that $C_{\ell,\gamma}\big( \tilde{a}^{-1} + R^{-\gamma}\big)\leq \frac{c_{lb}}{4}$.  For instance, we can choose \begin{equation}\label{C1-def0}
\tilde{a}=\frac{8\,C_{\ell,\gamma}}{c_{lb}} \,,\quad \text{and then} \quad R=\tilde{a}^{1/\gamma} = \Big(\frac{8\,C_{\ell,\gamma}}{c_{lb}}\Big)^{1/\gamma}\,.
\end{equation}
Furthermore, since $\theta_{r}\in(0,1)$, we can use Young's inequality  controlling 
$$C_{R,b^\infty}\,C^{2-\theta_r}_{\ell,\gamma}\| f \|^{r+\theta_r - 1}_{L^{r}_{\frac{\gamma}{r}  +\ell}}\le \Big(\frac{4}{c_{lb}\| b \|_{L^{1}(\mathbb{S}^{d-1})}}\Big)^{\frac{r+\theta_{r}-1}{1-\theta_{r}}}\big(C_{R,b^\infty}\,C^{2-\theta_r}_{\ell,\gamma}\big)^{\frac{r}{1-\theta_{r}}}  +   \tfrac{c_{lb}}{4}\|b\|_{L^{1}(\mathbb{S}^{d-1})} \| f \|^{r}_{L^{r}_{\frac{\gamma}{2} + \ell}},$$
to obtain that
\begin{multline*}
C_{R,b^\infty}\,C^{2-\theta_r}_{\ell,\gamma}\| f \|^{r+\theta_r - 1}_{L^{r}_{\frac{\gamma}{r}  +\ell}(\mathbb{R}^{d})} - \tfrac{3c_{lb}}{4}\|b\|_{L^{1}(\mathbb{S}^{d-1})}\| f \|^{r}_{L^{r}_{\frac{\gamma}{2} + \ell}(\real^{d})} \\ \leq \Big(\frac{4}{c_{lb}\| b \|_{L^{1}(\mathbb{S}^{d-1})}}\Big)^{\frac{r+\theta_{r}-1}{1-\theta_{r}}}\big(C_{R,b^\infty}\,C^{2-\theta_r}_{\ell,\gamma}\big)^{\frac{r}{1-\theta_{r}}} - \tfrac{c_{lb}}{4}\|b\|_{L^{1}(\mathbb{S}^{d-1})}\| f \|^{r}_{L^{r}_{\frac{\gamma}{2} + \ell}(\real^{d})}\,.
\end{multline*}
Recalling that $C_{R,b^\infty} = C\, R^{ \frac{d-1}{\max\{r,r'\}} }\;\| b^{\infty} \|_{ L^{\infty} }$, we can rewrite the first term in the right hand side as
\begin{equation}\label{C1gamma-def}
\bar C_{\gamma} = \bar C_{\gamma}(b, r , C_{\ell,\gamma}) := \| b \|^{1-\frac{r}{1-\theta_{r}}}_{L^{1}(\mathbb{S}^{d-1})}\| b^{\infty} \|^{\frac{r}{1-\theta_{r}}}_{ L^{\infty} }\Big(\frac{4}{c_{lb} }\Big)^{ \frac{r+\theta_{r}-1}{1-\theta_{r}}}\Big(C\,R^{ \frac{d-1}{\max\{r,r'\}} }\,C^{2-\theta_r}_{\ell,\gamma}\Big)^{\frac{r}{1-\theta_{r}}}\,,
\end{equation}
which proves the result.
\end{proof}
\subsubsection{The Maxwell molecules case}  Note that in order to control $Q^{+}_{\Phi_{s},b^{\infty}}$, the small relative velocities, the kinetic potential $\Phi(u)$ needs to vanish as $|u|\rightarrow0$.  This is not the case for the Maxwell Molecules case.  Indeed, Maxwell molecules is a critical case for uniform propagation of $L^{r}$-integrability for general initial data in $L^{r}$ in the sense that as soon as $\gamma<0$, i.e. soft potential case, the uniform propagation of moments is lost for general initial data as discussed in \cite{Carlen} let alone the $L^{r}_{\ell}$-norms \footnote{For the non-cutoff Boltzmann problem this is not the case and the analysis is quite different.}.  In order to compensate for this criticality issue, we will use the additional compactness gained through propagation of entropy after noticing that we implicitly have the additional requirement on the initial data
\begin{equation*}
\int_{\mathbb{R}^{d}}f_0(v)\ln\big(f_0(v)\big)dv<\infty\,.
\end{equation*}
In our context $f_0\in \big(L^{1}_{2}\cap L^{r}\big)(\mathbb{R}^{d})$ for $r>1$.  As a consequence, the initial entropy is finite since for any $1<r<\frac{d}{d-1}\leq 2$
\begin{align*}
\bigg|\int_{\mathbb{R}^{d}} &f_0(v)\ln\big(f_0(v)\big)dv \bigg|  \leq \int_{\{f_0\leq1\}} \big|f_0(v)\ln\big(f_0(v)\big)\big|dv  +  \int_{\{f_0\geq1\}} |f_0(v)\ln\big(f_0(v)\big)| dv\\
& \leq C_{r}\int_{\{f_0\leq1\}} \big|f_0(v)\big|^{\frac{1}{r}} dv + \int_{\{f_0\geq1\}} |f_0(v)|^{r} dv\\
& \leq C_{r,d}\bigg(\int_{\{f_0\leq1\}} f_0(v)\langle v \rangle^{r} dv\bigg)^{\frac{1}{r}} + \int_{\{f_0\geq1\}} |f_0(v)|^{r} dv =: C\Big(d,\|f_0\|_{(L^{1}_{2}\cap L^{r})(\mathbb{R}^{d})}\Big)<+\infty.
\end{align*}
In the second estimate we used that $|x\ln(x)| \leq C_{p}\,x^{\frac{1}{r}}$, for any $r>1$ and $x\in[0,1]$.  Furthermore, it is well known that using energy conservation and entropy propagation it follows for the solution of the homogeneous Boltzmann equation $f(v,t)$, see \cite[page 329]{DL} or more recently \cite[Lemma A.1]{ALods}, that
\begin{equation}\label{entropycontrol}
\sup_{t\geq0}\int_{\mathbb{R}^{d}}f(v,t)\big|\ln\big(f(v,t)\big)\big| dv \leq C\bigg(\int f_0\ln(f_0)dv,\int f_0| v |^{2}dv\bigg)=:C_{\mathcal{M}}\,.
\end{equation}
\begin{thm}[Maxwell Molecules]\label{DirichletMM}
Let $b\in L^{1}(\mathbb{S}^{d-1})$.  Fix $\ell\geq0$ and assume that
\begin{equation*}
\|  f \|_{L^{1}_{\ell}(\mathbb{R}^{d})} = C_{\ell,0} < \infty\,, \qquad \int_{\mathbb{R}^{d}}f(v)\big|\ln\big(f(v)\big)\big| dv = C_{\mathcal{M}} < \infty\,.
\end{equation*}
Then, there exists a constant $\bar C_{0}:=\bar C_{0}(\| b\|_1,C_{\ell,0},C_{\mathcal{M}})$, estimated below in \eqref{CMax-def0},\eqref{CMax-def1} and \eqref{CMax-def2}, such that
\begin{align}\label{Dirichlet-form}
\int_{\mathbb{R}^{d}} Q_0(f,f)(v) \langle v \rangle^{r\,\ell}f(v)^{r-1}dv  &\leq \bar{C}_{0} - \tfrac{1}{2}\|b\|_{L^1(\mathbb{S}^{d-1})}\| f \|^{r}_{L^{r}_{\ell}(\real^{d})}\,,\quad \forall\,r\in[1,\infty)\,.
\end{align}
\end{thm}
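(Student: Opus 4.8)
\textbf{Proof plan for Theorem~\ref{DirichletMM} (Maxwell molecules).}
The plan is to mimic the structure of the proof of Theorem~\ref{Dirichlet-r}, but replacing the two ingredients that fail when $\gamma=0$: there is no small-relative-velocity gain in the potential (so $\Phi_s$ cannot be made small by enlarging $R$), and there is no coercivity coming from $f\ast|\cdot|^\gamma\gtrsim\langle v\rangle^\gamma$. For Maxwell molecules the loss term is exactly $\|b\|_{L^1(\mathbb{S}^{d-1})}\,m_0[f]\,\int f(v)^r\langle v\rangle^{r\ell}\,dv$, which already gives a clean coercive term $\|b\|_{L^1}\,m_0[f_0]\,\|f\|_{L^r_\ell}^r$ (here $m_0[f]=m_0[f_0]$ by conservation of mass), so the real work is to absorb the gain contributions into a fraction of it plus a constant. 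Since by scaling we may assume the initial mass is normalized, I will treat the coercive constant as $\tfrac12\|b\|_{L^1(\mathbb{S}^{d-1})}$ as in the statement; otherwise carry an extra factor $m_0[f_0]$.

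First I would split $b=b^\infty+b^1$ with $b^\infty\in L^\infty(\mathbb{S}^{d-1})$ and $\|b^1\|_{L^1(\mathbb{S}^{d-1})}\le\tilde a^{-1}\|b\|_{L^1(\mathbb{S}^{d-1})}$, so that $Q^+_{0,b}=Q^+_{0,b^\infty}+Q^+_{0,b^1}$ (no $\Phi_l/\Phi_s$ split is needed since $\Phi\equiv1$). The $b^1$ piece is handled exactly as in \eqref{t6} via the Young-type inequality Theorem~\ref{Young} applied to $b^1_\pm$ (with the weight redistribution $\langle v\rangle^{r\ell}\le\langle v'_*\rangle^\ell\langle v'\rangle^\ell\langle v\rangle^{(r-1)\ell}$ and $|u|^0=1$, so no $\gamma$-weights appear): this produces a term bounded by $C\,\tilde a^{-1}\|b\|_{L^1(\mathbb{S}^{d-1})}\,C_{\ell,0}\,\|f\|_{L^r_\ell}^r$, which for $\tilde a$ large enough (say $\tilde a=8C\,C_{\ell,0}/\|b\|_{L^1(\mathbb{S}^{d-1})}$, so $C\tilde a^{-1}C_{\ell,0}\le\tfrac18$, up to $\|b\|_{L^1}$ bookkeeping) is absorbed into $\tfrac18\|b\|_{L^1(\mathbb{S}^{d-1})}\|f\|_{L^r_\ell}^r$. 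The $b^\infty$ piece is where the criticality lives: I split $b^\infty=b^\infty_++b^\infty_-$ and apply the gain-of-integrability Theorem~\ref{gainmaxwell} with $\tilde\Phi=\mathbf 1_{\{|u|\ge\varepsilon\}}$ for a small $\varepsilon\in(0,1]$ to be chosen, writing $Q^+_{0,b^\infty}=Q^+_{\tilde\Phi,b^\infty}+Q^+_{\mathbf 1_{\{|u|<\varepsilon\}},b^\infty}$. For $Q^+_{\tilde\Phi,b^\infty}$, estimates \eqref{sp1}--\eqref{sp2} and Lebesgue interpolation give, just as in \eqref{t1}--\eqref{t3} but with all $\gamma$-weights set to zero, a bound $C_{\varepsilon,b^\infty}\,C_{\ell,0}^{1-\theta_r}\,\|f\|_{L^r_\ell}^{r+\theta_r-1}$ with $\theta_r\in(0,1)$; since the exponent $r+\theta_r-1<r$, a $\delta$-Young inequality turns this into $\tfrac18\|b\|_{L^1(\mathbb{S}^{d-1})}\|f\|_{L^r_\ell}^r$ plus a constant $\bar C_0^{(1)}$ depending only on $\|b\|_{L^1}$, $\|b^\infty\|_\infty$, $\varepsilon$, $r$, $C_{\ell,0}$.

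The genuinely new step is controlling the small-relative-velocity remainder $Q^+_{\mathbf 1_{\{|u|<\varepsilon\}},b^\infty}$, which cannot be made small by any choice of $\varepsilon$ in $L^r$ alone — this is the main obstacle. Here I would use the extra entropy compactness \eqref{entropycontrol}: the standard argument (as in \cite{MV04}, or \cite[Lemma A.1]{ALods} combined with \cite[p.~329]{DL}) shows that for the gain operator restricted to $\{|u|<\varepsilon\}$ one has, by Young's inequality Theorem~\ref{Young} and the refined estimate that the mass concentrated in a small relative-velocity shell is controlled via the entropy bound $C_{\mathcal M}$ together with $m_0[f_0]$, a bound of the form
\begin{equation*}
\int_{\mathbb{R}^d} Q^+_{\mathbf 1_{\{|u|<\varepsilon\}},b^\infty}(f,f)(v)\,f(v)^{r-1}\langle v\rangle^{r\ell}\,dv \le \omega(\varepsilon)\,\|b^\infty\|_\infty\,\|f\|_{L^r_\ell}^r + \bar C_0^{(2)},
\end{equation*}
where $\omega(\varepsilon)\to0$ as $\varepsilon\to0$ and $\bar C_0^{(2)}$ depends on $C_{\ell,0}$ and $C_{\mathcal M}$; concretely one bounds $Q^+_{\mathbf 1_{\{|u|<\varepsilon\}},b^\infty}(f,f)\le\|b^\infty\|_\infty\,(g_\varepsilon\ast f)$ for $g_\varepsilon=f\,\mathbf 1_{\{|\cdot|<\varepsilon\}}$ and splits $g_\varepsilon$ into a part below height $\lambda$ (whose $L^1$ norm, by Chebyshev against the entropy, is at most $C_{\mathcal M}/\ln\lambda$) and a part above height $\lambda$ (which lives in $L^r$ with small $L^r$-mass), then optimizes in $\lambda$ and $\varepsilon$. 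Choosing $\varepsilon$ small enough that $\omega(\varepsilon)\|b^\infty\|_\infty\le\tfrac18\|b\|_{L^1(\mathbb{S}^{d-1})}$ absorbs this into another $\tfrac18\|b\|_{L^1(\mathbb{S}^{d-1})}\|f\|_{L^r_\ell}^r$. Finally, adding the (nonnegative) loss contribution $-\|b\|_{L^1(\mathbb{S}^{d-1})}\,m_0[f_0]\int f(v)^r\langle v\rangle^{r\ell}\,dv$ and collecting the four absorbed quarters, one arrives at
\begin{equation*}
\int_{\mathbb{R}^d} Q_0(f,f)(v)\,\langle v\rangle^{r\ell}f(v)^{r-1}\,dv \le \bar C_0 - \tfrac12\|b\|_{L^1(\mathbb{S}^{d-1})}\|f\|_{L^r_\ell}^r,
\end{equation*}
with $\bar C_0:=\bar C_0^{(1)}+\bar C_0^{(2)}$ depending only on $\|b\|_{L^1(\mathbb{S}^{d-1})}$, $\|b^\infty\|_\infty$, $C_{\ell,0}$, $C_{\mathcal M}$ and $r$; one then records the explicit forms \eqref{CMax-def0}, \eqref{CMax-def1}, \eqref{CMax-def2} by tracking the choices of $\tilde a$, $\varepsilon$ and the $\delta$-Young exponents. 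The case $r=1$ is immediate and separate: test the weak form \eqref{eq:maxwell} with $\varphi=\langle v\rangle^\ell$ and use the Angular Averaging Lemma~\ref{BGP-JSP04} directly, which already gives \eqref{Dirichlet-form} with $r=1$ without any entropy or gain-of-integrability input.
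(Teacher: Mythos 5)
Your overall scheme (split $b=b^\infty+b^1$, absorb the $b^1$ piece by Theorem~\ref{Young}, treat the $\{|u|\ge\varepsilon\}$ piece by Theorem~\ref{gainmaxwell} plus a $\delta$-Young absorption) is coherent up to the point where you isolate the small-relative-velocity remainder $Q^{+}_{\mathbf{1}_{\{|u|<\varepsilon\}},\,b^{\infty}}$ — but that remainder is exactly where the Maxwell-molecule criticality sits, and it is the step you leave essentially unproved, with the sketch wrong as written. The pointwise domination $Q^{+}_{\mathbf{1}_{\{|u|<\varepsilon\}},b^{\infty}}(f,f)\le\|b^{\infty}\|_{\infty}\,(g_{\varepsilon}\ast f)$ with $g_{\varepsilon}=f\,\mathbf{1}_{\{|\cdot|<\varepsilon\}}$ is not justified: the gain operator at $v$ integrates $f(v')f(v'_*)$ over $(v_*,\sigma)$, and converting that into a convolution requires a change of variables whose Jacobian degenerates as $\hat u\cdot\sigma\to\pm1$ (this is precisely why any such pointwise reduction needs an angular cutoff near the poles first); moreover $f\,\mathbf{1}_{\{|\cdot|<\varepsilon\}}$ restricts $f$ to a ball around the \emph{origin} of velocity space, which is not the structure produced by the constraint $|v-v_*|<\varepsilon$. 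Your entropy step is also reversed: Chebyshev against the entropy bound \eqref{entropycontrol} controls the $L^{1}$ mass of the part of $f$ \emph{above} height $\lambda$, namely $\int_{\{f>\lambda\}}f\,dv\le C_{\mathcal M}/\ln\lambda$, not of the part below, and the part above $\lambda$ does not come with ``small $L^{r}$ mass''. So the claimed bound $\omega(\varepsilon)\|b^{\infty}\|_{\infty}\|f\|_{L^{r}_{\ell}}^{r}+\bar C_{0}^{(2)}$ with $\omega(\varepsilon)\to0$ is asserted rather than proved, and it carries the whole content of the theorem.

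For contrast, the paper never introduces a relative-velocity cutoff and never uses Theorem~\ref{gainmaxwell} here. It splits only the angular kernel, $b=b_{1}^{\varepsilon}+b_{2}^{\varepsilon}$ with $b_{2}^{\varepsilon}=b\,\mathbf{1}_{\{|\hat u\cdot\sigma|\ge1-\varepsilon\}}$, choosing $\varepsilon=\varepsilon(R)$ so that $\|b_{2}^{\varepsilon}\|_{L^{1}}\le R^{-1}\|b\|_{L^{1}}$ (that piece is handled quadratically by Theorem~\ref{Young}, see \eqref{CMax-def0}); for $b_{1}^{\varepsilon}$, supported away from the poles so the Young constant is finite of order $\varepsilon^{-d/(2r')}$, the entropy enters through the height decomposition $f=f\mathbf{1}_{\{f\le K\}}+f\mathbf{1}_{\{f>K\}}$: the high part is dominated by $\ln(K)^{-1}f|\ln f|$ and contributes a linear term with small coefficient $\ln(K)^{-1}C_{\mathcal M}$, while the bounded part enters Young's inequality with exponents $\left(\tfrac{2r}{r+1},\tfrac{2r}{r+1},r\right)$ and interpolation, producing the sublinear power $\|f\|_{L^{r}_{\ell}}^{1/2}$ that is absorbed numerically; see \eqref{Q+entro} and the choices \eqref{CMax-def1}--\eqref{CMax-def2}. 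To salvage your route you would have to supply a complete proof of the $\omega(\varepsilon)$ estimate for the small-$|u|$ gain term (with the entropy splitting in the correct direction and a legitimate convolution-type reduction); the simpler fix is to adopt the paper's angular-plus-height splitting, under which that problematic term never appears.
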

\begin{proof}
Use the decomposition $b(s)=b(s)\text{1}_{\{|s|<1-\varepsilon\}} + b(s)\text{1}_{\{|s|\geq1-\varepsilon\}}=:b_{1}^{\varepsilon}+b_{2}^{\varepsilon}$.  Since $\| b_{2}^{\varepsilon}\|_{L^{1}(\mathbb{S}^{d-1})}$ decreases monotonically to zero with $\varepsilon\in(0,1)$, we can choose $\varepsilon:=\varepsilon(R)$ sufficiently small such that
\begin{equation}\label{CMax-def0}
\| b_{2}^{\varepsilon}\|_{L^{1}(\mathbb{S}^{d-1})}\leq R^{-1}\| b \|_{L^{1}(\mathbb{S}^{d-1})}\,,\qquad R>0\,.
\end{equation}
Write $Q^{+}_{0,b}(f,f)= Q^{+}_{0,b^{\varepsilon}_{1}}(f,f) + Q^{+}_{0,b^{\varepsilon}_{2}}(f,f)$.

\smallskip
\nn On the one hand invoking Theorem \ref{Young}, used in $b^{\varepsilon}_{2,\pm}$ the upper and lower scattering of $b^{\varepsilon}_{2}$ exploiting the quadratic nature of the estimate, and the fact that $\langle v \rangle\leq\langle v'_* \rangle\langle v' \rangle $ we find that
\begin{align*}
\| Q^{+}_{0,b^{\varepsilon}_{2}}(f,f) \|_{L^{r}_{k}(\mathbb{R}^{d})} &\leq \| Q^{+}_{0,b^{\varepsilon}_{2}}(\langle \cdot \rangle^{\ell}f,\langle \cdot \rangle^{\ell}f) \|_{L^{r}(\mathbb{R}^{d})}  \leq  2^{\frac{d}{2r'}+1}\| b^{\varepsilon}_{2} \|_{L^{1}(\mathbb{S}^{d-1})} \| f \|_{L^{1}_{\ell}(\mathbb{R}^{d})}\| f \|_{L^{r}_{\ell}(\mathbb{R}^{d})}\\
&\leq 2^{\frac{d}{2r'}+1}R^{-1}\| b \|_{L^{1}(\mathbb{S}^{d-1})} \| f \|_{L^{1}_{\ell}(\mathbb{R}^{d})}\| f \|_{L^{r}_{\ell}(\mathbb{R}^{d})}\,.
\end{align*}
 On the other hand, the inequality $\langle v \rangle \leq \langle v'_* \rangle + \langle v' \rangle$ lead us to
\begin{equation*}
Q^{+}_{0,b^{\varepsilon}_{1}}(f,f)\,\langle v \rangle^{\ell} \leq 2^{\ell-1}Q^{+}_{0,b^{\varepsilon}_{1}}(f,\langle \cdot \rangle^\ell f) + 2^{k-1}Q^{+}_{0,b^{\varepsilon}_{1}}(\langle\cdot\rangle^\ell f,f)\,.
\end{equation*} 
Let us start with the first term in the right.  Note that for $K>1$
\begin{align*}
Q^{+}_{0,b^{\varepsilon}_{1}}(f,\langle \cdot \rangle^{\ell} f) &= Q^{+}_{0,b^{\varepsilon}_{1}}(f\,\text{1}_{\{f\leq K\}},\langle \cdot \rangle^{\ell}f) + Q^{+}_{0,b^{\varepsilon}_{1}}(f\,\text{1}_{\{f > K\}},\langle \cdot \rangle^{\ell}f)\\
&\leq Q^{+}_{0,b^{\varepsilon}_{1}}(f\,\text{1}_{\{f\leq K\}},\langle \cdot \rangle^{\ell}f) + \ln(K)^{-1}\,Q^{+}_{0,b^{\varepsilon}_{1}}(f\,\ln[f],\langle \cdot \rangle^{\ell}f)\,.
\end{align*}
Using again Theorem \ref{Young}, one concludes that
\begin{align*}
\| Q^{+}_{0,b^{\varepsilon}_{1}}(f\,\ln[f],\langle \cdot \rangle^{\ell}f) \|_{L^{r}(\mathbb{R}^{d})} &\leq \varepsilon^{-\frac{d}{2r'}}\,2^{\frac{d}{2r'}} \| b \|_{L^{1}(\mathbb{S}^{d-1})}\|f\,\ln[f]\|_{L^{1}(\mathbb{R}^{d})}\| f \|_{L^{r}_{\ell}(\mathbb{R}^{d})} \,,\\
\text{and}\qquad \| Q^{+}_{0,b^{\varepsilon}_{1}}(f\,\text{1}_{\{f\leq K\}},\langle \cdot \rangle^{\ell}f) \|_{L^{r}(\mathbb{R}^{d})} &\leq \varepsilon^{-\frac{d}{2r'}}\,2^{\frac{d}{2r'}} \| b \|_{L^{1}(\mathbb{S}^{d-1})}\|f\,\text{1}_{\{f\leq K\}}\|_{L^{\frac{2r}{r+1}}(\mathbb{R}^{d})}\|\langle \cdot \rangle^{\ell} f\|_{L^{\frac{2r}{r+1}}(\mathbb{R}^{d})}\\
&\leq \varepsilon^{-\frac{d}{2r'}}\,K^{\frac{1}{2r'}}\,2^{\frac{d}{2r'}} \| b \|_{L^{1}(\mathbb{S}^{d-1})}\|f\|^{\frac{r+1}{2r}}_{L^{1}(\mathbb{R}^{d})}\| f\|_{L^{\frac{2r}{r+1}}_{\ell}(\mathbb{R}^{d})}\,.
\end{align*}
Use, in the latter inequality, the interpolation
\begin{equation*}
\| f \|_{L^{\frac{2r}{r+1}}_{\ell}(\mathbb{R}^{d})}\leq\| f \|^{\frac{1}{2}}_{L^{1}_{\ell}(\mathbb{R}^{d})}\| f \|^{\frac{1}{2}}_{L^{r}_{\ell}(\mathbb{R}^{d})}\,,
\end{equation*}
and gather the estimates to conclude that
\begin{align*} 
\| Q^{+}_{0,b^{\varepsilon}_{1}}(f,\langle \cdot  \rangle^\ell f) \|_{L^{r}(\mathbb{R}^{d})} \leq \varepsilon^{-\frac{d}{2r'}}\,2^{\frac{d}{2r'}}\,\| b \|_{L^{1}(\mathbb{S}^{d-1})}\Big(\ln(K)^{-1} & \| f \ln[f] \|_{L^{1}(\mathbb{R}^{d}) } \| f \|_{L^{r}_{\ell}(\mathbb{R}^{d})} \\
&+ K^{\frac{1}{2r'}}\,\| f \|^{1+\frac{1}{2r}}_{L^{1}_{\ell}(\mathbb{R}^{d}) }\| f \|^{\frac{1}{2}}_{L^{r}_{\ell}(\mathbb{R}^{d})}\Big)\,.
\end{align*}
Since $Q^{+}_{0,b^{\varepsilon}_{1}}(\langle \cdot \rangle^{\ell}f,f)=Q^{+}_{0,\mathcal{R}b^{\varepsilon}_{1}}(f,\langle \cdot \rangle^{\ell}f)$ an identical estimate follows for the second term
\begin{align*} 
Q^{+}_{0,b^{\varepsilon}_{1}}(\langle \cdot \rangle^\ell f, f) \leq \varepsilon^{-\frac{d}{2r'}}\,2^{\frac{d}{2r'}}\, \| b \|_{L^{1}(\mathbb{S}^{d-1})}\Big(\ln(K)^{-1} \|f \ln[f] \|_{L^{1}(\mathbb{R}^{d}) } & \| f \|_{L^{r}_{\ell}(\mathbb{R}^{d})} \\
&+ \,K^{\frac{1}{2r'}}\,\| f \|^{1+\frac{1}{2r}}_{L^{1}_{\ell}(\mathbb{R}^{d}) }\| f \|^{\frac{1}{2}}_{L^{r}_{\ell}(\mathbb{R}^{d})}\Big)\,.
\end{align*}
Overall, we are led to the estimate
\begin{align}\label{Q+entro}
\begin{split}
\|Q^{+}_{0,b}(f,f)\|_{ L^{r}_{\ell}(\mathbb{R}^{d}) }  
&\leq 2^{\frac{d}{2r'}}\| b \|_{L^{1}(\mathbb{S}^{d-1})}\Big( \ln(K)^{-1}2^{\ell-1}\varepsilon^{-\frac{d}{2r'}}\|f \ln[f] \|_{L^{1}(\mathbb{R}^{d}) } \| f \|_{L^{r}_{\ell}(\mathbb{R}^{d})} \\
&\qquad+ 2^{\ell-1}\varepsilon^{-\frac{d}{2r'}}\,K^{\frac{1}{2r'}}\,\| f \|^{1+\frac{1}{2r}}_{L^{1}_{\ell}(\mathbb{R}^{d}) }\| f \|^{\frac{1}{2}}_{L^{r}_{\ell}(\mathbb{R}^{d})} + R^{-1}\| f \|_{L^{1}_{\ell}(\mathbb{R}^{d}) } \| f \|_{L^{r}_{\ell}(\mathbb{R}^{d})}\Big)\\
&= 2^{\frac{d}{2r'}}\| b \|_{L^{1}(\mathbb{S}^{d-1})}\Big(2^{\ell-1}\varepsilon^{-\frac{d}{2r'}}\,K^{\frac{1}{2r'}}\,C^{1+\frac{1}{2r}}_{\ell}\| f \|^{\frac{1}{2}}_{L^{r}_{\ell}(\mathbb{R}^{d})} \\
&\hspace{2.5cm} + \big(2^{\ell-1}\varepsilon^{-\frac{d}{2r'}}\,C_{\mathcal{M}}\,\ln(K)^{-1} + C_{\ell,0}R^{-1}\big)\| f \|_{L^{r}_\ell (\mathbb{R}^{d})}\Big)\,.
\end{split}
\end{align}
With this estimate at hand, the upper control goes as follows.  Using H\"{o}lder's inequality one notices that
\begin{align}\label{DMax1}
\begin{split}
\int_{\mathbb{R}^{d}}Q_0(f,f)(v)\langle v \rangle^{r\,\ell}&f^{r-1}(v) dv \\
&= \int_{\mathbb{R}^{d}}Q^{+}_0(f,f)(v)\langle v \rangle^{\ell}\big( \langle \cdot \rangle^{\ell}f \big)^{r-1}(v) dv - \|b\|_{L^{1}(\mathbb{S}^{d-1})}\| f \|^{r}_{L^{r}_{\ell}(\mathbb{R}^{d})} \\
& \leq \|Q^{+}_0(f,f) \|_{L^{r}_{\ell}(\mathbb{R}^{d})}\| f\|^{r-1}_{L^{r}_{\ell}(\mathbb{R}^{d})} - \|b\|_{L^{1}(\mathbb{S}^{d-1})}\| f \|^{r}_{L^{r}_{\ell}(\mathbb{R}^{d})}\,.
\end{split}
\end{align}
    Taking first $R>0$ and then $K>1$ sufficiently large in \eqref{Q+entro} such that 
    \begin{equation*}
    2^{\frac{d}{2r'}}\, 2^{\ell-1} \, \varepsilon(R)^{-\frac{d}{2r'}} \, C_{\mathcal{M}}\, \ln(K)^{-1} + 2^{\frac{d}{2r'}}\, C_{\ell,0}\, R^{-1}\le \frac14\, , 
    \end{equation*}
for example one can take
\begin{equation}\label{CMax-def1}
R:= 2^{\frac{d}{2r'}}\,8\,C_{\ell,0}\,,\quad\text{and then}\quad K:=\exp\Big(\frac{2^{\frac{d}{2r'}}\,2^{\ell-1}\,8\,C_{\mathcal{M}}}{\varepsilon(R)^{\frac{d}{2r'}}}\Big)\,,
\end{equation}
and plugging into \eqref{DMax1} we readily obtain, thanks to Young's inequality, that
\begin{align*}
\int_{\mathbb{R}^{d}}Q(f,f)(v)\langle v \rangle^{r\,\ell}f^{r-1}(v)dv &\leq \|b\|_{L^{1}(\mathbb{S}^{d-1})}\Big(2^{\ell-1}\,2^{\frac{d}{2r'}}\,\varepsilon^{-\frac{d}{2r'}}\,K^{\frac{1}{2r'}}\,C^{1+\frac{1}{2r}}_{\ell, 0}\|f\|^{r-\frac{1}{2}}_{L^{r}(\mathbb{R}^{d})} - \tfrac{3}{4}\| f \|^{r}_{L^{r}_{\ell}(\mathbb{R}^{d})}\Big)\\
&\leq \|b\|_{L^{1}(\mathbb{S}^{d-1})}\Big(\Big(4^{1-\frac{1}{2r}}\,2^{\ell-1}\,2^{\frac{d}{2r'}}\,\varepsilon^{-\frac{d}{2r'}}\,K^{\frac{1}{2r'}}\,C_{\ell,0}^{1+\frac{1}{2r}}\Big)^{2r} - \tfrac{1}{2}\| f \|^{r}_{L^{r}_{\ell}(\mathbb{R}^{d})}\Big)\,.
\end{align*}
We can choose
\begin{equation}\label{CMax-def2}
\bar C_{0}:=\|b\|_{L^{1}(\mathbb{S}^{d-1})}\Big(4^{1-\frac{1}{2r}}\,2^{\ell-1}\,2^{\frac{d}{2r'}}\,\varepsilon(R)^{-\frac{d}{2r'}}\,K^{\frac{1}{2r'}}\,C_{\ell,0}^{1+\frac{1}{2r}}\Big)^{2r}
\end{equation}
which proves the result.
\end{proof}
\subsection{Propagation of $L^{r}_{\ell}$ norms.} The conditions of Theorem \ref{Dirichlet-r} and Theorem \ref{DirichletMM} hold for any solution $f(t)$ of the Boltzmann equation with finite mass and energy.  Additionally, the solutions' entropy is uniformly bounded as explained in \eqref{entropycontrol} which is an important fact when addressing the case of Maxwell molecules.  Thus, multiplying the Boltzmann equation \eqref{collision3} by $\langle v \rangle^{r\,\ell}f(v)^{r-1}$, integrating in velocity, and using Theorem \ref{Dirichlet-r} for the case of hard potentials, after bounding $\| f(t) \|_{L^{r}_{ \frac{\gamma}{2}+{\ell}}}\ge \| f(t) \|_{L^{r}_{\ell}}$; or Theorem \ref{DirichletMM} for the case of Maxwell molecules, it follows the linear ordinary differential inequality 
\begin{equation}\label{Lrell ODE}
\frac{d X}{dt} \leq \bar C_{\gamma} - \tfrac{c_{lb}}{2}\|b\|_{L^{1}(\mathbb{S}^{d-1})}\, X\,,  \qquad\quad \gamma\ge 0\,,\quad X(t):=\| f(t) \|^{r}_{L^{r}_{ {\ell}}(\mathbb{R}^{d})}\,.
\end{equation}
The constant $\bar C_{\gamma}>0$,  defined in  \eqref{C1gamma-def}  and \eqref{CMax-def2},  has been estimated in such theorems as functions of  $C_{\ell, \gamma}$ for $\gamma>0$ or  $C_{\ell,0}$ following the definition \eqref{C_ell_alpha}. 

To this end, integrating the ordinary  differential inequality  \eqref{Lrell ODE} proves the following theorem.
\begin{thm}\label{Theo-prop-r}
Let $b\in L^{1}(\mathbb{S}^{d-1})$, $\gamma\in[0,1]$, $\ell\geq0$, and $r\in(1,\infty)$.   Assume that $f_0$ satisfies that
\begin{equation*}
m_{\ell_0}[f_0] <\infty\,, \quad\text{for}\quad \ell_0:=\max\Big\{ 1 + \epsilon\,,\,\tfrac{\gamma+\ell}{2} \Big\}\,,
\end{equation*}
for $\epsilon>0$ as small as desired, and that $\| f_0 \|_{L^{r}_{\ell}(\mathbb{R}^{d})}< +\infty$.  Then, the solution $f(t)$ of the Boltzmann equation \eqref{collision3} with initial datum $f_0$ satisfies
\begin{equation*}
\sup_{t\geq0}\| f(t) \|_{L^{r}_{\ell}(\mathbb{R}^{d})} \leq \max\Big\{\| f_{0} \|_{L^{r}_{\ell}(\mathbb{R}^{d})}, \Big(\frac{2\,\bar{C}_{\gamma}}{c_{lb} \|b\|_{L^{1}(\mathbb{S}^{d-1})}} \Big)^{\frac{1}{r}}\Big\}\,.
\end{equation*}
The constant $\bar{C}_{\gamma}>0$ is estimated in Theorem \ref{Dirichlet-r} equation \eqref{C1gamma-def} for hard potentials ($\gamma>0$) and Theorem \ref{DirichletMM} equation \eqref{CMax-def2} for Maxwell molecules ($\gamma>0$) .  In particular, for the case of Maxwell molecules the argument requires that the entropy must be propagated. 

These constants depend on the upper bounds  to solutions propagating moments of order $2\ell$ for the potential  rate $\alpha$, characterized in  \eqref{mom-prop},  associated to the transition probability  rate of the Boltzmann flow solved in Theorem~\ref{CauchyProblem},  were the  Lower Bound Lemma~\ref{lblemma} calculates the positive constant $c_{lb}$,  valid for solutions with $2+\epsilon$ moments,   and defines  the coercive constant $\A_\kc =c_{lb} 2^{\gamma/2}A_\kc$, with ${A}_{s\kc} =2^{-\frac{\gamma}{2}} c_{\Phi}  ( \|b\|_{L^1(\mathbb{S}^{d-1})}-\mu_{s\kc})$. In the case   Maxwell type of interactions, one can simply take $c_{lb}=1$.
\end{thm}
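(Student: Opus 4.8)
Looking at this final theorem (Theorem~\ref{Theo-prop-r} on propagation of $L^r_\ell$ norms), the proof structure is essentially laid out in the paragraph preceding it. Here is how I would organize the argument.

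\medskip

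The plan is to combine the Dirichlet form estimates (Theorem~\ref{Dirichlet-r} for hard potentials and Theorem~\ref{DirichletMM} for Maxwell molecules) with the moment propagation bounds (Theorem~\ref{propagation-generation}) and the lower bound estimate (Lemma~\ref{lblemma}) to derive a closed differential inequality for the quantity $X(t) := \|f(t)\|^r_{L^r_\ell(\mathbb{R}^d)}$. First I would verify that a solution $f(t)$ of the Cauchy problem \eqref{collision3} with the stated initial data satisfies all the hypotheses needed to invoke the Dirichlet form theorems: namely, that $f(t)$ satisfies the hypotheses of the Lower Bound Lemma~\ref{lblemma} uniformly in time (this requires bounded mass, a lower bound on mass and energy, vanishing momentum, and a $2+\beta$ moment bound — all of which follow from conservation laws \eqref{collision3.2} and the propagation of the $\ell_0$-moment, since $\ell_0 \ge 1+\epsilon$ guarantees the needed extra moment $\beta = 2\epsilon$), and that $\sup_{t\ge0}\|f(t)\|_{L^1_{\gamma+\ell}} = C_{\ell,\gamma} < \infty$ from the requirement $m_{\ell_0}[f_0] < \infty$ with $\ell_0 \ge (\gamma+\ell)/2$ via estimate \eqref{mom-prop}. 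For the Maxwell case one additionally invokes the uniform entropy bound \eqref{entropycontrol}, which holds once the initial entropy is finite — and I would note that $f_0 \in L^r$ for $r>1$ together with $f_0 \in L^1_2$ forces finite initial entropy, as shown in the excerpt.

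\medskip

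Next I would multiply the Boltzmann equation by $\langle v\rangle^{r\ell} f(v)^{r-1}$ and integrate in $v$, obtaining
\begin{equation*}
\frac{1}{r}\frac{d}{dt}\|f(t)\|^r_{L^r_\ell(\mathbb{R}^d)} = \int_{\mathbb{R}^d} Q(f,f)(v)\,\langle v\rangle^{r\ell} f(v)^{r-1}\,dv\,.
\end{equation*}
Applying Theorem~\ref{Dirichlet-r} (for $\gamma\in(0,1]$) — after discarding the extra weight via $\|f(t)\|_{L^r_{\gamma/2+\ell}} \ge \|f(t)\|_{L^r_\ell}$ — or Theorem~\ref{DirichletMM} (for $\gamma=0$), the right-hand side is bounded by $\bar C_\gamma - \tfrac{c_{lb}}{2}\|b\|_{L^1(\mathbb{S}^{d-1})}\|f(t)\|^r_{L^r_\ell(\mathbb{R}^d)}$. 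Absorbing the factor $r$ into the constants (or simply noting the sign structure is unchanged), this yields the linear ODI \eqref{Lrell ODE}:
\begin{equation*}
\frac{dX}{dt} \le \bar C_\gamma - \tfrac{c_{lb}}{2}\|b\|_{L^1(\mathbb{S}^{d-1})}\,X\,, \qquad X(t) = \|f(t)\|^r_{L^r_\ell(\mathbb{R}^d)}\,.
\end{equation*}
Then a standard comparison/Gr\"onwall argument for this autonomous linear inequality — whose unique equilibrium is $X^{\mathrm{eq}} = 2\bar C_\gamma/(c_{lb}\|b\|_{L^1(\mathbb{S}^{d-1})})$ — gives $X(t) \le \max\{X(0), X^{\mathrm{eq}}\}$ for all $t\ge 0$, which is exactly the claimed bound after taking $r$-th roots.

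\medskip

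The main obstacle is not the ODE step, which is routine, but rather the careful justification that all the constants entering $\bar C_\gamma$ are genuinely finite and time-uniform. Specifically, $\bar C_\gamma$ depends through \eqref{C1gamma-def} (resp. \eqref{CMax-def2}) on $C_{\ell,\gamma}$ (resp. $C_{\ell,0}$ and $C_{\mathcal M}$), on the splitting parameters $R, \tilde a$ chosen in terms of $C_{\ell,\gamma}$ and $c_{lb}$, and on the interpolation exponent $\theta_r \in (0,1)$; one must confirm that $C_{\ell,\gamma} = \sup_t m_{(\gamma+\ell)/2}[f](t) < \infty$ by the propagation estimate \eqref{mom-prop}, which in turn needs $m_{(\gamma+\ell)/2}[f_0]<\infty$ — explaining precisely the stated requirement $\ell_0 = \max\{1+\epsilon, (\gamma+\ell)/2\}$. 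The extra $1+\epsilon$ in $\ell_0$ is needed solely so that $c_{lb}$ from Lemma~\ref{lblemma} is available (the lemma requires a $2+\beta$ classical moment with $\beta > 0$), ensuring the coercive constant in the Dirichlet form estimate is strictly positive. A secondary subtlety worth flagging: for Maxwell molecules the proof genuinely requires entropy propagation — without the $L\log L$ control one cannot close the small-angle term $Q^+_{0,b^\varepsilon_1}$ in Theorem~\ref{DirichletMM}, so I would state this dependence explicitly in the conclusion as the theorem does.
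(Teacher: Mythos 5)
Your proposal is correct and follows essentially the same route as the paper: multiply by $\langle v\rangle^{r\ell}f^{r-1}$, invoke the Dirichlet-form estimates of Theorems~\ref{Dirichlet-r}/\ref{DirichletMM} (with the moment-propagation bound \eqref{mom-prop} supplying $C_{\ell,\gamma}$, Lemma~\ref{lblemma} supplying $c_{lb}$, and \eqref{entropycontrol} for the Maxwell case), discard the extra weight via $\|f\|_{L^r_{\gamma/2+\ell}}\ge\|f\|_{L^r_\ell}$, and integrate the resulting linear ODI \eqref{Lrell ODE} to get the stated bound. Your added care with the factor $1/r$ and with verifying why $\ell_0=\max\{1+\epsilon,(\gamma+\ell)/2\}$ is exactly the right hypothesis fills in details the paper leaves implicit, but does not change the argument.
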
 
\subsection{Propagation of $L^{\infty}(\mathbb{R}^d)$-norm}  The main difficulty to overcome in the case $r=\infty$ is the fact that the constant $\bar{C}_{\gamma}$ in Theorem \ref{Theo-prop-r} degenerate in the limit $r\rightarrow\infty$.  Indeed, recall that Riesz--Thorin interpolation played and important role in the gain of integrability result, in particular, the end-point case $(p,q,r)=(\infty,1,\infty)$.  This issue is resolved using the following corollary.
\begin{cor}[\bf{Upper bound}]\label{APq+} Write $b=b_{S}+b_{R}$ with $b_{S}$ vanishing in the vicinity of $\{ -1 , 1\}$.  Further, write $b_{R}(s)= b_{R}(s)1_{\{s\geq0\}}+b_{R}(s)1_{\{s<0\}}=:b^{+}_{R}(s)+b^{-}_{R}(s)$ for the forward and backward scattering components supported in $\mathbb{S}^{d-1}_{\pm}$ respectively.  Then, for any $\gamma\geq0$ it follows that
\begin{align}\label{upper}
\begin{split}
&\| Q^{+}_{|\cdot|^{\gamma},b_{S} }\big( g , f \big) \|_{L^{\infty}(\mathbb{R}^{d})} \leq C_{b_{S}}(2,2) \| g \|_{L^{2}_{\gamma}(\mathbb{R}^{d})}\| f  \|_{L^{2}_{\gamma}(\mathbb{R}^{d})}\,,\\[3pt]
&Q^{+}_{|\cdot|^{\gamma},b^{+}_{R} }\big(g , f \big)(v) \leq C\|b^{+}_{R}\|_{L^{1}(\mathbb{S}^{d-1})}\| f \|_{L^{\infty}(\mathbb{R}^{d})}\| g  \|_{L^{1}_{\gamma}(\mathbb{R}^{d})}\langle v \rangle^{\gamma}\,,\quad\text{and}\\[3pt]
&Q^{+}_{|\cdot|^{\gamma},b^{-}_{R} }\big(g , f \big)(v) \leq C\|b^{-}_{R}\|_{L^{1}(\mathbb{S}^{d-1})}\| g \|_{L^{\infty}(\mathbb{R}^{d})}\| f  \|_{L^{1}_{\gamma}(\mathbb{R}^{d})}\langle v \rangle^{\gamma}\,.
\end{split}
\end{align}
\end{cor}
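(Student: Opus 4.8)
The plan is to prove the three inequalities in Corollary~\ref{APq+} by specializing the Young-type estimates of Theorem~\ref{Young}, together with the weak-formulation calculation \eqref{P-operator2} that separates forward and backward scattering, to the case $r=\infty$. The key algebraic input is exactly the pair of pointwise bounds in \eqref{u-abs-estimates}: if $\hat u\cdot\sigma\ge0$ then $|u|\le\sqrt2\,\langle v'_*\rangle\langle v\rangle$, and if $\hat u\cdot\sigma\le0$ then $|u|\le\sqrt2\,\langle v'\rangle\langle v\rangle$. Raising these to the power $\gamma$ and distributing the Lebesgue weights onto the two colliding velocities will let me absorb the potential $|u|^\gamma$ into the entries of $Q^+$ at the cost of introducing the weight $\langle v\rangle^\gamma$ outside (for the rough parts) or splitting it evenly (for the smooth part).

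For the first estimate, I would write $Q^+_{|\cdot|^\gamma,b_S}(g,f)(v)\le \sqrt2\,\langle v\rangle^{\gamma/2}\,Q^+_{b_S}(\langle\cdot\rangle^{\gamma}g,\langle\cdot\rangle^{\gamma/2}f)(v)$ — wait, more carefully: using $|u|^\gamma=|u'|^{\gamma/2}|u|^{\gamma/2}$ as in \eqref{ugamma-estimate} with $r=r'=2$, one gets $|u|^\gamma\le 2^{\gamma/4}\langle v'_*\rangle^{\gamma/2}\langle v'\rangle^{\gamma/2}\langle v\rangle^{\gamma/2}$ on the forward hemisphere and symmetrically on the backward one; for $b_S$ which is supported away from $\{\pm1\}$ the hemispheres play symmetric roles so one simply estimates $Q^+_{|\cdot|^\gamma,b_S}(g,f)(v)\le 2^{\gamma/4}\langle v\rangle^{\gamma/2}\,Q^+_{b_S}(\langle\cdot\rangle^{\gamma/2}g,\langle\cdot\rangle^{\gamma/2}f)(v)$, and then apply Theorem~\ref{Young} with the end-point triple $(p,q,r)=(2,2,\infty)$, which satisfies $1/2+1/2=1+1/\infty$, giving $\|Q^+_{b_S}(\langle\cdot\rangle^{\gamma/2}g,\langle\cdot\rangle^{\gamma/2}f)\|_{L^\infty}\le C_{b_S}(2,2)\|g\|_{L^2_{\gamma/2}}\|f\|_{L^2_{\gamma/2}}$. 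Bounding $\langle v\rangle^{\gamma/2}$ in the $L^\infty$ norm by $1$ after noticing the weight actually pairs with the $L^2_{\gamma/2}$ norms to reach $L^2_\gamma$ — this is a bookkeeping step — delivers the stated bound with $C_{b_S}(2,2)$ as in \eqref{YoungInC1-0}; the constant is finite precisely because $b_S$ vanishes near $\{\pm1\}$ so the singular factors $(1\pm e_1\cdot\sigma)^{-d/4}$ are integrated against a compactly supported density.

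For the two rough-part estimates, I would use the hemisphere decomposition of the weak form in \eqref{P-operator2}: on $\mathbb{S}^{d-1}_+$ the test function is evaluated at $v-u^+$, so $\langle v\rangle^\gamma\le\sqrt2^{\,\gamma}\langle v'_*\rangle^\gamma\langle v\rangle^\gamma$ — here I instead want $|u|^\gamma\le\sqrt2^{\,\gamma}\langle v'_*\rangle^\gamma\langle v\rangle^\gamma$ from case \textbf{i)}, which puts all the weight on the $g$-entry and on the outer variable $v$, leaving $f$ untouched so it can be pulled out in $L^\infty$. Then Theorem~\ref{Young} with $(p,q,r)=(1,\infty,\infty)$ gives $Q^+_{|\cdot|^\gamma,b^+_R}(g,f)(v)\le C_b(1,\infty)\langle v\rangle^\gamma\|f\|_{L^\infty}\|g\|_{L^1_\gamma}$, and since in the case $p=1$ the constant $C_b(1,\infty)$ reduces to $\|b^+_R\|_{L^1(\mathbb{S}^{d-1})}$ (the $(\cdot)^{r'/p'}=1$ convention), this is the claimed pointwise bound. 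The backward component $b^-_R$ is handled identically after the change $\sigma\to-\sigma$, which swaps the roles of $g$ and $f$ and turns $b^-_R$ into a density supported on $\mathbb{S}^{d-1}_+$ (as in the proof of Theorem~\ref{gainmaxwell}), giving the symmetric estimate with $\|f\|_{L^1_\gamma}$ and $\|g\|_{L^\infty}$.

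I do not expect a serious obstacle here — this corollary is a direct, almost mechanical combination of \eqref{u-abs-estimates}, the weak-form hemisphere splitting \eqref{P-operator2}, and the end-point cases of the Young inequality. The one point that needs genuine care is the choice of how to split the $\gamma$ power of $|u|$ across the three factors $\langle v'_*\rangle,\langle v'\rangle,\langle v\rangle$ so that (a) the surviving outer weight is exactly $\langle v\rangle^\gamma$ and not something larger, and (b) the two inner entries land in the function spaces ($L^2_\gamma$ for the smooth part, $L^1_\gamma$ and $L^\infty$ for the rough parts) for which the Young constant is finite; this is why $b_S$ must be cut off near $\{\pm1\}$ while $b_R$ need only be in $L^1$. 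The rest is constant-tracking.
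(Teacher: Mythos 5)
Your overall architecture is the same as the paper's: split $b=b_S+b_R$, split $b_R$ into forward and backward hemispheres, use the pointwise bounds \eqref{u-abs-estimates} to trade $|u|^{\gamma}$ for Lebesgue weights on a post-collisional entry plus an outer factor $\langle v\rangle^{\gamma}$, and close with the end-point cases of Theorem~\ref{Young}. For the rough parts this is exactly the paper's proof; your handling of $b^{-}_{R}$ through the change $\sigma\to-\sigma$ (which exchanges the two entries and maps $b^{-}_{R}$ to a kernel supported on $\{s\geq0\}$) is equivalent to the paper's direct use of case \textbf{ii)} of \eqref{u-abs-estimates}, and is the same device used in the proof of Theorem~\ref{gainmaxwell}.

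Two points need repair. First, in the $b_{S}$ estimate the splitting you chose (the $r=2$ case of \eqref{ugamma-estimate}) leaves the unbounded factor $\langle v\rangle^{\gamma/2}$ on the outer variable, and it cannot be ``bounded by $1$'' in $L^{\infty}$, nor moved into the $L^{2}_{\gamma/2}$ norms after Young has been applied, since at that stage the entries no longer see $v$. The clean route, and the one the paper takes when it calls this term ``a particular case of Theorem~\ref{Young}'', is to use local energy conservation directly: $|u|^{\gamma}=|u'|^{\gamma}\leq\langle v'\rangle^{\gamma}\langle v'_{*}\rangle^{\gamma}$, so that $Q^{+}_{|\cdot|^{\gamma},b_{S}}(g,f)\leq Q^{+}_{0,b_{S}}(\langle\cdot\rangle^{\gamma}g,\langle\cdot\rangle^{\gamma}f)$ pointwise with no leftover outer weight, and Theorem~\ref{Young} with $(p,q,r)=(2,2,\infty)$ gives the stated bound verbatim, the constant $C_{b_{S}}(2,2)$ being finite because $b_{S}$ vanishes near $\{\pm1\}$. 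Second, your justification that the end-point constant for $b^{+}_{R}$ ``reduces to $\|b^{+}_{R}\|_{L^{1}}$ because $(\cdot)^{r'/p'}=1$'' is not the right reason: in \eqref{YoungInC1-0} the factor that disappears at the end point is irrelevant, while the factor that \emph{survives} is the one attached to the $L^{\infty}$ entry, and it carries a singularity at one of the poles $e_{1}\cdot\sigma=\pm1$. The constant is proportional to $\|b^{\pm}_{R}\|_{L^{1}(\mathbb{S}^{d-1})}$ only because the hemisphere supporting $b^{\pm}_{R}$ is matched with the entry whose surviving singular factor has its pole at the \emph{opposite} pole (equivalently, the pre-to-post change of variables used for the weighted $L^{1}$ entry has a Jacobian bounded below on that hemisphere). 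Your pointwise pairing (weight on the $\langle v'_{*}\rangle$-entry for $b^{+}_{R}$, on the $\langle v'\rangle$-entry for $b^{-}_{R}$) does respect this, as does the paper's, but the check must be stated, since with the opposite pairing the end-point constant for a merely integrable $b^{\pm}_{R}$ would be infinite.
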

\begin{proof}
The first inequality is a particular case of Theorem \ref{Young} with $(p,q,r) = (2,2,\infty)$.  For the second just notice, since $b^{+}_{R}$ is supported in $[0,1]$, that $|u|\leq \sqrt{2}\,|v'_{*} - v| \leq \sqrt{2}\, \langle v'_{*} \rangle \langle v \rangle$.  Therefore,
\begin{equation*}
Q^{+}_{|\cdot|^{\gamma},b^{+}_{R} }\big(g , f \big)(v) \leq 2^{\frac{\gamma}{2}}Q^{+}_{o,b^{+}_{R} }\big(g\langle\cdot\rangle^{\gamma},f \big)(v)\,\langle v \rangle^{\gamma}\leq C_{b^{+}_{R}}(\infty,1)\|f\|_{L^{\infty}(\mathbb{R}^{d})}\| g \|_{L^{1}_{\gamma}(\mathbb{R}^{d})}\langle v \rangle^{\gamma}\,,
\end{equation*}
where we applied Theorem \ref{Young} with $(p,q,r) = (\infty,1,\infty)$ in the last inequality.  In addition, clearly $C_{b^{+}_{R}}(\infty,1)\sim \|b^{+}_{R}\|_{L^{1}(\mathbb{S}^{d-1})}$.  Similarly, for the term related to $b^{-}_{R}$, supported in $[-1,0)$, it follows that $|u|\leq \sqrt{2}\,|v' - v| \leq \sqrt{2}\,\langle v' \rangle \langle v \rangle$.  Thus,
\begin{equation*}
Q^{+}_{|\cdot|^{\gamma},b^{-}_{R} }\big(g , f \big)(v) \leq 2^{\frac{\gamma}{2}}Q^{+}_{o,b^{-}_{R} }\big(g,f \langle\cdot\rangle^{\gamma}  \big)(v)\langle v \rangle^{\gamma}\leq C_{b^{-}_{R}}(1,\infty)\| f \|_{L^{1}_{\gamma}(\mathbb{R}^{d})}\| g \|_{L^{\infty}(\mathbb{R}^{d})}\langle v \rangle^{\gamma}\,.
\end{equation*}
Again, Theorem \ref{Young} with $(p,q,r) = (1, \infty,\infty)$ was applied in the last inequality.  The constant is estimated as $C_{b^{-}_{R}}(1,\infty)\sim \|b^{-}_{R}\|_{L^{1}(\mathbb{S}^{d-1})}$. 
\end{proof}

A direct application of Corollary \ref{APq+} gives an elementary proof of the $L^{\infty}$-norm propagation for the homogeneous Boltzmann equation.  As mentioned earlier, a first proof of this fact was given in \cite{Arkeryd-Linfty} under more stringent assumptions.  For a discussion with addition of exponential weights refer to \cite{AlonsoGThar}.
\begin{thm}[\bf{Propagation of} $L^{\infty}_{\ell}$-norms]\label{Tinfty}
Let $b\in L^{1}(\mathbb{S}^{d-1})$, $\gamma\in[0,1]$, $\ell\geq0$, and assume that for $\epsilon>0$ (as small as desired) 
\begin{equation*}
m_{\ell_0}[f_0] + \| f_{0} \|_{L^{\infty}_{\ell}(\mathbb{R}^{d})} =: \bar C_{\ell}<\infty\,,\quad\text{for}\quad \ell_0:=\max\Big\{1+\epsilon,{\ell+\gamma}\Big\}\,.
\end{equation*} 
Then, there exists a constant $C^{\infty}_{\ell}:=C^{\infty}(\| b\|_1, \bar C_{\ell},\gamma,\ell)>0$ such that
\begin{equation*}
\sup_{t\geq0}\| f(t,\cdot) \|_{L^{\infty}_{\ell}(\mathbb{R}^{d})} \leq C^{\infty}_{\ell}<\infty\,,
\end{equation*}
for the solution $f(v,t)$ of the Boltzmann equation.
\end{thm}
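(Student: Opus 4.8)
\textbf{Proof plan for Theorem~\ref{Tinfty} (propagation of $L^\infty_\ell$ norms).}

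The plan is to set up a pointwise differential inequality for the weighted solution $F(v,t):=\langle v\rangle^\ell f(v,t)$ and close it using a self-consistent, $L^\infty$-in-time bound.  First I would write the weighted equation: since $f_t=Q^+(f,f)-f\,\nu[f]$ with $\nu[f](v)=\|b\|_{L^1(\mathbb S^{d-1})}(f*\Phi)(v)$, multiplying by $\langle v\rangle^\ell$ gives
\begin{equation*}
\partial_t F(v,t)=\langle v\rangle^\ell Q^+(f,f)(v,t)-\nu[f](v,t)\,F(v,t).
\end{equation*}
By Lemma~\ref{lblemma}, whose hypotheses are met by $f(t)$ for all $t\ge0$ because mass and energy are conserved and the $2+\epsilon$-moment is propagated uniformly in time by Theorem~\ref{propagation-generation} (this is exactly why we require $m_{\ell_0}[f_0]<\infty$ with $\ell_0\ge 1+\epsilon$), we have the coercive lower bound $\nu[f](v,t)\ge c_{lb}\|b\|_{L^1(\mathbb S^{d-1})}\langle v\rangle^\gamma$ for $\gamma\in(0,1]$; for $\gamma=0$ one simply uses $\nu[f]=\|b\|_{L^1}\,m_0[f_0]$.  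So the loss term provides a genuine damping of $F$.

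Next I would estimate the gain term pointwise using Corollary~\ref{APq+}.  Splitting $b=b_S+b_R^++b_R^-$ and using $\langle v\rangle^\ell\le \langle v'\rangle^\ell\langle v'_*\rangle^\ell\langle v\rangle^{-\ell}\cdots$ — more precisely, distributing the weight via $\langle v\rangle^\ell\le 2^{\ell-1}(\langle v'\rangle^\ell+\langle v'_*\rangle^\ell)$ inside $Q^+$ as in the proof of Theorem~\ref{DirichletMM} — gives
\begin{align*}
\langle v\rangle^\ell Q^+_{|\cdot|^\gamma,b}(f,f)(v,t)
&\le C\,C_{b_S}(2,2)\,\|f(t)\|_{L^2_{\ell+\gamma}(\mathbb R^d)}^2\\
&\quad+C\big(\|b_R^+\|_{L^1(\mathbb S^{d-1})}+\|b_R^-\|_{L^1(\mathbb S^{d-1})}\big)\,\|f(t)\|_{L^\infty_\ell(\mathbb R^d)}\,\|f(t)\|_{L^1_{\ell+\gamma}(\mathbb R^d)}\,\langle v\rangle^\gamma .
\end{align*}
The first piece is controlled uniformly in time: the $L^2_{\ell+\gamma}$ norm is finite by Theorem~\ref{Theo-prop-r} applied with $r=2$ (this needs $\|f_0\|_{L^2_{\ell+\gamma}}<\infty$, which follows from $f_0\in L^\infty_\ell$ together with the moment bound $m_{\ell_0}[f_0]<\infty$, by interpolation $\|f_0\|_{L^2_{\ell+\gamma}}\le\|f_0\|_{L^\infty_\ell}^{1/2}\|f_0\|_{L^1_{\ell_0}}^{1/2}$ provided $\ell_0\ge \ell+\gamma$), and the $L^1_{\ell+\gamma}$ norm is controlled by the propagated moment $m_{(\ell+\gamma)/2}[f](t)\le\max\{m_{(\ell+\gamma)/2}[f_0],\mathbb E_{(\ell+\gamma)/2}\}$ from Theorem~\ref{propagation-generation}.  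Call $C_1$ the resulting uniform bound on the $b_S$-contribution (a constant times $\langle v\rangle^{2\ell}$? — no, $b_S$ is weight-free, it gives a pure constant) and $C_2\,\langle v\rangle^\gamma$ the coefficient in front of $\|f(t)\|_{L^\infty_\ell}$ in the $b_R$-contribution.

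The key device is to choose the decomposition so that the rough part is small: since $\|b_R\|_{L^1(\mathbb S^{d-1})}\to 0$ as the cut-off shrinks, fix the splitting so that $C\,\|b_R\|_{L^1(\mathbb S^{d-1})}\,\sup_t\|f(t)\|_{L^1_{\ell+\gamma}}\le \tfrac12 c_{lb}\|b\|_{L^1(\mathbb S^{d-1})}$.  With this choice, defining $M(t):=\sup_{0\le s\le t}\|f(s)\|_{L^\infty_\ell(\mathbb R^d)}$, the pointwise inequality becomes
\begin{equation*}
\partial_t F(v,t)\le C_1+\tfrac12 c_{lb}\|b\|_{L^1(\mathbb S^{d-1})}\langle v\rangle^\gamma M(t)-c_{lb}\|b\|_{L^1(\mathbb S^{d-1})}\langle v\rangle^\gamma F(v,t).
\end{equation*}
Freezing $v$ and integrating the scalar ODE $y'\le a+c\langle v\rangle^\gamma(\tfrac12 M(t)-y)$ — or rather, first observe that wherever $F(v,t)\ge \max\{M(t), \text{const}\}$ the right side is negative for $\langle v\rangle$ large — I would run a barrier/comparison argument: the function $\bar F:=\max\{\|f_0\|_{L^\infty_\ell},\, 2C_1/(c_{lb}\|b\|_{L^1})+\tfrac12 M(t)\}$ is a supersolution.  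Taking $\sup_v$ and then $\sup_{s\le t}$ yields $M(t)\le \max\{\|f_0\|_{L^\infty_\ell}, 2C_1/(c_{lb}\|b\|_{L^1})\}+\tfrac12 M(t)$, hence $M(t)\le 2\max\{\|f_0\|_{L^\infty_\ell}, 2C_1/(c_{lb}\|b\|_{L^1})\}=:C^\infty_\ell$, uniformly in $t$.  For the Maxwell case $\gamma=0$ the same scheme works with the constant loss rate $c_{lb}=1$ and $\|b_R\|$ small, but one must additionally invoke propagation of entropy \eqref{entropycontrol} exactly as in Theorem~\ref{DirichletMM} to absorb the rough gain part; I would handle the $b_S$-term there via the $L^2_\ell$ bound from Theorem~\ref{Theo-prop-r} and the $b_R$-term via the entropy-based splitting.

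\textbf{Main obstacle.}  The delicate point is making the barrier argument rigorous \emph{pointwise in $v$ and uniformly}, since $F(v,t)$ need not be differentiable in $t$ for a.e.\ $v$ in a way that is uniform, and the coefficient $\langle v\rangle^\gamma$ in the damping degenerates as $v\to 0$ (for $\gamma>0$) — precisely where, however, $\langle v\rangle^\ell\approx 1$ so the gain term is already bounded by the unweighted $L^\infty$ bound, which in turn follows from the $r\to\infty$ reasoning.  So the real work is a careful two-region analysis (small $|v|$: use the $r=\infty$ Young estimate $\|Q^+(f,f)\|_\infty\le C_{b_S}(2,2)\|f\|_{L^2_\gamma}^2+\ldots$ with no weight, giving an $L^\infty$ bound directly; large $|v|$: use the coercive damping), glued together, combined with the self-consistent closure $M(t)\le \tfrac12 M(t)+\text{const}$.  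I expect the bookkeeping of constants and the verification that all the moment/$L^2$ inputs are finite uniformly in time (tracing back to Theorems~\ref{propagation-generation} and~\ref{Theo-prop-r}) to be the bulk of the routine effort, while the conceptual core is the smallness-of-$b_R$ splitting plus the comparison principle.
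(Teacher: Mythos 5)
Your proposal is correct and follows essentially the same route as the paper: the decomposition $b=b_S+b_R$ with $\|b_R\|_{L^1(\mathbb{S}^{d-1})}$ chosen small so the rough gain part is absorbed by half the coercive loss (Corollary~\ref{APq+} plus Lemma~\ref{lblemma}), the uniform-in-time $L^2$ bound with weight $\ell+\gamma$ obtained from Theorem~\ref{Theo-prop-r} with $r=2$ together with moment propagation, and the self-consistent closure $M(t)\le\tfrac12 M(t)+C$ (the paper closes by Duhamel integration of the same pointwise inequality, which is equivalent to your supersolution/barrier argument, and handles general $\ell$ by working with $g=f\langle\cdot\rangle^{\ell}$, which is what your weighted $F$ amounts to). The only superfluous part is your "main obstacle": since $\langle v\rangle=\sqrt{1+|v|^{2}}\ge 1$, the damping $c_{lb}\|b\|_{L^{1}(\mathbb{S}^{d-1})}\langle v\rangle^{\gamma}$ never degenerates, so no two-region analysis is needed and your "large $|v|$" argument applies for every $v$, exactly as in the paper.
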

\begin{proof}
Let us consider the case $\ell=0$ first.  Due to the propagation of mass/energy for the Boltzmann equation,
\begin{equation*}
\sup_{t\geq0} m_{\gamma}[f(t)] < \infty\,.
\end{equation*}
Consequently $\| f_0 \|_{L^{2}_{\gamma}(\mathbb{R}^{d})}\leq \| f_0 \|_{L^{1}_{2\gamma}(\mathbb{R}^{d})}^{\frac{1}{2}}\| f_0 \|_{L^{\infty}(\mathbb{R}^{d})}^{\frac{1}{2}}<\infty$, and Theorem \ref{Theo-prop-r} applied with $r=2$ and $\ell=\gamma$ implies that
\begin{equation*}
\sup_{t\geq0}\| f(t,\cdot) \|_{L^{2}_{\gamma}(\mathbb{R}^{d})} < \infty\,.
\end{equation*}
Now, Corollary \ref{APq+} (with $g=f$) and Lemma \ref{lblemma} give us that
\begin{equation}\label{Tinfty-e3}
\begin{aligned}
\partial_{t}& f (v) = Q^{+}_{|\cdot|^{\gamma},b}(f,f)(v) - f(v) \big(f\ast|\cdot|^{\gamma}\big)(v)\\[2pt]
& = Q^{+}_{|\cdot|^{\gamma},b_{S}}(f,f)(v) + Q^{+}_{|\cdot|^{\gamma},b_{R}}(f,f)(v) - f(v)\,\|b\|_{L^{1}(\mathbb{S}^{d-1})}\,\big(f\ast|\cdot|^{\gamma}\big)(v)\\[2pt]
&\hspace{-.5cm}\leq C_{b_{S}}(2,2) \| f \|^{2}_{L^{2}_{\gamma}(\mathbb{R}^{d})} + C\|b_{R}\|_{L^{1}(\mathbb{S}^{d-1})}\| f \|_{L^{\infty}(\mathbb{R}^{d})}\| f  \|_{L^{1}_{\gamma}(\mathbb{R}^{d})}\langle v \rangle ^{\gamma} - c_{lb}\,\|b\|_{L^{1}(\mathbb{S}^{d-1})}\,f(v)\langle v \rangle ^{\gamma}\,.
\end{aligned}
\end{equation}
Since $\| f \|_{L^{1}_{\gamma}(\mathbb{R}^{d})} = m_{\frac\gamma2}[f]\leq m_{1}[f_0]<\infty$, we can choose $b_R>0$ such that
\begin{equation*}
C\| b_{R}\|_{L^{1}(\mathbb{S}^{d-1})}\| f  \|_{L^{1}_{\gamma}(\mathbb{R}^{d})}\leq \frac{c_{lb}}{2}\,\|b\|_{L^{1}(\mathbb{S}^{d-1})}
\end{equation*}
in \eqref{Tinfty-e3} to obtain that
\begin{align}\label{Tinfty-e4}
\partial_{t}f(v,t) &\leq C(b_{S},f_0) + \|b\|_{L^{1}(\mathbb{S}^{d-1})}\Big(\tfrac{c_{lb}}{2}\|f(t)\|_{L^{\infty}(\mathbb{R}^{d})}\langle v \rangle^{\gamma} - c_{lb}\,f(v,t)\langle v \rangle^{\gamma}\Big)\,,
\end{align}
where
$$
C(b_{S},f_0):=C_{b_{S}}(2,2)\sup_{t\geq0}\| f \|^{2}_{L^{2}_{\gamma}(\mathbb{R}^{d})}\leq C_{b_{S}}(2,2)\max\Big\{\| f_{0} \|_{L^{2}_{\gamma}(\mathbb{R}^{d})}, \Big(\frac{2\,\bar{C}_{\gamma}}{c_{lb} \|b\|_{L^{1}(\mathbb{S}^{d-1})}} \Big)^{\frac{1}{r}}\Big\}\,.
$$
We can integrate estimate \eqref{Tinfty-e4} in $[0,t]$ to obtain
\begin{align*}
f(v,t) &\leq f_{0}(v)e^{-c_{lb}\|b\|_{L^{1}(\mathbb{S}^{d-1})} \langle v \rangle^{\gamma} t} \\
&\qquad + \int^{t}_{0}e^{-c_{lb}\|b\|_{L^{1}(\mathbb{S}^{d-1})}\langle v \rangle^{\gamma}(t-s)}\Big( C(b_{S},f_0) + \tfrac{c_{lb}}{2}\|b\|_{L^{1}(\mathbb{S}^{d-1})}\|f(s)\|_{L^{\infty}(\mathbb{R}^{d})}\Big)\langle v \rangle^{\gamma} ds\\
&\leq \|f_{0}\|_{L^{\infty}(\mathbb{R}^{d})} + \Big(C(b_{S},f_0) + \tfrac{c_{lb}}{2}\,\|b\|_{L^{1}}\,\sup_{0\leq s \leq t}\|f(s)\|_{L^{\infty}(\mathbb{R}^{d})}\Big)\langle v \rangle^{\gamma}\int^{t}_{0}e^{-c_{lb}\|b\|_{L^{1}}\langle v \rangle^{\gamma}(t-s)} ds\\
&\leq \|f_0\|_{L^{\infty}(\mathbb{R}^{d})} + \frac{C(b_{S},f_0)}{c_{lb}\|b\|_{L^{1}(\mathbb{S}^{d-1})}}+\tfrac{1}{2}\sup_{0\leq s \leq T}\|f(s)\|_{L^{\infty}(\mathbb{R}^{d})}\,,\quad \text{ a.e. in } v\in\mathbb{R}^{d}\quad\text{and}\quad 0\leq t \leq T\,.
\end{align*}
The proof for this case is complete after computing the essential supremum in $v\in\mathbb{R}^{d}$ and, then, the supremum in $t\in[0,T]$.  Overall,
\begin{equation*}
\sup_{s\geq0}\|f(s)\|_{L^{\infty}(\mathbb{R}^{d})} \leq 2\|f_{0}\|_{L^{\infty}(\mathbb{R}^{d})} + \frac{2\,C(b_{S},f_0)}{c_{lb}\|b\|_{L^{1}(\mathbb{S}^{d-1})}}\,.
\end{equation*}
Now, for the case $\ell\geq0$ notice that 
\begin{equation*}
Q(f,f)(v)\langle v \rangle^{\ell} \leq Q^{+}(g,g)(v) - g(v)\,\|b\|_{L^{1}(\mathbb{S}^{d-1})}\,\big(f\ast|\cdot|^{\gamma}\big)(v)\,,\qquad g(v):=f(v)\langle v \rangle^{\ell}\,. 
\end{equation*} 
With this observation one can redo the previous proof with $g$ instead of $f$.
\end{proof}

\section{Fine properties of the collision operator}\label{fine properties}
The following computations are reminiscent of \cite{Bo76}.  Recalling the symmetrized weak formulation of the collisional integral, that is, for a suitably regular test function $\psi(v)$, the weak form of the collision integral is given by
\begin{equation*}
\int_{\mathbb{R}^d} Q(g, f) \psi(v) dv = \int_{\mathbb{R}^d}\int_{\mathbb{R}^d}\int_{\mathbb{S}^{d-1}} g(v) f(w)B(u|, \widehat{u}\cdot \sigma) \Big(\psi(v') - \psi(v)\Big) d\sigma dw dv \, .
\end{equation*}
Let us test in particular with the classical Fourier multiplier,
$ \psi(v) = e^{-i\zeta \cdot v }$, where $\zeta$ is the Fourier variable, we get the Fourier Transform of the collision integral through its weak form
\begin{align*}
\widehat{Q(g,f)}(\zeta) &=\int_{\mathbb{R}^d} Q(g, f) e^{-i \zeta \cdot v} dv \nonumber \\
&= \int_{\mathbb{R}^d}\int_{\mathbb{R}^d}\int_{\mathbb{S}^{d-1}} g(v) f(w)  \frac{B(|u|, \widehat{u}\cdot \sigma)}{(\sqrt{2\pi})^d}\Big(e^{-i\zeta \cdot v'} - e^{-i \zeta \cdot v}\Big) d\sigma dw dv \, .
\end{align*}
Let us use $\;\widehat{\cdot}\;$ to denote the Fourier transform.  It should not be a matter of confusion with unit vectors.  Assuming a collision kernel $B(|u|,\widehat{u} \cdot \sigma) = \Phi(u)\,b(\widehat{u}\cdot \sigma)$, one has after trivial algebra that
\begin{equation}\label{weakQHat3}
\widehat{Q(g,f)}(\zeta) =\int_{\mathbb{R}^d}\int_{\mathbb{R}^d} g(v)\, f(v - u)\, \Phi(u)\bigg(\int_ {\mathbb{S}^{d-1}} b(\widehat{u}\cdot \sigma) e^{-i \zeta \cdot v}\Big(e^{-i \frac{1}{2}\zeta \cdot (|u|\sigma - u)} - 1\Big) d\sigma\bigg) du dv\,.
\end{equation}
Using, in the $\sigma$-integration, a reflection that interchanges $\widehat{u}$ and $\widehat{\zeta}$, it follow that  
\begin{eqnarray*}
\int_{\mathbb{S}^{d-1}}b(\widehat{u}\cdot \sigma)\Big(e^{-i\frac{1}{2}\zeta\cdot ( |
u |\sigma-u)}-1\Big)d\sigma =\int_{\mathbb{S}^{d-1}}b(\widehat{\zeta}\cdot \sigma)\Big( e^{ i u\cdot\zeta^{-} }-1 \Big) d\sigma\,.
\end{eqnarray*}
As a consequence,
\begin{equation*}
\widehat{Q(g,f)}(\zeta) =\int_ {\mathbb{S}^{d-1}} b(\widehat{\zeta}\cdot \sigma) \bigg(\int_{\mathbb{R}^d}g(v)e^{-i \zeta \cdot v}\,\int_{\mathbb{R}^d} f(v - u)\, \Phi(u) \Big(e^{ i u \cdot \zeta^{-}} - 1\Big)  du dv\bigg) d\sigma\,,
\end{equation*}
which, in the case of Maxwell molecules, reduces simply to
\begin{equation}\label{weakQHat5}
\widehat{Q_{o}(g,f)}(\zeta) =\int_ {\mathbb{S}^{d-1}} b(\widehat{\zeta}\cdot \sigma) \Big(\widehat{g}(\zeta^{+})\widehat{f}(\zeta^{-}) - g(\zeta)f(0)\Big) d\sigma.
\end{equation}
Equation \eqref{weakQHat5} is regarded as Bobylev's formula and has far reaching consequences in the fine analysis of cutoff and non-cutoff collision Boltzmann operators. 

\subsection{Smoothing effects of gain operator}
Assume that $b(\cdot)$ is smooth and vanishes in the vicinity of $\{-1,1\}$.  Additionally, assume $\Phi(u)$ is smooth vanishing near the origin and at infinity.  For any two \textit{different} unitary vectors $\widehat{u},\, \widehat{\xi}$, define $\varphi$ as $\cos(\varphi)=\widehat{u}\,\cdot\,\widehat{\xi}$.  Here we argue in three dimension for the technical simplicity of exposition, however, these ideas can be extended to $\mathbb{R}^{d}$, see \cite{Hormander,Stein}.  In particular see \cite{LionsIII} for the Boltzmann context.  In polar coordinates, we parameterize $\sigma\in\mathbb{S}^{2}$ as
\begin{equation*}
\sigma = \cos(\theta)\widehat{u} + \sin(\theta)\sin(\phi)\widehat{w}_{1} + \sin(\theta)\cos(\phi)\widehat{w}_{2}\,.
\end{equation*}
Here we choose $\widehat{w}_{1}$ in the plane generated by $\{\widehat{u},\widehat{\xi}\}$, so, $\widehat{w}_{2}$ is perpendicular to both $\widehat{u}$, and $\widehat{\xi}$.   Then,

\begin{align*}
\int_{\mathbb{S}^{2}}b(\widehat{u}\cdot\sigma)e^{-ix\widehat{\xi}\cdot\sigma}d\sigma = \int^{\pi}_{0}b(\cos(\theta))e^{ix\cos(\theta)\sin(\varphi)}\sin(\theta)\bigg(\int^{2\pi}_{0}e^{-ix\sin(\theta)\sin(\varphi)\sin(\phi)}d\phi\bigg)d\theta\,.
\end{align*}
The integral inside the parenthesis has stationary phase for $\phi=\{\pi/2,3\pi/2\}$.   As a consequence, for $x\gg1$ it has the asymptotic behavior
\begin{align*}
\int_{\mathbb{S}^{2}}b(\widehat{u}\cdot\sigma)e^{-ix\widehat{\xi}\cdot\sigma}d\sigma &\approx \sqrt{\frac{2}{x\,\pi}}e^{\pm i \pi/4}\int^{\pi}_{0}\frac{b(\cos(\theta))}{\sqrt{\sin(\theta)\sin(\varphi)}}e^{-ix(\cos(\theta)\sin(\varphi)\pm \sin(\theta)\sin(\varphi))}\sin(\theta)d\theta\\
&= \sqrt{\frac{2}{x\,\pi}}e^{\pm i \pi/4}\int^{\pi}_{0}\frac{b(\cos(\theta))}{\sqrt{\sin(\theta)\sin(\varphi)}}e^{-ix\cos(\theta\pm\varphi)}\sin(\theta)d\theta\,.
\end{align*}
This latter integral has stationary phase for $\theta=\{\varphi,-\varphi +\pi \}$, therefore, we conclude that
\begin{equation*}
\int_{\mathbb{S}^{2}}b(\widehat{u}\cdot\sigma)e^{-ix\widehat{\xi}\cdot\sigma}d\sigma \approx \frac{2}{x\pi}b(\pm\cos(\varphi))e^{\pm i\pi/4}e^{\pm i x}\,,\quad x\gg1\,. 
\end{equation*}
Now, using this asymptotic expression in formula \eqref{weakQHat3} and recalling that $\Phi(\cdot)$ vanishes near zero, it follows that for $|\zeta|\gg1$,

\begin{align*}
\widehat{Q^{+}(g,f)}(\zeta) &=  \int_{\mathbb{R}^3} \bigg(\int_{\mathbb{R}^{3}}g(v)e^{-i \zeta \cdot v/2} f(v - u)e^{-i\zeta\cdot(v-u)/2}dv\bigg)\bigg(\int_{\mathbb{S}^{2}}b(\widehat{u}\cdot \sigma) \,\,e^{-i \frac{1}{2}(|\zeta||u|\widehat{\zeta}\cdot\sigma)} d\sigma\bigg)\Phi(u) du\\
&\approx \frac{2}{\pi}e^{\pm i \pi/4}\int_{\mathbb{R}^3} \bigg(\int_{\mathbb{R}^{3}}g(v)e^{-i \zeta \cdot v/2} f(v - u)e^{-i\zeta\cdot(v-u)/2}dv\bigg)\frac{b(\pm\widehat{u}\cdot\widehat{\zeta})}{|\zeta||u|}\Phi(u)e^{\pm i \frac{1}{2}|\zeta||u|} d u\\
&= \frac{2}{|\zeta|\pi}e^{\pm i \pi/4}\int_{\mathbb{R}^{3}} g(v)e^{-i\zeta\cdot v}\, \mathcal{P}_{osc}(\tau_{-v}\,\mathcal{R} f)(\xi)dv\,,
\end{align*}
where $\mathcal{R}\phi(x)=\phi(-x)$ is the reflexion operator and
\begin{equation*}
\mathcal{P}_{osc}(\varphi)(\xi) = \int_{\mathbb{R}^{3}} \varphi(u)\frac{\Phi(u)}{|u|}\, b(\pm\widehat{u}\cdot\widehat{\zeta}) e^{i \frac{1}{2}(\zeta\cdot u\pm |\zeta||u|)}\, du\,.
\end{equation*}
The operator $\mathcal{P}_{osc}$ is an oscillatory integral and it is a bounded operator in $L^{2}(\mathbb{R}^{3})$, see \cite{Hormander},  as long as the phase $S(u,\xi)=u\cdot\zeta \pm |u||\zeta|$ satisfies $$\text{Det}\bigg(\frac{\partial^{2}S}{\partial u_{i}\partial \zeta_{j}}(u,\zeta)\bigg) = \text{Det}\big(\textbf{1}_{3\times 3} \pm \widehat{u}\,\widehat{\zeta}^{T}\big) = 1 \pm \widehat{u}\cdot\widehat{\zeta} \neq 0$$ in the support of $\Phi(u)\,b(\pm\widehat{u}\cdot\widehat{\zeta})$.   The fact that $b(\cdot)$ vanishes near $\{-1,1\}$ ensures this condition.  Thus, Minkowski's inequality implies that
\begin{align*}
\bigg(\int_{|\zeta|\gg1}\big|\widehat{Q^{+}(g,f)}&(\zeta)|\zeta|\big|^{2}d\zeta\bigg)^{1/2}\leq \frac{2}{\pi}\bigg(\int_{|\zeta|\gg1}\Big(\int_{\mathbb{R}^{3}}\big|\mathcal{P}_{osc}(\tau_{-v}\mathcal{R}f)(\zeta)\big|\,|g(v)|dv\Big)^{2}d\zeta\bigg)^{1/2}\\
&\leq \frac{2}{\pi}\int_{\mathbb{R}^{3}}\Big(\int_{|\zeta|\gg1}\big|\mathcal{P}_{osc}(\tau_{-v}\mathcal{R}f)(\zeta)\big|^{2}d\zeta\Big)^{1/2}\,|g(v)|dv\\
&\hspace{-1cm}\leq \frac{2}{\pi}\|\mathcal{P}_{osc}\|_{L^{2}}\int_{\mathbb{R}^{3}}\|\tau_{-v}\mathcal{R}f)\|_{L^{2}(\mathbb{R}^{3})}\,|g(v)|dv \leq \frac{2}{\pi}\|\mathcal{P}_{osc}\|_{L^{2}}\|f\|_{L^{2}(\mathbb{R}^{3})}\|g\|_{L^{1}(\mathbb{R}^{3})}\,.
\end{align*}
We know from Theorem \ref{Young} that in the region $|\zeta|\leq R$, for any $R>0$, a similar bound holds.  This leads to the estimate
\begin{equation*}
\|Q^{+}(g,f)\|_{H^{1}(\mathbb{R}^{3})}\leq C_{\Phi,b}\|g\|_{L^{1}(\mathbb{R}^{3})}\|f\|_{L^{2}(\mathbb{R}^{3})}\,,
\end{equation*}
as long as the aforementioned conditions on $\Phi$ and $b$ hold.  This estimate was first noticed in \cite{LionsIII} and, later, proved by generalised Radon transform methods in \cite{WennbergRT, MV04}.  Such conditions can be relaxed a bit, but, such regularisation will not hold for $b\in L^{1}(\mathbb{S}^{d-1})$.

\smallskip
\nn This procedure can be generalized to any dimensions and obtain the following theorem.
\begin{thm}\label{smoothingq+}
Let $\Phi$ and $b$ be smooth with $\Phi$ vanishing in a vicinity of zero and infinity, and $b$ vanishing in a vicinity of $\{-1,1\}$.  Then,
\begin{equation*}
\begin{aligned}
\|Q^{+}(g,f)\|_{H^{\frac{d-1}{2}}(\mathbb{R}^{d})}&\leq C^{1}_{\Phi,b}\|g\|_{L^{1}(\mathbb{R}^{d})}\|f\|_{L^{2}(\mathbb{R}^{d})}\,,\\
\|Q^{+}(g,f)\|_{H^{\frac{d-1}{2}}(\mathbb{R}^{d})}&\leq C^{2}_{\Phi,b}\|g\|_{L^{2}(\mathbb{R}^{d})}\|f\|_{L^{1}(\mathbb{R}^{d})}\,.
\end{aligned}
\end{equation*} 
\end{thm}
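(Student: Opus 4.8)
The plan is to follow verbatim the three-dimensional computation carried out above, replacing the elementary iterated stationary phase by the standard asymptotics for the Fourier transform of surface measure on $\mathbb{S}^{d-1}$ together with H\"ormander's $L^{2}$ theory of oscillatory integral operators. First I would start from the Fourier representation \eqref{weakQHat3}, which holds identically in $d$ dimensions, perform the reflection interchanging $\widehat{u}$ and $\widehat{\zeta}$, and isolate the angular factor $\int_{\mathbb{S}^{d-1}} b(\widehat{u}\cdot\sigma)\,e^{-\frac{i}{2}|\zeta||u|\,\widehat{\zeta}\cdot\sigma}\,d\sigma$. Since $b$ is smooth and supported away from $\{-1,1\}$, the phase $\sigma\mapsto\widehat{\zeta}\cdot\sigma$ on $\mathbb{S}^{d-1}$ has, inside the support of the amplitude, its two critical points $\sigma=\pm\widehat{\zeta}$ excised, so a partition of unity plus stationary phase on the $(d-1)$-sphere yields, for $|x|\gg1$,
\begin{equation*}
\int_{\mathbb{S}^{d-1}} b(\widehat{u}\cdot\sigma)\,e^{-ix\widehat{\zeta}\cdot\sigma}\,d\sigma=\sum_{\pm}c_{\pm}(d)\,\frac{b(\pm\widehat{u}\cdot\widehat{\zeta})}{|x|^{\frac{d-1}{2}}}\,e^{\mp i|x|}+\text{(amplitudes of order }|x|^{-\frac{d-1}{2}-1},\dots\text{)}\,.
\end{equation*}

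Next I would substitute this expansion into \eqref{weakQHat3}. Exactly as in dimension three, the leading term produces, for $|\zeta|\gg1$,
\begin{equation*}
\widehat{Q^{+}(g,f)}(\zeta)\,|\zeta|^{\frac{d-1}{2}}\approx\sum_{\pm}c_{\pm}(d)\int_{\mathbb{R}^{d}} g(v)\,e^{-i\zeta\cdot v}\,\big(\mathcal{P}^{\pm}_{osc}\,\tau_{-v}\mathcal{R}f\big)(\zeta)\,dv\,,
\end{equation*}
where $\mathcal{P}^{\pm}_{osc}\varphi(\zeta)=\int_{\mathbb{R}^{d}}\varphi(u)\,\dfrac{\Phi(u)}{|u|^{\frac{d-1}{2}}}\,b(\pm\widehat{u}\cdot\widehat{\zeta})\,e^{\frac{i}{2}(\zeta\cdot u\pm|\zeta||u|)}\,du$ is an oscillatory integral operator with smooth amplitude, compactly supported in $u$ away from the origin, and phase $S_{\pm}(u,\zeta)=\tfrac12(\zeta\cdot u\pm|\zeta||u|)$. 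Its mixed Hessian is $\partial^{2}_{u_i\zeta_j}S_{\pm}=\tfrac12(\delta_{ij}\pm\widehat{u}_i\widehat{\zeta}_j)$, with determinant $2^{-d}(1\pm\widehat{u}\cdot\widehat{\zeta})$, bounded away from zero on the support of $b(\pm\widehat{u}\cdot\widehat{\zeta})$; hence $\mathcal{P}^{\pm}_{osc}$ is bounded on $L^{2}(\mathbb{R}^{d})$. The subleading terms carry the same non-degenerate phase and strictly better amplitudes, so they are handled by the same estimate and only improve it. Minkowski's integral inequality then gives, as in the displayed three-dimensional computation,
\begin{equation*}
\Big(\int_{|\zeta|\gg1}\big|\,|\zeta|^{\frac{d-1}{2}}\,\widehat{Q^{+}(g,f)}(\zeta)\big|^{2}d\zeta\Big)^{1/2}\le C\,\|\mathcal{P}_{osc}\|_{L^{2}\to L^{2}}\,\|f\|_{L^{2}(\mathbb{R}^{d})}\,\|g\|_{L^{1}(\mathbb{R}^{d})}\,.
\end{equation*}

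Finally I would treat the low-frequency region $|\zeta|\le R$ by the Young-type inequality of Theorem~\ref{Young} with $(p,q,r)=(1,2,2)$, which needs only $b\in L^{1}(\mathbb{S}^{d-1})$, giving $\|Q^{+}(g,f)\|_{L^{2}(\mathbb{R}^{d})}\le C_{b}\|g\|_{L^{1}}\|f\|_{L^{2}}$ and hence $\int_{|\zeta|\le R}|\zeta|^{d-1}|\widehat{Q^{+}(g,f)}|^{2}\,d\zeta\le R^{d-1}\|Q^{+}(g,f)\|_{L^{2}}^{2}$. Adding the two pieces yields the first inequality $\|Q^{+}(g,f)\|_{H^{\frac{d-1}{2}}(\mathbb{R}^{d})}\le C^{1}_{\Phi,b}\|g\|_{L^{1}}\|f\|_{L^{2}}$; the second follows by symmetry, either through the change $\sigma\to-\sigma$ (interchanging $v'$ and $v'_*$, hence the two entries, as already used in Corollary~\ref{APq+} and Theorem~\ref{gainmaxwell}) or by keeping $f$ as the $\delta_0$-like factor in the double mixing convolution \eqref{dmconvo}. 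The main obstacle is the bookkeeping of the stationary phase expansion on $\mathbb{S}^{d-1}$ in general dimension — ensuring that the excision near $\sigma=\pm\widehat{\zeta}$ is harmless and that every term in the expansion is covered by the oscillatory integral operator bound; once that is in place, the remainder of the argument is a routine adaptation of the $d=3$ case presented above.
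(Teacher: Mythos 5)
Your proposal is correct and follows essentially the same route the paper takes: Bobylev's Fourier representation \eqref{weakQHat3}, stationary phase on the sphere producing the $|\zeta|^{-\frac{d-1}{2}}$ decay, an oscillatory integral operator with phase $\tfrac12(\zeta\cdot u\pm|\zeta||u|)$ whose mixed Hessian is non-degenerate on the support of $b(\pm\widehat{u}\cdot\widehat{\zeta})$ (H\"ormander's $L^{2}$ bound), Minkowski's inequality for the $L^{1}$ entry, Theorem~\ref{Young} for the low frequencies, and the $\sigma\to-\sigma$ symmetry for the second estimate — exactly the generalization of the paper's $d=3$ computation that the theorem records. One small wording slip: the critical points $\sigma=\pm\widehat{\zeta}$ are generally \emph{not} excised from the support of the amplitude $b(\widehat{u}\cdot\sigma)$; what matters (and what your displayed expansion and Hessian computation actually use) is that the amplitude evaluated there, $b(\pm\widehat{u}\cdot\widehat{\zeta})$, vanishes precisely where $1\pm\widehat{u}\cdot\widehat{\zeta}$ degenerates.
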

We end with a remark.  Since $H^{\frac{d-1}{2}}(\mathbb{R}^{d})\subset L^{2d}(\mathbb{R}^{d})$, then
\begin{equation*}
\| Q^{+}(g,f) \|_{ L^{2d}(\mathbb{R}^{d}) } \leq C \|Q^{+}(g,f)\|_{H^{\frac{d-1}{2}}(\mathbb{R}^{d})} \leq C^{1}_{\Phi,b}\|g\|_{L^{1}(\mathbb{R}^{d})}\|f\|_{L^{2}(\mathbb{R}^{d})}\,,
\end{equation*}
which provides an interpolation point to prove an analog to Theorem \ref{gainmaxwell} under more stringent assumptions in the angular kernel $b$.

\smallskip
\nn Fourier methods have extensive applications in the study of fine properties of the collision operator.  These methods have being used to show propagation of Sobolev norms for the homogeneous Boltzmann equation in \cite{MV04} under the restriction $b\in L^{2}(\mathbb{S}^{d-1})$, based in a result of \cite{BD}.  In the work \cite{AlonsoGThar} Fourier methods has being used in full potential to prove propagation of Sobolev regularity for general $b\in L^{1}(\mathbb{S}^{d-1})$.  \\

\section{Appendix}\label{appendix}

The proof of Theorem~\ref{Theorem_ODE}\label{proofTheorem_ODE}, 
follows the same lines of the argument  to solve ODEs in Banach spaces proposed in \cite{bressan}.  We include it in this manuscript for completeness. The proof is divided into three steps:

\smallskip
\noindent
\textbf{Step 1.}
First note that since $\Omega$ is bounded, there exists a uniform bound $K_Q$ of $Q(u)$, for all $u$ in $\Omega$. 
Next,  let $u$ be in $\Omega$, then there exists $h_{u}>0$,  such that 
the intersection $B(u+h Q(u),\epsilon)\cap {\Omega}\backslash\{u+hQ(u)\}$ is non-empty,  for $0<h<h_{u}$ and all $\epsilon>0$ sufficiently small.

\smallskip
\noindent
In addition, one can estimate $\|Q(u)-Q(v)\|\leq \frac{\epsilon}{4}$,  if $\|u-v\|\leq (K_Q+1)h$.  
Hence, take $w$ to be a point inside $B(u+hQ(u),\epsilon)\cap {\Omega}\backslash\{u+hQ(u)\} $ satisfying 
$$\|w-u-hQ(u)\|\leq \frac{\epsilon h}{4}.$$ 
Next, consider the linear map
$$s\mapsto \rho(s)=u+\frac{s(w-u)}{h},~~~~s\in[0,h], $$
and see that  $\rho(s)\in \Omega$ for all $s$ in $[0,h]$ by the convexity of $\Omega$. 
Moreover, since $\dot{\rho}(s)=\frac{w-u}{h}$, $$\|\dot{\rho}(s)-Q(u)\| \leq \frac{\epsilon}{4}.$$
Now, we can see that $$\|\rho(s)-u\|=\left\|\frac{s(w-u)}{h}\right\|\leq \|w-u\|\leq h\|Q(u)\|+\frac{\epsilon h}{4}<\overline{n}(K_Q+1)h,$$
which implies 
$$\|Q({\rho}(s))-Q(u)\|\leq \frac{\epsilon}{4},~~\forall\, s\in[0,h].$$
Therefore,
\begin{equation}\label{Theorem:ODE:E1}
\|\dot{\rho}(s)-Q({\rho}(s))\|\leq \frac{\epsilon}{2} \leq \epsilon, ~~~\forall\, s\in[0,h]\, , 
\end{equation}
and so,
\begin{equation*}
\|\dot{\rho}(s)\|\leq 1+\,K_Q ~~~\forall\, s\in[0,h] \ \text{and} \epsilon<1 \, . 
\end{equation*}

\smallskip  
\noindent
\textbf{Step 2.} From Step 1, we have proved the existence of solution $\rho$ to the equation \eqref{Theorem:ODE:E1} on an interval $[0,h]$. From this solution, we carry on the following process.
\begin{itemize}
\item [(1)] We start with the solution $\rho$, defined on $[0,h]$ of \eqref{Theorem:ODE:E1}.
\item [(2)]  Suppose such solution $\rho$ of \eqref{Theorem:ODE:E1} is constructed on $[0,\tau]$. Since $\rho(\tau)\in\Omega$, by the same process as in Step 1, the solution $\rho$ could be extended to  $[\tau,\tau+h_\tau]$. 
\item [(3)] Next, let  $\rho$ be constructed by  from \eqref{Theorem:ODE:E1} on a series of intervals $[0,\tau_1]$, $[\tau_1,\tau_2]$, $\cdots$, $[\tau_n,\tau_{n+1}]$, $\cdots$. Moreover, suppose the increasing sequence $\{\tau_n\}$ is bounded. 
Then, setting $$\tau=\lim_{n\to\infty}\tau_n\, ,$$
then, since $G({\rho})$ is bounded by $C_G$ on $[\tau_n,\tau_{n+1}]$ for all $n\in\mathbb{N},$ $\dot{\rho}$ is bounded by $\epsilon+C_G$ on $[0,\tau)$. Therefore, we can define $\rho(\tau)$ satisfying $$\rho(\tau)=\lim_{n\to\infty}\rho(\tau_n),\quad \dot{\rho}(\tau)=\lim_{n\to\infty}\dot{\rho}(\tau_n),$$
which implies that $\rho$ is a solution of \eqref{Theorem:ODE:E1} on $[0,\tau]$. 
\end{itemize}
By (3) of this process, we can see that if the solution $\rho$, constructed as above, is defined on $[0,T)$, it could be extended to $[0,T]$. Suppose that $[0,T]$ is the maximal  closed interval that $\rho$ could be constructed, by Step 2 of the process, $\rho$ could be extended to a larger interval $[T,T+T_h]$, which means that $\rho$ can be constructed on the whole interval $[0,\infty)$.

\smallskip
\noindent
\textbf{Step 3.} Let us now consider two sequences of approximate solutions $u^\epsilon$, $w^\epsilon$, where $\epsilon$ tends to $0$.  From Step 1 and Step 2, one can see that the time interval $[0,T]$ can be decomposed into $$\left(\bigcup_{\gamma}I_\gamma\right)\bigcup \mathfrak{N},$$ where $I_\gamma$ are countably many open intervals and $\mathfrak{N}$ is of measure $0$. 

\smallskip
\noindent
Taking the derivative of the difference $\|u^\epsilon(t)-w^\epsilon(t)\|$ gives
\begin{eqnarray*}
\frac{d}{dt}\|u^\epsilon(t)-w^\epsilon(t)\|&=&\Big[u^\epsilon-w^\epsilon,\dot{u}^\epsilon(t)-\dot{w}^\epsilon(t)\Big]_{-}\\
&\le& \Big[u^\epsilon-w^\epsilon,\dot{u}^\epsilon(t)-\dot{w}^\epsilon(t)\Big]_{-} + \epsilon\\
&\le & L\|u^\epsilon(t)-w^\epsilon(t)\|  +\epsilon,
\end{eqnarray*}
which yields $$\|u^\epsilon(t)-w^\epsilon(t)\| \to 0 \quad \text{as}\quad\epsilon\to0\,,$$
and we have the convergence $u^\epsilon\to u$ uniformly on $[0,T]$. The function $u$ is, then, a solution of our equation. \qed

\ \\

\section*{Acknowledgments}
The authors would like to thank  Ioakeim Ampatzoglou,  Erica De La Canal, Antonio Farah,  Milana Pavi\'c-Coli\'c and Maja Taskovic for very the careful reading and offering many suggestions that significantly improved the presentation of this manuscript.   The authors also  thank and gratefully acknowledge the hospitality and support from the Oden Institute of Computational Engineering and Sciences and the University of Texas Austin.
The authors were partially supported by the funding from  Bolsa de Produtividade em Pesquisa CNPq (303325/2019-4), 
NSF DMS: 2009736 and  DOE DE-SC0016283 project \emph{Simulation Center for Runaway Electron Avoidance and Mitigation.}

\ \\

\ \\

\ \\

\end{document}